\documentclass[11pt]{article}

\usepackage{titlesec}

\titlespacing\section{4pt}{8pt}{4pt}
\titlespacing\subsection{4pt}{6pt}{4pt}
\titlespacing\subsubsection{4pt}{6pt}{4pt}

\newif\ifarxiv   
\arxivfalse 

\ifarxiv
\else

\fi

\usepackage{graphicx}
\usepackage{csvsimple}
\usepackage{rotating}
\usepackage{hyperref}
\hypersetup{
    colorlinks=true,
    linkcolor=blue,
    citecolor=red,
    filecolor=magenta,      
    urlcolor=cyan
    }

\usepackage[nocomma]{optidef}
\usepackage{listings}
\usepackage{comment}
\usepackage{appendix}
\usepackage[ruled,vlined]{algorithm2e}
\usepackage{amsfonts} 
\usepackage{amssymb}
\usepackage{bm}
\usepackage[square,comma,numbers]{natbib}

\usepackage{booktabs}

\usepackage{color} 
\definecolor{mygreen}{RGB}{28,172,0} 
\definecolor{mylilas}{RGB}{170,55,241}
\DeclareFixedFont{\ttb}{T1}{txtt}{bx}{n}{12} 
\DeclareFixedFont{\ttm}{T1}{txtt}{m}{n}{12}  
\usepackage{bbm}
\usepackage{amsmath,amsthm}


\newtheorem{theorem}{Theorem}
\newtheorem{corollary}{Corollary}
\newtheorem{proposition}{Proposition}

\newtheorem{lemma}{Lemma}
\newtheorem{observation}{Observation}

\newtheorem{assumption}{Assumption}

\usepackage{tikz,pgfplots}
\pgfplotsset{compat=1.14}
\pgfplotsset{scaled y ticks=false}

\usepackage{subcaption}

\pgfplotsset{scaled x ticks=false}

\setlength{\bibsep}{1.4pt}
\setlength{\textfloatsep}{8pt}

\theoremstyle{definition}
\newtheorem{definition}{Definition}

\newtheorem{example}{Example}

\newenvironment{hproof}{%
  \proof}{\endproof}

\usepackage{color}
\definecolor{deepblue}{rgb}{0,0,0.5}
\definecolor{deepred}{rgb}{0.6,0,0}
\definecolor{deepgreen}{rgb}{0,0.5,0}

\DeclareMathOperator*{\argmax}{arg\,max}

\usepackage{accents}

\usepackage{mathtools}

\usepackage{listings}

\renewcommand\footnotemark{}
\def\sigmaa{\boldsymbol{\sigma}}
\def\y{\mathbf{y}}
\def\x{\mathbf{x}}
\def\I{\mathcal{I}}
\def\A{\mathcal{A}}
\def\S{\mathcal{S}}
\def\H{\mathcal{H}}
\def\L{\mathcal{L}}
\def\J{\mathcal{J}}

\def\A{\mathcal{A}}
\def\llambda{\boldsymbol{\lambda}}

\usepackage{array}

\usepackage{amsmath,inputenc}


\usepackage[letterpaper, portrait, margin=1in]{geometry}

\graphicspath{ {fig/} }

\title{When Simple is Near Optimal in Security Games}
\author{Devansh Jalota\\ Stanford\\{\tt djalota@stanford.edu}
        \and 
        Michael Ostrovsky\\ Stanford\\ {\tt ostrovsky@stanford.edu}
        \and 
        Marco Pavone\\ Stanford\\ {\tt pavone@stanford.edu}
}

\begin{document}
\maketitle

\begin{abstract} 
Fraud is ubiquitous across many applications and involves users bypassing the rule of law, often with the strategic aim of obtaining some benefit that would otherwise be unattainable within the bounds of lawful conduct. However, user fraud is detrimental, as it may compromise safety or impose disproportionate negative externalities on particular population groups.

To mitigate the potential harms of user fraud, we study the problem of policing such fraud as a security game between an administrator and users. In this game, an administrator deploys $R$ security resources (e.g., police officers) across $L$ locations and levies fines against users engaging in fraud at those locations. For this security game, we study both payoff and revenue maximization administrator objectives. In both settings, we show that the problem of computing the optimal administrator strategy is NP-hard and develop natural greedy algorithm variants for the respective settings that achieve at least half the payoff or revenue as the payoff-maximizing or revenue-maximizing solutions, respectively. We also establish a resource augmentation guarantee that our proposed greedy algorithms with just one additional resource, i.e., $R+1$ resources, achieve at least the same payoff (revenue) as the payoff-maximizing (revenue-maximizing) outcome with $R$ resources that is NP-hard to compute. Moreover, in the setting when the user types are homogeneous, i.e., at each location, all users have the same type, we develop a near-linear time algorithm to solve the administrator's revenue maximization problem and a polynomial time approximation scheme for the administrator's payoff maximization problem.

Next, we present numerical experiments based on a case study of parking enforcement at a university campus, highlighting the efficacy of our algorithms in increasing earnings from parking permit purchases by \$300,000 annually relative to the status-quo enforcement policy. Finally, we study several model extensions, including incorporating contracts into our framework to bridge the gap between the payoff and revenue-maximizing outcomes and generalizing our model to incorporate additional constraints beyond a resource budget constraint.
\end{abstract}

\section{Introduction}

Fraudulent activities involving users bypassing the rule of law are ubiquitous and arise as users seek to strategically obtain some benefit that would otherwise not be attainable within the bounds of lawful conduct. For instance, in transportation networks, users often drive above the speed limit to reduce travel times, and, in school choice contexts, parents often misreport their home addresses to admit their children to better schools~\cite{bjerre2023playing}. Similar issues of users engaging in fraud arise in other domains, including labor markets~\cite{RePEc:hal:psewpa:halshs-03828729}, strategic classification~\cite{estornell2021incentivizing,estornell2023incentivizing}, non-market allocation mechanisms~\cite{perez2023fraud}, and resource allocation~\cite{lundy-taylor-etal-2019}.

However, fraud can be detrimental, as it can compromise safety, result in disproportionate negative externalities to particular groups of the population, or hamper efficiency. In transportation networks, users driving above the speed limit can compromise road safety, and, in school choice settings, sophisticated gaming by some parents, who typically belong to higher-income strata, often adversely affects users not engaging in such practices~\cite{bjerre2023playing}. Moreover, in healthcare, the manipulation of patient's priority by transplant providers in organ transplant waiting lists often results in reductions in organ donations~\cite{bolton2018,mcmichael2022stealing}. We present further examples of users engaging in fraud and the associated harms in Section~\ref{sec:examples-pertinent}. 

The prevalence of fraud and its associated harms across applications poses critical security challenges and requires policing to deter users from engaging in such activities. Central to the problem of policing fraud is a resource allocation task, which involves allocating a limited set of security resources (e.g., police officers) to mitigate fraud. However, the challenge in policing fraud is that only limited security resources are available, i.e., providing complete coverage to prevent fraud is not possible. This resource limitation raises a fundamental question of how to best allocate the available resources to protect against fraud. 

To answer this question, we study the problem of policing fraud as a security game between an administrator and fraudulent users at susceptible locations in a system. In our security game, the administrator can allocate a budget of security resources across various locations and levy fines against users engaging in fraud. For our studied security game, we show that \emph{simple} algorithms with a low computational overhead can achieve a good performance relative to the administrator's \emph{optimal} strategies, which are NP-hard to compute. To analyze the associated equilibria and the corresponding optimal administrator strategies, we leverage and combine techniques from optimization theory, linear programming, and approximation algorithms. Moreover, we apply our developed algorithms to a real-world data set on enforcing parking regulations at a university campus.

Our modeling assumptions and, in particular, considering fines to deter fraud are motivated by several real-world applications, e.g., users typically pay fines for road traffic or speeding violations. Moreover, in line with prior work on security games~\cite{tambe2011security}, we adopt a game-theoretic framework to study the problem of deploying security resources to mitigate fraud. We do so as a game-theoretic framework enables us to incorporate the preferences of both the administrator and users and predict how a fraudulent user will respond to a resource allocation policy and the corresponding fines the administrator sets. While we study a security game, a widely studied topic in the literature, our work contributes to the security games literature and differs from prior work on security games in several significant ways, which we summarize in Section~\ref{sec:related-literature}.


\subsection{Examples} \label{sec:examples-pertinent}

This section presents detailed examples of real-world settings where users engage in fraud. Our examples help further contextualize our studied problem setting and provide a grounding for the model in Section~\ref{sec:model}. While we present three examples, our developed framework more generally applies to a broad range of settings where users engage in fraud. 

\emph{Environmental Non-Compliance:} In environmental applications, public agencies institute regulations designed to meet particular health, safety, and environmental standards that the institutions these regulations seek to govern must comply with~\cite{heyes2000implementing}. For instance, the United States Environmental Protection Agency (USEPA) imposes water quality standards that all water treatment facilities must adhere to. However, institutions often have an incentive to, for instance, save on the costs of upgrading their facilities and thereby not comply with the regulations, which can lead to potentially detrimental environmental or health outcomes for the residents these facilities serve.

\emph{Queue skipping in Intermediate Public Transport (IPT):} IPT comprises informal modes of transport pivotal in serving the last-mile connectivity needs in many developing nation cities, particularly when the formal public transportation system is inadequate in serving users' travel demands~\cite{cervero2000informal,CERVERO2007445}. IPT services typically entail mini-buses~\cite{southAfrica-minibus}, share-taxis~\cite{SALOMON1985259,zwick2017analysis}, or auto-rickshaws~\cite{ipt-india}.

While IPT services offer affordable on-demand last-mile connectivity, commuters often face long wait times during rush hours~\cite{Subramanya_2012}, which frequently results in disorderly queues and, in particular, illicit behavior by some commuters who jump the queue to reduce their wait times~\cite{queue-jumping-website,cervero2000informal}, a problem that occurs in a broad range of queuing applications~\cite{skipping-the-queue-2012}. For instance, some male passengers who have recently arrived to avail of an IPT service, e.g., at a mini-bus stand, might bypass the queue and board a moving vehicle as it arrives to pick up passengers. Such alighting of moving vehicles is more challenging for women or less-abled travelers, who typically bear the burden of disproportionately high wait times when using IPT systems.

\emph{Parking Violations:} In another transportation context, users often engage in illegal parking practices, which typically involve bypassing parking regulations by, for instance, not purchasing parking permits or paying the appropriate parking fees when availing a parking spot. The violation of parking regulations is so commonplace that in 2022 alone, over 15 million parking tickets worth over \$1 billion were issued in New York City~\cite{traffic-stats-nyc}. Illegal parking by users denies other commuters from availing parking spots, can be a source of traffic jams~\cite{illegal-parking-traffic}, and also results in lower permit and parking revenues for owners of parking lots (see Section~\ref{sec:numerical-experiments-parking-enforcement} for more details).





Each of these examples highlights the need for policing to deter defaulting users from engaging in such fraudulent activities at susceptible locations in a system, e.g., water treatment facilities in an environmental regulation context, parking lots in a parking enforcement context, and mini-bus stands where users queue to avail a mini-bus in an IPT context.

\subsection{Our Contributions}

Motivated by the prevalence and detrimental effects of fraud across applications, we study a security game between an administrator and users engaging in fraud at susceptible locations in a system. In this security game, the administrator has $R$ security resources to allocate across these locations and levies fines against defaulting users to deter fraud. For this security game, we study the resulting equilibria and the optimal administrator strategies under both the payoff and revenue maximization objectives, which we introduce in Section~\ref{sec:model}. We model users as belonging to a discrete set of \emph{types}, where a type is specified by the number of users belonging to that type, the benefit that users in that type derive when engaging in fraud, and the payoff corresponding to preventing users of that type from engaging in fraud, and study the optimal administrator strategies in the setting with (i) \emph{homogeneous} user types, i.e., each location has a single user type while user types can differ across locations, and (ii) \emph{heterogeneous} user types.



We first study the setting with homogeneous user types in Section~\ref{sec:deterministic-setting}, where we develop a natural \emph{greedy} procedure to compute the administrator's revenue-maximizing strategy and show that it is NP-hard to compute the administrator's payoff-maximizing strategy. To that end, in the payoff-maximization setting, we develop a variant of a greedy algorithm, different from that in the revenue-maximization setting, that achieves at least half the optimal payoff. Moreover, we establish a resource augmentation guarantee that our proposed greedy algorithm with just one additional resource, i.e., $R+1$ resources, achieves at least the optimal payoff given $R$ resources. Finally, we also develop a polynomial-time approximation scheme (PTAS), which, when augmented with $\delta$ extra resources for any $\delta>0$, achieves a $1-\epsilon$ approximation to the optimal payoff for every fixed $\epsilon>0$. To establish the hardness, approximation, and resource augmentation guarantees in the payoff maximization setting, we develop and leverage properties of the payoff-maximizing solution and geometric insights based on the structure of the administrator's payoff function.



Next, in Section~\ref{sec:probabilistic-setting}, we study the setting with heterogeneous user types, where we show that, unlike in the setting with homogeneous user types, computing the administrator's revenue-maximizing strategy is NP-hard. Yet, we develop variants of greedy algorithms that achieve similar half-approximation and resource augmentation guarantees to those in the homogeneous user type setting for the revenue and payoff maximization administrator objectives with heterogeneous user types. The crux of establishing our algorithmic guarantees involves constructing a \emph{monotone concave upper approximation} (MCUA) of the revenue and payoff functions of the administrator (see Section~\ref{sec:near-opt-greedy-rev-max} for more details on the MCUA) and showing that the corresponding MCUAs can be maximized via a greedy process. Our results shed light on the benefits of using simple algorithms, e.g., variants of the greedy algorithm, and highlight the value of recruiting one additional security resource rather than expending computational effort in solving for the optimal administrator strategies that are NP-hard to compute.


Then, in Section~\ref{sec:numerical-experiments-parking-enforcement}, we present numerical experiments based on a case study of parking enforcement at Stanford University's campus. Our results demonstrate that our algorithms outperform the status-quo enforcement policy by increasing parking permit earnings by over $\$300,000$ (a $2\%$ increase) annually. Moreover, our results demonstrate the increasing power of our security game framework and algorithms in the regime when the proportion of users strategically deciding whether to engage in fraud increases.


Finally, we study several model extensions, highlighting the generalizability of our proposed framework and algorithms. In Section~\ref{sec:optimal-contracts}, we extend our security game to incorporate \emph{contracts}, wherein a revenue-maximizing administrator is compensated through contracts for the payoff it contributes to the system. Beyond extending our theoretical results in the payoff and revenue maximization settings to studying equilibrium strategies in the contract game, we also present numerical experiments \ifarxiv based on a case study of queue jumping in IPT services, \fi highlighting the effectiveness of using contracts in bridging the gap between the revenue and payoff-maximizing administrator outcomes. Next, in Section~\ref{sec:additional-constraints}, we generalize our model to incorporate additional constraints on allocating security resources beyond a resource budget constraint. While a natural generalization of our proposed greedy algorithms that respect the additional constraints (termed as \emph{Constrained-Greedy}) may achieve arbitrarily bad approximation ratios for general constraints, we identify a practically relevant class of constraints, namely that of a \emph{hierarchy}, under which Constrained-Greedy achieves constant factor approximation ratios when augmented with some additional resources.

In the appendix, we provide proofs and extensions of theoretical results omitted from the main text, present 
additional numerical results, and highlight extensions of the model presented in this work, which opens directions for future work.

\section{Related Literature} \label{sec:related-literature}

Game theory has served as a foundational paradigm in studying multi-agent systems wherein agents pursue their selfish interests~\cite{fudenberg1991game}. Among the many successful applications of game theory, it has, in particular, gained traction in security applications, where the problem of allocating security resources is formulated as a Stackelberg game~\cite{tambe2011security,sinha2018stackelberg}. In a Stackelberg security game (SSG), the objective of the security agency is to compute a strategy to deploy its resources to prevent security breaches, given that the adversary will optimize its utility on observing this strategy. SSGs have found many applications, including protecting security checkpoints at airports~\cite{pita2008deployed}, protecting shipping ports~\cite{shieh2012protect}, homeland security and the defense of critical infrastructure~\cite{brown2006defending}, fare inspections in transit systems~\cite{yin2012trusts}, and more recently, in green security contexts~\cite{Xu_Bondi_Fang_Perrault_Wang_Tambe_2021,yang2014adaptive,fang2015security,fang2016deploying}. For a more detailed review of the state-of-the-art on security games, see~\cite{tambe2011security,sinha2018stackelberg}.

The work on SSGs has focused on equilibrium computation in two classes of problems: (i) single-resource Bayesian games, where the adversary's type is drawn from a distribution while the security agency has one resource to prevent a security breach~\cite{paruchuri2008playing} and (ii) multi-resource single-type games, where adversaries have a single type but the security agency has multiple resources~\cite{korzhyk2010complexity,kjtjft-2009}. While designing algorithms for multi-resource Bayesian SSGs has been challenging, in a recent work, Li et al.~\cite{li2016catcher} developed algorithms to compute equilibria in such games. Akin to Li et al.~\cite{li2016catcher}, we also study a security game with multiple resources and multiple adversary types; however, unlike Li et al.~\cite{li2016catcher}, who only evaluate their algorithms through experiments, we present provable approximation guarantees of our algorithms that are computationally efficient. Thus, to our best knowledge, our work is the first to study SSGs with multiple resources and multiple adversary types while presenting tractable algorithms with provable guarantees.

Beyond the above-mentioned contribution, our game structure and solution methodology also differ from prior works on security games. First, unlike classical security games, where adversaries typically have an allocation task of determining a utility-maximizing set of locations to target, in our setting, the role of users (i.e., the adversaries) is to decide whether or not to commit fraud at their respective locations. Next, we depart from much of the previous security games literature as we explicitly model fines administrators levy on defaulting users. Consequently, given the differences in our modeling assumptions and game structure, rather than developing large-scale mixed-integer linear programs to compute equilibria as in past works on security games, we leverage the structure of the payoffs of the administrator and users induced by the fines in our setting to uncover novel geometric insights to develop computationally efficient algorithms with provable performance guarantees.

While SSGs have been a more recent application of game theory, applying economic theory in law enforcement contexts dates to the seminal work of Becker~\cite{becker1968crime}, who modeled criminals as utility-maximizing agents and considered a framework where it is profitable to perform enforcement if its benefits outweigh its costs. Since Becker's work~\cite{becker1968crime}, there has been a growing literature on inspection games that study the equilibrium interplay between an enforcement agency and fraudulent agents~\cite{AVENHAUS20021947}. Such equilibrium inspection models have been studied across applications, including environmental compliance~\cite{HARRINGTON198829,heyes2000implementing,FRIESEN200372,EconomicsofEnvironmentalComplianceandEnforcement}, quality control in supply chains~\cite{su12051874}, fare collections for public transportation ~\cite{SASAKI2014107}, and parking enforcement~\cite{NOURINEJAD201733}, and even been investigated in repeated game settings~\cite{che2024predictive,timing-inspections-predictable-2023}. Akin to the work on inspection games, we also study the influence of the probability of inspection (or allocating a security resource) and the associated inspection penalties (or fines) on the behavior of fraudulent users~\cite{telle2009threat,calel2023policing}. However, unlike this literature, which typically does not account for resource limitations, we investigate a setting where the administrator (or enforcement agency) has a fixed budget of resources to mitigate fraud.

Methodologically, our work aligns with the literature on approximation algorithms for NP-hard problems~\cite{hochba1997approximation,williamson2011design,kempe2003maximizing}, as we also develop polynomial time algorithms that achieve a constant factor approximation to the optimal solutions that are NP-hard to compute. Moreover, our work contributes to the literature on beyond worst-case algorithm design by developing resource augmentation guarantees~\cite{Roughgarden_2021}, wherein an algorithm's performance is compared to the optimal solution handicapped with fewer resources. In particular, we obtain Bulow-Klemperer~\cite{auctions-vs-negotiations} style results for our problem setting as we establish that simple greedy-like algorithms with one additional resource achieve higher revenues or payoffs than the optimal solutions (which are NP-hard to compute) with no extra resources~\cite{akbarpour2020just,akbarpout-matching-spatial,siam-bgg-2020}. Overall, our guarantees for the developed greedy algorithms contribute to the broader literature on \emph{simplicity versus optimality} in algorithm design~\cite{hartline2009simple,HART2017313}.

Our work is also related to \emph{contract theory}~\cite{hart1986theory,bolton2004contract}, which typically considers a principal-agent problem where a principal delegates a task to an agent who takes a (possibly) costly action, unobservable to the principal, that triggers a distribution over rewards~\cite{holmstrom1979moral,grossman1992analysis,dutting2022combinatorial,dutting2024combinatorial,comb-agency-babaioff-2006,dutting-2023-multiagent}. While we also study an optimal contract problem, unlike classical principal-agent models, we use contracts as a mechanism to bridge the gap between the revenue and payoff maximizing outcomes.

\vspace{-5pt}

\section{Model and Preliminaries} \label{sec:model}

This section presents a model of our security game and introduces the strategies and payoffs of the administrator and users. For additional discussions of some of our modeling assumptions beyond those presented in this section, see Appendix~\ref{apdx:model-assumptions-discussion}.

\subsection{Parameters of Security Game} \label{subsec:game-params}

We consider a security game where an administrator seeks to allocate a budget of $R$ security resources (e.g., police officers) across a set of $L$ locations susceptible to fraud and levies a fine of $k$ against users found engaging in fraud. In this security game, users at each location $l$ belong to a discrete (and finitely supported) set $\I$ of \emph{types}, where $|\I| \in \mathbb{N}$, known to the administrator. At each location $l$, each user type $i \! \in \! \I$ is specified by a triple $\Theta_l^i \! = \! (\Lambda_l^i, d_l^i, p_l^i)$, where $\Lambda_l^i$ denotes the number of users belonging to type $i$, $d_l^i$ is the benefit received by users of type $i$ who engage in fraud, and $p_l^i$ represents the administrator payoff for mitigating fraud at location $l$ from users of type $i$. We emphasize that user types are location-specific; hence, we use the subscript $l$ in defining a user type. Further, we order user types by the user benefits such that $d_l^1 \leq d_l^2 \leq \ldots \leq d_l^{|\I|}$. We also note that depending on the administrator’s goals in a specific context, many formulations for the payoff $p_l^i$ are possible, and we subsume the different possible administrator objectives into the term $p_l^i$ for generality.


A few comments about our modeling assumptions are in order. First, while we study a finite user type setting, which captures the main technical nuances of our security game, our algorithms and results can be naturally extended to the continuous user type setting under a mild regularity condition on the boundedness of the administrator's objective (see Appendix~\ref{apdx:extension-continuous-distributions}). Moreover, while our model considers a deterministic set of user types, our model and results can be naturally extended to capture the Bayesian setting, as in Bayesian SSGs, where the administrator only knows the distribution over user types (see Appendix~\ref{apdx:probabilistic-extension}). We present the additional notation required to directly extend our algorithms and results to the probabilistic setting in Appendix~\ref{apdx:probabilistic-extension}. Finally, while we study a setting where the fine $k$ is fixed, our results generalize to the setting when the administrator additionally optimizes over a fine $k$ belonging to an interval $[\underline{k}, \Bar{k}]$ for some $\underline{k}, \Bar{k} \! > \!0$ (see Appendix~\ref{apdx:variability-in-fines-extension}).

To make our model more concrete, we elucidate an example of user types and administrator payoffs in the context of queue jumping in intermediate public transport services. In this context, the location set $L$ corresponds to the mini-bus or share-taxi stands at which users avail the IPT service, and the user type set $\I$ specifies users' different sensitivities or willingness to wait in the queue, captured by a value of time parameter $v_l^i$, which corresponds to the monetary equivalent of one unit of time waiting in the queue. Moreover, letting $t_l$ denote a user's time savings when jumping the queue at location $l$, user types can be defined as follows: $\Lambda_l^i$ represents the number of users with value of time $v_l^i$ availing the IPT service at a location $l$, $d_l^i = v_l^i t_l$ represents the monetary equivalent of the wait time that user type $i$ saves when jumping the queue at location $l$, and $p_l^i$ can, for instance, represent the fraud that the presence of a security resource at location $l$ can prevent, given by $p_l^i = \Lambda_l^i t_l$, i.e., the total wait time benefits that fraudulent users accrue, which is equal to the additional wait time faced by non-defaulting users. As another example, a payoff $p_l^i = x \Lambda_l^i t_l$ or $p_l^i = \Lambda_l^i t_l^x$ for some $x \geq 1$ can serve as a proxy to capture the fact that an administrator may place a higher value in reducing additional wait times of non-defaulting users.

For another example of user types and administrator payoffs in the context of another security game application, that of parking enforcement, see Section~\ref{sec:numerical-experiments-parking-enforcement}.

\subsection{Strategies of Administrator and Users} \label{subsec:player-strategies}

As in prior security games literature~\cite{tambe2011security}, we model our problem as a Stackelberg game, where an administrator (the leader) selects a strategy to allocate its security resources to which the users (the followers) respond by deciding whether to engage in fraud. Here, we present the strategies of the administrator and users in our studied security game.

\emph{Administrator Strategy:} We denote $\sigmaa = (\sigma_l)_{l \in L}$ as the (mixed)-strategy of the administrator, where $\sigma_l \in [0, 1]$ denotes the probability with which a security resource is allocated to location $l$. This mixed-strategy satisfies the administrator's resource budget, i.e., $\sum_{l \in L} \sigma_l \leq R$. For brevity of notation, we define the feasible set of this mixed-strategy vector as $\Omega_R = \{ \sigmaa = (\sigma_l)_{l \in L}: \sigma_l \in [0, 1] \text{ for all } l \in L \text{ and } \sum_{l \in L} \sigma_l \leq R \}$, where the subscript $R$ represents the number of security resources available to the administrator.

\emph{User Strategy:} In response to the administrator's strategy $\sigmaa$, users at each location $l$ decide whether to engage in fraud. We let $y_l^i(\sigmaa) \in [ 0, 1 ]$ denote the probability with which users of type $i$ at location $l$ will engage in fraud, where the outcome $y_l^i(\sigmaa) = 1$ ($y_l^i(\sigmaa) = 0$) corresponds to a setting where users of type $i$ at location $l$ do (do not) engage in fraud.

\subsection{User and Administrator Objectives} \label{subsec:payoffs}

We assume users to be utility maximizers, as is standard in security games~\cite{yin2012trusts}, and study the administrator's resource allocation strategies under revenue and payoff maximization objectives. A revenue maximization objective, wherein the administrator maximizes the total fines collected from enforcement, aligns with the model of a selfish administrator, which is a standard assumption in security games~\cite{yin2012trusts} and resembles practice. For instance, police often place speed traps where users are likely to violate the speed limit despite other locations being more accident-prone~\cite {prensky_johnson_2023}. On the other hand, a payoff maximization objective is a natural choice for an administrator seeking to, for instance, target locations most impacted by fraud.



We first elucidate the utility maximization problem of users, who at each location $l$ best respond to the administrator's strategy $\sigmaa$ by choosing whether to engage in fraud based on whether the benefits from fraud outweigh the risk of potential losses through fines. The trade-off between the benefits and losses from fraud can differ by application. For instance, in a parking enforcement context, users have typically already engaged in fraud, e.g., by not purchasing a parking permit, when a police officer patrols a given parking lot, thereby resulting in a gain of $d_l^i$ for user type $i$ from not purchasing the parking permit and a loss through fines of $\sigma_l k$, as the police patrols location $l$ with probability $\sigma_l$. In contrast, in the context of queue jumping in informal transport (see Section~\ref{sec:examples-pertinent}), the presence of a police officer is likely to deter users jumping the queue in the first place, thereby resulting in a gain of $(1-\sigma_l) d_l^i$ for user type $i$ from committing fraud at location $l$, as users are only able to jump the queues when the police officer is not present at location $l$, and a loss through fines of $\sigma_l k$.

To capture these different user utilities, we let $\beta \in [0, 1]$ denote the probability that the security resource when allocated deters fraud. Then, the utility maximization problem of the users of type $i$, given a strategy $\sigmaa$, at location $l$ is given by
\begin{align} \label{eq:userOpt-eachLoc}
    \max_{y_l^i \in [0, 1]} U_l^i(\sigmaa, y_l^i) = y_l^i [ (1-\beta \sigma_l) d_l^i - \sigma_l k ].
\end{align}
The above problem represents the \emph{additional} gains to users of type $i$ at location $l$ when engaging in fraud with probability $y_l^i$, where, without loss of generality, we normalize the utility of not engaging in fraud (which happens with probability $1-y_l^i$) to zero. Note that the case when $\beta = 0$ corresponds to the user utility in the parking enforcement context while the case when $\beta = 1$ corresponds to the user utility in the context of queue jumping in IPT services. To simplify exposition, we focus on the setting when $\beta = 1$; however, all our results and analysis naturally extend to any $\beta \in [0, 1]$.

Next, we elucidate the administrator's revenue and payoff maximization problems.

\emph{Revenue Maximization:} The revenue from collected fines under an administrator strategy $\sigmaa$ at each location $l$ from user type $i$ is given by $\sigma_l y_l^i(\sigmaa) k \Lambda_l^i$, where $y_l^i(\sigmaa)$ is the best-response of users at location $l$, given by the solution of Problem~\eqref{eq:userOpt-eachLoc}. Note if $y_l^i(\sigmaa) = 0$, i.e., users of type $i$ do not commit fraud at location $l$, the administrator collects no revenue from these users, while if $y_l^i(\sigmaa)>0$, the administrator levies a fine $k$ on the $\Lambda_l^i$ fraudulent users of type $i$ at location $l$, resulting in a revenue of $k \Lambda_l^i$ from those users. Then, the administrator's revenue maximization problem is given by the following bi-level program: \vspace{-5pt}
\begin{maxi!}|s|[2]   
    {\substack{\sigmaa \in \Omega_R \\ y_l^i(\sigmaa) \in [0, 1], \forall l \in L, i \in \I}}                            
    { Q_R(\sigmaa) = \sum_{l \in L} \sum_{i \in \I} \sigma_l y_l^i(\sigmaa) k \Lambda_l^i,  \label{eq:admin-obj-revenue}}   
    {\label{eq:Eg002}}             
    {}          
    \addConstraint{y_l^i(\sigmaa)}{\in \argmax_{y \in [0, 1]} U_l^i(\sigmaa, y), \quad \text{for all } l \in L, i \in \I, \label{eq:bi-level-con-revenue}}    
\end{maxi!}
where in the upper level problem the administrator deploys an enforcement strategy $\sigmaa$ to maximize its revenue to which users best-respond by maximizing their utilities in the lower-level problem. Here, we express the administrator revenue $Q_R(\sigmaa)$ only as a function of $\sigmaa$ and not the user strategy vector $\y = (y_l^i)_{l \in L, i \in \I}$ for notational simplicity, as the best-response function $y_l^i(\sigmaa)$ of users of type $i$ at each location $l$ can itself be expressed as a function of the strategy $\sigmaa$, as given by Equation~\eqref{eq:bi-level-con-revenue}. In the remainder of this work, we will clarify the formulation of $y_l^i(\sigmaa)$ based on context.

\emph{Payoff Maximization:} When $\beta = 1$, the payoff under a strategy $\sigmaa$ at each location $l$ for user type $i$ is $\sigma_l p_l^i \! + \! (1-\sigma_l) (1 \!- \!y_l^i(\sigmaa)) p_l^i \! = \! p_l^i \! - \! (1 \!- \! \sigma_l) y_l^i(\sigmaa) p_l^i$. In other words, the administrator accrues a payoff $p_l^i$ from user type $i$ if it allocates a resource to location $l$, as the presence of a security resource can prevent fraud at that location when $\beta = 1$, but only accrues $p_l^i$ with probability $1 \! - \! y_l^i(\sigmaa)$ if it does not allocate a resource to location $l$. Then, defining the payoff not accrued under a strategy $\sigmaa$ with $R$ resources as $P_R^-(\sigmaa) = \sum_{l \in L} \sum_{i \in \I} (1 \! - \! \sigma_l) y_l^i(\sigmaa) p_l^i$, the administrator's payoff is $P_R(\sigmaa) \! = \! \sum_{l \in L} \sum_{i \in \I} p_l^i \! - \! P_R^-(\sigmaa)$. Finally, since the term $\sum_{l \in L} \sum_{i \in \I} p_l^i$ is a constant independent of $\sigmaa$, the payoff-maximizing strategy is equivalent to one that minimizes $P_R^-(\sigmaa)$ and can be computed using the following bi-level program: 
\begin{mini!}|s|[2]   
    {\substack{\sigmaa \in \Omega_R \\ y_l(\sigmaa) \in [0, 1], \forall l \in L}}                   
    { P_R^-(\sigmaa) = \sum_{l \in L} \sum_{i \in \I} (1 - \sigma_l) y_l^i(\sigmaa) p_l^i,  \label{eq:admin-obj-fraud}}   
    {\label{eq:Eg001}}             
    {}          
    \addConstraint{y_l^i(\sigmaa)}{\in \argmax_{y \in [0, 1]} U_l^i(\sigmaa, y), \quad \text{for all } l \in L, i \in \I. \label{eq:bi-level-con-fraud}}    
\end{mini!}
While Problem~\eqref{eq:admin-obj-fraud}-\eqref{eq:bi-level-con-fraud} corresponds to the setting when the parameter $\beta = 1$, analogous formulations for the administrator's payoff-maximization problem can also be derived for any $\beta \in [0, 1)$.

Having defined the administrator's and users' objectives, our \emph{goal} is to find tuples $(\sigmaa^*, (y_l^i(\sigmaa^*))_{l \in L, i \in \I})$ that solve the above bi-level programs. Note that such tuples constitute \emph{equilibria} of our security game corresponding to the respective administrator objectives. Moreover, as the structure of the equilibrium best-response function $y_l^i(\sigmaa)$ takes on a simple form (see Section~\ref{sec:deterministic-setting}), analyzing equilibria in our security game reduces to studying the optimal administrator strategies, which will be the main focus in this work.

\section{Revenue and Payoff Maximization with Homogeneous Users} \label{sec:deterministic-setting}

We begin with the study of our security game and the corresponding resource allocation strategies of the administrator under both revenue (Section~\ref{subsec:deterministic-revenue-max}) and payoff (Section~\ref{subsec:welfare-max-deterministic}) maximization objectives in the setting when all users at each location are homogeneous, i.e., each location has a single type of users ($|\I| = 1$), while user types can differ across locations. In this setting, for expositional simplicity and brevity of notation, we drop the superscript $i$ in the notation of user types and denote the type of all users at a given location $l$ by the triple $\Theta_l = (\Lambda_l, d_l, p_l)$.

\vspace{-3pt}

\subsection{Revenue Maximization} \label{subsec:deterministic-revenue-max}

This section studies the administrator's revenue maximization problem and analyzes a greedy algorithm, shown in Algorithm~\ref{alg:GreedyRevenueMaximization}, which allocates resources to locations in the descending order of their $\Lambda_l$ values, where the total spending at each location $l$ is no more than a threshold of $\frac{d_l}{d_l+k}$.









\vspace{-2pt}

\begin{algorithm}
\footnotesize
\SetAlgoLined
\SetKwInOut{Input}{Input}\SetKwInOut{Output}{Output}
\Input{Total Resource capacity $R$, User types $\Theta_l = (\Lambda_l, d_l, v_l)$ for all locations $l$}
Order locations in descending order of $\Lambda_l$ \;
 \For{$l = 1, 2, ..., |L|$}{
      $\sigma_l \leftarrow \min \{ R, \frac{d_l}{d_l+k} \}$ ; \texttt{\footnotesize \sf Allocate the minimum of the remaining resources and $\frac{d_l}{d_l+k}$ to location $l$} \;
      $R \leftarrow R -  \sigma_l$; \quad \texttt{\footnotesize \sf Update amount of remaining resources} \;
  }
\caption{\footnotesize Greedy Algorithm for Administrator's Revenue Maximization Objective}
\label{alg:GreedyRevenueMaximization}
\end{algorithm}


We show that Algorithm~\ref{alg:GreedyRevenueMaximization} achieves a revenue-maximizing outcome for the administrator.

\begin{theorem} [Optimality of Greedy Algorithm for Revenue Maximization Setting] \label{thm:greedy-opt-rev-max-deterministic}
Suppose that user types are homogeneous. Then, the allocation strategy corresponding to Algorithm~\ref{alg:GreedyRevenueMaximization} achieves a revenue-maximizing outcome, i.e., it solves Problem~\eqref{eq:admin-obj-revenue}-\eqref{eq:bi-level-con-revenue}.
\end{theorem}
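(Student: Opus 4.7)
The plan is to reduce the bi-level program to a simple capacitated linear program and then invoke a standard exchange argument. The main work is characterizing the user best response and showing that the administrator loses nothing by restricting each $\sigma_l$ to the interval in which users still commit fraud.

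First, I would analyze the lower-level problem~\eqref{eq:userOpt-eachLoc}. Since $U_l(\sigmaa, y_l) = y_l\bigl[(1-\sigma_l)d_l - \sigma_l k\bigr]$ is linear in $y_l \in [0,1]$, the best response is $y_l(\sigmaa) = 1$ whenever $\sigma_l \leq \frac{d_l}{d_l + k}$ (including the indifference point, by the usual tie-breaking convention in Stackelberg security games that favors the leader) and $y_l(\sigmaa) = 0$ otherwise. Consequently the revenue from location $l$ is $\sigma_l k \Lambda_l \cdot \mathbf{1}\{\sigma_l \leq d_l/(d_l+k)\}$, a function that is linear and increasing up to $\sigma_l = d_l/(d_l+k)$ and then drops to $0$.

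Second, I would argue that an optimal strategy $\sigmaa^*$ can be taken to satisfy $\sigma_l^* \leq \frac{d_l}{d_l+k}$ for every $l \in L$. Indeed, if $\sigma_l^* > \frac{d_l}{d_l+k}$, then the revenue contribution of location $l$ is zero, so replacing $\sigma_l^*$ by $\frac{d_l}{d_l+k}$ (or even by $0$) only frees resources that may be reallocated elsewhere without decreasing total revenue. After this reduction, Problem~\eqref{eq:admin-obj-revenue}–\eqref{eq:bi-level-con-revenue} becomes the linear program
\begin{equation*}
\max_{\sigmaa} \; \sum_{l \in L} k \Lambda_l \, \sigma_l \quad \text{s.t.} \quad 0 \leq \sigma_l \leq \tfrac{d_l}{d_l+k}, \; \sum_{l \in L} \sigma_l \leq R.
\end{equation*}

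Third, I would observe that this LP is a continuous knapsack with unit resource cost and per-unit value $k\Lambda_l$, so its optimum is obtained greedily by saturating the cap $\frac{d_l}{d_l+k}$ at locations in descending order of $\Lambda_l$ until the budget $R$ is exhausted—which is exactly Algorithm~\ref{alg:GreedyRevenueMaximization}. A short exchange argument formalizes this: given any feasible $\sigmaa$, if some location $l$ with a larger $\Lambda_l$ is not saturated while some location $l'$ with smaller $\Lambda_{l'}$ carries positive mass, transferring $\varepsilon$ units from $l'$ to $l$ preserves feasibility and weakly increases the objective by $k(\Lambda_l - \Lambda_{l'})\varepsilon \geq 0$. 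Iterating this swap yields the greedy allocation.

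The main obstacle is the (easy but essential) tie-breaking at the threshold $\sigma_l = d_l/(d_l+k)$; without the leader-favorable convention, the supremum in the upper-level problem may fail to be attained, and one has to phrase the claim via a limit argument. Assuming the standard convention, the three steps above give the result.
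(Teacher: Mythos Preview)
Your proposal is correct and follows essentially the same approach as the paper: the paper likewise (i) characterizes the threshold best response with leader-favoring tie-breaking at $\sigma_l = d_l/(d_l+k)$, (ii) reduces the bi-level program to the capacitated linear program $\max \sum_l k\Lambda_l \sigma_l$ subject to $0 \le \sigma_l \le d_l/(d_l+k)$ and $\sum_l \sigma_l \le R$ (their Lemma~\ref{lem:lp-rev-max-deterministic}), and (iii) proves greedy optimality of that LP via the same $\varepsilon$-exchange argument between a high-$\Lambda$ and a low-$\Lambda$ location (their Lemma~\ref{lem:opt-algo-is-greedy-rev-max}). Your discussion of the tie-breaking issue mirrors the paper's remark that $y_l(\sigmaa)=1$ at the threshold is required for the bi-level optimum to be attained.
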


Theorem~\ref{thm:greedy-opt-rev-max-deterministic}'s proof relies on the fact that, in the revenue maximization setting, given an administrator strategy $\sigmaa$, the best-response function $y_l(\sigmaa)$ of users, given by the solution of Problem~\eqref{eq:userOpt-eachLoc}, at each location $l$ is given by a threshold policy: 
\begin{align} \label{eq:best-response-users-rev-max}
    y_l(\sigmaa) = 
    \begin{cases}
        0, &\text{if } \sigma_l > \frac{d_l}{d_l+k}, \\
        1, &\text{otherwise}.
    \end{cases}
\end{align}
Notice that when $\sigma_l = \frac{d_l}{d_l+k}$, any $y_l(\sigmaa) \in [0, 1]$ is a best-response for users at location $l$, i.e., users at location $l$ are indifferent between engaging and not engaging in fraud. However, at the threshold $\sigma_l = \frac{d_l}{d_l+k}$, the administrator's revenue is maximized when $y_l(\sigmaa) = 1$ with any $y_l(\sigmaa) < 1$ resulting in strictly lower revenues. Thus, in the revenue maximization setting, our security game has an equilibrium and, consequently, $y_l(\sigmaa)$ corresponds to a solution of the lower-level problem of the bi-level Program~\eqref{eq:admin-obj-revenue}-\eqref{eq:bi-level-con-revenue} if and only if $y_l(\sigmaa) = 1$ when $\sigma_l = \frac{d_l}{d_l+k}$. We also note that selecting $y_l(\sigmaa) = 1$ when $\sigma_l = \frac{d_l}{d_l+k}$ aligns with the notion of strong Stackelberg equilibria~\cite{kjtjft-2009}, where the ties of the followers (users) are broken to optimize the leader's (administrator's) payoff.

Equation~\eqref{eq:best-response-users-rev-max} implies that if the probability of allocating resources to a location exceeds a threshold, users at that location will stop engaging in fraud as the risk of fines outweighs the gains from fraud. Given users' best-response function in Equation~\eqref{eq:best-response-users-rev-max}, the revenue at each location as a function of the amount of resources allocated to that location is depicted in Figure~\ref{fig:social-welfare-revenue-best-response-plots} (left).




\begin{figure}[tbh!]
    \centering
    \includegraphics[width=0.65\linewidth]{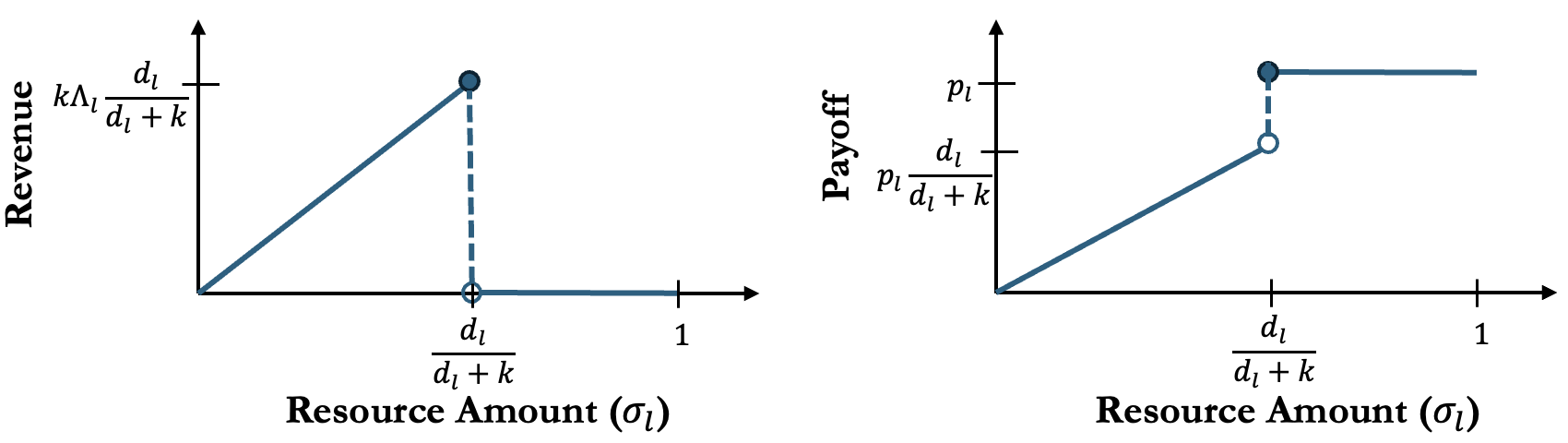}
    \vspace{-12pt}
    \caption{{\small \sf Depiction of the revenue (left) and payoff (right) as a function of the amount of resources allocated to a location $l$. The revenue increases up to a threshold on the resource spending, following which the revenue drops to zero. Analogously, the payoff increases linearly up to the same threshold on the resource spending at which the payoff has a jump discontinuity to the maximum payoff $p_l$ achievable at that location. 
    While users are indifferent between whether or not to engage in fraud at the threshold $\sigma_l = \frac{d_l}{d_l+k}$, 
    revenue is maximized when $y_l(\sigmaa) = 1$ (Equation~\eqref{eq:best-response-users-rev-max}), while the payoff is maximized when $y_l(\sigmaa) = 0$ (see Section~\ref{subsec:problem-structure}), aligning with the filled and open circles for the revenue (left) and payoff (right) maximization settings, respectively.
    }} 
    \label{fig:social-welfare-revenue-best-response-plots}
\end{figure} 




We now use users' best response function in Equation~\eqref{eq:best-response-users-rev-max} to complete the proof of Theorem~\ref{thm:greedy-opt-rev-max-deterministic}. 

\begin{hproof}[Proof (Sketch) of Theorem~\ref{thm:greedy-opt-rev-max-deterministic}]
Leveraging Equation~\eqref{eq:best-response-users-rev-max}, Theorem~\ref{thm:greedy-opt-rev-max-deterministic}'s proof relies on establishing that the bi-level Program~\eqref{eq:admin-obj-revenue}-\eqref{eq:bi-level-con-revenue} can be reduced to solving a linear program. The key observation to this reduction is that it suffices to consider administrator strategies satisfying $\sigma_l \in [0, \frac{d_l}{d_l+k}]$ for all locations $l$, as allocating more than $\frac{d_l}{d_l+k}$ resources to any location results in reduced revenues (see Figure~\ref{fig:social-welfare-revenue-best-response-plots}), as $y_l(\sigmaa) = 0$ when $\sigma_l > \frac{d_l}{d_l+k}$. We then leverage the structure of the linear program to show that Algorithm~\ref{alg:GreedyRevenueMaximization} that allocates resources to locations in the descending order of the $\Lambda_l$ values achieves an optimal solution to this linear program. 
\end{hproof}

For a complete proof of Theorem~\ref{thm:greedy-opt-rev-max-deterministic}, see Appendix~\ref{apdx:pf-thm-rev-max-deterministic}. Theorem~\ref{thm:greedy-opt-rev-max-deterministic} is in contrast to general hardness results for solving bi-level programs, as it establishes that a simple greedy algorithm (Algorithm~\ref{alg:GreedyRevenueMaximization}) computes the revenue-maximizing strategy, i.e., solves Problem~\eqref{eq:admin-obj-revenue}-\eqref{eq:bi-level-con-revenue}, in polynomial time. In particular, Algorithm~\ref{alg:GreedyRevenueMaximization}'s complexity is $O(|L| \log(|L|))$, since the complexity of sorting locations is $O(|L| \log(|L|))$ while that of iterating through the locations in linear in $|L|$.

\subsection{Payoff Maximization} \label{subsec:welfare-max-deterministic}

\vspace{-1pt}

This section studies the administrator's payoff maximization problem in the setting with homogeneous users. To this end, we first present several properties of the administrator's payoff-maximizing strategy in Section~\ref{subsec:problem-structure}. Then, we establish that computing the payoff-maximizing strategy is NP-hard (Section~\ref{subsec:nphardness-swm-fm}). Finally, we present a greedy algorithm with its associated approximation ratio and resource augmentation guarantees (Section~\ref{subsec:greedy-fraud-min}) and a PTAS for the payoff maximization problem (Section~\ref{sec:ptas-payoff-maximization}).


\vspace{-3pt}

\subsubsection{Properties of Payoff Maximizing Strategy} \label{subsec:problem-structure}

This section presents properties of the solution of the administrator's payoff-maximization problem, which play a key role in our analysis in Sections~\ref{subsec:nphardness-swm-fm}-\ref{sec:ptas-payoff-maximization}. To elucidate these properties, we first note in the payoff-maximization setting that the best-response function of users, given by the solution of Problem~\eqref{eq:userOpt-eachLoc}, at each location $l$ given an administrator strategy $\sigmaa$ is given by: \vspace{-2pt}
\begin{align} \label{eq:best-response-users}
    y_l(\sigmaa) = 
    \begin{cases}
        0, &\text{if } \sigma_l \geq \frac{d_l}{d_l+k}, \\
        1, &\text{otherwise}.
    \end{cases}
\end{align}
As in the revenue maximization setting, we recall that when $\sigma_l = \frac{d_l}{d_l+k}$, any $y_l(\sigmaa) \in [0, 1]$ is a best-response for users at location $l$. However, at the threshold $\sigma_l = \frac{d_l}{d_l+k}$, the administrator's payoff is maximized and equal to $p_l$ when $y_l(\sigmaa) = 0$ with any $y_l(\sigmaa) > 0$ resulting in a strictly lower administrator payoff of $p_l - (1-\sigma_l) y_l(\sigmaa) p_l$ at that location. Thus, in the payoff maximization setting, our security game has an equilibrium and, consequently, $y_l(\sigmaa)$ corresponds to a solution of the lower-level problem of the bi-level Program~\eqref{eq:admin-obj-fraud}-\eqref{eq:bi-level-con-fraud} if and only if $y_l(\sigmaa) = 0$ when $\sigma_l = \frac{d_l}{d_l+k}$. We note here that unlike the best response of users in the revenue-maximization setting, where $y_l(\sigmaa) = 1$ when $\sigma_l = \frac{d_l}{d_l+k}$, in the payoff maximization setting, we take $y_l(\sigmaa) = 0$ when $\sigma_l = \frac{d_l}{d_l+k}$. Such a difference in the outcomes is attributable to the nature of the revenue and payoff functions at each location $l$, as depicted in Figure~\ref{fig:social-welfare-revenue-best-response-plots}, where the payoff increases to $p_l$ once the probability of allocating resources to location $l$ exceeds the threshold $\frac{d_l}{d_l+k}$ while the revenue drops to zero.

Given the best-response function of users at each location $l$, we now characterize several properties of the administrator's payoff-maximizing strategy.

\vspace{-1pt}

\begin{proposition} [Properties of Payoff-Maximizing Strategy] \label{prop:opt-mixed-strategy-solution}
Suppose user types are homogeneous and that users at each location best-respond using Equation~\eqref{eq:best-response-users}. Then, there exists a solution $\Tilde{\sigmaa}^*$ of Problem~\eqref{eq:admin-obj-fraud}-\eqref{eq:bi-level-con-fraud} satisfying: \vspace{-2pt}
\begin{enumerate}
    \item $\Tilde{\sigma}_l^* \in \big[ 0, \frac{d_l}{d_l + k} \big]$ for all locations $l \in L$.
    \item There exists a set $L_1$ with $\Tilde{\sigma}_l^* = \frac{d_l}{d_l+k}$ for all $l \in L_1$, a set $L_2$ with $\Tilde{\sigma}_l^* = 0$ for all $l \in L_2$, and at most one location $l'$ with $\Tilde{\sigma}_{l'}^* \in \big( 0, \frac{d_{l'}}{d_{l'}+k} \big)$, where $L_1$, $L_2$, and $\{ l'\}$ are disjoint and $L_1 \cup L_2 \cup \{ l'\} = L$.
\end{enumerate}
Moreover, the administrator's optimal payoff under the strategy $\Tilde{\sigmaa}^*$ is: $P_R(\Tilde{\sigmaa}^*) = \sum_{l \in L_1} v_l + \Tilde{\sigma}_{l'}^* v_{l'}$.
\end{proposition}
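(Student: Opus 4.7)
The plan is to start from any payoff-maximizing strategy $\sigmaa^*$ of Problem~\eqref{eq:admin-obj-fraud}--\eqref{eq:bi-level-con-fraud} and apply a finite sequence of local exchanges that preserve optimality while forcing the strategy into the claimed structure. For property~1, I would invoke Equation~\eqref{eq:best-response-users} together with the tie-breaking convention that $y_l(\sigmaa)=0$ when $\sigma_l = d_l/(d_l+k)$ in the payoff-maximization setting. If $\sigma_l^* > d_l/(d_l+k)$ for some $l$, then $y_l(\sigmaa^*)=0$ and the per-location payoff equals $p_l$; reducing $\sigma_l^*$ to exactly $d_l/(d_l+k)$ keeps $y_l=0$ and the per-location payoff at $p_l$ while freeing $\sigma_l^* - d_l/(d_l+k)$ units of resource that can be left idle. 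Since the total payoff does not decrease, I may assume $\sigma_l^* \in [0, d_l/(d_l+k)]$ for every $l$.

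For property~2, I would exploit the shape of the per-location payoff $f_l(\sigma_l) = p_l - (1-\sigma_l)y_l(\sigmaa)p_l$ shown in Figure~\ref{fig:social-welfare-revenue-best-response-plots} (right): it equals $\sigma_l p_l$ on $[0, d_l/(d_l+k))$ and jumps up to $p_l$ at the threshold. Suppose $\sigmaa^*$ has two interior locations $l_1, l_2$ with $\sigma_{l_i}^* \in (0, d_{l_i}/(d_{l_i}+k))$; WLOG $p_{l_1} \geq p_{l_2}$, and let $S = \sigma_{l_1}^* + \sigma_{l_2}^*$. If $S \leq d_{l_1}/(d_{l_1}+k)$, I would move all of $S$ to $l_1$ and set $\sigma_{l_2}=0$; the payoff change is at least $\sigma_{l_2}^*(p_{l_1}-p_{l_2}) \geq 0$. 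If $S > d_{l_1}/(d_{l_1}+k)$, I would push $\sigma_{l_1}$ to its threshold and leave the residual $S - d_{l_1}/(d_{l_1}+k)$ on $l_2$; a direct calculation gives a payoff change of $p_{l_1}(1-\sigma_{l_1}^*) - p_{l_2}(d_{l_1}/(d_{l_1}+k) - \sigma_{l_1}^*)$, which I would lower-bound by $p_{l_2}(1 - d_{l_1}/(d_{l_1}+k)) \geq 0$ using $p_{l_1} \geq p_{l_2}$ and $\sigma_{l_1}^* < d_{l_1}/(d_{l_1}+k) < 1$. Each such exchange strictly reduces the number of interior locations without decreasing the payoff, so iterating yields an optimal $\tilde{\sigmaa}^*$ with at most one interior location; the remaining locations partition into $L_1 = \{l : \tilde{\sigma}_l^* = d_l/(d_l+k)\}$ and $L_2 = \{l : \tilde{\sigma}_l^* = 0\}$.

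The payoff formula in property~3 is then a direct computation: each $l \in L_1$ contributes $p_l$ (since $y_l = 0$ at the threshold), each $l \in L_2$ contributes $p_l - p_l = 0$ (since $\sigma_l = 0$ forces $y_l = 1$), and the single interior location $l'$ contributes $p_{l'} - (1-\tilde{\sigma}_{l'}^*)p_{l'} = \tilde{\sigma}_{l'}^* p_{l'}$, yielding the stated sum. The main technical obstacle is the boundary comparison in the second case of property~2: one must exploit the jump discontinuity of $f_{l_1}$ at the threshold, which supplies an extra $(1 - d_{l_1}/(d_{l_1}+k))p_{l_1}$ beyond the linear extrapolation and is exactly what offsets the linear loss at the donor location $l_2$. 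Without this jump, a purely linear exchange between strictly interior allocations would not in general produce the needed improvement, so the piecewise structure of $f_l$ (not merely its linearity on the interior) is essential to the argument.
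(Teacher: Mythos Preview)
Your proposal is correct and follows essentially the same route as the paper. The paper also first truncates any $\sigma_l^*>d_l/(d_l+k)$ down to the threshold (its Lemma~\ref{lem:ub-helper-mixed-strategy}), then eliminates multiple interior allocations by transferring mass from lower-$p_l$ interior locations to higher-$p_l$ ones (its Lemma~\ref{lem:structure-opt-mixed-strategy}, phrased as a ``water-filling'' over all interior locations at once rather than your iterative pairwise exchange), and finishes with the same direct payoff computation. Your case split on whether $S\lessgtr d_{l_1}/(d_{l_1}+k)$ and the explicit use of the jump at the threshold are, if anything, more carefully spelled out than the paper's somewhat terse justification of its step~(b).
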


\vspace{-1pt}

The proof of the first claim in the proposition statement follows as spending more than $\frac{d_l}{d_l+k}$ at any location does not increase the payoff at that location (see right of Figure~\ref{fig:social-welfare-revenue-best-response-plots}). The proof of the second claim follows from the linearity of the payoff function in the region $\big[0, \frac{d_l}{d_l+k} \big)$ for all locations $l$. Together, both these claims in Proposition~\ref{prop:opt-mixed-strategy-solution} establish that, without loss of generality, for the administrator's payoff maximization problem, it suffices to restrict attention to administrator strategies where the total allocation of security resources $\sigma_l$ at any location $l$ does not exceed $\frac{d_l}{d_l+k}$ and where there is at most one location $l'$ such that $\sigma_{l'} \in \big( 0, \frac{d_{l'}}{d_{l'}+k} \big)$. Consequently, the optimal payoff of the administrator takes a relatively simple form, as in the statement of the proposition. We refer to Appendix~\ref{apdx:pf-thm-opt-mixed-strategy-soln} for a complete proof of Proposition~\ref{prop:opt-mixed-strategy-solution}.

\subsubsection{NP-Hardness of Payoff Maximization}
\label{subsec:nphardness-swm-fm}

We show that the problem of computing the administrator's payoff-maximizing strategy is NP-hard.



\begin{theorem} [NP-Hardness of Payoff Maximization] \label{thm:npHardness-swm-fm}
The problem of computing the administrator's payoff-maximizing strategy, i.e., solving Problem~\eqref{eq:admin-obj-fraud}-\eqref{eq:bi-level-con-fraud}, is NP-hard.
\end{theorem}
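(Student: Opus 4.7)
The plan is to reduce from PARTITION. Given a PARTITION instance $\{a_1,\ldots,a_n\}$ with integer $a_i$ and $\sum_i a_i = 2S$, I construct a payoff-maximization instance as follows. Fix any $k>0$, and choose a scale $M > \max_i a_i^2$ (a polynomial-size integer). Create $L = n$ locations; at each location $l_i$, set $\Lambda_{l_i} = 1$, $p_{l_i} = a_i$, and $d_{l_i} = \frac{k a_i}{M-a_i}$, which makes the threshold $\frac{d_{l_i}}{d_{l_i}+k} = a_i/M \in (0,1)$. Set the budget $R = S/M$. This construction is clearly polynomial in the input size. I will show that the PARTITION instance is a yes-instance if and only if the optimal payoff is at least $S$.

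The forward direction is straightforward: given a subset $T$ with $\sum_{i \in T} a_i = S$, setting $\sigma_{l_i} = a_i/M$ for $i \in T$ and $\sigma_{l_i} = 0$ otherwise is feasible (total allocation $= S/M = R$), and by Equation~\eqref{eq:best-response-users} each user at a location in $T$ is deterred, producing payoff $\sum_{i\in T} p_{l_i} = S$.

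For the reverse direction, I invoke Proposition~\ref{prop:opt-mixed-strategy-solution} to restrict attention to an optimal strategy $\Tilde{\sigmaa}^*$ with a set $L_1$ where $\Tilde{\sigma}_l^* = a_l/M$, a set $L_2$ where $\Tilde{\sigma}_l^* = 0$, and at most one partial location $l'$ with $\Tilde{\sigma}_{l'}^* \in (0, a_{l'}/M)$; its payoff equals $\sum_{l \in L_1} p_l + \Tilde{\sigma}_{l'}^* p_{l'}$. Assume for contradiction that this payoff is at least $S$. Let $A = \sum_{i:\,l_i \in L_1} a_i$; feasibility gives $A/M + \Tilde{\sigma}_{l'}^* \leq S/M$, whence $A \leq S$. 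If $A = S$, then $L_1$ already witnesses a yes-instance of PARTITION. Otherwise integrality forces $A \leq S-1$, and the partial term is bounded by $\Tilde{\sigma}_{l'}^* p_{l'} \leq (a_{l'}/M)\cdot a_{l'} = a_{l'}^2/M < 1$ by the choice of $M$, so the payoff is strictly less than $(S-1)+1 = S$, a contradiction.

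The main obstacle, and the reason Proposition~\ref{prop:opt-mixed-strategy-solution} is essential, is the possible presence of a fractional (partial) location in the optimum, which could otherwise let the administrator shave across the integer gap and blur the PARTITION answer. The scaling $M > \max_i a_i^2$ is the device that clamps any partial contribution below~$1$, so that the integer-valued feasibility inherited from PARTITION is preserved exactly. Since PARTITION is NP-complete, we conclude that solving Problem~\eqref{eq:admin-obj-fraud}--\eqref{eq:bi-level-con-fraud} is NP-hard.
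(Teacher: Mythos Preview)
Your proof is correct and takes a genuinely different route from the paper's. Both reduce from \textsc{Partition} and both lean on Proposition~\ref{prop:opt-mixed-strategy-solution} to control the single ``fractional'' location $l'$ in an optimal solution, but the gadgets differ. The paper introduces an extra $(n{+}1)$-st location with payoff $p_{n+1}=\max_l a_l+\epsilon$ and threshold $\tfrac12+\delta$; this forces any fractional location in an optimum to be location $n{+}1$, and then the combined payoff and resource constraints yield $\sigma_{n+1}(\max_l a_l+\epsilon)\ge \sigma_{n+1}A$, which (since $\epsilon$ is chosen so that $\max_l a_l+\epsilon<A$) drives $\sigma_{n+1}=0$ and collapses the analysis to the integral case. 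You instead keep only the $n$ locations, scale the thresholds by $M>\max_i a_i^2$, and use integrality of the $a_i$: any fractional contribution satisfies $\tilde\sigma_{l'}^*p_{l'}<a_{l'}^2/M<1$, so if $A:=\sum_{L_1}a_i<S$ then integrality gives $A\le S-1$ and the payoff falls strictly below $S$. Your argument is shorter and avoids both the extra gadget and the case split on which location is fractional; the paper's argument, by contrast, does not appeal to integrality in the reverse direction and instead handles the fractional location via the designated absorber $n{+}1$. One cosmetic remark: in your reverse direction, the phrase ``assume for contradiction that this payoff is at least $S$'' is misplaced---you are assuming the hypothesis of the implication; the contradiction arises only in the sub-case $A\le S-1$.
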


\begin{proof}[Proof (Sketch) of Theorem~\ref{thm:npHardness-swm-fm}]
We prove this result through a reduction from an instance of the partition problem, which consists of a sequence of numbers $a_1, \ldots, a_n$ with $\sum_{l \in [n]} a_l = A$ and involves the task of deciding whether there is some subset $S_1$ of numbers such that $\sum_{l \in S_1} a_l = \frac{A}{2}$. 

We now construct an instance of the payoff maximization problem (PMP) with $n+1$ locations, where the first $n$ locations correspond to each number of the partition instance, where we define $p_l = a_l$ and let $\frac{d_l}{d_l+k} = \frac{a_l}{A}$ for all locations $l \in [n]$. We also consider a location $n+1$ with $p_{n+1} = \max_{l \in [n]} p_l + \epsilon$ and $\frac{d_{n+1}}{d_{n+1}+k} = \frac{1}{2} + \delta$, where $\epsilon, \delta >0$ are small constants. Finally, we let the number of resources $R = 0.5$. We then show that a sequence of numbers correspond to a ``Yes'' instance of partition if and only if the optimal payoff of this PMP instance is at least $\frac{A}{2}$.


To prove the forward direction of this claim, for any “Yes” instance of partition with a set $S$ such that $\sum_{l \in S} a_l = \frac{A}{2}$, we construct an allocation strategy $\sigmaa$ such that $\sigma_l = \frac{a_l}{A}$ for all $l \in S$ and $\sigma_l = 0$ for all $l \in [n+1] \backslash S$. We then verify that $\sigmaa$ is feasible and achieves $P_R(\sigmaa) \geq \frac{A}{2}$.


To establish the reverse direction, we leverage Proposition~\ref{prop:opt-mixed-strategy-solution} to show that the only way for both the upper bound resource constraint of $\frac{A}{2}$ and the lower bound payoff constraint of $\frac{A}{2}$ to be satisfied is if the location $l'$ defined in the statement of Proposition~\ref{prop:opt-mixed-strategy-solution} is such that $l' = \emptyset$. 
\end{proof}

For a complete proof of Theorem~\ref{thm:npHardness-swm-fm}, see Appendix~\ref{apdx:pf-np-hardness-swm-fm}. Theorem~\ref{thm:npHardness-swm-fm} establishes that Problem~\eqref{eq:admin-obj-fraud}-\eqref{eq:bi-level-con-fraud} cannot be solved in polynomial time unless $P = NP$. This result contrasts the polynomial time algorithm we developed in the revenue-maximization setting (see Theorem~\ref{thm:greedy-opt-rev-max-deterministic}). We note that the NP-hardness of the payoff maximization setting stems from the fact that, unlike the revenue function, the administrator's payoff function is discontinuous at $\sigma_l = \frac{d_l}{d_l+k}$ (see Figure~\ref{fig:social-welfare-revenue-best-response-plots}).

\subsubsection{Greedy Algorithm for Payoff Maximization} \label{subsec:greedy-fraud-min}

Given the impossibility of developing a polynomial time algorithm for the administrator's payoff-maximization problem unless $P = NP$ (see Theorem~\ref{thm:npHardness-swm-fm}), this section presents a computationally efficient algorithm to compute an administrator strategy with strong approximation guarantees to the solution of Problem~\eqref{eq:admin-obj-fraud}-\eqref{eq:bi-level-con-fraud}. In particular, we develop a variant of a greedy algorithm, described in Algorithm~\ref{alg:GreedyFraudminimizationDeterministic}, to compute an allocation of resources that achieves at least half the optimal payoff. Moreover, we show that running the greedy algorithm with one additional resource, i.e., $R+1$ resources, results in an outcome with at least the optimal payoff under $R$ resources.


We begin by first presenting our algorithmic approach described in Algorithm~\ref{alg:GreedyFraudminimizationDeterministic}. First, for each location $l$, we define an affordability threshold $t_l = \min \{ R, \frac{d_l}{d_l+k}\}$, which represents the maximum amount of resources the administrator can feasibly allocate to location $l$, where recall by Proposition~\ref{prop:opt-mixed-strategy-solution} that it suffices to restrict attention to strategies where the total allocation of security resources to any location $l$ does not exceed the threshold $\frac{d_l}{d_l+k}$. Moreover, let $\Hat{p}_l$ be the payoff from allocating $t_l$ resources to location $l$, where $\Hat{p}_l = p_l$ if $t_l = \frac{d_l}{d_l+k}$ and $\Hat{p}_l = t_l p_l$ if $t_l < \frac{d_l}{d_l+k}$ (see Figure~\ref{fig:social-welfare-revenue-best-response-plots}). Then, we find the greedy solution $\Tilde{\sigmaa}$ that allocates at most $t_l$ resources to each location $l$ in descending order of their \emph{affordable bang-per-buck} ratios given by $\frac{\Hat{p}_l}{t_l}$. We define the quantity $\frac{\Hat{p}_l}{t_l}$ as the affordable bang-per-buck ratio as the total payoff received on allocating a fraction $t_l$ of resources to location $l$ that is affordable given $R$ resources is $\Hat{p}_l$. Next, we compute an allocation $\sigmaa'$ corresponding to spending all the available resources at a single location that yields the highest administrator payoff. Finally, we return the strategy between the two computed strategies $\Tilde{\sigmaa}$ and $\sigmaa'$ that achieves a higher payoff.


\begin{algorithm}
\footnotesize
\SetAlgoLined
\SetKwInOut{Input}{Input}\SetKwInOut{Output}{Output}
\Input{Total Resource capacity $R$, User Types $\Theta_l = (\Lambda_l, d_l, v_l)$ for all locations $l$}
\textbf{Step 1: Find Greedy Solution $\Tilde{\sigmaa}$:} \\
Define affordability threshold $t_l \leftarrow \min \{ R, \frac{d_l}{d_l+k}\}$ for all locations $l$ \;
Define payoffs $\Hat{p}_l$, 
where $\Hat{p}_l = p_l$ if $t_l = \frac{d_l}{d_l+k}$ and $\Hat{p}_l = t_l p_l$ if $t_l < \frac{d_l}{d_l+k}$ \;
Order locations in descending order of $\frac{\Hat{p}_l}{t_l}$ \; \vspace{-1pt}
Initialize strategy $\Tilde{\sigmaa} \leftarrow \mathbf{0}$ \;
  \For{$l = 1, 2, ..., |L|$}{
      $\Tilde{\sigma}_l \leftarrow \min \{ R, \frac{d_l}{d_l+k} \}$ ; \texttt{\footnotesize \sf Allocate the minimum of the remaining resources and $\frac{d_l}{d_l+k}$ to location $l$} \; \vspace{-1pt}
      $R \leftarrow R -  \Tilde{\sigma}_l$; \quad \texttt{\footnotesize \sf Update amount of remaining resources} \; \vspace{-1pt}
  }
\textbf{Step 2: Find Solution $\sigmaa'$ that Maximizes Payoff from Spending on Single Location} \\ \vspace{-1pt}
$\sigmaa^l \leftarrow \argmax_{\sigmaa \in \Omega_R: \sigma_{l'} = 0, \forall l' \neq l} P_R(\sigmaa)$, $\forall l$ ; \texttt{\footnotesize \sf Compute allocation $\sigmaa^l$ maximizing payoff from only spending on $l$} \; \vspace{-1pt}
$\sigmaa' \leftarrow \argmax_{l \in L} P_R(\sigmaa^l)$ \; \vspace{-1pt}
\textbf{Step 3: Return Solution with Higher Payoff:} $\sigmaa^*_A \leftarrow \argmax \{ P_R(\Tilde{\sigmaa}), P_R(\sigmaa') \}$ \; 
\caption{\footnotesize Greedy Algorithm for Administrator's Payoff Maximization Objective}
\label{alg:GreedyFraudminimizationDeterministic}
\end{algorithm}

A few comments about Algorithm~\ref{alg:GreedyFraudminimizationDeterministic} are in order. First, as with Algorithm~\ref{alg:GreedyRevenueMaximization}, Algorithm~\ref{alg:GreedyFraudminimizationDeterministic} is computationally efficient with a run-time of $O(|L| \log(|L|))$, corresponding to the complexity of sorting the locations in descending order of the affordable bang-per-buck ratios, and the remaining steps of Algorithm~\ref{alg:GreedyFraudminimizationDeterministic} can be performed in linear time in the number of locations. Next, Algorithm~\ref{alg:GreedyFraudminimizationDeterministic} resembles analogous algorithms from the literature on the knapsack problem, which is the problem of finding the value-maximizing subset of items of given sizes that fits within the capacity of a knapsack. Despite the connection between Algorithm~\ref{alg:GreedyFraudminimizationDeterministic} and the knapsack literature, our problem setting differs from several well-studied variants of the knapsack problem. Unlike the 0-1 knapsack problem, wherein the decision space is binary, the administrator's decision space is continuous in our setting. Moreover, unlike the fractional knapsack problem, where a greedy algorithm that allocates resources in descending order of the bang-per-buck ratios is optimal, in our setting, computing the payoff maximizing strategy is NP-hard (Theorem~\ref{thm:npHardness-swm-fm}). 

Given the similarities and differences between the payoff maximization setting and the knapsack literature, we can interpret the payoff maximization problem (with homogeneous user types) as a novel variant of the knapsack problem. In particular, we can consider locations as items, whose value (or payoff) function increases linearly as a higher fraction of it is packed in the knapsack (i.e., as the probability of allocating resources to a location is increased) up to some threshold. At this location-specific threshold, $\frac{d_l}{d_l+k}$, which we interpret as the size of the item, the payoff function of the item has a jump discontinuity and equals the item's payoff $p_l$, as depicted on the right of Figure~\ref{fig:social-welfare-revenue-best-response-plots}. Note that when the fine $k = 0$, the payoff function has no discontinuity at $\frac{d_l}{d_l+k}$ (see Figure~\ref{fig:social-welfare-revenue-best-response-plots}); thus, when $k = 0$, our payoff maximization problem reduces to a fractional knapsack problem that is polynomial time solvable. Hence, the presence of fines introduces discontinuities in the administrator's payoff function, which corresponds to the source of the NP-hardness of Problem~\eqref{eq:admin-obj-fraud}-\eqref{eq:bi-level-con-fraud} (Theorem~\ref{thm:npHardness-swm-fm}).

We now present the approximation guarantees of Algorithm~\ref{alg:GreedyFraudminimizationDeterministic} to the optimal payoff corresponding to the solution of Problem~\eqref{eq:admin-obj-fraud}-\eqref{eq:bi-level-con-fraud}. Our first result establishes that Algorithm~\ref{alg:GreedyFraudminimizationDeterministic} achieves at least half the optimal payoff. 


\vspace{-1pt}

\begin{theorem} [1/2 Approximation of Greedy Algorithm for Payoff Maximization] \label{thm:greedy-half-approx-fraud-min}
Denote $\sigmaa^*_A$ as the solution corresponding to Algorithm~\ref{alg:GreedyFraudminimizationDeterministic} and let $\sigmaa^*$ be the payoff-maximizing allocation that solves Problem~\eqref{eq:admin-obj-fraud}-\eqref{eq:bi-level-con-fraud}. Then, $\sigmaa^*_A$ achieves at least half the payoff as $\sigmaa^*$, i.e., $P_R(\sigmaa_A^*) \geq \frac{1}{2} P_R(\sigmaa^*)$.
\end{theorem}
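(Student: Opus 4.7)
The plan is to adapt the classical $1/2$-approximation argument for the $0/1$ knapsack problem to our setting, exploiting the knapsack-like structure of the payoff maximization problem noted in the discussion of Algorithm~\ref{alg:GreedyFraudminimizationDeterministic}. The core of the argument is to construct a linear programming (LP) relaxation whose optimal value upper bounds $P_R(\sigmaa^*)$, solve this LP in closed form via a greedy-by-density rule, and show that the combined discrete greedy strategy $\Tilde{\sigmaa}$ and single-location candidate $\sigmaa'$ together recover at least the LP value.

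First, I would invoke Proposition~\ref{prop:opt-mixed-strategy-solution}, together with the trivial bound $\sigma_l \leq R$, to restrict attention to administrator strategies with $\sigma_l \in [0, t_l]$ for every $l$, where $t_l = \min\{R, d_l/(d_l+k)\}$ is the affordability threshold used by Algorithm~\ref{alg:GreedyFraudminimizationDeterministic}. On this reduced domain, I would define the linear upper envelope $\bar{P}_l(\sigma_l) = (\hat{p}_l/t_l)\,\sigma_l$ of each location's payoff $P_l(\sigma_l)$. A short case check---splitting on whether $t_l = d_l/(d_l+k)$, in which case the envelope slope is $p_l(d_l+k)/d_l$ and the envelope meets $P_l$ at the value $p_l$ exactly at the jump, or $t_l = R < d_l/(d_l+k)$, in which case the envelope coincides with the linear part of $P_l$---shows that $\bar{P}_l(\sigma_l) \geq P_l(\sigma_l)$ pointwise on $[0, t_l]$. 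Summing across $l$ and maximizing subject to $\sum_l \sigma_l \leq R$ gives a fractional-knapsack LP whose optimal value upper bounds $P_R(\sigmaa^*)$.

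Next, I would solve this LP via the standard greedy-by-density rule. Because $\hat{p}_l/t_l$ is exactly the sorting key used in Algorithm~\ref{alg:GreedyFraudminimizationDeterministic}, the LP optimum equals $\sum_{l=1}^{j^*-1} \hat{p}_l + \alpha\,\hat{p}_{j^*}$ for some $\alpha \in [0, 1)$, where $j^*$ denotes the first location in the sorted order at which the discrete greedy exhausts its budget before reaching $t_{j^*}$. Since the discrete greedy allocates $\Tilde{\sigma}_l = t_l$ for every $l < j^*$, and the actual payoff $P_l(t_l)$ equals $\hat{p}_l$ in both envelope regimes (at the jump to $p_l$ when $t_l = d_l/(d_l+k)$, and at the linear value $R\,p_l$ when $t_l = R$), we obtain $P_R(\Tilde{\sigmaa}) \geq \sum_{l=1}^{j^*-1} \hat{p}_l$. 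Moreover, the single-location candidate $\sigmaa^{j^*}$ considered in Step~2 of the algorithm allocates $t_{j^*}$ to $j^*$ and collects payoff $\hat{p}_{j^*}$; since $\sigmaa'$ is defined as the payoff-maximizing such candidate, $P_R(\sigmaa') \geq P_R(\sigmaa^{j^*}) \geq \hat{p}_{j^*}$. Combining yields $P_R(\Tilde{\sigmaa}) + P_R(\sigmaa') \geq \sum_{l=1}^{j^*-1} \hat{p}_l + \hat{p}_{j^*} \geq P_R(\sigmaa^*)$, and hence $P_R(\sigmaa^*_A) = \max\{P_R(\Tilde{\sigmaa}), P_R(\sigmaa')\} \geq \tfrac{1}{2} P_R(\sigmaa^*)$.

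The main obstacle I anticipate is the construction and validation of the envelope: the payoff function has an upward jump discontinuity at $\sigma_l = d_l/(d_l+k)$---precisely the source of NP-hardness identified in Section~\ref{subsec:nphardness-swm-fm}---so one must carefully define $\bar{P}_l$ to dominate $P_l$ both on the linear region $[0, d_l/(d_l+k))$ and at the jump itself, and must handle the two regimes $t_l = d_l/(d_l+k)$ and $t_l = R$ uniformly. Once the envelope is defined as above, however, the identity that $\hat{p}_l/t_l$ equals the slope of $\bar{P}_l$ automatically aligns the sorting order in Algorithm~\ref{alg:GreedyFraudminimizationDeterministic} with the greedy-by-density rule that solves the LP relaxation, and the remainder of the argument is bookkeeping.
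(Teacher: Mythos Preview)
Your proposal is correct and follows essentially the same route as the paper: both arguments invoke Proposition~\ref{prop:opt-mixed-strategy-solution} to restrict to $\sigma_l\in[0,t_l]$, construct a fractional-knapsack LP that upper bounds $P_R(\sigmaa^*)$, solve it greedily, and then show that $P_R(\Tilde{\sigmaa})$ captures the integral part while $P_R(\sigmaa')$ captures the fractional remainder. The one presentational difference is that you build the LP with values $\hat p_l$ (via the envelope $\bar P_l$), whereas the paper's LP in Appendix~\ref{apdx:pf-greedy-half-approx-welfare} uses values $p_l$; your choice makes the identity between the LP's density ordering and Algorithm~\ref{alg:GreedyFraudminimizationDeterministic}'s sort key $\hat p_l/t_l$ completely transparent, which is arguably cleaner in the regime $t_l=R<d_l/(d_l+k)$.
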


\vspace{-1pt}

For a detailed proof sketch and proof of Theorem~\ref{thm:greedy-half-approx-fraud-min}, see Appendix~\ref{apdx:pf-greedy-half-approx-welfare}. The key insight in developing Algorithm~\ref{alg:GreedyFraudminimizationDeterministic} and establishing Theorem~\ref{thm:greedy-half-approx-fraud-min} despite the discontinuity in the payoff function (see Figure~\ref{fig:social-welfare-revenue-best-response-plots}) is in recognizing that the administrator's payoff-maximization objective can be upper bounded by the objective of a linear program that resembles a fractional knapsack optimization. Moreover, the approximation ratio of $\frac{1}{2}$ of Algorithm~\ref{alg:GreedyFraudminimizationDeterministic} aligns with the approximation guarantee of an analogous greedy algorithm for the NP-hard 0-1 knapsack problem. We also note that the additional pre-processing step of determining affordability thresholds to compute the bang-per-buck ratios is not necessary for the 0-1 knapsack problem, which has a binary decision space, but is necessary for the administrator's payoff maximization problem with a continuous decision space. In Appendix~\ref{apdx:sub-optimal-less0.5}, we present an example demonstrating that an algorithm analogous to Algorithm~\ref{alg:GreedyFraudminimizationDeterministic} that orders locations in the descending order of their bang-per-buck ratios $\frac{p_l}{\frac{d_l}{d_l+k}}$ (rather than their affordable bang-per-buck ratios) does not achieve the half approximation guarantee as in Theorem~\ref{thm:greedy-half-approx-fraud-min}. 






Next, we show that if the administrator had an extra resource, i.e., $R+1$ resources, then Algorithm~\ref{alg:GreedyFraudminimizationDeterministic} achieves a higher payoff than that of the payoff-maximizing outcome with $R$ resources when user types are homogeneous.



\begin{theorem} [Resource Augmentation Guarantee for Payoff Maximization] \label{thm:greedy-resource-augmentation-frauad-min}
Suppose $|\I| = 1$, $\sigmaa^*_A$ is the solution of Algorithm~\ref{alg:GreedyFraudminimizationDeterministic} with $R+1$ resources and $\sigmaa^*$ is the solution to Problem~\eqref{eq:admin-obj-fraud}-\eqref{eq:bi-level-con-fraud} with $R$ resources. Then, the total payoff under the allocation $\sigmaa^*_A$ is at least that corresponding to the allocation $\sigmaa^*$, i.e., $P_{R+1}(\sigmaa^*_A) \geq P_{R}(\sigmaa^*)$.
\end{theorem}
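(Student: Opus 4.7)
The plan is to show that the greedy step (Step~1) of Algorithm~\ref{alg:GreedyFraudminimizationDeterministic}, when run alone with $R+1$ resources, already attains payoff at least $P_R(\sigmaa^*)$; since $\sigmaa_A^*$ is the better of this allocation and the single-location strategy, the theorem then follows immediately. Concretely, my approach is to compare both $\sigmaa^*$ and the greedy allocation against the natural fractional-knapsack LP relaxation
\[
\mathrm{LP}(R) := \max\Bigl\{\textstyle\sum_l x_l p_l : \sum_l x_l \tfrac{d_l}{d_l+k} \leq R,\; x_l \in [0,1]\Bigr\}.
\]

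First, I would use the structural characterization of Proposition~\ref{prop:opt-mixed-strategy-solution} to map $\sigmaa^*$ to the LP-feasible point $x_l = \sigma_l^*(d_l+k)/d_l$, whose LP objective is at least $P_R(\sigmaa^*)$ (because the LP payoff is linear with slope $p_l$ from $0$ up to $p_l$, while the true payoff is either $\sigma_l^* p_l \leq p_l$ strictly below the threshold or $p_l$ at it). Hence $\mathrm{LP}(R) \geq P_R(\sigmaa^*)$. By the classical fractional-knapsack argument, $\mathrm{LP}(R)$ is achieved by ordering locations in descending order of bang-per-buck $p_l(d_l+k)/d_l$ and filling to capacity; let $S^*$ denote the fully taken set and $l^*$ the (possibly empty) partially taken location with fraction $x^* \in [0,1]$, so $\mathrm{LP}(R) = \sum_{l \in S^*} p_l + x^* p_{l^*}$.

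The core of the argument is a capacity-accounting step. Since $R+1 \geq 1 > d_l/(d_l+k)$ for every $l$, the affordability thresholds satisfy $t_l = d_l/(d_l+k)$ and $\hat{p}_l = p_l$, so Step~1 of Algorithm~\ref{alg:GreedyFraudminimizationDeterministic} sorts by exactly the same bang-per-buck ratio as the LP. With $R+1$ resources, it therefore fully covers $S^*$ using $\sum_{l \in S^*} d_l/(d_l+k) = R - x^* d_{l^*}/(d_{l^*}+k) \leq R$ resources, leaving residual capacity $1 + x^* d_{l^*}/(d_{l^*}+k) \geq 1 > d_{l^*}/(d_{l^*}+k)$, which is enough to also allocate $l^*$ in full. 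Consequently
\[
P_{R+1}(\tilde{\sigmaa}) \;\geq\; \textstyle\sum_{l \in S^*} p_l + p_{l^*} \;\geq\; \sum_{l \in S^*} p_l + x^* p_{l^*} \;=\; \mathrm{LP}(R) \;\geq\; P_R(\sigmaa^*),
\]
and since $P_{R+1}(\sigmaa_A^*) \geq P_{R+1}(\tilde{\sigmaa})$ by construction of Algorithm~\ref{alg:GreedyFraudminimizationDeterministic}, the claim follows.

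The main obstacle I expect is the capacity-accounting calculation showing that one extra resource is just enough to promote the LP's fractional item into a full integer item under the greedy order; this exploits the fact that every size $d_l/(d_l+k)$ is strictly less than $1$, which is exactly where the ``$+1$'' of augmented capacity is absorbed. I also note that the single-location branch of Algorithm~\ref{alg:GreedyFraudminimizationDeterministic} plays no role here---it is needed only to secure the $1/2$-approximation of Theorem~\ref{thm:greedy-half-approx-fraud-min}.
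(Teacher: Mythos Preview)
Your proposal is correct and follows essentially the same approach as the paper: upper-bound $P_R(\sigmaa^*)$ by a fractional-knapsack LP, write its optimum as a fully-taken set $S^*$ plus one fractional item $l^*$, and then use the extra unit of capacity (together with $d_l/(d_l+k)<1$) to show Step~1 of Algorithm~\ref{alg:GreedyFraudminimizationDeterministic} fully covers $S^*\cup\{l^*\}$. One minor but clean touch in your write-up is that you explicitly note $R+1>d_l/(d_l+k)$ forces $t_l=d_l/(d_l+k)$ and $\hat p_l=p_l$, so the greedy order under $R+1$ resources coincides with the LP's raw bang-per-buck order---this alignment is used implicitly in the paper's proof but not spelled out there.
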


The proof of this result relies on much of the machinery developed in proving Theorem~\ref{thm:greedy-half-approx-fraud-min}, and we present its proof in Appendix~\ref{apdx:pf-resource-augmentation-welfare-max}. We note that the administrator only requires $R+\max_{l \in L} \frac{d_l}{d_l+k}$ resources to obtain the guarantee in Theorem~\ref{thm:greedy-resource-augmentation-frauad-min}; however, we present the result with $R+1$ resources for ease of exposition. Theorem~\ref{thm:greedy-resource-augmentation-frauad-min} highlights the benefits of recruiting one additional security resource (e.g., police officer) and applying a simple algorithm, i.e., Algorithm~\ref{alg:GreedyFraudminimizationDeterministic}, rather than investing computational effort to solve the NP-hard payoff-maximization problem. 

\subsubsection{PTAS for Payoff Maximization} \label{sec:ptas-payoff-maximization}

\ifarxiv In the previous section, we developed a near-linear time algorithm that achieves a half approximation to the optimal administrator payoff without additional resources, i.e., a $(0.5, 0)$-approximation, and one approximation with access to one additional resource, i.e., a $(1, 1)$-approximation. \fi This section develops a polynomial-time approximation scheme for the administrator's payoff maximization problem that, when augmented with $\delta$ additional resources for any $\delta > 0$, achieves a $1-\epsilon$ approximation to the administrator's optimal payoff for every fixed parameter $\epsilon>0$.



We begin by presenting the PTAS, which involves a two-stage algorithmic approach, including a brute-force step to find the optimal solution on a constant size subset of locations and a greedy step that extends this partial solution to the set of all locations. To elucidate the two stages of the algorithm, we first discretize the interval $[0, \frac{d_l}{d_l+k}]$ for all locations $l$ with a step-size $\delta>0$, i.e., we define a set of points $\A_l = \{ 0, \delta, 2 \delta, \ldots, \frac{d_l}{d_l+k} \}$ for all locations $l$. Then, in the brute-force step, we consider each feasible subset of at most $m+1$ locations, where we allocate $\frac{d_l}{d_l+k}$ resources to all but one of those locations $l'$ to which we allocate resources belonging to some point in the discretized grid $\A_{l'}$. In particular, we specify an allocation in the brute-force step via the pair $(S, \sigma_{l'}^{\delta})$, where we allocate $\frac{d_l}{d_l+k}$ to all $l \in S$ and allocate resources $\sigma_{l'}^{\delta} \in \A_{l'}$ to location $l'$. Note that the pair $(S, \sigma_{l'}^{\delta})$ is \emph{feasible} if it holds that $\sum_{l \in S} \frac{d_l}{d_l+k} + \sigma_{l'} \leq R + \delta$, i.e., the total amount of allocated resources does not exceed $R+\delta$.

Then, following the brute-force step, for each such feasible pair $(S, \sigma_{l'}^{\delta})$, we allocate resources to the remaining locations via the greedy procedure analogous to step 1 of Algorithm~\ref{alg:GreedyFraudminimizationDeterministic}. In particular, for each feasible pair $(S, \sigma_{l'}^{\delta})$, we allocate the remaining resources to the locations in the set $S' = L \backslash (S \cup \{ l'\})$ in descending order of their bang-per-buck ratios $\frac{p_l}{\frac{d_l}{d_l+k}} = \frac{p_l (d_l+k)}{d_l}$, where we allocate resources only to affordable locations, i.e., the locations whose resource requirement $\frac{d_l}{d_l+k}$ is below the amount of available resources. Finally, we select the allocation $\sigmaa'$ corresponding to the brute-force and greedy steps with the highest payoff. This procedure, including the brute-force step and the greedy step, is presented formally in Algorithm~\ref{alg:PTASWelMaxHom}.




\begin{algorithm}
\SetAlgoLined
\footnotesize
\SetKwInOut{Input}{Input}\SetKwInOut{Output}{Output}
\Input{Total Resources $R$, User Types $\Theta_l = (\Lambda_l, d_l, p_l)$ for all locations $l$, Subset size $m$, Parameter $\delta>0$}
\Output{Resource Allocation Strategy $\sigmaa'$} 
Order all locations in the descending order of $\frac{p_l(d_l+k)}{d_l}$ \;
\textbf{Step 1: Brute-force Search} \\ 
Let $(S, \sigma_{l'}^{\delta})$ correspond to the pair of all feasible subsets of locations with $|S| \leq m$ with an allocation $\Tilde{\sigma}_l^{(S, \sigma_{l'}^{\delta})} = \frac{d_l}{d_l+k}$ for all $l \in S$ and $\Tilde{\sigma}_{l'}^{(S, \sigma_{l'}^{\delta})} = \sigma_{l'}^{\delta} \in \A_{l'}$ for some location $l'$ \;
Update remaining resources: $R + \delta \leftarrow R + \delta - \sum_{l \in S} \frac{d_l}{d_l+k} - \sigma_{l'}^{\delta}$ \;
\textbf{Step 2: Greedy Allocation to Remaining Locations} \\
For each feasible pair $(S, \sigma_{l'}^{\delta})$, run the following sub-routine: \\
$S' \leftarrow L \backslash (S \cup \{l'\})$ \texttt{\sf Subset of remaining locations that have not been allocated resources in the brute-force step} \; 
\For{$l \in S'$}{
      \If{$\frac{d_l}{d_l+k} \leq R$}{
        $\Tilde{\sigma}_{l}^{(S, \sigma_{l'}^{\delta})} \leftarrow \frac{d_l}{d_l+k}$ ; \texttt{\sf Allocate $\frac{d_l}{d_l+k}$ to location $l$} \; 
        $R + \delta \leftarrow R + \delta -  \frac{d_l}{d_l+k}$; \quad \texttt{\sf Update amount of remaining resources} \;
      }
  }
\textbf{Step 3: Select Allocation $\sigmaa'$ with Highest Payoff:} $\sigmaa' \leftarrow \argmax_{\text{feasible } (S, \sigma_{l'}^{\delta})} P_R(\Tilde{\sigmaa}^{(S, \sigma_{l'}^{\delta})})$ \; 
\caption{PTAS for Homogeneous Payoff Maximization}
\label{alg:PTASWelMaxHom}
\end{algorithm}

A few comments about Algorithm~\ref{alg:PTASWelMaxHom} are in order. First, given parameters $m$ and $\delta$, the maximum number of feasible pairs $(S, \sigma_{l'}^{\delta})$ is $O(m |L|^m \frac{|L|}{\delta})$, as there are at most $O(m |L|^m)$ subsets of size at most $m$ and the number of possible values $\sigma_{l'}^{\delta}$ can take is at most $O\left(\frac{|L|}{\delta}\right)$. Furthermore, for each feasible pair $(S, \sigma_{l'}^{\delta})$, the run-time of the greedy step of Algorithm~\ref{alg:PTASWelMaxHom} is linear in the number of locations and the computational complexity of ordering locations in the descending order of their bang-per-buck ratios is $O(|L| \log(|L|))$. Consequently, the run-time of Algorithm~\ref{alg:PTASWelMaxHom} is $O \left( \frac{m |L|^{m+2}}{\delta} \right)$.

Algorithm~\ref{alg:PTASWelMaxHom} is similar in spirit to the PTAS for the knapsack problem~\cite{lai2006knapsack}\ifarxiv that also involves a two-step process, including a brute-force step to find the optimal solution on a constant size subset of the instance and a greedy step that extends this partial solution to the entire instance~\cite{lai2006knapsack}. \else. \fi However, in contrast to the PTAS for the 0-1 knapsack problem with a binary decision space, Algorithm~\ref{alg:PTASWelMaxHom} has an additional discretization step to generate the feasible pairs $(S, \sigma_{l'}^{\delta})$ due to the continuous decision space of the administrator's enforcement strategies in our setting, which results in an additional $O\left(\frac{|L|}{\delta} \right)$ factor to the run-time of Algorithm~\ref{alg:PTASWelMaxHom} compared to the analogous algorithm for the 0-1 knapsack problem. Furthermore, unlike the PTAS for the 0-1 knapsack problem, which does not require additional resources, establishing an approximation guarantee for Algorithm~\ref{alg:PTASWelMaxHom} requires access to $\delta>0$ additional resources for any fixed constant $\delta>0$ due to the continuity of the decision space of the administrator's payoff maximization problem.

We now establish that given parameters $m$ and $\delta>0$, Algorithm~\ref{alg:PTASWelMaxHom} achieves a $1 - \frac{1}{m+1}$ fraction of the optimal payoff with $\delta$ additional resources.

\begin{theorem} [PTAS for Homogeneous Payoff Maximization] \label{thm:ptas}
Let $\sigmaa'$ be the allocation corresponding to Algorithm~\ref{alg:PTASWelMaxHom} given a subset size $m$ and $R+\delta$ resources, where $\delta>0$ is a parameter. Moreover, let $\sigmaa^*$ be the payoff-maximizing allocation given $R$ resources. Then, the strategy $\sigmaa'$ achieves at least a $1 - \frac{1}{m+1}$ fraction of the optimal payoff, i.e., $P_R(\sigmaa^*) \left( 1 - \frac{1}{m+1} \right) \leq P_R(\sigmaa')$.
\end{theorem}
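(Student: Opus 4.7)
I plan to establish Theorem~\ref{thm:ptas} through a case analysis on the size of $L_1$, the set of fully allocated locations in the payoff-maximizing strategy $\sigmaa^*$ characterized by Proposition~\ref{prop:opt-mixed-strategy-solution}. Let $L_1$, $L_2$, and (optionally) a fractional location $l^{\dagger}$ with $\sigma_{l^{\dagger}}^* \in (0, d_{l^{\dagger}}/(d_{l^{\dagger}}+k))$ be as in that proposition, and let $\mathrm{OPT} := P_R(\sigmaa^*)$.

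In the easy regime $|L_1| \leq m$, the brute-force step of Algorithm~\ref{alg:PTASWelMaxHom} explicitly enumerates the pair $(S, l', \sigma_{l'}^{\delta})$ with $S = L_1$, $l' = l^{\dagger}$, and $\sigma_{l'}^{\delta}$ equal to the smallest grid point in $\A_{l^{\dagger}}$ that is at least $\sigma_{l^{\dagger}}^*$. The discretization ensures $\sigma_{l'}^{\delta} - \sigma_{l^{\dagger}}^* \leq \delta$, so the pair uses at most $R + \delta$ resources and is feasible. Because the per-location payoff function is monotone non-decreasing in $\sigma_l$ (see Figure~\ref{fig:social-welfare-revenue-best-response-plots}), the payoff at $l'$ is at least $\sigma_{l^{\dagger}}^* p_{l^{\dagger}}$, giving a total value of at least $\mathrm{OPT}$.

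In the hard regime $|L_1| \geq m+1$, I would sort $L_1$ in decreasing order of $p_l$ to obtain $p_{l_1} \geq \cdots \geq p_{l_{|L_1|}}$, take $S = \{l_1, \ldots, l_m\}$, and choose $l' = l^{\dagger}$ with $\sigma_{l'}^{\delta}$ as in the easy case. The counting inequality $(m+1) p_{l_{m+1}} \leq \sum_{i=1}^{m+1} p_{l_i} \leq \mathrm{OPT}$ yields $p_{l_{m+1}} \leq \mathrm{OPT}/(m+1)$. The residual budget $\Delta = R + \delta - \sum_{l \in S} d_l/(d_l+k) - \sigma_{l'}^{\delta}$ satisfies $\Delta \geq \sum_{l \in L_1 \setminus S} d_l/(d_l+k)$, so the LP relaxation on $L \setminus (S \cup \{l'\})$ with budget $\Delta$ admits the restriction of $\sigmaa^*$ to $L_1 \setminus S$ as a feasible fractional solution, yielding LP value at least $\sum_{l \in L_1 \setminus S} p_l$.

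The main technical obstacle is to show that the greedy completion step yields value $V_g \geq \sum_{l \in L_1 \setminus S} p_l - p_{l_{m+1}}$. The standard analysis comparing integer greedy to the LP optimum provides the bound $V_g \geq \mathrm{LP} - p^{\mathrm{frac}}$, where $p^{\mathrm{frac}}$ is the $p$-value of the location the LP packs fractionally on the residual instance; the difficulty is bounding $p^{\mathrm{frac}}$ by $p_{l_{m+1}}$. The bound is immediate if the LP fractional location lies in $L_1 \setminus S$ (by our sorting). When it lies outside, one exploits the bang-per-buck ordering used by the greedy to show that the integer greedy either packs all of $L_1 \setminus S$ directly (so $V_g \geq \sum_{l \in L_1 \setminus S} p_l$ without any loss), or captures enough additional value from items outside $L_1$ to absorb the gap, since any location with bang-per-buck at least that of an item in $L_1 \setminus S$ that displaces it from the greedy packing must itself contribute comparable value. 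Combining the three contributions, $\sum_{l \in S} p_l$ from the brute-force set, $\sigma_{l^{\dagger}}^* p_{l^{\dagger}}$ from $l'$, and $\sum_{l \in L_1 \setminus S} p_l - p_{l_{m+1}}$ from the greedy, gives algorithm value at least $\mathrm{OPT} - p_{l_{m+1}} \geq (1 - 1/(m+1)) \, \mathrm{OPT}$, and since Algorithm~\ref{alg:PTASWelMaxHom} returns the maximum over all enumerated pairs, the approximation guarantee follows.
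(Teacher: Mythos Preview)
Your high-level strategy coincides with the paper's: case split on $|L_1|$, handle $|L_1|\le m$ by direct enumeration (identical to the paper), and for $|L_1|\ge m+1$ take $S$ to be the $m$ highest-payoff locations in $L_1$, fix $l'=l^{\dagger}$ on the grid, and bound the loss of the greedy completion by a single item whose payoff is at most $\mathrm{OPT}/(m+1)$.

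The gap is in your treatment of the greedy completion. Your route is to invoke $V_g \ge \mathrm{LP} - p^{\mathrm{frac}}$ and then bound $p^{\mathrm{frac}}$ by $p_{l_{m+1}}$. But the LP's fractional location on the residual instance need not lie in $L_1\setminus S$, so there is no a priori reason $p^{\mathrm{frac}}\le p_{l_{m+1}}$: it can be any location in $L\setminus(S\cup\{l'\})$ with arbitrarily large payoff. Your fallback in that sub-case (``either greedy packs all of $L_1\setminus S$ \ldots\ or captures enough additional value \ldots\ comparable value'') is not a proof; the second alternative is not established by anything you write, and the phrase ``comparable value'' has no precise content here.

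The paper closes exactly this gap, and it is worth seeing how. Instead of the LP's fractional item, the paper pivots on $\bar{l}$, defined as the first location \emph{within $L_1\setminus S$} (ordered by bang-per-buck) that the greedy step fails to allocate fully. Because $\bar{l}\in L_1\setminus S$ by construction and $S$ holds the $m$ largest payoffs of $L_1$, the bound $p_{\bar{l}}\le \mathrm{OPT}/(m+1)$ is immediate. The paper then splits the greedy's picks relative to $\bar{l}$ into $S_1$ (items with bang-per-buck at least that of $\bar{l}$, excluding $\bar l$) and $S_2$ (the rest), and derives a chain of inequalities that upper-bounds $\sum_{l=\bar l}^{x} p_l$ via the bang-per-buck ratio of $\bar l$ against the leftover budget; after adding and subtracting $\sum_{l\in S_1}p_l$ and using the fact that all of $S_1$ is fully packed by greedy, this collapses to $P_R(\sigmaa^*)\le P_R(\tilde{\sigmaa})+p_{\bar l}$. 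This is precisely the ``absorbing the gap via items outside $L_1$'' intuition you gesture at, but it requires the explicit choice of $\bar l\in L_1\setminus S$ and the $S_1/S_2$ bookkeeping to make rigorous. Replacing your LP-fractional-item argument with this $\bar l$-based one fixes your proof.
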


\begin{hproof}
We first recall from Proposition~\ref{prop:opt-mixed-strategy-solution} that the optimal payoff is given by $P_R(\sigmaa^*) = \sum_{l \in L_1} p_l + \sigma_{l'}^* p_{l'}$ for some set of locations $L_1$ for which $\sigma_l^* = \frac{d_l}{d_l+k}$ for all $l \in L_1$ and at most one location $l'$ with $\sigma_{l'}^* < \frac{d_{l'}}{d_{l'}+k}$. We proceed by analysing two cases: (i) $|L_1| \leq m$ and (ii) $|L_1| > m$. 

In the case that $|L_1| \leq m$, we show that the allocation corresponding to the brute-force step of Algorithm~\ref{alg:PTASWelMaxHom} with $\delta>0$ additional resources achieves at least the optimal payoff. 

For the case $|L_1| > m$, we consider the pair $(S, \sigma_{l'}^{\delta})$ such that $S$ corresponds to the set of locations in the administrator's payoff-maximizing allocation with the $m$ highest payoffs and $\sigma_{l'}^{\delta} = \min \{ \sigma_{l'}^* + \delta, \frac{d_{l'}}{d_{l'}+k} \}$. It is straightforward to check that such a pair $(S, \sigma_{l'}^{\delta})$ is feasible and thus is considered in the brute-force step of Algorithm~\ref{alg:PTASWelMaxHom} when the administrator has $\delta$ additional resources. Finally, we proceed by contradiction and leverage the fact that step two of Algorithm~\ref{alg:PTASWelMaxHom} allocates resources to the remaining locations in the set $S' = L \backslash (S \cup \{ l' \})$ in the descending order of their bang-per-buck ratios and that the set $S$ corresponds to the set of locations in the payoff-maximizing allocation with the $m$ highest payoffs to establish our desired guarantee.
\end{hproof}

For a complete proof of Theorem~\ref{thm:ptas}, see Appendix~\ref{apdx:pf-thm-ptas}. Given the guarantee in Theorem~\ref{thm:ptas}, to obtain a PTAS with a $1 - \epsilon$-approximation with at most $\epsilon$ extra resources for any fixed parameter $\epsilon>0$, we set $\epsilon = \delta = \frac{1}{m+1}$. Note that the resulting run time of Algorithm~\ref{alg:PTASWelMaxHom} is $O\big(\frac{m |L|^{m+2}}{\delta} \big) = O(\frac{1}{\epsilon^2} |L|^{\frac{1}{\epsilon}+1})$. Consequently, Theorem~\ref{thm:ptas} implies that Algorithm~\ref{alg:PTASWelMaxHom} is a PTAS for the administrator's payoff maximization problem when given $\epsilon$ extra resources. Finally, we note that there exists a fully polynomial-time approximation scheme (FPTAS) for the 0-1 knapsack problem using dynamic programming; however, extending this idea to the setting considered in this work is challenging due to the administrator’s continuous action space. We defer settling the question of whether a FPTAS exists for the payoff maximization problem as a direction for future research.


\section{Revenue Maximization with Heterogeneous Users} \label{sec:probabilistic-setting}

This section studies the administrator's revenue maximization objective in the heterogeneous user type setting, i.e., user types at each location can vary with $|\I| \geq 1$. In this setting, we show that computing the revenue-maximizing strategy is NP-hard in Section~\ref{sec:np-hardness-prob-rev-max} and present a variant of a greedy algorithm and its associated approximation ratio and resource augmentation guarantees in Section~\ref{sec:near-opt-greedy-rev-max}. While we focus on revenue maximization in this section, our algorithmic ideas and guarantees naturally generalize to the payoff maximization objective in the setting with heterogeneous users (see Appendix~\ref{apdx:welfare-max-prob-setting}).

\subsection{NP-Hardness of Setting with Heterogeneous Users} \label{sec:np-hardness-prob-rev-max}


In the setting with heterogeneous users, we show that computing the administrator's revenue maximizing strategy is NP-hard.


\begin{theorem} [NP-Hardness of Heterogeneous Revenue Maximization] \label{thm:npHardness-erm}
The problem of computing the administrator's revenue maximizing strategy in the presence of heterogeneous user types with $|I|>1$, i.e., solving Problem~\eqref{eq:admin-obj-revenue}-\eqref{eq:bi-level-con-revenue}, is NP-hard.
\end{theorem}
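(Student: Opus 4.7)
My plan is to prove NP-hardness via a reduction from the PARTITION problem, following the template of Theorem~\ref{thm:npHardness-swm-fm} but exploiting the multiple downward discontinuities of the heterogeneous revenue function to create the required combinatorial structure. Given an instance $a_1, \ldots, a_n$ with $\sum_l a_l = A$, I will build a revenue-maximization instance with $n$ locations, fine $k=1$, budget $R = 1/2$, and $|\I| = 2$ user types at each location. At location $l$, I choose the ``high'' type so that $\tau_l^2 := d_l^2/(d_l^2+k) = a_l/A$, with $\Lambda_l^2$ set so the segment-2 peak revenue equals $\tau_l^2 k \Lambda_l^2 = a_l$. The ``low'' type is introduced with a small threshold $\tau_l^1$ and a carefully bounded $\Lambda_l^1$ whose role is to shape the per-location revenue function so that fractional and segment-1 allocations cannot beat the discrete segment-2 strategy in aggregate.

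The forward direction is immediate: given a PARTITION subset $S$ with $\sum_{l \in S} a_l = A/2$, the allocation $\sigma_l = a_l/A$ for $l \in S$ and $\sigma_l = 0$ otherwise has cost $\sum_S a_l/A = 1/2 = R$ and revenue $\sum_S a_l = A/2$, so the constructed instance admits revenue at least $A/2$. The reverse direction is the main technical step. I will first establish a structural characterization, analogous to Proposition~\ref{prop:opt-mixed-strategy-solution} but tailored to the revenue setting with heterogeneous users, showing there exists an optimal strategy $\sigmaa^*$ in which each $\sigma_l^*$ lies in the finite set $\{0, \tau_l^1, \tau_l^2\}$ with at most one fractional exception. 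Second, by choosing $\Lambda_l^1$ and $\tau_l^1$ so that using $\sigma_l^* = \tau_l^1$ at any subset of locations can always be improved by a corresponding redistribution into the segment-2 options, I will reduce the optimum to one with $\sigma_l^* \in \{0, \tau_l^2\}$ everywhere. The budget constraint $\sum_l \sigma_l^* \le 1/2$ together with the revenue bound $\ge A/2$ then forces $\{l : \sigma_l^* = \tau_l^2\}$ to be a subset summing to exactly $A/2$, yielding a valid partition.

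The main obstacle is the design of the segment-1 parameters. Because heterogeneous revenue functions always have segment-1 rate $k(\Lambda_l^1 + \Lambda_l^2)$ strictly exceeding segment-2 rate $k \Lambda_l^2$, na\"ively adding a low threshold $\tau_l^1$ creates attractive fractional options that could in principle be combined across locations to achieve revenue $\ge A/2$ without corresponding to a valid partition. The construction must therefore choose $\tau_l^1$ small enough and $\Lambda_l^1$ bounded enough that the total revenue obtainable from any combination of segment-1 allocations, aggregated over all $n$ locations, is strictly below the partition target $A/2$, while simultaneously ensuring that the segment-2 peak $R_l^2 = a_l$ strictly dominates the segment-1 peak $R_l^1 = \tau_l^1 k(\Lambda_l^1 + \Lambda_l^2)$ at each location individually. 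Verifying these inequalities and then pushing through the structural reduction to the discrete $\{0, \tau_l^2\}$ choices is the crux of the proof, and I expect the bookkeeping for handling the ``one fractional exception'' in the structural characterization to be the fiddliest part of the argument.
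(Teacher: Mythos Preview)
Your reverse direction has a fatal gap. The choice $\tau_l^2 k\Lambda_l^2 = a_l$ together with $\tau_l^2 = a_l/A$ and $k=1$ forces $\Lambda_l^2 = A$ at \emph{every} location, so on the interval $(\tau_l^1,\tau_l^2]$ the revenue from location $l$ is $\sigma_l k\Lambda_l^2 = A\sigma_l$, a line through the origin with the \emph{same} slope $A$ at every location. Hence any allocation that spends the full budget $R=\tfrac12$ inside segment~2 yields total revenue $A\cdot\tfrac12 = \tfrac{A}{2}$, regardless of whether the PARTITION instance is YES or NO. Concretely, for a NO instance take an inclusion-maximal $S$ with $\sum_{l\in S} a_l \le \tfrac{A}{2}$ (so in fact $\sum_{l\in S} a_l<\tfrac{A}{2}$), pick any $l'\notin S$, and set $\sigma_l=\tau_l^2$ for $l\in S$, $\sigma_{l'} = \tfrac12 - \tfrac{1}{A}\sum_{l\in S} a_l\in(0,\tau_{l'}^2)$, and $\sigma_l=0$ otherwise; this obeys your one-fractional structural characterization and has revenue at least $\sum_{l\in S} a_l + A\sigma_{l'} = \tfrac{A}{2}$. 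No choice of $\tau_l^1$ or $\Lambda_l^1$ repairs this, because the failure lives entirely in segment~2: the very normalization that makes your forward direction clean also makes the segment-2 per-unit revenue location-independent and thus erases the combinatorial content. Your proposed second step is also infeasible as stated, since the segment-1 rate $k(\Lambda_l^1+\Lambda_l^2)$ strictly exceeds the segment-2 rate $k\Lambda_l^2$, so moving mass from a $\tau_l^1$ allocation into segment~2 always \emph{lowers} revenue.

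The paper's reduction circumvents this by going the opposite way: it makes segment~1 the dominant feature rather than a nuisance. It takes $R=\tfrac34$, sets $\tau_l^1 = \tfrac{a_l}{2A}$, and chooses $\Lambda_l^1,\Lambda_l^2$ so that the segment-1 peak equals $2\max_j a_j$ at every $l$ while the segment-2 peak equals $a_l + 2\max_j a_j$. The large segment-1 rate then forces every optimal strategy to allocate at least $\tau_l^1$ to every location (Lemma~\ref{lem:back-direction-lower-bound-admin-allocation}), consuming budget $\tfrac12$ and revenue $2n\max_j a_j$; the remaining $\tfrac14$ must push a subset from $\tau_l^1$ toward $\tau_l^2$. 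Because the revenue \emph{drops} discontinuously just past $\tau_l^1$, the straight-line upper bound on the incremental gain over $[\tau_l^1,\tau_l^2]$ is attained only at the right endpoint, so any fractional $\sigma_l\in(\tau_l^1,\tau_l^2)$ is strictly wasteful---this is what forces the discrete partition choice. To rescue a construction of your shape you would need segment-2 slopes that vary across locations while still supporting an exact forward-direction target, which essentially leads back to the paper's design.
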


\begin{hproof}
We prove this result through a reduction from partition. In particular, given a partition instance with a sequence of numbers $a_1, \ldots, a_n$, we construct an instance of the heterogeneous revenue maximization problem (HRMP) with two types, i.e., $|\I| = 2$, and $n$ locations, where each number $a_l$ corresponds to a location. In this setting, we drop the fine $k$ from Objective~\eqref{eq:admin-obj-revenue} as it is a uniform constant that applies to all locations $l$ and types $i$. Then, we define: (i) $\frac{d_l^1}{d_l^1+k} = \frac{1}{2} \frac{a_l}{A}$, (ii) $\frac{d_l^2}{d_l^2+k} = \frac{a_l}{A}$, (iii) $\Lambda_l^1 = A \left( 2 \frac{\max_{l' \in [n]} a_{l'}}{a_l} - 1 \right)$, and (iv) $\Lambda_l^2 = A \left(1+ \frac{2 \max_{l' \in [n]} a_{l'}}{a_l} \right)$ for all $l \in [n]$. Moreover, we let the number of resources $R = \frac{3}{4}$. Then, we claim that we have a ``Yes'' instance of partition if and only if the optimal revenue for this HRMP instance is at least $\frac{A}{2} + 2 n \times \max_{l \in [n]} a_l$.

To prove the forward direction of this claim, for any “Yes” instance of partition with a set $S$ such that $\sum_{l \in S} a_l = \frac{A}{2}$, we construct a strategy $\sigmaa$ such that $\sigma_l = \frac{a_l}{A}$ for all $l \in S$ and $\sigma_l = \frac{1}{2} \frac{a_l}{A}$ for all $l \in [n] \backslash S$. We then verify that $\sigmaa$ is feasible and achieves $Q_R(\sigmaa) \geq \frac{A}{2} + 2 n \times \max_{l \in [n]} a_l$.

To prove the reverse direction, we first show that when $R = \frac{3}{4}$ there exists an optimal strategy $\Tilde{\sigmaa}^*$ satisfying $\Tilde{\sigma}_l^* \geq \frac{d_l^1}{d_l^1+k}$ for all locations $l$. Leveraging this property and another structural property of the revenue-maximizing strategy (see Appendix~\ref{apdx:pf-np-hardness-erm}) akin to that established in Proposition~\ref{prop:opt-mixed-strategy-solution}, we show that the administrator's optimal strategy can only satisfy the resource constraint and achieve a revenue that is at least $\frac{A}{2} + 2 n \times \max_{l \in [n]} a_l$ if there is some set $S' \subseteq [n]$ with $\sum_{l \in S'} a_l = \frac{A}{2}$. 
\end{hproof}

\vspace{-2pt}

For a complete proof of Theorem~\ref{thm:npHardness-erm}, see Appendix~\ref{apdx:pf-np-hardness-erm}. Theorem~\ref{thm:npHardness-erm} establishes that Problem~\eqref{eq:admin-obj-revenue}-\eqref{eq:bi-level-con-revenue} cannot be solved in polynomial time unless $P = NP$. This result contrasts the polynomial time algorithm we developed in the revenue maximization setting with homogeneous users (see Theorem~\ref{thm:greedy-opt-rev-max-deterministic}), as, while, for any location $l$, the administrator's revenue function in the setting with homogeneous users is continuous and monotone in the range $\sigma_l \in \big[ 0, \frac{d_l}{d_l+k} \big]$, the revenue function in the presence of heterogeneous users is discontinuous and non-monotone in the range $\sigma_l \in \Big[0, \max_i \frac{d_l^{|i|}}{d_l^{|i|} + k}\Big]$ (see Figure~\ref{fig:concave-upper-approximation}), even when the number of types $|\I| = 2$. The discontinuity and non-monotonicity of the revenue function with heterogeneous user types holds as the best-response function of users, given by the solution of Problem~\eqref{eq:userOpt-eachLoc}, is a threshold function, where users of each type have a different threshold probability at which they shift from engaging to not engaging in fraud. In particular, the best response of users of type $i$ at location $l$ given an administrator strategy $\sigmaa$ is given by
\vspace{-2pt}
\begin{align} \label{eq:best-response-users-rev-max-prob}
    y_l^i(\sigmaa) = 
    \begin{cases}
        0, &\text{if } \sigma_l > \frac{d_l^i}{d_l^i+k}, \\
        1, &\text{otherwise},
    \end{cases}
\end{align}
where the resource amounts $\frac{d_l^i}{d_l^i+k}$ for all $i$ correspond to the points at which the revenue function is discontinuous and non-monotone, as is depicted in Figure~\ref{fig:concave-upper-approximation} (left). \ifarxiv We reiterate, as with the best response of users in the setting with homogeneous users in Equation~\eqref{eq:best-response-users-rev-max}, that when $\sigma_l = \frac{d_l^i}{d_l^i+k}$, any $y_l^i(\sigmaa) \in [0, 1]$ is a best-response for users at a location $l$ with type $i$. Yet, we let $y_l^i(\sigmaa) = 1$ when $\sigma_l = \frac{d_l^i}{d_l^i+k}$ as it corresponds to the highest revenue outcome for the administrator (see Section~\ref{subsec:deterministic-revenue-max} for a further discussion). \fi 


The fundamental challenge in developing both our hardness results, i.e., for payoff maximization with homogeneous users (Theorem~\ref{thm:npHardness-swm-fm}) and revenue maximization with heterogeneous users (Theorem~\ref{thm:npHardness-erm}), lies in constructing the right instances or \emph{gadgets} to achieve our desired reductions, which we develop leveraging the structural properties of the respective problem settings. For example, in the payoff maximization setting, our reduction crucially relies on Proposition~\ref{prop:opt-mixed-strategy-solution}, and, hence, the discontinuity of the payoff function at $\sigma_l = \frac{d_l}{d_l+k}$. Analogously, in the heterogeneous revenue maximization setting, our reduction leverages the non-monotonicity of the revenue function.

\subsection{Greedy Algorithm for Heterogeneous Revenue Maximization} \label{sec:near-opt-greedy-rev-max}

Given the hardness of solving Problem~\eqref{eq:admin-obj-revenue}-\eqref{eq:bi-level-con-revenue} with heterogeneous users, we develop an efficient algorithm to compute an administrator strategy for the heterogeneous revenue maximization problem and present its associated approximation ratio and resource augmentation guarantees.

To motivate our algorithmic approach, first note that the difficulty in solving Problem~\eqref{eq:admin-obj-revenue}-\eqref{eq:bi-level-con-revenue} in the setting with heterogeneous users is attributable to the non-monotonicity and discontinuity of the revenue function (see Figure~\ref{fig:concave-upper-approximation} for an example of a setting with five types, i.e., $|\I| = 5$). Given this difficulty of directly optimizing the revenue function, we define its \emph{monotone concave upper approximation} (MCUA) that can be tractably maximized. In particular, we define the MCUA of the revenue function at a location $l$ as its concave upper closure, i.e., the point-wise smallest continuous and concave increasing function that bounds the revenue function from above, as depicted by the orange line in Figure~\ref{fig:concave-upper-approximation} (right). We consider the MCUA of the revenue function instead of another approximation as it is both monotonically increasing and concave, properties not guaranteed to hold for other approximations, e.g., the upper bound of the revenue function that connects the origin $(0, 0)$ to its points of discontinuity, as depicted by the green curve in Figure~\ref{fig:concave-upper-approximation} (center), is continuous but may be non-monotone and non-concave, and hence challenging to optimize.

Since the MCUA of the revenue function for each location $l$ is piece-wise linear under a discrete set of user types, we characterize it via a set $\S$ corresponding to the set of all piece-wise linear \emph{segments} of this MCUA across all locations. We associate each segment $s \in \S$ with a triple $(l_s, c_s, x_s)$, where $l_s$ represents the location corresponding to segment $s$, $c_s$ corresponds to its slope, and $x_s$ represents its horizontal width, i.e., resource requirement, as depicted on the right in Figure~\ref{fig:concave-upper-approximation}.

\begin{figure}[tbh!]
    \centering
    \includegraphics[width=0.9\linewidth]{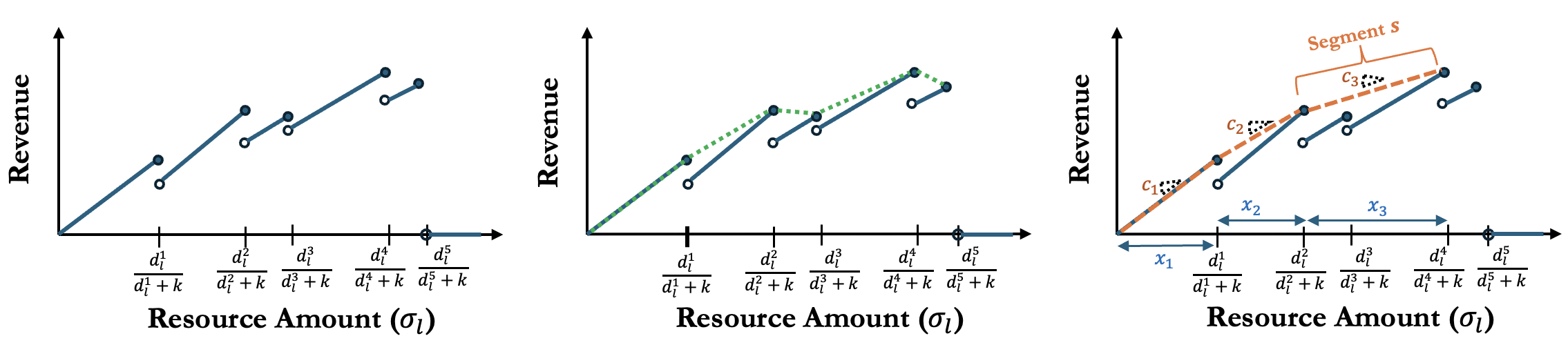}
    \vspace{-13pt}
    \caption{{\small \sf Depiction of an example of the revenue from spending on a location $l$ (left), its upper bound (center), and its corresponding MCUA (right) for a setting with five types. 
    The revenue function's MCUA has three segments $s$ for location $l$ with corresponding slopes $c_s$ and widths $x_s$.
    }} 
    \label{fig:concave-upper-approximation}
\end{figure} 


Having introduced the MCUA of the revenue function, we present Algorithm~\ref{alg:GreedyRevMaxProb}, which involves two key steps. First, rather than directly optimizing the administrator's revenue, an NP-hard problem (see Theorem~\ref{thm:npHardness-erm}), we optimize its MCUA using a greedy procedure. To elucidate this procedure, as in Algorithm~\ref{alg:GreedyFraudminimizationDeterministic}, we first define an affordability threshold $t_l \! = \! \min \big\{ R, \max_i \frac{d_l^i}{d_l^i+k}\big\}$ for each location $l$ and define the MCUA of the revenue function for each location $l$ over the range $[0, t_l]$. Since the revenue function's MCUA is piece-wise linear, we order its segments in the set $\S$ in the descending order of the slopes $c_s$ and allocate at most $x_s$ to each segment in that order. Our greedy procedure results in an allocation $\Tilde{\sigmaa}$ and terminates when a segment's resource requirement $x_s$ exceeds the available resources.\footnote{We terminate our greedy procedure at the point in the algorithm when a segment's resource requirement $x_s$ exceeds the available resources, as the revenue function is non-monotone in the resources allocated, and the MCUA, by construction, is only guaranteed to coincide with the revenue function at each location $l$ at the points where the resources allocated equal $\frac{d_l^i}{d_l^i+k}$ for some values of $i$ (e.g., see Figure~\ref{fig:concave-upper-approximation}).} 
In the second step, we compute an allocation $\sigmaa'$ corresponding to spending all the resources at a single location that yields the highest revenue. Finally, we return the strategy between $\Tilde{\sigmaa}$ and $\sigmaa'$ with a higher revenue.

\begin{algorithm}
\footnotesize
\SetAlgoLined
\SetKwInOut{Input}{Input}\SetKwInOut{Output}{Output}
\Input{Total Resource capacity $R$, User Types $\Theta_l^i = (\Lambda_l^i, d_l^i, v_l^i)$ for all locations $l$ and types $i$}
\Output{Resource Allocation Strategy $\sigmaa_A^*$} \vspace{-1.25pt}
\textbf{Step 1: Greedy Allocation $\Tilde{\sigmaa}$ Based on Slopes of MCUA of Revenue Function:} \\ \vspace{-1.25pt}
Define affordability threshold $t_l \leftarrow \min \left\{ R, \max_i \frac{d_l^i}{d_l^i+k}\right\}$ for all locations $l$ \;
Generate MCUA of the revenue function in range $[0, t_l]$ for each location $l$ \;
\vspace{-1.25pt}
$\Tilde{\S} \leftarrow $ Ordered list of segments $s$ across all locations of this MCUA in descending order of slopes $c_s$  \; \vspace{-1.25pt}
Initialize allocation strategy $\Tilde{\sigmaa} \leftarrow \mathbf{0}$ \; \vspace{-1.5pt}
\For{\text{segment $s \in \Tilde{\S}$}}{
      \eIf{$x_s \leq R$}{
        $\Tilde{\sigma}_{l_s} \leftarrow \Tilde{\sigma}_{l_s} + x_s$ ; \texttt{\footnotesize \sf Allocate $x_{s}$ to location $l_s$} \; \vspace{-1.5pt}
        $R \leftarrow R -  x_s$; \quad \texttt{\footnotesize \sf Update amount of remaining resources} \; \vspace{-1.5pt}
      }
      {
      \textbf{break} ; \quad \texttt{\footnotesize \sf Only allocate resources if $x_s \leq R$} \vspace{-1.5pt}
      }
  }
\textbf{Step 2: Find Solution $\sigmaa'$ that maximizes revenue from spending on single location:} \\ \vspace{-1.5pt}
$\sigmaa^l \leftarrow \argmax_{\sigmaa \in \Omega_R: \sigma_{l'} = 0, \forall l' \neq l} Q_R(\sigmaa)$ for all locations $l$ \;  \vspace{-1.5pt}
$\sigmaa' \leftarrow \argmax_{l \in L} Q_R(\sigmaa^l)$ \; 
\textbf{Step 3: Return Solution with a Higher Revenue:} $\sigmaa^*_A \leftarrow \argmax \{ Q_R(\Tilde{\sigmaa}), Q_R(\sigmaa') \}$ \; 
\caption{\footnotesize Greedy Algorithm for Administrator's Heterogeneous Revenue Maximization Objective}
\label{alg:GreedyRevMaxProb}
\end{algorithm}

We now present the main results of this section, which establish the approximation guarantees of Algorithm~\ref{alg:GreedyRevMaxProb}. Our first result establishes that Algorithm~\ref{alg:GreedyRevMaxProb} achieves at least half the optimal revenue.



\vspace{-2pt}

\begin{theorem} [1/2 Approximation for Heterogeneous Revenue Maximization] \label{thm:greedy-half-approx-rev-max}
Suppose $|\I| \geq 1$, $\sigmaa^*_A$ is the allocation corresponding to Algorithm~\ref{alg:GreedyRevMaxProb}, and $\sigmaa^*$ is the 
solution of Problem~\eqref{eq:admin-obj-revenue}-\eqref{eq:bi-level-con-revenue} with $R$ resources. Then, $\sigmaa^*_A$ achieves at least half the revenue as $\sigmaa^*$, i.e., $Q_R(\sigmaa_A^*) \geq \frac{1}{2} Q_R(\sigmaa^*)$.
\end{theorem}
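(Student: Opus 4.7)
The plan is to upper-bound the optimal revenue by the maximum of the MCUA objective, argue that the greedy step of Algorithm~\ref{alg:GreedyRevMaxProb} nearly attains this maximum, and show that the small remaining gap is captured by the single-location allocation $\sigmaa'$. Denote by $\bar{f}_l$ the MCUA of the revenue function at location $l$ on the interval $[0, t_l]$, and let $\bar{Q}_R(\sigmaa) := \sum_{l} \bar{f}_l(\sigma_l)$. By an argument analogous to Proposition~\ref{prop:opt-mixed-strategy-solution}, it suffices to restrict attention to strategies with $\sigma_l^* \le t_l = \min\{R, \max_i \tfrac{d_l^i}{d_l^i+k}\}$ for all $l$, since beyond $\max_i \tfrac{d_l^i}{d_l^i+k}$ the revenue at location $l$ drops to zero. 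On $[0, t_l]$, pointwise domination of $\bar{f}_l$ over the true revenue gives $Q_R(\sigmaa^*) \le \bar{Q}_R(\sigmaa^*)$.

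Next, I would reduce the problem of maximizing $\bar{Q}_R$ over $\Omega_R$ to a fractional knapsack over the segment set $\S$: segment $s$ contributes value $c_s x_s$ and consumes $x_s$ units of budget. Because each $\bar{f}_l$ is concave, its own segments are naturally ordered by decreasing slope, so selecting any segment at $l_s$ implicitly requires first selecting the steeper predecessors at $l_s$. A standard exchange argument then shows that the globally slope-sorted fractional-knapsack greedy (taking segments in decreasing $c_s$ order, with the final one fractional) is optimal. Let $\sigmaa^\dagger$ denote this ideal fractional solution; then $\bar{Q}_R(\sigmaa^\dagger) \ge \bar{Q}_R(\sigmaa^*) \ge Q_R(\sigmaa^*)$.

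Algorithm~\ref{alg:GreedyRevMaxProb} differs from $\sigmaa^\dagger$ only at the break: it refuses the first segment $s^*$ that does not wholly fit instead of taking a fractional piece of it. Since every segment considered after $s^*$ has slope at most $c_{s^*}$ and the remaining budget at the break is $R_{\text{rem}} < x_{s^*}$, we obtain $\bar{Q}_R(\sigmaa^\dagger) \le \bar{Q}_R(\Tilde{\sigmaa}) + c_{s^*} R_{\text{rem}} \le \bar{Q}_R(\Tilde{\sigmaa}) + c_{s^*} x_{s^*}$. The first term equals $Q_R(\Tilde{\sigmaa})$ because $\Tilde{\sigmaa}$ allocates only whole segments and therefore lands at each location on a breakpoint of the form $\tfrac{d_l^i}{d_l^i+k}$ where the MCUA coincides with the true revenue. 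For the second term, the affordability threshold guarantees $\Tilde{\sigma}_{l_{s^*}} + x_{s^*} \le t_{l_{s^*}} \le R$, so spending exactly this amount entirely on $l_{s^*}$ is a feasible single-location strategy; its true revenue equals its MCUA value at that breakpoint, which is at least the incremental contribution $c_{s^*} x_{s^*}$. Hence $c_{s^*} x_{s^*} \le Q_R(\sigmaa^{l_{s^*}}) \le Q_R(\sigmaa')$, and chaining the inequalities yields $Q_R(\sigmaa^*) \le Q_R(\Tilde{\sigmaa}) + Q_R(\sigmaa') \le 2\, Q_R(\sigmaa_A^*)$.

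The main obstacle I anticipate is the reduction to fractional knapsack — specifically, rigorously justifying that the globally slope-sorted greedy solves the MCUA maximization even though the segments at different locations are coupled only through the shared budget and must be used in a prescribed order within each location. This ultimately hinges on the concavity of each $\bar{f}_l$ (so the within-location ordering is automatic) together with the standard exchange argument at the knapsack LP. The overall structure parallels the half-approximation proof of Theorem~\ref{thm:greedy-half-approx-fraud-min}, with the MCUA playing the role there played by the linear relaxation used to upper-bound the payoff.
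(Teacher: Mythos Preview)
Your proposal is correct and follows essentially the same route as the paper: upper-bound $Q_R(\sigmaa^*)$ by the MCUA optimum, show that the slope-sorted greedy attains this optimum (the paper formalizes this as Lemma~\ref{lem:greedy-opt-for-mcua}), note that $\Tilde{\sigmaa}$ lands on MCUA breakpoints where the MCUA coincides with the true revenue (the paper's Lemmas~\ref{lem:greedy-algorithm-structure} and~\ref{lem:mcua-coincides-revenue-het}), and bound the remaining piece by the best single-location allocation. The only cosmetic difference is that the paper bounds $Q_R(\sigmaa')$ by $\Hat{Q}_{l'}(\Tilde{\sigma}_{l'}^*)$ rather than by your $c_{s^*}x_{s^*}$, but both decompositions yield the same $2Q_R(\sigmaa_A^*)\ge Q_R(\sigmaa^*)$ conclusion.
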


\vspace{-2pt}

The crux of establishing Theorem~\ref{thm:greedy-half-approx-rev-max} involves showing that the MCUA of the revenue function can be maximized via a greedy process akin to step one of Algorithm~\ref{alg:GreedyRevMaxProb}, where resources are allocated until either all the resources are exhausted or there are no further segments remaining to iterate over. Then, noting a key structural property of the solution of this greedy allocation under which the MCUA coincides with the original revenue function at all but at most one location, our desired half approximation guarantee follows from arguments similar to that in the proof of Theorem~\ref{thm:greedy-half-approx-fraud-min}. For a proof of Theorem~\ref{thm:greedy-half-approx-rev-max}, see Appendix~\ref{apdx:greedy-half-approx-rev-max}. We also note that the problem of maximizing the MCUA of the revenue function can be cast as a linear program as in the payoff maximization setting with homogeneous users (see Theorem~\ref{thm:greedy-half-approx-rev-max}). However, this linear program is different from the fractional knapsack linear program to prove Theorem~\ref{thm:greedy-half-approx-fraud-min} in the payoff maximization setting, as, unlike the payoff function in the setting with homogeneous users, the MCUA of the expected revenue function at each location is piece-wise linear with (potentially) multiple segments with differing slopes.

Next, we show that if the administrator had an extra resource, i.e., $R+1$ resources, then Algorithm~\ref{alg:GreedyRevMaxProb} achieves at least the same revenue as the revenue maximizing outcome with $R$ resources.

\vspace{-2pt}

\begin{theorem} [Resource Augmentation Guarantee for Heterogeneous Revenue Maximization] \label{thm:greedy-resource-augmentation-rev-max}
Suppose $|\I| \geq 1$, $\sigmaa^*_A$ is the solution corresponding to Algorithm~\ref{alg:GreedyRevMaxProb} with $R+1$ resources, and $\sigmaa^*$ is the revenue maximizing allocation that solves Problem~\eqref{eq:admin-obj-revenue}-\eqref{eq:bi-level-con-revenue} with $R$ resources. Then, the total revenue under the allocation $\sigmaa^*_A$ is at least that corresponding to $\sigmaa^*$, i.e., $Q_{R+1}(\sigmaa^*_A) \geq Q_{R}(\sigmaa^*)$.
\end{theorem}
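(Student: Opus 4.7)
The plan is to leverage the monotone concave upper approximation (MCUA) machinery introduced for Theorem~\ref{thm:greedy-half-approx-rev-max} and argue that one extra resource suffices to have the greedy step of Algorithm~\ref{alg:GreedyRevMaxProb} match the MCUA optimum with $R$ resources, which itself upper bounds $Q_R(\sigmaa^*)$. Concretely, let $M_R(\sigmaa)$ denote the MCUA of the revenue function evaluated at an allocation $\sigmaa$, and let $\sigmaa_M^*$ be its maximizer over $\Omega_R$. By construction, $M_R(\sigmaa) \geq Q_R(\sigmaa)$ for every $\sigmaa$, so $M_R(\sigmaa_M^*) \geq M_R(\sigmaa^*) \geq Q_R(\sigmaa^*)$. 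It therefore suffices to show that the greedy allocation $\Tilde{\sigmaa}$ produced in Step~1 of Algorithm~\ref{alg:GreedyRevMaxProb} with $R+1$ resources satisfies $Q_{R+1}(\Tilde{\sigmaa}) \geq M_R(\sigmaa_M^*)$; then $Q_{R+1}(\sigmaa^*_A) \geq Q_{R+1}(\Tilde{\sigmaa}) \geq Q_R(\sigmaa^*)$ follows immediately from Step~3 of the algorithm.

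First, I would record two structural facts about the MCUA. (i) At every location $l$, the MCUA is piecewise linear on $[0, t_l]$ with $t_l \leq 1$, and by concavity the slopes of consecutive segments at $l$ are non-increasing; hence ordering all segments in $\S$ by decreasing slope automatically respects the intra-location ordering. (ii) Each segment $s$ has width $x_s \leq t_{l_s} \leq 1$. Using (i), the problem of maximizing $M_R(\cdot)$ reduces to a fractional-knapsack-like LP over segments, whose optimum is attained by including segments in decreasing order of slope up to the budget, possibly fractionally using at most one ``overflow'' segment $s^\dagger$ with $x_{s^\dagger}' \in [0, x_{s^\dagger}]$.

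Next, I would compare this LP-style optimum against what Step~1 of Algorithm~\ref{alg:GreedyRevMaxProb} does when handed $R+1$ resources. The greedy algorithm processes the same ordered list $\Tilde{\S}$ of segments and includes a segment whenever its width fits the remaining budget. Let $s^\dagger$ be the overflow segment for $M_R(\sigmaa_M^*)$, and let $B$ be the budget remaining right before $s^\dagger$ in the slope-ordered traversal. For the MCUA optimum with $R$ resources we have $B = x_{s^\dagger}'$; with $R+1$ resources we instead have $B + 1 \geq x_{s^\dagger}' + 1 \geq x_{s^\dagger}$ by fact (ii), so the greedy with one extra resource fully includes $s^\dagger$. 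Since segments of higher slope are identical in both runs, the resulting allocation $\Tilde{\sigmaa}$ satisfies $M(\Tilde{\sigmaa}) \geq M_R(\sigmaa_M^*)$.

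Finally, I would transfer the MCUA value at $\Tilde{\sigmaa}$ back to the true revenue. The greedy allocation $\Tilde{\sigmaa}$ allocates to each location $l$ only full segment widths, so at every location $l$ the coordinate $\Tilde{\sigma}_l$ equals a kink point of the MCUA, i.e., a value of the form $\frac{d_l^i}{d_l^i+k}$ (or zero). At every such kink the MCUA coincides with the actual revenue function $Q$, so $Q_{R+1}(\Tilde{\sigmaa}) = M(\Tilde{\sigmaa})$. Chaining the inequalities yields $Q_{R+1}(\sigmaa_A^*) \geq Q_{R+1}(\Tilde{\sigmaa}) = M(\Tilde{\sigmaa}) \geq M_R(\sigmaa_M^*) \geq Q_R(\sigmaa^*)$, as desired. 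The main obstacle I anticipate is verifying rigorously that the slope-ordered greedy over segments really does optimize $M_R$ in the presence of the intra-location ordering constraint (so that ``including segment $s$'' is always well defined along the traversal); this is exactly where fact (i) about concavity is load-bearing, and I would spell it out carefully by induction on the number of segments processed at each location.
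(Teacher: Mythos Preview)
Your proposal is correct and follows essentially the same route as the paper: bound $Q_R(\sigmaa^*)$ above by the MCUA optimum, argue that with one extra resource Step~1 of Algorithm~\ref{alg:GreedyRevMaxProb} fully absorbs the single ``overflow'' segment (since every segment has width at most one), and then observe that the greedy allocation lands only on kink points of the MCUA where it coincides with the true revenue. The paper breaks the chain into the four relations $Q_R(\sigmaa^*) \leq \Hat{Q}_R(\Tilde{\sigmaa}^*) \leq \Hat{Q}_{R+1}(\Tilde{\sigmaa}^{R+1}) = Q_{R+1}(\Tilde{\sigmaa}^{R+1}) \leq Q_{R+1}(\sigmaa_A^{R+1})$, which are exactly your steps in slightly different packaging.
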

\vspace{-2pt}

The proof of Theorem~\ref{thm:greedy-resource-augmentation-rev-max}, as with that of Theorem~\ref{thm:greedy-half-approx-rev-max}, relies on the fact that a greedy-like process akin to step one of Algorithm~\ref{alg:GreedyRevMaxProb} optimizes the MCUA of the revenue function. For a complete proof of Theorem~\ref{thm:greedy-resource-augmentation-rev-max}, see Appendix~\ref{apdx:pf-resource-aug-rev-max}. Akin to Theorem~\ref{thm:greedy-resource-augmentation-frauad-min}, we note that the administrator only requires $R+ \max_{i \in \I} \max_{l \in L} \frac{d_l^i}{d_l^i+k}$ resources to obtain the guarantee in Theorem~\ref{thm:greedy-resource-augmentation-rev-max}; however, we present the result with $R+1$ resources for ease of exposition. Theorem~\ref{thm:greedy-resource-augmentation-rev-max} highlights the benefit to administrators for recruiting one additional security resource (e.g., police officer) and applying a simple algorithm, i.e., Algorithm~\ref{alg:GreedyRevMaxProb}, that relies on computing a tractable MCUA of the revenue function rather than investing computational effort to solve the NP-hard Problem~\eqref{eq:admin-obj-revenue}-\eqref{eq:bi-level-con-revenue}.

While Algorithm~\ref{alg:GreedyRevMaxProb} achieves a half approximation with no additional resources (Theorem~\ref{thm:greedy-half-approx-rev-max}) and a one approximation with one additional resource (Theorem~\ref{thm:greedy-resource-augmentation-rev-max}), a PTAS can also be developed for the heterogeneous revenue maximization setting, as in the homogeneous payoff maximization setting (see Section~\ref{sec:ptas-payoff-maximization}), if the number of user types is a constant. This PTAS is analogous to Algorithm~\ref{alg:PTASWelMaxHom} and, given parameters $m$ specifying the size of the feasible subset and a discretization parameter $\delta$, its running time is $O(\frac{m|\I|}{\delta} |L|^{m|\I|+2})$. The multiplicative factor of $|\I|$ and the additional factor of $|\I|$ in the exponent of the running time stems from the fact that when constructing feasible allocations in the brute-force step of the algorithm, all combinations of allocations $\frac{d_l^i}{d_l^i+k}$ for all types $i \in \I$ must be considered (recall that in the homogeneous payoff maximization setting, the number of user types was one, i.e., $|\I| = 1$; thus, the term $|\I|$ did not enter the runtime calculations). Since the PTAS for heterogeneous revenue maximization (in the setting when the type space is a constant) and its corresponding approximation guarantee are analogous to that in the homogeneous payoff maximization setting, we omit the details for brevity and leave the question of settling whether a PTAS exists when the type space is not constant to future research.


\ifarxiv
\section{Numerical Experiments: Parking Enforcement at a University Campus} \label{sec:numerical-experiments-parking-enforcement}
\else
\section{Numerical Experiments} \label{sec:numerical-experiments-parking-enforcement}
\fi

This section presents experiments based on a real-world case study of parking enforcement at Stanford University's campus. Our results highlight the efficacy of our proposed algorithms relative to the \emph{status-quo} parking enforcement mechanism and a uniform random enforcement benchmark that allocates resources with equal probability across all university parking lots. Moreover, our results demonstrate the power of our algorithms and studied security game framework as the proportion of strategic users that maximize their utilities according to Problem~\eqref{eq:userOpt-eachLoc} in the population increases. All the code for our experiments is available at the following \href{https://github.com/djalota/SecurityGames}{link}.

In this section, we first elucidate the university's parking enforcement mechanism and the associated data set (Section~\ref{subsec:data-setup}). Then, in Section~\ref{subsec:assumptions-counterfactuals}, we calibrate the model parameters of our security game and present counterfactuals for different models of user behavior. Finally, we present our results (Section~\ref{subsec:results-parking}).



\subsection{Parking Enforcement Setup and Data} \label{subsec:data-setup}


At Stanford University, commuters must purchase one of several permits, where the permits are heterogeneous with different costs (see Table~\ref{tab:permit-cost} in Appendix~\ref{apdx:costs-parking-permit-types}), depending on the parking spot they are seeking to avail. To enforce parking regulations, the university's Department of Public Safety deploys officers to different parts of the university's campus, where the officers use license plate detection cameras that enable quick and accurate detection of parking violations. If the officer finds a user has violated parking regulations, e.g., not purchased a parking permit, a parking citation with a fine of $k = \$45$ is issued to the user. Crucially, the university accrues permit earnings, while all citation fees go to Santa Clara county's sheriff's department. 

For our study, we obtained seven months of parking enforcement and citation information between September 2022 and March 2023 for nine representative parking lots on the university's campus. The enforcement data includes the time, day, and duration of enforcement at each of the lots while the citation data includes the total number of citations issued in each parking lot every month. Moreover, we obtained data on the number of parking spaces in each of the parking lots for the permit types listed in Table~\ref{tab:permit-cost} in Appendix~\ref{apdx:costs-parking-permit-types}.

\subsection{Model Calibration and Counterfactuals} \label{subsec:assumptions-counterfactuals}

In this section, we describe the assumptions and methodology used to calibrate our model parameters and counterfactuals corresponding to different models of user behavior that we use to test the performance of our algorithms to the status-quo parking enforcement mechanism.

\vspace{-8pt}

\paragraph{Modelling Assumptions:}

To frame the parking enforcement problem as a security game, we define the location set $L$ as the set of parking lots on the university's campus, i.e., the nine parking lots in our data set, and assume that commuters pay a fine of $k = \$45$ for all detected parking violations, which corresponds to the fine of the majority of the issued parking citations. Moreover, since the university only earns through parking permit purchases, while citation fees (i.e., revenues from enforcement) go to the county's sheriff's department, we model the university as a payoff-maximizer, where payoffs represent permit earnings. To model the payoffs, let $n_l^j$ be the number of parking spaces of permit type $j$ at location $l$ and $f_j$ be the fee for permit type $j$. Then, the payoff (i.e., permit earnings) from users availing permit type $j$ at location $l$ is $p_l^j = n_l^j f_j$.

Furthermore, since no granular information on commuter movements or the usage of the parking spaces is available, we assume that all parking slots are used by commuters every day, with commuters parking at the same lot for the entire day. We make such an assumption for simplicity and that most commuters include students, staff, and faculty, who typically park their vehicles at the same location for most of the day. Further, we assume that patrol officers visit a parking lot once a day, corresponding to most of the enforcement activity performed by Stanford University's Department of Public Safety. Finally, using the available enforcement data, we compute the status-quo probability of allocating a patrol officer to a parking lot as the proportion of days over the seven-month horizon that enforcement activity took place at that parking lot. We also assume that the total number of security resources that Stanford university can allocate to enforce parking regulations is the sum of the probabilities of allocating the patrol officers across the different parking lots under the status-quo enforcement mechanism.


\vspace{-8pt}

\paragraph{Counterfactuals:} We study two counterfactuals to simulate different models of user behavior and calibrate the threshold allocation probabilities of security resources at which users of different types shift from not purchasing a permit to purchasing one. 

\textbf{Counterfactual 1:} We assume users belong to one of two classes: \emph{strategic} and \emph{non-strategic}. While non-strategic users always buy permits regardless of the enforcement mechanism, strategic users maximize utilities according to Problem~\eqref{eq:userOpt-eachLoc}. Recall from Section~\ref{subsec:payoffs} that in the parking enforcement context, the parameter $\beta = 0$ in the utility maximization Problem~\eqref{eq:userOpt-eachLoc} of users; hence, strategic users will purchase a parking permit only if the allocation probability of security resources to the given parking lot is above a threshold specific to each permit type. In particular, strategic users will buy a permit of type $j$ if the probability of allocating a security resource to a location exceeds a specific threshold; otherwise, they will not buy a permit, as depicted on the left of Figure~\ref{fig:counterfactual-thresholds}. Note that under this counterfactual, the number of user types $|\I| = 2 |\J|$, as users of each of the $|\J|$ permit types are strategic or non-strategic.

\textbf{Counterfactual 2:} We assume that the probability users do not buy permits as a function of the allocation probability of security resources is given by an exponential distribution for each location and permit type pair $(l, j)$.\footnote{Our choice of an exponential distribution serves as a proxy to capture the fact that, in practice, few users are risk-seeking while most are risk-averse.} We learn the parameter $\gamma_l^j$ of the exponential distribution for each pair $(l, j)$ using data corresponding to the number of citations by permit type for each parking lot and the corresponding status-quo enforcement probabilities across locations, as depicted on the right of Figure~\ref{fig:counterfactual-thresholds}.\footnote{Our data set only contains information on the total number of citations issued in a parking lot in each month across the seven-month horizon. Given that more granular information on the distribution of citations across permit types was unavailable, we assumed that the total number of citations across permit types was distributed in proportion to the number of parking spots of the different permit types in each parking lot.} Notice that a single data point of the proportion of citations (denoted $c_l^{j, SQ}$ for the pair $(l, j)$) given the status-quo probability of allocating resources to each parking lot (denoted $\sigma_l^{SQ}$ for location $l$) is sufficient to learn the parameter $\gamma_l^j$ of the exponential distribution (see right of Figure~\ref{fig:counterfactual-thresholds}. 

Having calibrated the exponential distribution parameter $\gamma_l^j$ for each location $l$ and permit type $j$, since we consider a finite set of user types, we discretize the exponential distribution with a step-size of $0.01$, where each user's allocation probability threshold at which they shift from not purchasing a permit to purchasing one is assumed to be the mid-point of the corresponding interval. Moreover, the number of users at each location for each permit type belonging to that interval is specified based on the cumulative distribution function (CDF) of the associated exponential distribution. For instance, for a location $l$ and permit type $j$, an interval $[z_1, z_2]$ and an exponential distribution with CDF $F_j^l(\cdot)$, the number of users with threshold allocation probability of $\frac{z_1+z_2}{2}$ is $n_j^l (F_j^l(z_2) - F_j^l(z_1))$.

Under our chosen discretization, for each location $l$ and permit type $j$, there are 101 user types, where 100 user types have an allocation probability threshold in the set $\{ 0.005, 0.015, \ldots, 0.995\}$ and one user type has an allocation probability threshold above one. All users with an allocation probability threshold above one will never purchase permits regardless of the allocation of security resources. Such a setting can, for instance, model users making \emph{honest mistakes}, as these users violate parking regulations regardless of the enforcement strategy of the administrator, even when it is optimal to purchase permits, e.g., when the administrator allocates security resources to a location with probability one. 

\begin{figure}[tbh!]
    \centering
    \includegraphics[width=0.81\linewidth]{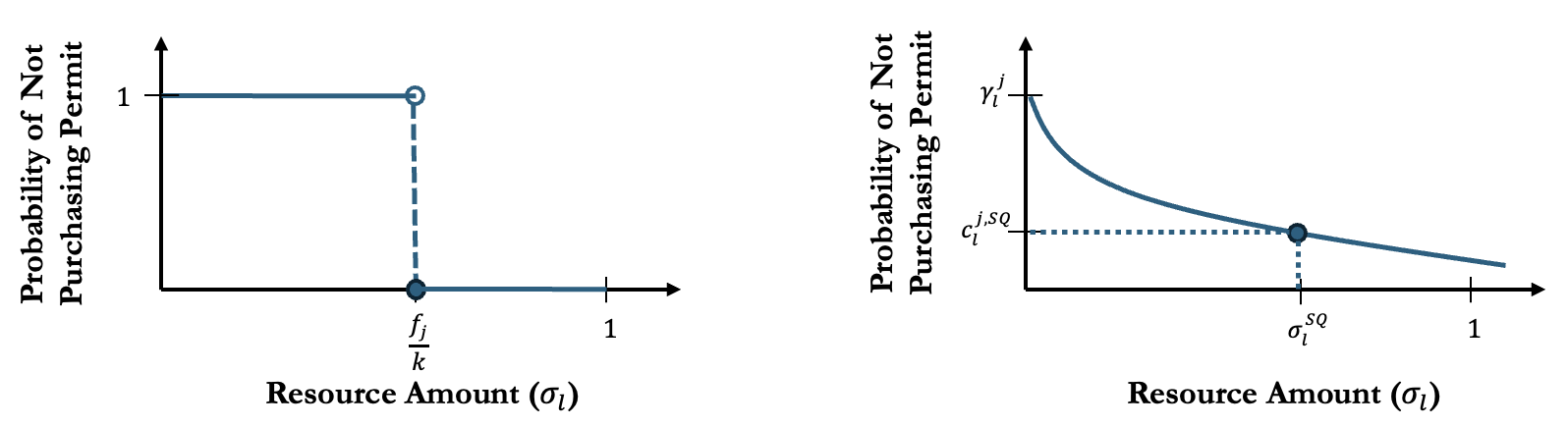}
    \vspace{-13pt}
    \caption{{\small \sf Depiction of the probability of not purchasing a parking permit of type $j$ as a function of the probability of allocating resources to a given location $l$ under counterfactuals one (left) and two (right). In counterfactual one, all strategic users have the same threshold probability $\frac{f_j}{k}$ at which they shift from not purchasing a permit to purchasing one. In counterfactual two, the probability of not purchasing a permit of type $j$ at location $l$ is modeled as an exponential distribution with parameter $\gamma_l^j$. The point $(\sigma_l^{SQ}, c_l^{j, SQ})$ on the exponential distribution corresponds to the fraction of the total parking spots of permit type $j$ at location $l$ that were issued citations, given by $c_l^{j, SQ}$, under the status-quo enforcement probability $\sigma_l^{SQ}$ at location $l$. 
    }} 
    \label{fig:counterfactual-thresholds}
\end{figure} 



\subsection{Results} \label{subsec:results-parking}

This section compares the performance of Algorithm~\ref{alg:GreedyWelMaxProb} in the heterogeneous payoff maximization setting (see Appendix~\ref{apdx:welfare-max-prob-setting}) in terms of the resulting parking permit earnings (i.e., the administrator's payoff) to several benchmarks, including the status quo parking enforcement mechanism, under the user behavior models in both counterfactuals one and two. For additional tables comparing the permit earnings achieved by Algorithm~\ref{alg:GreedyWelMaxProb} to that achieved by the status quo enforcement mechanism under counterfactuals one and two, we refer to Appendix~\ref{apdx:additional-numerics-parking}.



\vspace{-5pt}

\paragraph{Counterfactual 1:} Figure~\ref{fig:frac-permit-revenue-under-counterfactuals} (left) depicts the fraction of the total permit earnings accrued by the administrator as a function of the proportion of strategic users under counterfactual one for four parking enforcement mechanisms: (i) the status quo enforcement mechanism, (ii) the strategy computed using Algorithm~\ref{alg:GreedyWelMaxProb} under counterfactual one, (iii) a uniform random enforcement benchmark that allocates resources with equal probability to all parking lots, and (iv) a \emph{No Enforcement} benchmark, which allocates no resources.


Figure~\ref{fig:frac-permit-revenue-under-counterfactuals} (left) highlights that Algorithm~\ref{alg:GreedyWelMaxProb} achieves significant gains in permit earnings relative to the status quo mechanism, resulting in nearly twice the permit earnings when all users are strategic. Moreover, the permit earnings of all enforcement policies reduce as the proportion of strategic users increases, which is natural as fewer non-strategic users implies that less users purchase permits regardless of the probability with which resources are allocated. Despite this inverse relationship, Algorithm~\ref{alg:GreedyWelMaxProb} achieves about twice the increase in permit earnings relative to the No Enforcement benchmark as compared to that achieved by the status-quo mechanism regardless of the proportion of strategic users. Moreover, the absolute gains in permit earnings achieved using Algorithm~\ref{alg:GreedyWelMaxProb} relative to the status quo mechanism increases as the proportion of strategic users increases, which is natural as our security game framework's power stems from being able to capture the strategic behavior of utility-maximizing users, as given by Problem~\eqref{eq:userOpt-eachLoc}. Finally, Figure~\ref{fig:frac-permit-revenue-under-counterfactuals} (left) also demonstrates that under counterfactual one, the status-quo mechanism provides little gains in permit earnings relative to the uniform random benchmark.

\paragraph{Counterfactual 2:}

Figure~\ref{fig:frac-permit-revenue-under-counterfactuals} (right) depicts the fraction of total permit earnings as a function of the citation multiplier, a proxy for the proportion of strategic users in counterfactual one, for four mechanisms: (i) the status quo mechanism, (ii) the strategy computed using Algorithm~\ref{alg:GreedyWelMaxProb} under counterfactual two (labeled Algorithm~\ref{alg:GreedyWelMaxProb} (C2)), (iii) a uniform random benchmark, and (iv) the strategy computed using Algorithm~\ref{alg:GreedyWelMaxProb} under counterfactual one when all users are strategic (labeled Algorithm~\ref{alg:GreedyWelMaxProb} (C1)). In our experiments, we vary the total number of citations under the status quo resource allocation probabilities via a citation multiplier $\mu_l^j$. For each multiplier $\mu_l^j$, we calibrate the parameter $\gamma_l^j$ of the associated exponential distribution consisting of the point $(\sigma_l^{SQ}, \mu_l^j c_l^{j, SQ})$ for each location $l$ and permit type $j$. Notice that a citation multiplier of one corresponds to the status-quo outcome and a multiplier greater than one is akin to increasing the proportion of strategic users.


In the right of Figure~\ref{fig:frac-permit-revenue-under-counterfactuals}, we do not depict a No Enforcement benchmark, as we did under counterfactual one, as all users in counterfactual two have some strictly positive threshold allocation probability at which they shift from not purchasing a permit to purchasing one (see Section~\ref{subsec:assumptions-counterfactuals}). Consequently, the permit revenues corresponding to a No Enforcement benchmark are zero under counterfactual two (regardless of the citation multiplier). Thus, instead, we use the enforcement policy computed using Algorithm~\ref{alg:GreedyWelMaxProb} under the user behavior model in the first counterfactual where all users are assumed to be strategic as a benchmark.  



From the right of Figure~\ref{fig:frac-permit-revenue-under-counterfactuals}, we first note that Algorithm~\ref{alg:GreedyWelMaxProb} (C2) achieves higher permit earnings compared to the status quo policy for all citation multipliers. Moreover, the absolute gains in permit earnings of Algorithm~\ref{alg:GreedyWelMaxProb} (C2) relative to the status-quo increases with the citation multiplier, aligning with the corresponding result in the left of Figure~\ref{fig:frac-permit-revenue-under-counterfactuals}. Further, at the status-quo outcome (i.e., the citation multiplier is one), Algorithm~\ref{alg:GreedyWelMaxProb} (C2) achieves 98.7\% of the total permit earnings, which corresponds to about a \$300,000 (a 2\%) annual increase in the permit earnings relative to the status-quo policy under counterfactual two.

We also observe from the right of Figure~\ref{fig:frac-permit-revenue-under-counterfactuals} that unlike counterfactual one, under counterfactual two, the status-quo policy achieves significantly higher permit earnings compared to the uniform random benchmark for all citation multipliers, as unlike counterfactual one, counterfactual two leverages observed citation counts under the status-quo policy to calibrate users' threshold allocation probabilities at which they shift from not purchasing to purchasing permits. Figure~\ref{fig:frac-permit-revenue-under-counterfactuals} (right) also demonstrates that while the status-quo policy achieves higher permit earnings than Algorithm~\ref{alg:GreedyWelMaxProb} (C1) for low citation multipliers, as the multiplier increases (akin to a higher proportion of strategic users under counterfactual one), Algorithm~\ref{alg:GreedyWelMaxProb} (C1), which does not have access to observed citation counts, outperforms the status-quo policy. Moreover, the difference in the permit earnings between Algorithm~\ref{alg:GreedyWelMaxProb} (C1) and Algorithm~\ref{alg:GreedyWelMaxProb} (C2) represents the \emph{benefit of information}, as Algorithm~\ref{alg:GreedyWelMaxProb} (C2), unlike Algorithm~\ref{alg:GreedyWelMaxProb} (C1), has access to the citation counts under the status-quo policy based on past enforcement and citation data.

Overall, our results demonstrate that Algorithm~\ref{alg:GreedyWelMaxProb} outperforms the status-quo parking enforcement mechanism, where the gains in the permit earnings corresponding to Algorithm~\ref{alg:GreedyWelMaxProb} are particularly pronounced as the proportion of strategic users increases.




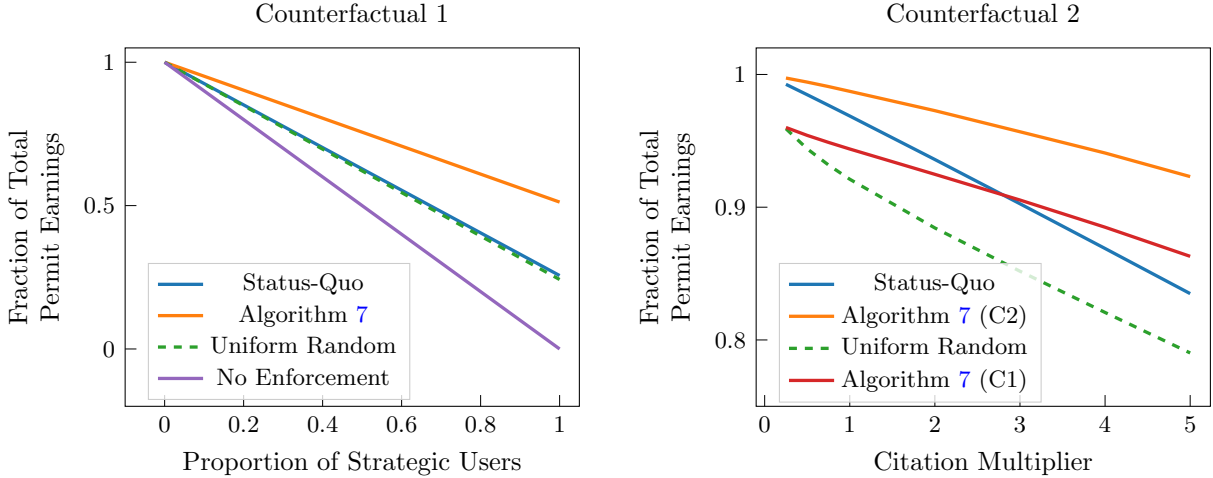
\begin{figure}
    \centering \hspace{-25pt}
\begin{subfigure}[t] {0.45\columnwidth}
\begin{tikzpicture}

\definecolor{crimson2143940}{RGB}{214,39,40}
\definecolor{darkgray176}{RGB}{176,176,176}
\definecolor{darkorange25512714}{RGB}{255,127,14}
\definecolor{forestgreen4416044}{RGB}{44,160,44}
\definecolor{steelblue31119180}{RGB}{31,119,180}
\definecolor{mediumpurple148103189}{RGB}{148,103,189}

\begin{axis}[
width=3in,
height=2.5in,
tick pos=left,
x grid style={darkgray176},
xlabel={Proportion of Strategic Users},
xmin=-0.1, xmax=1.05,
xtick style={color=black},
y grid style={darkgray176},
ylabel={Fraction of Total \\ Permit Earnings},
ylabel style = {align = center, font = \small},
ymin=-0.2, ymax=1.05,
ytick style={color=black},
xlabel style={font=\color{white!15!black}, font = \small},
tick label style={font=\footnotesize},
title = {Counterfactual 1},
title style = {font = \small},
legend style={
  fill opacity=0.8,
  draw opacity=1,
  text opacity=1,
  at={(0.05,0.4)},
  anchor=north west,
  draw=white!80!black,
  font = {\footnotesize}
}
]
\addplot [line width = 1.25pt, steelblue31119180]
table {%
0 1
0.1 0.925604310943143
0.2 0.851208621886286
0.3 0.776812932829429
0.4 0.702417243772572
0.5 0.628021554715715
0.6 0.553625865658858
0.7 0.479230176602001
0.8 0.404834487545144
0.9 0.330438798488287
1 0.25604310943143
};
\addlegendentry{Status-Quo}
\addplot [line width = 1.25pt, darkorange25512714]
table {%
0 1
0.1 0.951227266822739
0.2 0.902454533645479
0.3 0.853681800468218
0.4 0.804909067290957
0.5 0.756136334113697
0.6 0.707363600936436
0.7 0.658590867759175
0.8 0.609818134581914
0.9 0.561045401404654
1 0.512272668227393
};
\addlegendentry{Algorithm~\ref{alg:GreedyWelMaxProb}}
\addplot [line width = 1.25pt, dashed, forestgreen4416044]
table {%
0 1
0.1 0.924227079986956
0.2 0.848454159973911
0.3 0.772681239960867
0.4 0.696908319947822
0.5 0.621135399934778
0.6 0.545362479921733
0.7 0.469589559908689
0.8 0.393816639895644
0.9 0.3180437198826
1 0.242270799869555
};
\addlegendentry{Uniform Random}
\addplot [line width = 1.25pt, mediumpurple148103189]
table {%
0 1
0.1 0.9
0.2 0.8
0.3 0.7
0.4 0.6
0.5 0.5
0.6 0.4
0.7 0.3
0.8 0.2
0.9 0.1
1 0
};
\addlegendentry{No Enforcement}
\end{axis}

\end{tikzpicture}
\end{subfigure} \hspace{20pt} 
\begin{subfigure}[t] {0.45\columnwidth} 
\begin{tikzpicture}

\definecolor{crimson2143940}{RGB}{214,39,40}
\definecolor{darkgray176}{RGB}{176,176,176}
\definecolor{darkorange25512714}{RGB}{255,127,14}
\definecolor{forestgreen4416044}{RGB}{44,160,44}
\definecolor{steelblue31119180}{RGB}{31,119,180}

\begin{axis}[
width=3in,
height=2.5in,
tick pos=left,
x grid style={darkgray176},
xlabel={Citation Multiplier},
xmin=-0.1, xmax=5.2375,
xtick style={color=black},
y grid style={darkgray176},
ylabel={Fraction of Total \\ Permit Earnings},
ylabel style = {align = center, font = \small},
ymin=0.779848751938878, ymax=1.00766437581389,
ytick style={color=black},
xlabel style={font=\color{white!15!black}, font = \small},
tick label style={font=\footnotesize},
title = {Counterfactual 2},
title style = {font = \small},
ymin=0.75, ymax=1.02,
legend style={
  fill opacity=0.8,
  draw opacity=1,
  text opacity=1,
  at={(0.05,0.4)},
  anchor=north west,
  draw=white!80!black,
  font = {\footnotesize}
}
]
\addplot [line width = 1.25pt, steelblue31119180]
table {%
0.25 0.992574880256405
0.5 0.984730968522417
0.75 0.976758534133514
1 0.968700512610591
2 0.93592683064519
3 0.902595927067722
4 0.868891129786381
5 0.834898966880436
};
\addlegendentry{Status-Quo}
\addplot [line width = 1.25pt, darkorange25512714]
table {%
0.25 0.997309120183209
0.5 0.994243391000487
0.75 0.990892106545871
1 0.987292417925943
2 0.972728824478167
3 0.956835976505668
4 0.94082255230015
5 0.923056258451604
};
\addlegendentry{Algorithm~\ref{alg:GreedyWelMaxProb} (C2)}
\addplot [line width = 1.25pt, dashed, forestgreen4416044]
table {%
0.25 0.959140022103517
0.5 0.943959174320426
0.75 0.931796310956706
1 0.921066981622439
2 0.884371012188524
3 0.851777169575558
4 0.820698756553349
5 0.79020400756956
};
\addlegendentry{Uniform Random}
\addplot [line width = 1.25pt, crimson2143940]
table {%
0.25 0.960056570013104
0.5 0.954102169341155
0.75 0.948821244868306
1 0.943838732632641
2 0.924765735709355
3 0.905396687759872
4 0.884918275529635
5 0.862982452474343
};
\addlegendentry{Algorithm~\ref{alg:GreedyWelMaxProb} (C1)}
\end{axis}

\end{tikzpicture}
\end{subfigure} 
\vspace{-25pt}
    \caption{\small \sf Comparison of the fraction of the total permit earnings achieved by the status quo enforcement mechanism and a uniform random enforcement benchmark to that achieved by Algorithm~\ref{alg:GreedyWelMaxProb} as the proportion of strategic users is varied in counterfactual one (left) and as the citation multiplier is varied in counterfactual two (right). For counterfactual one, we also depict the permit earnings corresponding to a No Enforcement benchmark, wherein no security resources are allocated to any locations. For counterfactual two, we depict the performance of the enforcement mechanism computed using Algorithm~\ref{alg:GreedyWelMaxProb} calibrated based on the user behavior model in counterfactual two (Algorithm~\ref{alg:GreedyWelMaxProb} (C2)) and another calibrated based on the user behavior model in counterfactual one where all users are assumed to be strategic (Algorithm~\ref{alg:GreedyWelMaxProb} (C1)).}
    \label{fig:frac-permit-revenue-under-counterfactuals}
\end{figure}

\section{Extension I: Contracts to Bridge Revenue and Payoff Maximization} \label{sec:optimal-contracts}

\vspace{-2pt}

Thus far, we have studied our security game under the revenue and payoff maximization administrator objectives. While a payoff maximization objective captures an idealized representation of administrator behavior, in practice, an administrator often seeks to maximize revenues through enforcement fines, which may significantly compromise the resulting payoffs (see Appendix~\ref{apdx:additional-numerics} for experiments depicting the contrast in the revenue and payoff maximization outcomes). For instance, in road traffic scenarios, police often place speed traps to collect fines and increase enforcement revenues at locations where users are likely to violate the speed limit even though other locations may be more accident-prone~\cite{prensky_johnson_2023}.

To address this concern of the gulf between the revenue and payoff maximization outcomes, in this section, we extend our security game framework to incorporate \emph{contracts}, wherein a revenue-maximizing administrator is compensated for the payoff it contributes to the system. Section~\ref{sec:optimal-contract-framework} introduces the contract framework and studies equilibria in the associated contract game. Then, Section~\ref{sec:numerical-experiments-optimal-contracts} presents numerical experiments based on a case study of queue jumping in IPT services that highlight the efficacy of contracts in bridging the gap between the payoff and revenue maximization outcomes. For ease of exposition, we present the results for the setting with homogeneous user types at each location and note that following ideas similar to Section~\ref{sec:probabilistic-setting}, our results naturally extend to the setting with heterogeneous user types. Thus, in this section, for brevity of notation, we drop the superscript $i$ in the notation of user types and denote the type of all users at a location $l$ by the triple $\Theta_l = (\Lambda_l, d_l, p_l)$.


\subsection{Contract Framework} \label{sec:optimal-contract-framework}

\vspace{-2pt}

We consider a contract game between three players: (i) a payoff-maximizing \emph{principal}, (ii) a revenue-maximizing administrator, and (iii) fraudulent users at different locations in a system.\footnote{To distinguish between a principal and an administrator, we can, for instance, interpret the principal as the government maximizing payoffs and the administrator as a police department within the government maximizing its own revenues.} In our contract game, a payoff-maximizing principal offers a contract, specified by a parameter $\alpha \in [0, 1]$, to a revenue-maximizing administrator, where the contract level determines the payment made by the principal to the administrator as a proportion of the total payoff it contributes to the system. In particular, the principal selects the contract level $\alpha \in [0, 1]$ to optimize its objective, given by the difference between the payoff accrued and the total payments it makes to the administrator, to which the administrator best responds by choosing a revenue-maximizing strategy $\sigmaa(\alpha)$, which can be computed via the following bi-level program: \vspace{-4pt}
\begin{maxi!}|s|[2]   
    {\substack{\sigmaa \in \Omega_R \\ y_l(\sigmaa) \in [0, 1], \forall l \in L}}                            
    { Q_R(\sigmaa) + \alpha P_R(\sigmaa),  \label{eq:obj-contract-alpha}}   
    {\label{eq:Eg00333}}             
    {}          
    \addConstraint{y_l(\sigmaa)}{\in \argmax_{y \in [0, 1]} U_l(\sigmaa, y), \quad \text{for all } l \in L. \label{eq:alpha-contract-con}}    
\end{maxi!}
In the upper level problem, the administrator selects a strategy $\sigmaa(\alpha)$ to maximize its total revenue to which users best respond by maximizing utilities in the lower-level problem. Note here that, in our contract game, since the principal gives $\alpha$ fraction of the total payoff accrued to the administrator, the administrator's objective of maximizing revenues corresponds to Equation~\eqref{eq:obj-contract-alpha}, which represents the sum of the fines collected by the administrator from allocating its security resources and the payment it receives from the principal for its payoff contribution. The corresponding objective of the principal, given a contract level $\alpha$ and the administrator strategy $\sigmaa(\alpha)$, as given by the solution of Problem~\eqref{eq:obj-contract-alpha}-\eqref{eq:alpha-contract-con}, is thus given by $(1-\alpha) P_R(\sigmaa(\alpha))$.

An equilibrium of our contract game is specified by a triple $(\alpha^*, \sigmaa(\alpha^*), (y_l(\sigmaa(\alpha^*)))_{l \in L})$, where $(\sigmaa(\alpha^*), (y_l(\sigmaa(\alpha^*)))_{l \in L})$ represent the solutions of the bi-level Program~\eqref{eq:obj-contract-alpha}-\eqref{eq:alpha-contract-con} given parameter $\alpha^*$, which the principal selects to maximizes its payoff $(1-\alpha) P_R(\sigmaa(\alpha))$. In the following, we first note that our earlier developed algorithmic approaches and theoretical guarantees in the revenue and payoff maximization settings naturally apply in studying the administrator and user strategies, i.e., the solution of Problem~\eqref{eq:obj-contract-alpha}-\eqref{eq:alpha-contract-con}, given any contract $\alpha \in [0, 1]$. Then, we present a dense-sampling approach to compute a near-optimal solution to the principal’s problem of selecting a contract $\alpha \in [0, 1]$ that maximizes its objective $(1-\alpha) P_R(\sigmaa(\alpha))$.




\vspace{-5pt}

\paragraph{Administrator and User Strategies:}

In studying the equilibrium strategies of the administrator and users, we first note that solving Problem~\eqref{eq:obj-contract-alpha}-\eqref{eq:alpha-contract-con} is, in general, NP-hard, which follows from an analogous reduction to that in the proof of Theorem~\ref{thm:npHardness-swm-fm} (see Appendix~\ref{apdx:eq-strategies-admin-users-contract}). Thus, we develop an algorithm, which we henceforth refer to as \emph{Contract-Greedy}, akin to the greedy algorithms developed in the earlier studied revenue and payoff maximization settings, to compute an administrator strategy in our contract game 
for any contract $\alpha \in [0, 1]$. For instance, in the setting with homogeneous user types, \emph{Contract-Greedy} (see Algorithm~\ref{alg:GreedyContractDeterministic} in Appendix~\ref{apdx:eq-strategies-admin-users-contract}) is akin to Algorithm~\ref{alg:GreedyFraudminimizationDeterministic} in the payoff maximization setting other than in the process of sorting locations, as the administrator in our contract game, maximizes a linear combination of the revenue and payoff objectives. Moreover, since the administrator maximizes a linear combination of the revenue and payoff objectives, our earlier developed approximation ratio and resource augmentation guarantees also extend to this contract game. Thus, as many of the ideas developed in earlier sections apply in studying optimal administrator strategies in our contract game, for brevity, we defer the details of the algorithm, \emph{Contract-Greedy}, and its guarantees to Appendix~\ref{apdx:eq-strategies-admin-users-contract}.

\vspace{-2pt}

\paragraph{Principal's Strategy:}

Thus far, we have studied the strategies of the administrator and users given a contract $\alpha$. We now consider the principal's problem of selecting a contract $\alpha \in [0, 1]$ to maximize its objective $(1-\alpha) P_R(\sigmaa(\alpha))$. To this end, since obtaining an exact functional form of the payoff $P_R(\sigmaa(\alpha))$ for all contracts $\alpha \in [0, 1]$ is challenging, we present a \emph{dense-sampling} method to compute a near-optimal solution to the principal's optimization problem. In particular, consider the solutions $\sigmaa(\alpha)$ of the bi-level Program~\eqref{eq:obj-contract-alpha}-\eqref{eq:alpha-contract-con} for $\alpha$ taken from a finite set $\A_s = \{ 0, s, 2s, \ldots, 1 \}$ for some step-size $s\in (0, 1)$. We then evaluate the total payoff of each of the solutions $\sigmaa(\alpha)$ for $\alpha \in \A_s$ and return the value $\alpha_s^*$ from this discrete set that results in the highest objective for the principal, i.e., $(1-\alpha_s^*) P_R(\sigmaa(\alpha_s^*)) \geq (1-\alpha) P_R(\sigmaa(\alpha))$ for all $\alpha \in \A_s$.\footnote{Methods beyond dense sampling, e.g., gradient-based methods~\cite{solodov2007explicit}, can also be used; 
however, we use dense sampling, as it is computationally tractable when optimizing over a single variable and achieves the 
guarantee in Theorem~\ref{thm:dense-sampling-near-optimal}.} 

We now show that applying the above dense-sampling procedure approximately maximizes the principal's objective across all contract parameters $\alpha \in [0, 1]$, when the administrator strategy, given any parameter $\alpha$, corresponds to the optimal solution of the bi-level Program~\eqref{eq:obj-contract-alpha}-\eqref{eq:alpha-contract-con}.


\vspace{-1pt}

\begin{theorem} [Near-Optimality of Dense Sampling] \label{thm:dense-sampling-near-optimal} 
Let $\alpha^* \in [0, 1]$ be the principal's optimal contract and $\alpha_{s}^* \in \A_s$ be the contract computed through dense-sampling. Further, given any $\alpha$, let $\sigmaa(\alpha)$ be the solution of Problem~\eqref{eq:obj-contract-alpha}-\eqref{eq:alpha-contract-con}. Then, for a step-size $s \leq \frac{\epsilon}{\sum_l p_l}$, the loss in the principal's objective through dense sampling is bounded by $\epsilon$, i.e., $(1-\alpha^*)P_R(\sigmaa(\alpha^*)) \leq (1-\alpha_s^*)P_R(\sigmaa(\alpha_s^*)) + \epsilon$.
\end{theorem}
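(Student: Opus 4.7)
The plan is to bound the suboptimality of dense sampling by finding a gridpoint $\alpha' \in \mathcal{A}_s$ that is close to the true optimum $\alpha^*$ and showing that the principal's objective at $\alpha'$ is close to that at $\alpha^*$. Concretely, I would first choose $\alpha'$ to be the smallest element of $\mathcal{A}_s$ satisfying $\alpha' \geq \alpha^*$; since $\mathcal{A}_s$ contains $1$, such an $\alpha'$ exists and satisfies $\alpha' - \alpha^* \leq s$. Because $\alpha_s^*$ is, by construction, the maximizer of the principal's objective over $\mathcal{A}_s$, it suffices to prove $(1-\alpha^*) P_R(\sigmaa(\alpha^*)) \leq (1-\alpha') P_R(\sigmaa(\alpha')) + \epsilon$.

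The main obstacle — and the key step — is establishing a monotonicity property of the administrator's payoff in the contract level: if $\alpha' \geq \alpha^*$, then $P_R(\sigmaa(\alpha')) \geq P_R(\sigmaa(\alpha^*))$. I would prove this through a standard revealed-preference argument. Since $\sigmaa(\alpha')$ optimizes Problem~\eqref{eq:obj-contract-alpha}--\eqref{eq:alpha-contract-con} at contract $\alpha'$, we have $Q_R(\sigmaa(\alpha')) + \alpha' P_R(\sigmaa(\alpha')) \geq Q_R(\sigmaa(\alpha^*)) + \alpha' P_R(\sigmaa(\alpha^*))$, and the analogous inequality at $\alpha^*$ holds with $\sigmaa(\alpha^*)$ on the left. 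Adding the two inequalities and cancelling yields $(\alpha' - \alpha^*)\bigl(P_R(\sigmaa(\alpha')) - P_R(\sigmaa(\alpha^*))\bigr) \geq 0$, which gives monotonicity since $\alpha' \geq \alpha^*$.

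Given monotonicity, the remainder is algebraic. I would write
\[
(1-\alpha') P_R(\sigmaa(\alpha')) = (1-\alpha^*) P_R(\sigmaa(\alpha')) - (\alpha' - \alpha^*) P_R(\sigmaa(\alpha')),
\]
then lower-bound $P_R(\sigmaa(\alpha')) \geq P_R(\sigmaa(\alpha^*))$ in the first term (using also that $1 - \alpha^* \geq 0$) and upper-bound $P_R(\sigmaa(\alpha')) \leq \sum_l p_l$ in the second term, since the total payoff is trivially bounded by the sum of all per-location payoffs. Using $\alpha' - \alpha^* \leq s \leq \epsilon / \sum_l p_l$, the second term is bounded in magnitude by $\epsilon$, producing $(1-\alpha') P_R(\sigmaa(\alpha')) \geq (1-\alpha^*) P_R(\sigmaa(\alpha^*)) - \epsilon$. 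Combining with $(1-\alpha_s^*) P_R(\sigmaa(\alpha_s^*)) \geq (1-\alpha') P_R(\sigmaa(\alpha'))$ completes the proof. I expect the entire argument to be short once the monotonicity lemma is isolated; the only subtlety is ensuring the grid point $\alpha'$ lies above $\alpha^*$ so that the monotonicity has the right sign, which is why I would pick the ceiling rather than the nearest grid point.
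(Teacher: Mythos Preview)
Your proposal is correct and follows essentially the same route as the paper: the paper also isolates a monotonicity lemma for $P_R(\sigmaa(\alpha))$ in $\alpha$, proves it by exactly the revealed-preference/add-two-optimality-inequalities argument you give, and then bounds the loss by passing to the grid point just above $\alpha^*$ and using $P_R \le \sum_l p_l$ together with $\alpha' - \alpha^* \le s$. Your observation that one must take the ceiling grid point so monotonicity has the right sign matches the paper's choice of $\alpha_{s_2}$.
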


The challenge in establishing Theorem~\ref{thm:dense-sampling-near-optimal} is that the payoff $P_R(\sigmaa(\alpha))$ is, in general, not continuous in $\alpha$ (e.g., the payoff function is discontinuous at $\sigma_l = \frac{d_l}{d_l+k}$ for any location $l$, as depicted in Figure~\ref{fig:social-welfare-revenue-best-response-plots}). Thus, the key idea in proving this result involves showing that the payoff $P_R(\sigmaa(\alpha))$ is monotonically (non)-decreasing in $\alpha$. For a complete proof of Theorem~\ref{thm:dense-sampling-near-optimal}, see Appendix~\ref{apdx:dense-sampling-theory}. 

The near-optimality of dense sampling, more generally, applies beyond administrator strategies corresponding to the solution of Problem~\eqref{eq:obj-contract-alpha}-\eqref{eq:alpha-contract-con} and, in particular, holds for any strategy $\Tilde{\sigmaa}(\alpha)$ such that the payoff $P_R(\Tilde{\sigmaa}(\alpha))$ is monotonically non-decreasing in $\alpha$ (see Appendix~\ref{apdx:key-lemmas-dense-sampling}). For instance, this monotonicity condition is satisfied by the solution computed using \emph{Contract-Greedy} under a correlation assumption on the payoffs $p_l$ and the payoff bang-per-buck ratios $\frac{p_l(d_l+k)}{d_l}$ (see Appendix~\ref{apdx:key-lemmas-dense-sampling}). Moreover, we observe from our experiments in Section~\ref{sec:numerical-experiments-optimal-contracts} that the payoff corresponding to the strategies computed using \emph{Contract-Greedy} is non-decreasing in $\alpha$. 

\subsection{Numerical Experiments} \label{sec:numerical-experiments-optimal-contracts}

This section studies our contract game through numerical experiments based on a case study of queue jumping in IPT services (see Section~\ref{sec:examples-pertinent}) in Mumbai, India. In the following, we present model parameters for our experiment and results demonstrating the variation in the payoff accrued by a revenue-maximizing administrator for different model parameters as the contract level $\alpha$ is varied.


\textbf{\emph{Model Parameters:}} We consider a problem instance with $L = 448$ locations, representing the locations to hail the IPT service in Mumbai~\cite{mmrda-website}. We assume that each location $l$ has one type, i.e., $|\I| = 1$, where the number of fraudulent users $\Lambda_l$ are exponentially distributed with rate $80$, i.e., $\Lambda_l \sim Exp(80)$ for all $l$, and the benefits $d_l$ from engaging in fraud are exponentially distributed with rate $20$, i.e., $d_l \sim Exp(20)$ for all $l$. Moreover, we vary the number of resources $R \in \{ 1, 2, \ldots, 30\}$, the fine $k \in \{ 50, 100, \ldots, 500 \}$, and consider a payoff function given by $p_l = \Lambda_l (d_l)^x$ (see Section~\ref{subsec:game-params}) for $x$ lying in the range $\{ 1, 1.25, 1.5, 1.75, 2\}$ for all locations $l$. For a detailed overview of the motivations behind the choice and calibration of our model parameters, we refer to Appendix~\ref{apdx:additional-numerics}.




\textbf{\emph{Results:}} Figure~\ref{fig:social-welfare-proportion-vary-rk} depicts the variation in the payoff achieved by the allocation corresponding to \emph{Contract-Greedy} as the contract $\alpha$ is varied as a fraction of the payoff achieved using Algorithm~\ref{alg:GreedyFraudminimizationDeterministic} in the payoff maximization setting for different model parameters. In the left of Figure~\ref{fig:social-welfare-proportion-vary-rk}, we fix the fine to $k = 500$ and the number of resources $R = 10$ and vary the exponent $x$ of the payoff function $p_l = \Lambda_l (d_l)^x$. Analogously, in the center of Figure~\ref{fig:social-welfare-proportion-vary-rk}, we fix the payoff function $p_l = \Lambda_l (d_l)^{1.25}$ and the fine $k = 500$, with each curve corresponding to a different number of resources $R$. Finally, the right of Figure~\ref{fig:social-welfare-proportion-vary-rk} depicts curves for different fines, where the payoff function $p_l = \Lambda_l (d_l)^{1.25}$ and the number of resources $R = 15$. For the results in Figure~\ref{fig:social-welfare-proportion-vary-rk}, we consider the contract $\alpha$ to be chosen from a discrete set between zero and one with $0.05$ increments, i.e., the step-size $s = 0.05$.


From Figure~\ref{fig:social-welfare-proportion-vary-rk}, we first observe that regardless of the model parameters, the payoff corresponding to an administrator strategy computed using \emph{Contract-Greedy} is monotonically non-decreasing in the contract $\alpha$. Such a monotonic relation aligns with the proof of Theorem~\ref{thm:dense-sampling-near-optimal} and is natural as a higher contract $\alpha$ implies that the administrator is compensated more for the payoff it contributes. Further, we note from Figure~\ref{fig:social-welfare-proportion-vary-rk} that the revenue-maximizing solution, corresponding to $\alpha = 0$, achieves only a small fraction of the payoff achieved using Algorithm~\ref{alg:GreedyFraudminimizationDeterministic} in the payoff maximization setting, suggesting that the administrator's revenue and payoff maximization goals can often be at odds. However, for most tested parameters, an appropriately chosen contract $\alpha$ can recover a majority of the payoff achieved by Algorithm~\ref{alg:GreedyFraudminimizationDeterministic}. For instance, when $p_l = \Lambda_l (d_l)^{1.25}$, a contract of $\alpha = 0.5$ maximizes the principal's objective and achieves about $86\%$ of Algorithm~\ref{alg:GreedyFraudminimizationDeterministic}'s payoff.

Next, we note from Figure~\ref{fig:social-welfare-proportion-vary-rk} (left) that as we increase the exponent $x$ of the payoff function $p_l = \Lambda_l (d_l)^x$, the fraction of the payoff achieved by the allocation computed using \emph{Contract-Greedy} to that achieved by Algorithm~\ref{alg:GreedyFraudminimizationDeterministic} increases for each contract $\alpha$. Such a relation naturally follows as the payoff term in the administrator's Objective~\eqref{eq:obj-contract-alpha} increasingly dominates the revenues from the collected fines at each location with an increase in the exponent of the payoff function. Consequently, from the left of Figure~\ref{fig:social-welfare-proportion-vary-rk}, our results, for the studied payoff functions with an exponent $x>1$, demonstrate that using even small values of the contract $\alpha$ can recover most of the payoffs achieved using Algorithm~\ref{alg:GreedyFraudminimizationDeterministic}, thus bridging the gap between the payoff and revenue-maximizing outcomes.

Further, the fraction of the payoff achieved by the strategy computed using \emph{Contract-Greedy} to that achieved by Algorithm~\ref{alg:GreedyFraudminimizationDeterministic} in the payoff maximization setting for any contract $\alpha$ (i) remains nearly constant regardless of the number of resources (center of Figure~\ref{fig:social-welfare-proportion-vary-rk}) and (ii) increases for lower fines (right of Figure~\ref{fig:social-welfare-proportion-vary-rk}). Such a result holds as while varying the number of resources $R$ does not influence any property of the locations, e.g., a location's type $\Theta_l = (\Lambda_l, d_l, v_l)$ is independent of $R$, changing fines impacts the threshold $\frac{d_l}{d_l+k}$ at each location $l$ at which the revenue and payoff functions have a jump discontinuity (see Figure~\ref{fig:social-welfare-revenue-best-response-plots}). Consequently, while our greedy-like algorithms are not influenced by a change in the number of resources other than that the algorithms either terminate sooner or later depending on the number of resources, a change in the fine influences our algorithms' outcomes as the locations are sorted in a (potentially) different order for each fine $k$.


Moreover, Figure~\ref{fig:social-welfare-proportion-vary-rk} (right) implies that at lower fines, the allocation computed using \emph{Contract-Greedy} recovers a higher proportion of the payoff compared to that achieved using Algorithm~\ref{alg:GreedyFraudminimizationDeterministic} in the payoff maximization setting. Such a result holds, as at lower fines, an administrator maximizing Objective~\eqref{eq:obj-contract-alpha} is more likely to optimize payoffs as the compensation it receives from the principal outweighs its low fine collections. Thus, Figure~\ref{fig:social-welfare-proportion-vary-rk} (right) highlights that setting high fines can be detrimental to the payoffs in the presence of a revenue-maximizing administrator. Such a result thus highlights the value of setting low to moderate fines, as often happens in practice, in deterring an administrator from solely maximizing revenues through the collected fines and instead incorporating payoff maximization in its objective even at low contract levels $\alpha$.

Overall, our results present several sensitivity relations that elucidate the impact of the payoff function $p_l$, number of resources $R$, and fine $k$ on the payoff achieved in the system using \emph{Contract-Greedy} for different contract parameters $\alpha$. Moreover, our results highlight the effectiveness of contracts in bridging the gap between the payoff and revenue maximization administrator objectives. For a further discussion and analysis of the results in Figure~\ref{fig:social-welfare-proportion-vary-rk}, we refer to Appendix~\ref{apdx:additional-numerical-experiments}.

\begin{figure}
    \centering 
\begin{subfigure}[t] {0.3\columnwidth}
\begin{tikzpicture}

\definecolor{crimson2143940}{RGB}{214,39,40}
\definecolor{darkgray176}{RGB}{176,176,176}
\definecolor{darkorange25512714}{RGB}{255,127,14}
\definecolor{forestgreen4416044}{RGB}{44,160,44}
\definecolor{mediumpurple148103189}{RGB}{148,103,189}
\definecolor{steelblue31119180}{RGB}{31,119,180}

\begin{axis}[
width=2.1in,
height=1.5in,
tick align=outside,
tick pos=left,
x grid style={darkgray176},
xlabel={Contract ($\alpha$)},
xlabel style= {font = \scriptsize, yshift=0.24cm},
xmin=-0.05, xmax=1.05,
xtick style={color=black, font = \tiny, yshift = 0.08cm},
y grid style={darkgray176},
ylabel = Payoff Fraction 
$\frac{P_R(\sigmaa_{\alpha}^A)}{P_R(\sigmaa_{A}^*)} $,
title = {Variation in Payoff Functions},
title style = {font = \scriptsize, yshift=-0.27cm},
x tick label style={font=\scriptsize, yshift = 0.2cm},
axis background/.style={fill=white},
ylabel style={align=center, font = \scriptsize},
ymin=0.0299464809200509, ymax=1.02,
ytick = {0.0, 0.2, 0.4, 0.6, 0.8, 1.0},
y tick label style={
        /pgf/number format/.cd,
        fixed,
        fixed zerofill,
        precision=1,
        /tikz/.cd,
        font = \scriptsize
    },
legend style={
  fill opacity=0.8,
  draw opacity=1,
  text opacity=1,
  at={(0.46,0.75)},
  anchor=north west,
  draw=white!80!black,
  inner xsep=-0.2pt, inner ysep=-2pt,
  font = {\tiny\arraycolsep=2pt}
}
]
\addplot [line width = 1pt, steelblue31119180]
table {%
0 0.0761395056381438
0.05 0.0761395056381438
0.1 0.0761395056381438
0.15 0.0761395056381438
0.2 0.0761395056381438
0.25 0.0761395056381438
0.3 0.076144550835148
0.35 0.0761472077690979
0.4 0.0762833781042246
0.45 0.07642447531704
0.5 0.07642447531704
0.55 0.07642447531704
0.6 0.07642447531704
0.65 0.07642447531704
0.7 0.0764669557916713
0.75 0.0764722804462088
0.8 0.0764722804462088
0.85 0.0764722804462088
0.9 0.0764722804462088
0.95 0.0764722804462088
1 0.745712053080742
};
\addlegendentry{$x = 1$}
\addplot [line width = 1pt, darkorange25512714]
table {%
0 0.0839588985100597
0.05 0.0839588985100597
0.1 0.083964859560875
0.15 0.084333672518481
0.2 0.084333672518481
0.25 0.0843898391610377
0.3 0.123164930449549
0.35 0.298068764040212
0.4 0.604362835347428
0.45 0.768933462750956
0.5 0.858432131042332
0.55 0.943673926539339
0.6 0.972596750588812
0.65 0.990833787337509
0.7 0.995751886389841
0.75 0.997512424184067
0.8 0.998367364281925
0.85 0.999049512857504
0.9 0.999434358554778
0.95 0.999922000019503
1 0.999923761411651
};
\addlegendentry{$x = 1.25$}
\addplot [line width = 1pt, forestgreen4416044]
table {%
0 0.0908134331440615
0.05 0.0912718145216197
0.1 0.142089404004312
0.15 0.572244104260492
0.2 0.827966278063461
0.25 0.905344593181836
0.3 0.969161168595391
0.35 0.989247751351073
0.4 0.995027803250518
0.45 0.997930380811236
0.5 0.99888757777457
0.55 0.999380558879964
0.6 0.999554881829299
0.65 0.999719625336196
0.7 0.999893577954916
0.75 0.999902906118537
0.8 0.999902906118537
0.85 1
0.9 1
0.95 1
1 1
};
\addlegendentry{$x = 1.5$}
\addplot [line width = 1pt, crimson2143940]
table {%
0 0.0969524550488067
0.05 0.488933943537867
0.1 0.904379039659431
0.15 0.973543466418649
0.2 0.994428922477184
0.25 0.99858327443815
0.3 0.999912727709841
0.35 0.999976138207795
0.4 0.999976138207795
0.45 1
0.5 1
0.55 1
0.6 1
0.65 1
0.7 1
0.75 1
0.8 1
0.85 1
0.9 1
0.95 1
1 1
};
\addlegendentry{$x = 1.75$}
\addplot [line width = 1pt, mediumpurple148103189]
table {%
0 0.102212008328592
0.05 0.943042699543996
0.1 0.99135800124862
0.15 0.997622341573469
0.2 1
0.25 1
0.3 1
0.35 1
0.4 1
0.45 1
0.5 1
0.55 1
0.6 1
0.65 1
0.7 1
0.75 1
0.8 1
0.85 1
0.9 1
0.95 1
1 1
};
\addlegendentry{$x = 2$}
\end{axis}

\end{tikzpicture}
\end{subfigure} \hspace{20pt} 
\begin{subfigure}[t] {0.3\columnwidth} 
\begin{tikzpicture}

\definecolor{crimson2143940}{RGB}{214,39,40}
\definecolor{darkgray176}{RGB}{176,176,176}
\definecolor{darkorange25512714}{RGB}{255,127,14}
\definecolor{forestgreen4416044}{RGB}{44,160,44}
\definecolor{mediumpurple148103189}{RGB}{148,103,189}
\definecolor{steelblue31119180}{RGB}{31,119,180}

\begin{axis}[
width=2.1in,
height=1.5in,
tick align=outside,
tick pos=left,
x grid style={darkgray176},
xlabel={Contract ($\alpha$)},
xlabel style= {font = \scriptsize, yshift=0.24cm},
x tick label style={font=\scriptsize, yshift = 0.2cm},
xtick style={color=black, font = \tiny, yshift = 0.08cm},
xmin=-0.05, xmax=1.05,
y grid style={darkgray176},
title = {Variation in Number of Resources},
title style = {font = \scriptsize, yshift=-0.27cm},
ymin=0.0380961166310489, ymax=1.02,
y tick label style={
        /pgf/number format/.cd,
        fixed,
        fixed zerofill,
        precision=1,
        /tikz/.cd,
        font = \scriptsize
    },
legend style={
  fill opacity=0.8,
  draw opacity=1,
  text opacity=1,
  at={(0.55,0.76)},
  anchor=north west,
  draw=white!80!black,
  inner xsep=0pt, inner ysep=-1.5pt,
  font = {\tiny\arraycolsep=2pt}
}
]
\addplot [line width = 1pt, steelblue31119180]
table {%
0 0.0876746723860656
0.05 0.0876912706473767
0.1 0.0878789306220462
0.15 0.0880772144976402
0.2 0.0882793657097864
0.25 0.0883759759334167
0.3 0.140734010178451
0.35 0.333397195600466
0.4 0.616369766246201
0.45 0.800795655743932
0.5 0.897720911754775
0.55 0.960977473660398
0.6 0.98353605362291
0.65 0.995465605089147
0.7 0.998271844638581
0.75 0.999194888427226
0.8 0.999194888427226
0.85 0.999695549213939
0.9 0.999695549213939
0.95 1
1 1
};
\addlegendentry{$R = 5$}
\addplot [line width = 1pt, darkorange25512714]
table {%
0 0.0839588985100597
0.05 0.0839588985100597
0.1 0.083964859560875
0.15 0.084333672518481
0.2 0.084333672518481
0.25 0.0843898391610377
0.3 0.123164930449549
0.35 0.298068764040212
0.4 0.604362835347428
0.45 0.768933462750956
0.5 0.858432131042332
0.55 0.943673926539339
0.6 0.972596750588812
0.65 0.990833787337509
0.7 0.995751886389841
0.75 0.997512424184067
0.8 0.998367364281925
0.85 0.999049512857504
0.9 0.999434358554778
0.95 0.999922000019503
1 0.999923761411651
};
\addlegendentry{$R = 10$}
\addplot [line width = 1pt, forestgreen4416044]
table {%
0 0.0839454021349645
0.05 0.0839454021349645
0.1 0.0839571460221351
0.15 0.083957146022135
0.2 0.083957146022135
0.25 0.083957146022135
0.3 0.119292800026624
0.35 0.294846977063576
0.4 0.59826954137749
0.45 0.759821518917097
0.5 0.854543030986416
0.55 0.937245961151214
0.6 0.972244164456357
0.65 0.988414105397947
0.7 0.993584616623345
0.75 0.996214030209123
0.8 0.997434099303078
0.85 0.998559869475965
0.9 0.999116903865483
0.95 0.999722632846589
1 0.999802589206477
};
\addlegendentry{$R = 15$}
\addplot [line width = 1pt, crimson2143940]
table {%
0 0.0840089081278611
0.05 0.0840089081278611
0.1 0.0840089081278611
0.15 0.0840089081278611
0.2 0.0840089081278611
0.25 0.0840089081278611
0.3 0.120018258984578
0.35 0.294475193551885
0.4 0.597609071170664
0.45 0.759192984114778
0.5 0.854308600614215
0.55 0.936807037711187
0.6 0.972060343163231
0.65 0.988092600689493
0.7 0.993311014668923
0.75 0.996109103132544
0.8 0.997359390883083
0.85 0.998522705211775
0.9 0.999083779337896
0.95 0.999695912397683
1 0.999775166140761
};
\addlegendentry{$R = 20$}
\addplot [line width = 1pt, mediumpurple148103189]
table {%
0 0.0840089081278611
0.05 0.0840089081278611
0.1 0.0840089081278611
0.15 0.0840089081278611
0.2 0.0840089081278611
0.25 0.0840089081278611
0.3 0.120018258984578
0.35 0.294475193551885
0.4 0.597609071170664
0.45 0.759192984114778
0.5 0.854308600614215
0.55 0.936807037711187
0.6 0.972060343163231
0.65 0.988092600689493
0.7 0.993311014668923
0.75 0.996109103132544
0.8 0.997359390883083
0.85 0.998522705211775
0.9 0.999083779337896
0.95 0.999695912397683
1 0.999775166140761
};
\addlegendentry{$R = 30$}
\end{axis}

\end{tikzpicture}
\end{subfigure} 
\begin{subfigure}[t] {0.3\columnwidth}
\begin{tikzpicture}

\definecolor{crimson2143940}{RGB}{214,39,40}
\definecolor{darkgray176}{RGB}{176,176,176}
\definecolor{darkorange25512714}{RGB}{255,127,14}
\definecolor{forestgreen4416044}{RGB}{44,160,44}
\definecolor{mediumpurple148103189}{RGB}{148,103,189}
\definecolor{steelblue31119180}{RGB}{31,119,180}

\begin{axis}[
width=2.1in,
height=1.5in,
tick align=outside,
tick pos=left,
x grid style={darkgray176},
xlabel={Contract ($\alpha$)},
xlabel style= {font = \scriptsize, yshift=0.24cm},
xmin=-0.05, xmax=1.05,
xtick style={color=black, font = \scriptsize},
x tick label style={font=\scriptsize, yshift = 0.2cm},
xtick style={color=black, font = \tiny, yshift = 0.08cm},
y grid style={darkgray176},
title = {Variation in Fine Levels},
title style = {font = \scriptsize, yshift=-0.27cm},
ymin=0.0391237050485077, ymax=1.02,
y tick label style={
        /pgf/number format/.cd,
        fixed,
        fixed zerofill,
        precision=1,
        /tikz/.cd,
        font = \scriptsize
    },
legend style={
  fill opacity=0.8,
  draw opacity=1,
  text opacity=1,
  at={(0.53,0.78)},
  anchor=north west,
  draw=white!80!black,
  inner xsep=0pt, inner ysep=-1.5pt,
  font = {\tiny\arraycolsep=2pt}
}
]
\addplot [line width = 1pt, steelblue31119180]
table {%
0 0.383981875028984
0.05 0.434835076985053
0.1 0.456733004881273
0.15 0.47566595214692
0.2 0.48841374860277
0.25 0.494250182433593
0.3 0.521926182668142
0.35 0.640940264280219
0.4 0.825858001887382
0.45 0.911909193211684
0.5 0.936079225650587
0.55 0.98647230469254
0.6 1
0.65 1
0.7 1
0.75 1
0.8 1
0.85 1
0.9 1
0.95 1
1 1
};
\addlegendentry{$k = 50$}
\addplot [line width = 1pt, darkorange25512714]
table {%
0 0.228126012028327
0.05 0.228126012028327
0.1 0.229567456992579
0.15 0.230019950404685
0.2 0.233595336188618
0.25 0.233744096052169
0.3 0.269505116011905
0.35 0.410345737179787
0.4 0.663368733358696
0.45 0.827238816531979
0.5 0.902151525814269
0.55 0.961056447144435
0.6 0.987624598210578
0.65 0.99764040742609
0.7 0.999459676959618
0.75 0.999459676959618
0.8 0.999459676959618
0.85 0.999937970539026
0.9 1
0.95 1
1 1
};
\addlegendentry{$k = 150$}
\addplot [line width = 1pt, forestgreen4416044]
table {%
0 0.149606211639067
0.05 0.151290463043025
0.1 0.153136570823481
0.15 0.153179621379093
0.2 0.153203521258589
0.25 0.15473944302698
0.3 0.189640471066976
0.35 0.345957666368713
0.4 0.626398283430321
0.45 0.787191776213746
0.5 0.871478053999744
0.55 0.952614548051142
0.6 0.979257705228733
0.65 0.991777881041976
0.7 0.996327880394975
0.75 0.997948288804707
0.8 0.998652870729374
0.85 0.999051019340505
0.9 0.999474475910003
0.95 0.999998903664458
1 0.999998903664458
};
\addlegendentry{$k = 250$}
\addplot [line width = 1pt, crimson2143940]
table {%
0 0.114763212707139
0.05 0.114768814876231
0.1 0.114768814876231
0.15 0.114768814876232
0.2 0.114769459858759
0.25 0.114769459858759
0.3 0.149867115599807
0.35 0.311212863068258
0.4 0.610553136946898
0.45 0.771011118528984
0.5 0.86474036942945
0.55 0.945244349763855
0.6 0.975459590805187
0.65 0.990655281928307
0.7 0.995121394507654
0.75 0.997265512120745
0.8 0.99812511346114
0.85 0.999066823808743
0.9 0.999441609576393
0.95 0.999916636212638
1 0.999916636212638
};
\addlegendentry{$k = 350$}
\addplot [line width = 1pt, mediumpurple148103189]
table {%
0 0.0839454021349645
0.05 0.0839454021349645
0.1 0.0839571460221351
0.15 0.083957146022135
0.2 0.083957146022135
0.25 0.083957146022135
0.3 0.119292800026624
0.35 0.294846977063576
0.4 0.59826954137749
0.45 0.759821518917097
0.5 0.854543030986416
0.55 0.937245961151214
0.6 0.972244164456357
0.65 0.988414105397947
0.7 0.993584616623345
0.75 0.996214030209123
0.8 0.997434099303078
0.85 0.998559869475965
0.9 0.999116903865483
0.95 0.999722632846589
1 0.999802589206477
};
\addlegendentry{$k = 500$}
\end{axis}

\end{tikzpicture}
\end{subfigure}
\vspace{-25pt}
    \caption{\small \sf Variation in the fraction of the payoffs achieved by the strategy $\sigmaa_{\alpha}^A$ computed using \emph{Contract-Greedy} to that achieved by the strategy $\sigmaa_{A}^*$ corresponding to Algorithm~\ref{alg:GreedyFraudminimizationDeterministic} as the contract level $\alpha$ is varied for different payoff functions $p_l = \Lambda_l (d_l)^x$ (left), number of resources $R$ (center), and fines $k$ (right).}
    \label{fig:social-welfare-proportion-vary-rk}
\end{figure}
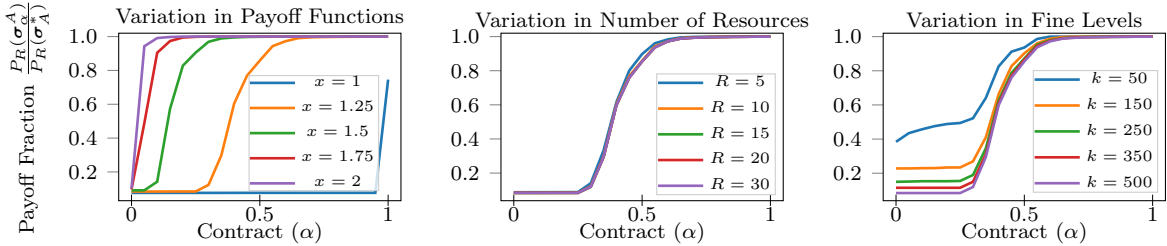

\section{Extension II: Additional Constraints on Allocating Resources} \label{sec:additional-constraints}

This section generalizes our model and algorithms to the setting when the administrator has additional constraints over allocating security resources beyond a resource budget constraint. Of particular interest are lower or upper bound constraints on the amount of security resources that can feasibly be allocated to given subsets of locations, i.e., the number of security resources allocated to given subsets of locations are constrained to lie between a specified floor and ceiling. We study this class of constraints as lower and upper bound constraints have theoretical significance, having been studied in domains spanning optimization theory~\cite{edmonds2003submodular} to matching theory~\cite{bckm-2013}, and arise naturally in security applications, e.g., due to fairness concerns over allocating security resources across locations or ensuring particular locations (or subsets of locations) receive a minimum security coverage.

In this section, we first formally introduce the administrator's payoff maximization problem under the additional lower and upper bound constraints over the allocation of security resources in Section~\ref{subsec:additional-constraints-model}. Then, in Section~\ref{subsec:greedy-arbitrarily-bad}, we show that a natural generalization of the greedy algorithm in the payoff-maximization setting (i.e., Algorithm~\ref{alg:GreedyWelMaxProb}) that respects these additional constraints, which we refer to as \emph{Constrained-Greedy}, can achieve an arbitrarily bad approximation ratios even with homogeneous user types. Thus, in Section~\ref{subsec:greedy-good-perf-heirarchy}, we identify a class of constraints, namely that of a \emph{hierarchy}, which have been widely studied in the literature on resource allocation~\cite{bckm-2013} and matching markets~\citep{kamada2018stability,bilemi-2004,gul2019market,jalota2022matchingtransfersdistributionalconstraints} and occurs in many applications, under which Constrained-Greedy achieves constant factor approximation ratios when augmented with a few additional resources. Finally, in Section~\ref{subsec:applications-hierarchical}, we highlight the implications and practical applicability of our theoretical guarantees through several real-world applications. While we focus on payoff maximization in this section, our algorithmic ideas and guarantees naturally generalize to the revenue maximization objective.

\subsection{Model for Security Game under Additional Constraints} \label{subsec:additional-constraints-model}

This section introduces the administrator's payoff maximization problem in the setting with additional constraints on allocating security resources. To this end, we first formally describe the lower and upper bound constraints on the amount of security resources the administrator can allocate to different subsets of locations. In particular, we let $\H$ denote the collection of all sets of locations over which the administrator has constraints over allocating security resources. Furthermore, for each set of locations $S \in \H$, let $\Bar{\lambda}_S \geq \underline{\lambda}_S$ be non-negative real-valued numbers and $\Bar{\llambda} = (\Bar{\lambda}_S)_{S \in \H}, \underline{\llambda} = (\underline{\lambda}_S)_{S \in \H}$ be the vector of upper and lower bound quotas on the allocation of security resources. Then, the additional upper and lower bound constraints of the administrator are defined by: $\underline{\lambda}_S \leq \sum_{l \in S} \sigma_l \leq \Bar{\lambda}_S$ for all subsets of locations $S \in \H$. That is, for each set $S \in \H$, the administrator must allocate between $[\underline{\lambda}_S, \Bar{\lambda}_S]$ security resources to the location set $S$.

Under the above-defined class of constraints, we obtain the following feasibility set for the administrator's mixed strategy vector:
\begin{align*}
    \Omega_R^{(\H, \underline{\llambda}, \Bar{\llambda})} = \left\{ \sigmaa = (\sigma_l)_{l \in L}: \sigma_l \in [0, 1] \text{ for all } l \in L,  \sum_{l \in L} \sigma_l \leq R, \text{ and } \underline{\lambda}_S \leq \sum_{l \in S} \sigma_l \leq \Bar{\lambda}_S, \forall S \in \H \right\},
\end{align*}
where the subscript $R$ represents the number of security resources available to the administrator while the superscript $(\H, \underline{\llambda}, \Bar{\llambda})$ represents the collection of sets of locations over which the administrator has constraints over allocating security resources and the corresponding upper and lower bound quotas on the allocation of security resources to each of these subsets of locations.

Then, the administrator's payoff-maximizing strategy with additional lower and upper bound constraints can be computed via a bi-level program analogous to Problem~\eqref{eq:admin-obj-fraud}-\eqref{eq:bi-level-con-fraud}, where the feasible set of strategies $\sigmaa \in \Omega_R^{(\H, \underline{\llambda}, \Bar{\llambda})}$ (rather than $\sigmaa \in \Omega_R$ in the setting without additional constraints). Defining $P_l(\sigma_l)$ as the payoff function when $\sigma_l$ resources are allocated to location $l$ (e.g., see right of Figure~\ref{fig:social-welfare-revenue-best-response-plots} or left of Figure~\ref{fig:helper-exp_welfare_mcua}), note that the administrator's payoff maximization bi-level program with additional constraints can be reformulated as the following problem:
\begin{maxi!}|s|[2]   
    {\sigmaa}                   
    { P_R(\sigmaa) = \sum_{l \in L} P_l(\sigma_l),  \label{eq:welfare-obj}}   
    {\label{eq:Eg001}}             
    {}          
    \addConstraint{\sum_{l \in L} \sigma_l}{\leq R, \label{eq:resource-constraint}}    
    \addConstraint{\sigma_l}{\in [0, 1], \quad \forall l \in L \label{eq:probability-feasibility}} 
    \addConstraint{\underline{\lambda}_S}{\leq \sum_{l \in S} \sigma_l \leq \Bar{\lambda}_S, \quad \forall S \in \H \label{eq:ub-lb-con}} 
\end{maxi!}
where Objective~\eqref{eq:welfare-obj} corresponds to maximizing the payoff across all locations and Constraints~\eqref{eq:resource-constraint}-\eqref{eq:ub-lb-con} correspond to the feasible set of administrator strategies, as specified by $\Omega_R^{(\H, \underline{\llambda}, \Bar{\llambda})}$. Recall from Section~\ref{subsec:payoffs} that the payoff function $P_l(\sigma_l) = \sum_{i \in \I} \left( p_l^i - (1 - \sigma_l) y_l^i(\sigma_l) p_l^i \right)$, where, with slight abuse of notation, we denote the user best-response function only as a function of $\sigma_l$ rather than the entire mixed-strategy vector $\sigmaa$, as all users at a given location best-respond solely based on the probability of allocating a security resource to that location (e.g., see Equation~\eqref{eq:best-response-users}).

\subsection{Low Approximation Ratio Under Arbitrary Constraints} \label{subsec:greedy-arbitrarily-bad}

This section presents an example demonstrating the challenge of attaining a good approximation ratio to the administrator's payoff maximization Problem~\eqref{eq:welfare-obj}-\eqref{eq:ub-lb-con} with additional lower and upper bound constraints on allocating security resources. In particular, we present an example of a constraint set to show that a natural generalization of the greedy algorithm in the payoff-maximization setting (i.e., Algorithm~\ref{alg:GreedyWelMaxProb}) that respects these additional constraints, which we henceforth refer to as Constrained-Greedy, can achieve an arbitrarily bad approximation ratio of $O(\frac{1}{|L|-1})$, where $|L|$ is the number of locations, even in the setting with homogeneous user types. 




\begin{example} [Constrained-Greedy can have an Arbitrarily Bad Approximation Ratio] \label{eg:greedy-bad-performance}
Consider a setting with $|L|$ locations and $|L|-1$ additional upper bound constraints, i.e., $|\H| = |L|-1$, where each upper bound constraint is parametrized by $i$ and corresponds to the set $S_i = \{ |L|, i \}$ for all $i \in [|L|-1]$. In other words, the upper bound constraints are such that location $|L|$ belongs to all the constraint sets $S \in \H$, while all other locations belong to exactly one constraint set. To further specify the additional constraints, we let the lower bound quota $\underline{\lambda}_{S_i} = 0$ and the upper bound quota $\Bar{\lambda}_{S_i} = 0.5$ for all $i \in [|L|-1]$. Furthermore, we assume that $\frac{d_l}{d_l+k} = 0.5$ for all locations $l$ and that $p_l = p_{l'}$ for all $l, l' \in [|L|-1]$ and that $p_{|L|} = p_1+\epsilon$ for some small constant $0<\epsilon<p_1$. Finally, we let the resource budget $R \geq \sum_{l \in [|L|-1]} \frac{d_l}{d_l+k}$.

For this instance, we first evaluate the solution computed using Constrained-Greedy, which, akin to Algorithm~\ref{alg:GreedyWelMaxProb}, computes two allocation strategies: (i) a greedy allocation that allocates resources to locations in the descending order of their bang-per-buck ratios while respecting the allocation constraints and (ii) an allocation that maximizes the payoff from spending on a single location. For the greedy allocation corresponding to Constrained-Greedy, note by construction for the above-defined instance that location $|L|$ has the highest payoff bang-per-buck ratio and thus will be allocated resources first. However, on allocating $\sigma_{|L|} = 0.5$ to location $|L|$, the greedy allocation has exactly met the upper bound constraints on the allocation of resources to each set $S_i \in \H$ and hence, no more resources can be allocated to other locations. Consequently, the greedy allocation computed in step one of Constrained-Greedy corresponds to a cumulative payoff of $p_{|L|} = p_1+\epsilon$. Analogously, it is straightforward to see that the allocation that maximizes the payoff from spending on a single location achieves a cumulative payoff of $p_{|L|} = p_1+\epsilon$ as well. Since Constrained-Greedy outputs the allocation that maximizes the payoff between the above two allocations, it follows that the cumulative payoff achieved by Constrained-Greedy on this instance is $p_{|L|} = p_1+\epsilon$. 

On the other hand, the payoff maximizing policy corresponds to allocating resources to all locations other than $|L|$, i.e., allocate $\sigma_l = 0.5$ to all locations $l \in [|L|-1]$ and $\sigma_l = 0$ to location $|L|$. Consequently, the payoff corresponding to the payoff maximizing allocation is $(|L|-1) p_{1}$.

Thus, the above analysis suggests that the approximation ratio of Constrained-Greedy is $\frac{p_{1}+\epsilon}{(|L|-1) p_{1}} = \frac{1}{|L|-1} + \frac{\epsilon}{(|L|-1) p_{1}}$. Finally, since $\epsilon>0$ is a small constant, we have that our proposed greedy algorithm achieves an approximation ratio of $O(\frac{1}{|L|-1})$. \qed
\end{example}

Example~\ref{eg:greedy-bad-performance} highlights the limitations of applying a natural generalization of our proposed greedy algorithm (i.e., Algorithm~\ref{alg:GreedyWelMaxProb}) under additional lower and upper bound constraints and establishes that a greedy based approach no longer achieves a half approximation with these constraints. The primary reason for the generalization of our proposed greedy algorithm (i.e., Algorithm~\ref{alg:GreedyWelMaxProb}) attaining an arbitrarily bad approximation ratio in Example~\ref{eg:greedy-bad-performance} is that this algorithm allocates strictly fewer resources compared to the payoff-maximizing allocation. In fact, this algorithm allocates $O(\frac{1}{|L|-1})$ fraction of the resources allocated under the payoff-maximizing allocation due to the intersecting nature of the constraints, as $S_i \cap S_{i'} = \{ |L|\}$ for all sets $S_i, S_{i'} \in \H$, where $i \neq i'$ in Example~\ref{eg:greedy-bad-performance}. Thus, in the next section (Section~\ref{subsec:greedy-good-perf-heirarchy}), we introduce a class of constraints, that of a \emph{hierarchy}, where the constraint sets do not intersect, under which Constrained-Greedy allocates the same number of resources as the optimal allocation (see the proof of Lemma~\ref{lem:parametrization-simple-lem2} in Appendix~\ref{apdx:pf-lem-resource-same}), which enables us to obtain constant factor approximation ratios for this class of constraints.

\subsection{Approximation Guarantees Under Hierarchical Constraints} \label{subsec:greedy-good-perf-heirarchy}

This section studies the administrator's payoff maximization Problem~\eqref{eq:welfare-obj}-\eqref{eq:ub-lb-con} in the setting when the additional Constraints~\eqref{eq:ub-lb-con} correspond to a hierarchy, a notion we introduce in Section~\ref{subsec:hierarchy-constraint-notion}. While our Constrained-Greedy algorithm, formally introduced for hierarchical constraints in Section~\ref{subsec:constrained-greedy}, is not guaranteed to achieve a half-approximation guarantee even under hierarchical constraints (Section~\ref{subsec:no-half-apx-eg}), we show that it achieves constant factor approximation ratios if augmented with a few additional resources (Section~\ref{subsec:apx-ratio-hierarchy}). In Section~\ref{subsec:apx-ratio-hierarchy}, we also identify a condition on the lower bound quotas, which guarantees that Constrained-Greedy achieves a half approximation guarantee even when it does not have access to additional security resources.

\subsubsection{Hierarchical Constraints} \label{subsec:hierarchy-constraint-notion}

A \emph{hierarchy} corresponds to a family of subsets such that any two members of this family of subsets are either disjoint or one is a subset of the other, as is formalized by the following definition.

\begin{definition} [Hierarchy] \label{def:hierarchy}
A family (of constraints) $\H$ is a hierarchy if for every pair of elements $S, S' \in \H$, either $S \subseteq S'$ or $S' \subseteq S$ or $S \cap S' = \emptyset$. 
\end{definition}

To elucidate our approximation ratio guarantees and the Constrained-Greedy algorithm we develop, we also present an alternate characterization of hierarchical constraints. In particular, we characterize a constraint family $\H$ that corresponds to a hierarchy as comprising of a collection of sets $\L_1, \L_2, \ldots, \L_t$ with some depth $t \leq |L|$, where each $\L_i$ corresponds to a different layer. Here, $\L_1$ corresponds to the bottom-most layer and consists of all sets $S \in \H$ such that for all sets $S' \in \H \backslash S$, either $S' \cap S = \emptyset$ or $S \subseteq S'$. Moreover, $\L_t$ corresponds to the top-most layer and consists of all sets $S \in \H$ such that for all sets $S' \in \H \backslash S$, it holds that either $S' \cap S = \emptyset$ or $S' \subseteq S$. The layers $\L_2, \ldots, \L_{t-1}$ are all intermediate layers. We note that the layers are such that all sets in a layer $\L_i$ are a subset of exactly one set in each of the layers $\L_{j}$ for every $j>i$.

Finally, without loss of generality, we define the layers such that each location $l$ corresponds to at least one set of constraints in each layer, i.e., for each location $l$ and each layer $\L_i$, there exists at least one set $S \in \L_i$, such that $l \in S$. Note that if there is no constraint set $S \in \H$ corresponding to some layer $\L_i$ that consists of location $l$, we can duplicate the associated constraint from the previous layers. Adding these constraints to the respective layers adds redundant constraints without influencing the optimality or feasibility of the payoff maximization Problem~\eqref{eq:welfare-obj}-\eqref{eq:ub-lb-con}. For a pictorial depiction of the layers $\L_1, \L_2, \ldots, \L_t$ corresponding to a constraint hierarchy, see Figure~\ref{fig:hierarchy-layers}.

\begin{figure}[tbh!]
    \centering
    \includegraphics[width=0.55\linewidth]{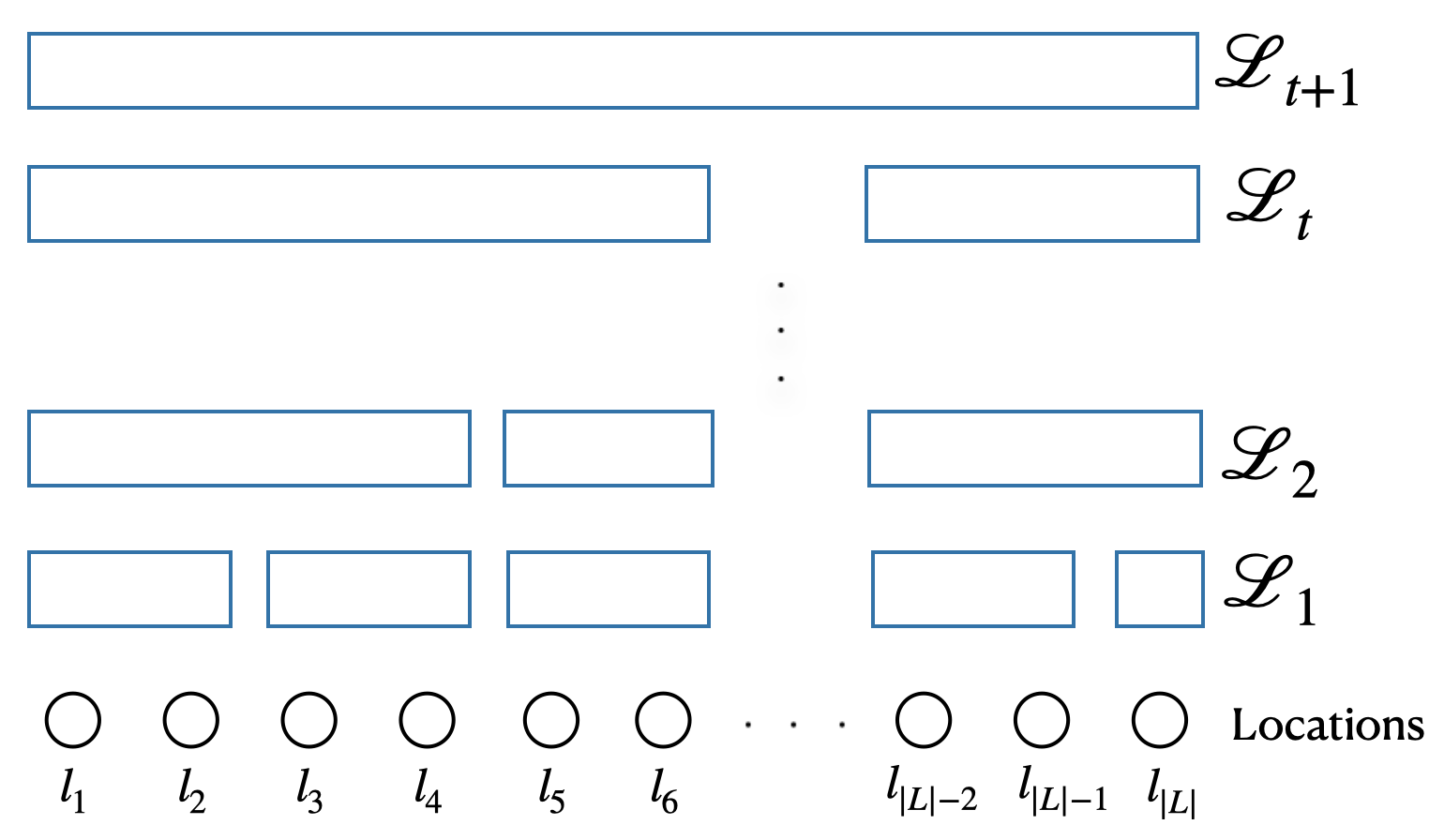}
    \vspace{-12pt}
    \caption{{\small \sf Depiction of a constraint hierarchy with layers $\L_1, ..., \L_t$ and a layer $\L_{t+1}$ depicting the resource budget constraint across all locations. Each of the circles corresponds to a location, while each rectangle represents a constraint, where the circles below each rectangle correspond to the locations over which the administrator has an upper and/or lower bound constraint. Note that each location belongs to one constraint (i.e., rectangle) in each layer, while a rectangle in a layer $\L_i$ is a subset of exactly one set in any layer $\L_j$ with $j>i$ under a hierarchy.
    }} 
    \label{fig:hierarchy-layers}
\end{figure} 


\subsubsection{Constrained-Greedy Algorithm} \label{subsec:constrained-greedy}

This section presents a generalization of the greedy algorithm in the payoff-maximization setting (i.e., Algorithm~\ref{alg:GreedyWelMaxProb}) that respects the additional (hierarchical) constraints on allocating security resources. For clarity of exposition, we present this algorithm for the setting without lower bound constraints (i.e., the lower bound quotas are all zero) but note that the algorithm can be readily modified to accommodate lower bound constraints with non-zero quotas (see Appendix~\ref{apdx:lb-con-extension-pf}). 

The algorithm, which we henceforth refer to as \emph{Constrained-Greedy}, proceeds in two stages. In the first stage, Constrained-Greedy proceeds analogously to step one of Algorithm~\ref{alg:GreedyWelMaxProb} and computes an allocation strategy $\Hat{\sigmaa}$ while ensuring that the additional allocation constraints are satisfied. In particular, in this stage, resources are allocated to locations in the descending order of the slopes of the segments of the MCUA of the payoff functions while ensuring the additional allocation constraints are satisfied. Then, under the strategy $\Hat{\sigmaa}$, we compute the aggregate resource budget corresponding to each set $S \in \L_1$, where the aggregate resource budget is given by $\Hat{R}_S = \sum_{l \in S} \Hat{\sigma}_l$ for each set $S \in \L_1$. Given the aggregate resource budget corresponding to each set $S \in \L_1$, in stage two of Constrained-Greedy, we apply Algorithm~\ref{alg:GreedyWelMaxProb} as a sub-routine for each set $S \in \L_1$ to compute the final allocations for each location. This process is formally presented in Algorithm~\ref{alg:ConstrainedGreedy}.




\begin{algorithm}
\footnotesize
\SetAlgoLined
\SetKwInOut{Input}{Input}\SetKwInOut{Output}{Output}
\Input{Total Resources $R$, Constraint set $\H$ with Quotas $\underline{\llambda}, \Bar{\llambda}$, User Types $\Theta_l^i = (\Lambda_l^i, d_l^i, p_l^i)$ for all locations $l$ and types $i$}
\Output{Resource Allocation Strategy $\sigmaa_A^*$}
\textbf{Stage 1: Compute Aggregate Resource Budget for each Set $S \in \L_1$:} \\
Define affordability threshold $t_l \leftarrow \min \{ R, (\Bar{\lambda}_S)_{S: l \in S}, \max_i \frac{d_l^i}{d_l^i+k}\}$ for all locations $l$ \;
Generate MCUA of the payoff function in range $[0, t_l]$ for each location $l$ \;
$\Tilde{\S} \leftarrow $ Ordered list of segments $s$ across all locations of this MCUA in descending order of slopes $c_s$  \;
Initialize allocation strategy $\Hat{\sigmaa} \leftarrow \mathbf{0}$ \;
\For{\text{segment $s \in \Tilde{\S}$}}{
$\Hat{\sigma}_{l_s} \leftarrow \Hat{\sigma}_{l_s} + \min\{ R, (\Bar{\lambda}_S)_{S: l_s \in S}, x_s\}$ ; \texttt{\footnotesize \sf Allocate feasible amount of resources to location $l_s$} \;
$R \leftarrow R -  \min\{R, x_s\}$; \quad \texttt{\footnotesize \sf Update amount of remaining resources} \;
$\Bar{\lambda}_S \leftarrow \Bar{\lambda}_S -  \min\{ R, (\Bar{\lambda}_S)_{S: l_s \in S}, x_s\}$ if $l_s \in S$; \quad \texttt{\footnotesize \sf Update upper bound resource constraint} \;
  }
$\Hat{R}_S \leftarrow \sum_{l \in S} \Hat{\sigma}_l$ for all $S \in \L_1$ ; \texttt{\footnotesize \sf Compute the aggregate allocations for each set $S \in \L_1$} \;
\textbf{Stage 2: Run Algorithm~\ref{alg:GreedyWelMaxProb} Sub-Routine for each set $S \in \L_1$:} \\
\For{$S \in \L_1$}{
$(\sigma^*_{A, l})_{l \in S} \leftarrow $ Solution computed using Algorithm~\ref{alg:GreedyWelMaxProb} for the location subset $S$ given $\Hat{R}_S$ resources \;
}
\Return Allocation $\sigmaa_A^*$ \;
\caption{\footnotesize Constrained-Greedy for Administrator's Heterogeneous Payoff Maximization Objective}
\label{alg:ConstrainedGreedy}
\end{algorithm}

A few comments about Algorithm~\ref{alg:ConstrainedGreedy} are in order. First, the allocation $\Hat{\sigmaa}$ computed in stage one of Algorithm~\ref{alg:ConstrainedGreedy} is an intermediate allocation required to generate the aggregate allocations $\Hat{R}_S$ for each set $S \in \L_1$. Next, rather than directly outputting the allocation $\Hat{\sigmaa}$, we compute the final allocation $\sigmaa_A^*$ by running Algorithm~\ref{alg:GreedyWelMaxProb} as a sub-routine in stage two of Constrained-Greedy to obtain our desired approximation guarantees, as, for a subset of locations $S \in \L_1$, Algorithm~\ref{alg:GreedyWelMaxProb} achieves a half-approximation given a resource budget $\Hat{R}_S$. We reiterate that the two-stage procedure is crucial in obtaining our desired approximation guarantees in Section~\ref{subsec:apx-ratio-hierarchy} and recall that just optimizing the MCUA of the payoff functions (as in stage one of Constrained-Greedy) is insufficient in obtaining a half-approximation guarantee even without additional constraints (e.g., see Theorem~\ref{thm:greedy-half-approx-rev-max}).







\subsubsection{No Half-Approximation Guarantee under Hierarchical Constraints} \label{subsec:no-half-apx-eg}

This section shows that, in general, Constrained-Greedy does not achieve a half-approximation to the solution of Problem~\eqref{eq:welfare-obj}-\eqref{eq:ub-lb-con} even under hierarchical constraints (and when user types are homogeneous), as is elucidated through the following example. 

\begin{example}[Constrained Greedy Achieves $<0.5$ Approximation Under Hierarchical Constraints] \label{eg:no-half-approx-hierarchical}
Consider a setting with five locations $L = \{ L_1, L_2, L_3, L_4, L_5 \}$ and one upper bound constraint on the set $S_1 = \{ L_1, L_2\}$, where the total resource budget $R = 0.3$ and the upper bound quota $\Bar{\lambda}_{S_1} = 0.2$ for set $S_1$ (and a lower bound quota $\underline{\lambda}_{S_1} = 0$). Furthermore, consider the following location parameters:
\begin{itemize}
    \item Payoffs: $p_1 = 1$, $p_2 = 1.099$, $p_3 = 0.999$, $p_4 = 0.87$, $p_5 = 1.1$
    \item Allocation Thresholds: $\frac{d_1}{d_1+k} = 0.1$, $\frac{d_2}{d_2+k} = 0.11$, $\frac{d_3}{d_3+k} = 0.101$, $\frac{d_4}{d_4+k} = 0.089$, $\frac{d_5}{d_5+k} = 0.3$.
\end{itemize}
Note that these locations are ordered in the descending order of their payoff bang-per-buck ratios. We now compute the payoff corresponding to Constrained-Greedy.

Since the locations are ordered in the descending order of their payoff bang-per-buck ratios, the resulting allocation under stage one of Constrained-Greedy that respects the allocation constraints is $\Hat{\sigmaa} = (0.1, 0.1, 0.1, 0, 0, 0)$. Moreover, it is also straightforward to check that the allocation corresponding to stage two of Constrained-Greedy coincides with $\Hat{\sigmaa}$ and corresponds to a total payoff of $p_1 + 0.1 p_2 + 0.1 p_3 = 1.2098$. Note here that Constrained-Greedy only allocates $0.1$ units of resources to location two as allocating any more to location two would violate the upper bound constraint on the set $S_1$. Moreover, Constrained-Greedy only allocates $0.1$ units of resources to location three as allocating any more would violate the overall resource budget constraint.


Next, the payoff maximizing allocation is $\sigmaa^* = (0, 0.11, 0.101, 0.089, 0)$ with a payoff of $p_2+p_3+p_4 = 2.968$. Consequently, the approximation ratio of Constrained-Greedy is $\frac{1.2098}{2.968} = 0.4076 < 0.5$, which establishes that Constrained-Greedy achieves an approximation ratio of strictly smaller than $0.5$ even under hierarchical constraints. \qed
\end{example}


\subsubsection{Approximation Guarantees with Additional Resources} \label{subsec:apx-ratio-hierarchy}

While Constrained-Greedy is not guaranteed to achieve a half approximation to the solution to Problem~\eqref{eq:welfare-obj}-\eqref{eq:ub-lb-con} even under hierarchical constraints (see Example~\ref{eg:no-half-approx-hierarchical} in Section~\ref{subsec:no-half-apx-eg}), we now establish that augmenting Constrained-Greedy with a few more resources and relaxing some of the upper bound quotas of the additional constraints can recover the desired half approximation guarantee. Moreover, we identify a natural condition on the lower bound quotas under which Constrained-Greedy achieves the desired half-approximation guarantee without additional resources or a relaxation to any upper bound quotas of the additional constraints.

To present our results, we first introduce some notation. In particular, we define $\sigmaa_{R+|\L_1|}^A$ as the solution corresponding to Constrained-Greedy with $R+|\L_1|$ resources, where each of the upper bound quotas in layer $\L_1$ are relaxed by one unit, each of the upper bound quotas in layer $\L_2$ are relaxed based on the number of sets $S \in \L_1$ that belong to each set $S' \in \L_2$, and so on until the layer $\L_t$. Moreover, we define $\sigmaa_{R+|\L_2|}^A$ be the solution corresponding to Constrained-Greedy with $R+|\L_2|$ resources where at most $|\L_2|$ of the upper bound quotas in layer $\L_1$ are relaxed by one unit, each of the upper bound quotas in layer $\L_2$ are relaxed by one unit, each of the upper bound quotas in layer $\L_3$ are relaxed based on the number of sets $S \in \L_2$ that belong to each set $S' \in \L_3$, and so on until the layer $\L_t$. Note here that since layer $\L_2$ is a higher layer in the hierarchy compared to layer $\L_1$ (see Figure~\ref{fig:hierarchy-layers}) that $|\L_2| \leq |\L_1|$, i.e., fewer additional resources are required and the constraints are relaxed by smaller amounts under the allocation $\sigmaa_{R+|\L_2|}^A$ compared to $\sigmaa_{R+|\L_1|}^A$.


We now establish the approximation ratios of the allocations $\sigmaa_{R+|\L_1|}^A$ and $\sigmaa_{R+|\L_2|}^A$ corresponding to Constrained-Greedy when all lower bound quotas are zero.

\vspace{-5pt}
\begin{theorem}[Approximation Guarantees of Constrained-Greedy Under Resource Augmentation] \label{thm:apx-greedy-resource-augmentation-constraints}
Suppose the constraint structure $\H$ is a hierarchy and $\L_1 \cup \ldots \cup \L_t$ are the corresponding constraint layers of this hierarchy such that each location $l$ belongs to exactly one constraint in each layer and the lower bound quotas are such that $\underline{\lambda}_S = 0$ for all $S \in \H$. Further, let $\sigmaa^*$ be the constrained payoff maximizing allocation that solves Problem~\eqref{eq:welfare-obj}-\eqref{eq:ub-lb-con} with $R$ resources, let $\sigmaa_{R+|\L_1|}^A$ and $\sigmaa_{R+|\L_2|}^A$ be the solutions corresponding to Constrained-Greedy as defined above. Then: \vspace{-8pt}
\begin{itemize}
    \item The total payoff under allocation $\sigmaa_{R+|\L_1|}^A$ is at least that under the constrained payoff maximizing allocation, i.e., $P_{R+|\L_1|}(\sigmaa_{R+|\L_1|}^A) \geq P_{R}(\sigmaa^*)$, \vspace{-8pt}
    \item The total payoff under allocation $\sigmaa_{R+|\L_2|}^A$ is at least half of that under the constrained payoff maximizing allocation, i.e., $P_{R+|\L_2|}(\sigmaa_{R+|\L_2|}^A) \geq \frac{1}{2} P_{R}(\sigmaa^*)$.
\end{itemize}
\end{theorem}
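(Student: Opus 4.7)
The plan is to exploit the layered structure of Constrained-Greedy and reduce the full hierarchical problem to per-subset applications of Algorithm~\ref{alg:GreedyWelMaxProb}, for which the half-approximation and resource augmentation guarantees (analogues of Theorems~\ref{thm:greedy-half-approx-rev-max} and~\ref{thm:greedy-resource-augmentation-rev-max} in Appendix~\ref{apdx:welfare-max-prob-setting}) are already in hand. Because the hierarchy forces the sets in the bottom layer $\L_1$ to be pairwise disjoint, and because every location is assumed to belong to some constraint in each layer, the family $\L_1$ partitions the locations and hence $P_R(\sigmaa) = \sum_{S \in \L_1} \sum_{l \in S} P_l(\sigma_l)$. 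Denoting by $R_S^* = \sum_{l \in S} \sigma_l^*$ the aggregate budget that the constrained payoff-maximizer $\sigmaa^*$ spends inside $S \in \L_1$, the key quantity to control is the aggregate $\Hat{R}_S$ that stage one of Constrained-Greedy ships into $S$.

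The central lemma I would establish is a per-subset resource-domination: under the augmentation pattern specified in the statement, stage one of Constrained-Greedy produces aggregates satisfying $\Hat{R}_S \geq R_S^* + 1$ for every $S \in \L_1$ in the first claim, and $\Hat{R}_S \geq R_S^*$ for every $S \in \L_1$ in the second claim. Granted this lemma, stage two runs Algorithm~\ref{alg:GreedyWelMaxProb} independently on each $S \in \L_1$ with budget $\Hat{R}_S$, subject only to the within-$S$ probability constraints (since by the hierarchy and Stage~1's feasibility, no other upper bound constraint in $\H$ restricts the intra-$S$ allocation once $\Hat{R}_S$ is fixed). For the first claim, the resource-augmentation guarantee of Algorithm~\ref{alg:GreedyWelMaxProb} gives a per-subset payoff at least equal to the optimum on $S$ with budget $R_S^*$, which dominates $\sum_{l \in S} P_l(\sigma_l^*)$ since $\sigmaa^*$ restricted to $S$ is feasible with budget $R_S^*$. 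Summing over the partition $\L_1$ yields $P_{R+|\L_1|}(\sigmaa_{R+|\L_1|}^A) \geq P_R(\sigmaa^*)$. The second claim follows analogously using the half-approximation of Algorithm~\ref{alg:GreedyWelMaxProb} in place of resource augmentation.

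To establish the resource-domination lemma, I would proceed by induction from the top layer $\L_t$ down to $\L_1$. Because the MCUA $M_l$ is monotone concave and upper-bounds $P_l$, Stage~1 solves a piecewise-linear concave LP with a laminar feasibility polytope, for which greedy ordering by segment slopes is optimal. Consider an augmented reference allocation $\Tilde{\sigmaa}^*$ obtained by padding $\sigmaa^*$ with one extra unit of allocation on each $S \in \L_1$ (placed at a suitably chosen MCUA breakpoint) and propagating that unit upward through the hierarchy; this $\Tilde{\sigmaa}^*$ is feasible under the relaxed quotas precisely because the statement's relaxation pattern on each layer $\L_j$ matches the number of $\L_1$-descendants of each $S' \in \L_j$. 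A laminar exchange argument then shows that the MCUA-greedy aggregate on each top-level set matches or exceeds that of $\Tilde{\sigmaa}^*$, and recursing into each top-level set the same argument gives the bound one layer down, eventually yielding $\Hat{R}_S \geq R_S^* + 1$ for all $S \in \L_1$. The second claim uses the coarser relaxation indexed by $|\L_2|$, and the same induction delivers $\Hat{R}_S \geq R_S^*$ because the slack introduced at the $\L_2$-interface is exactly what is needed to cover any loss incurred when the MCUA-greedy's breakpoints fail to align with those of $\sigmaa^*$.

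The main obstacle will be this resource-domination lemma: Example~\ref{eg:no-half-approx-hierarchical} warns that with no augmentation Constrained-Greedy can fall strictly below half even under a hierarchy, so the bookkeeping of how relaxations propagate through the layers is delicate and must be tight. The inductive exchange argument must carefully handle the piecewise-linear breakpoints of the MCUA (where the greedy may stop on a segment interior due to a binding constraint), and the feasibility of the padding in $\Tilde{\sigmaa}^*$ must be verified layer by layer against the statement's relaxation rule. Finally, the extension to nonzero lower bound quotas $\underline{\lambda}_S$, which the statement sets aside, would reduce to the present case by first fixing any $\underline{\lambda}$-feasible base allocation, subtracting it from $R$, and running Constrained-Greedy on the residual instance, along the lines of Appendix~\ref{apdx:lb-con-extension-pf}.
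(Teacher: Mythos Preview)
Your reduction to per-subset applications of Algorithm~\ref{alg:GreedyWelMaxProb} is the right scaffolding, and you correctly observe that stage~1 of Constrained-Greedy greedily optimizes the piecewise-linear concave MCUA over the laminar polytope. The gap is in your central lemma: the per-subset resource-domination $\Hat{R}_S \geq R_S^* + 1$ (respectively $\Hat{R}_S \geq R_S^*$) for \emph{every} $S \in \L_1$ is false in general. For a simple counterexample to the first claim, take $S_1 = \{l_1\}$ with $d_1/(d_1+k) = 0.5$ and $p_1$ large enough that $\sigmaa^*$ sets $\sigma_1^* = 0.5$, so $R_{S_1}^* = 0.5$. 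No matter how many extra resources are given, stage~1 cannot push more than $0.5$ into $S_1$ because $l_1$'s MCUA is flat beyond $0.5$; hence $\Hat{R}_{S_1} \leq 0.5 < 1.5 = R_{S_1}^* + 1$. Your padding construction $\Tilde{\sigmaa}^*$ is infeasible in exactly this regime, since $S_1$ cannot absorb an extra unit. More broadly, the laminar exchange argument you sketch shows stage~1 is \emph{globally} MCUA-optimal, but global optimality of $(\Hat{R}_S)$ for the MCUA objective in no way implies coordinatewise domination of the payoff-optimal aggregates $(R_S^*)$, which optimize a different (non-concave) objective.

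The paper sidesteps this by never comparing $\Hat{R}_S$ to $R_S^*$. Instead it shows (Lemma~\ref{lem:parametrization-simple-lem2}) that stage~1 computes exactly the MCUA-optimal aggregates $(\Hat{R}_S^*)$, and then uses the \emph{global} inequality $\sum_{S} \Hat{P}_S(\Hat{R}_S^*) \geq \sum_S P_S(R_S^*) = P_R(\sigmaa^*)$ (Lemmas~\ref{lem:opt-sol-reparametrization}--\ref{lem:ub-greedy-reparametrization}). The augmentation then enters not to dominate $R_S^*$, but through a case analysis on the structure of the stage-1 allocation within each $S$: for sets where every location lands at an MCUA breakpoint, the actual payoff already equals $\Hat{P}_S(\Hat{R}_S^*)$; for the at most $|\L_2|$ sets where one location sits strictly inside a segment (because a higher-layer constraint or the total budget binds mid-segment), one extra resource together with the matching quota relaxation lets stage~2 round that location up; and for sets where the $\L_1$ upper bound binds, the half-approximation of Algorithm~\ref{alg:GreedyWelMaxProb} suffices. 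The extra resources thus close the per-subset gap between the MCUA value $\Hat{P}_S(\Hat{R}_S^*)$ and the actual payoff at the stage-1 allocation, not the gap between $\Hat{R}_S$ and $R_S^*$.
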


For a proof of Theorem~\ref{thm:apx-greedy-resource-augmentation-constraints}, see Appendix~\ref{apdx:pf-thm-additional-constraints-apx}. Theorem~\ref{thm:apx-greedy-resource-augmentation-constraints} establishes that augmenting Constrained-Greedy with additional resources enables it to obtain a half approximation (or even a one approximation guarantee) that was not possible in the setting without additional resources (see Example~\ref{eg:no-half-approx-hierarchical} in Section~\ref{subsec:no-half-apx-eg}). The allocation $\sigmaa_{R+|\L_1|}^A$ achieves a one-approximation while the allocation $\sigmaa_{R+|\L_2|}^A$ achieves a half-approximation as layer $\L_2$ is a higher layer in the hierarchy compared to $\L_1$ (see Figure~\ref{fig:hierarchy-layers}); hence, fewer additional resources are required and the constraints are relaxed by smaller amounts under the allocation $\sigmaa_{R+|\L_2|}^A$ compared to $\sigmaa_{R+|\L_1|}^A$. While $|\L_1|$ and $|\L_2|$ can be large in the worst case, for many practical problems of interest, $|\L_1|$ and $|\L_2|$ are small, as elucidated through examples in Section~\ref{subsec:applications-hierarchical}, which highlight the practical significance of the results obtained in Theorem~\ref{thm:apx-greedy-resource-augmentation-constraints}. Finally, we note that the administrator only requires $|\L_1| \max_i \frac{d_l^i}{d_l^i+k}$ ($|\L_2| \max_i \frac{d_l^i}{d_l^i+k}$) additional resources to establish the guarantee in Theorem~\ref{thm:apx-greedy-resource-augmentation-constraints} along with the associated relaxations to the upper bound quotas corresponding to the additional constraints; however, we present the result with $\L_1$ ($|\L_2|$) resources for ease of exposition.

While Theorem~\ref{thm:apx-greedy-resource-augmentation-constraints} established approximation ratios in the setting when Constrained-Greedy has additional resources and some of the upper bound quotas are relaxed (and the lower bound quotas are zero), the following proposition identifies a natural condition on the lower bound quotas under which Constrained-Greedy obtains a half-approximation to the payoff maximization allocation without any additional resources or a relaxation of the upper bound quotas.
\vspace{-2pt}
\begin{proposition}[Approximation Guarantee of Constrained-Greedy With Lower Bound Quotas] \label{prop:apx-greedy-lb-quotas}
Suppose the constraint structure $\H$ is a hierarchy and $\L_1 \cup \ldots \cup \L_t$ are the corresponding constraint layers of this heriarchy such that each location $l$ belongs to exactly one constraint in each layer. Moreover, suppose that for all sets $S \in \L_1$ it holds that $\underline{\lambda}_S \geq 2$ and Problem~\eqref{eq:welfare-obj}-\eqref{eq:ub-lb-con} has a feasible solution. Furthermore, let $\sigmaa^*$ be the constrained payoff maximizing allocation that solves Problem~\eqref{eq:welfare-obj}-\eqref{eq:ub-lb-con} and $\sigmaa_A^*$ be the allocation computed using the Constrained-Greedy algorithm. Then, $\sigmaa_A^*$ achieves at least half the payoff as $\sigmaa^*$, i.e., $P_R(\sigmaa_A^*) \geq \frac{1}{2} P_R(\sigmaa^*)$.
\end{proposition}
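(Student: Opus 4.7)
The plan is to exploit the hierarchical structure of $\H$ to split Problem~\eqref{eq:welfare-obj}-\eqref{eq:ub-lb-con} into disjoint subproblems indexed by the bottom layer $\L_1$ and then apply the half-approximation guarantee of Algorithm~\ref{alg:GreedyWelMaxProb} (in the heterogeneous-payoff form of Appendix~\ref{apdx:welfare-max-prob-setting}) separately on each of them. Throughout, I write $M_l(\sigma_l)$ for the MCUA of the payoff function at location $l$, and, for any $S \subseteq L$ and budget $B$, $M_S^\star(B)$ for the maximum of $\sum_{l \in S} M_l(\sigma_l)$ subject to $\sum_{l \in S}\sigma_l\le B$ and $\sigma_l\in[0,1]$.

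The first step is to verify that the modified Stage 1 of Constrained-Greedy (which, as outlined in Appendix~\ref{apdx:lb-con-extension-pf}, pre-allocates the lower-bound requirements before running the slope-sorted segment greedy) produces an intermediate allocation $\Hat{\sigmaa}$ whose aggregate budgets satisfy $\Hat{R}_S := \sum_{l\in S}\Hat{\sigma}_l \geq \underline{\lambda}_S \geq 2$ for every $S \in \L_1$. Because each location belongs to exactly one constraint in each layer, $\L_1$ partitions $L$, so Stage 2 invokes Algorithm~\ref{alg:GreedyWelMaxProb} in parallel on the disjoint subproblems $(S,\Hat{R}_S)_{S \in \L_1}$. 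Since $\tfrac{d_l^i}{d_l^i+k}<1 \leq \Hat{R}_S$ for every $l\in S$ and every type $i$, the single-location-maximization step of Algorithm~\ref{alg:GreedyWelMaxProb} is budget-feasible, and its half-approximation to the MCUA optimum (established analogously to Theorem~\ref{thm:greedy-half-approx-rev-max}) gives $P_S(\sigmaa_A^*|_S) \geq \tfrac{1}{2}\,M_S^\star(\Hat{R}_S)$ for each $S \in \L_1$.

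The second and key step is to show $\sum_{S\in\L_1} M_S^\star(\Hat{R}_S) \geq P_R(\sigmaa^*)$. For this I would argue that Stage 1 exactly solves the MCUA relaxation of Problem~\eqref{eq:welfare-obj}-\eqref{eq:ub-lb-con}: the MCUA is piecewise-linear and concave at each location, the objective $M_R^{(\H,\underline{\llambda},\Bar{\llambda})}(\sigmaa)=\sum_l M_l(\sigma_l)$ is separable, and $\H$ is laminar, so the slope-sorted greedy on MCUA segments is optimal for this constrained concave LP (a standard exchange argument works because the laminar structure means that two swappable segments belong to a common nested chain of constraints). Consequently, restricted to any $S \in \L_1$, $\Hat{\sigmaa}|_S$ is MCUA-optimal on $S$ with budget $\Hat{R}_S$, so $M_S(\Hat{\sigmaa}|_S)=M_S^\star(\Hat{R}_S)$. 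Summing over $\L_1$ and using that MCUA dominates the true payoff pointwise,
\[
\sum_{S\in\L_1} M_S^\star(\Hat{R}_S)
\;=\; M_R^{(\H,\underline{\llambda},\Bar{\llambda})}(\Hat{\sigmaa})
\;\geq\; M_R^{(\H,\underline{\llambda},\Bar{\llambda})}(\sigmaa^*)
\;\geq\; P_R(\sigmaa^*).
\]
Combining with the per-subproblem half-approximation then yields $P_R(\sigmaa_A^*)=\sum_{S\in\L_1}P_S(\sigmaa_A^*|_S)\geq \tfrac{1}{2}\sum_{S\in\L_1}M_S^\star(\Hat{R}_S)\geq \tfrac{1}{2}P_R(\sigmaa^*)$.

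The main obstacle will be the LP-optimality of the slope-sorted greedy under the hierarchy constraints; in the single-budget case this is classical (it is what drove Theorem~\ref{thm:greedy-half-approx-rev-max}), but the laminar extension requires an induction on the layers $\L_1,\ldots,\L_t$ (or an exchange argument on pairs of segments) to show that greedy never regrets allocating to a high-slope segment when a later layer's upper-bound constraint eventually binds. A secondary subtlety is guaranteeing $\Hat{R}_S\geq 2$ at the end of Stage 1 in the presence of both upper- and lower-bound constraints; this is why the argument explicitly invokes the lower-bound-aware variant of Stage 1 from Appendix~\ref{apdx:lb-con-extension-pf}, which reserves the $\underline{\lambda}_S$ units before performing any segment-based allocation, and why the hypothesis that Problem~\eqref{eq:welfare-obj}-\eqref{eq:ub-lb-con} is feasible is needed to ensure that this reservation is compatible with the upper-bound constraints higher up in the hierarchy.
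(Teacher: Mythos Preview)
Your second step---showing that Stage~1 of Constrained-Greedy optimizes the MCUA relaxation over the hierarchy and hence $\sum_{S\in\L_1} M_S^\star(\Hat R_S)\ge P_R(\sigmaa^*)$---is essentially the paper's Lemmas~\ref{lem:opt-sol-reparametrization}--\ref{lem:parametrization-simple-lem2}, and is correct.

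The gap is in your first step. The lower-bound variant of Constrained-Greedy described in Appendix~\ref{apdx:lb-con-extension-pf} does \emph{not} run Stage~2; it outputs the Stage-1 allocation $\Hat\sigmaa$ directly. The appendix states this explicitly: ``the allocation computed using stage one is the final allocation, unlike the Constrained-Greedy algorithm with upper bound constraints (that has a second stage), as applying the sub-routine corresponding to Algorithm~\ref{alg:GreedyWelMaxProb} is only guaranteed to satisfy an upper bound resource constraint and thus may violate the lower bound constraints.'' So you cannot invoke Algorithm~\ref{alg:GreedyWelMaxProb}'s half-approximation per subproblem, because that algorithm is never run. And if you did run it, its single-location step could return an allocation with $\sum_{l\in S}\sigma_l<1<\underline{\lambda}_S$, i.e., an infeasible output---which is exactly why the paper drops Stage~2 here.

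This also explains why the hypothesis is $\underline{\lambda}_S\ge 2$ rather than $\ge 1$: your argument never uses the $2$. The paper's argument does. For each $S\in\L_1$, the Stage-1 greedy allocation has at most one location $l'_S$ where the MCUA and true payoff differ, so the true payoff of $\Hat\sigmaa|_S$ is at least $\sum_{l\in H_S}P_l(\Hat\sigma_l)$ where $H_S=S\setminus\{l'_S\}$. Because each location's threshold is strictly below $1$ and $\Hat R_S\ge 2$, greedy has spent at least one full unit on ``complete'' locations in $H_S$; by the resource-augmentation reasoning (Theorem~\ref{thm:greedy-resource-augmentation-rev-max}/\ref{thm:greedy-resource-augmentation-wel-max-exp}), this already dominates the best payoff achievable with one unit, hence dominates the single-location maximum, hence dominates $\Hat P_{l'_S}(\Hat\sigma_{l'_S})$. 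Thus $2\,P_S(\Hat\sigmaa|_S)\ge \sum_{l\in H_S}P_l(\Hat\sigma_l)+\Hat P_{l'_S}(\Hat\sigma_{l'_S})=\Hat P_S(\Hat R_S)$, and summing gives the result. In short, the $\underline{\lambda}_S\ge 2$ hypothesis is what lets Stage~1 alone absorb the role of the single-location step; your plan outsources that role to a Stage~2 that the algorithm does not have.
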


The proof of Proposition~\ref{prop:apx-greedy-lb-quotas} is analogous to that of Theorem~\ref{thm:apx-greedy-resource-augmentation-constraints} with a few additional caveats, which we highlight in Appendix~\ref{apdx:pf-prop-lb-quotas}. Proposition~\ref{prop:apx-greedy-lb-quotas} highlights that if the administrator has a requirement to meet a minimum coverage of security resources across all sets $S \in \L_1$, a natural condition in applications where the number of locations belonging to each set $S$ is large, Constrained-Greedy achieves at least half of the optimal administrator payoff. We note that unlike Theorem~\ref{thm:apx-greedy-resource-augmentation-constraints}, no relaxations to the upper bound resource quotas or the total number of available resources is required to establish Proposition~\ref{prop:apx-greedy-lb-quotas}. Such a result holds as a lower bound quota of two guarantees that the greedy solution corresponding to maximizing the MCUA of the payoff functions achieves at least the payoff as the solution that maximizes the payoff corresponding to spending all available resources at a single location (see Appendix~\ref{apdx:pf-prop-lb-quotas}), a condition not guaranteed to hold in the setting without lower bound quotas of at least two for each of the sets $S \in \L_1$.


\subsection{Implications of Approximation Guarantees under Hierarchical Constraints} \label{subsec:applications-hierarchical}

This section highlights the implications of our approximation guarantees in Section~\ref{subsec:greedy-good-perf-heirarchy} (see Theorem~\ref{thm:apx-greedy-resource-augmentation-constraints}) through two real-world applications, one corresponding to environmental compliance and another corresponding to monitoring crime/traffic violations. These application cases highlight the practical viability of Constrained-Greedy in achieving a good approximation to the administrator's payoff maximization Problem~\eqref{eq:welfare-obj}-\eqref{eq:ub-lb-con} without requiring many additional resources.

\begin{example} [Environmental Compliance] \label{eg:env-compliance}
Consider a federal agency, e.g., the United States Environmental Protection Agency (USEPA), that enforces water quality standards and has limited inspection officers (i.e., security resources) to monitor the compliance behavior of water treatment facilities across the country. The absence of constraints on the allocation of security resources may result in an imbalanced allocation of inspections where some regions of the country have significantly more inspections than others. Thus, federal agencies often have quotas on the allocation of their inspection officers to achieve a more balanced allocation of inspections. For instance, the USEPA may seek to ensure that the total inspection resources allocated to each state remain between certain lower and upper bounds. Moreover, within each state, there may be similar quotas imposed on the inspection resources that can be allocated within each county.

Notice that such a setting with constraints on the allocation of inspections by state and county corresponds to a two-layer hierarchy (with layers $\L_1$ and $\L_2$), as all states and counties are disjoint from one another, with each county belonging to exactly one state. Here, each set $S$ belonging to the layer $\L_1$ corresponds to the set of water treatment facilities belonging to a given county. Moreover, each set $S$ in the layer $\L_2$ corresponds to the set of water treatment facilities belonging to a given state. In this setting, Theorem~\ref{thm:apx-greedy-resource-augmentation-constraints} implies that hiring one additional inspection officer for each state, a relatively small amount given that thousands of environmental inspections are performed each year~\cite{hino2018machine}, and relaxing the upper bound quota for at most one county in each state by one resource would result in an outcome where Constrained-Greedy achieves at least half the optimal payoff as obtained in the setting without relaxing any of the resource constraints.
\end{example}

\begin{example} [Crime and Traffic Monitoring] \label{eg:crime-traffic-monitoring}
Consider a police agency seeking to allocate its police officers to monitor crime or traffic violations in a city. To monitor crime/traffic violations, police agencies often divide a city into disjoint \emph{beats}~\cite{shixiang-atlanta-2022} and typically have quotas for allocating police officers within each beat. Such a setting corresponds to a single layer hierarchy $\L_1$ as all beats are disjoint; thus, we can consider the layer $\L_1$, where each set $S \in \L_1$ corresponds to the set of locations susceptible to crime/traffic violations in a given beat. In this setting the resource budget constraint can be interpreted as corresponding to layer $\L_2$; thus, Theorem~\ref{thm:apx-greedy-resource-augmentation-constraints} implies that hiring one additional police officer that can be allocated to one of the beats (i.e., by relaxing the upper bound quota of at most one of the beats by one) would result in an outcome where Constrained-Greedy achieves at least half the optimal payoff in the setting without relaxing any of the resource constraints. In addition, Proposition~\ref{prop:apx-greedy-lb-quotas} implies that if there is a lower bound constraint of allocating at least two police officers to each beat, then without relaxing any of the resource constraints, Constrained-Greedy achieves at least half of the optimal payoff.
\end{example}


\section{Conclusion and Future Work} \label{sec:conclusion-future-work}

In this work, we studied the problem of policing fraud as a security game between an administrator and users. Motivated by several real-world settings where fraud is prevalent (see Section~\ref{sec:examples-pertinent}), we introduced a model of a security game wherein the administrator can deploy a budget of security resources across locations and levy fines against users found engaging in fraud. We studied our security game under both payoff and revenue maximization administrator objectives. In both settings, we showed that, in general, the problem of computing the administrator's optimal resource allocation strategy is NP-hard, and we developed greedy-like algorithms for both administrator objectives with associated approximation ratio and resource augmentation guarantees. Moreover, in the setting when user types are homogeneous, we showed that the administrator's revenue maximization problem can be solved in polynomial time and developed a PTAS for the administrator's payoff maximization problem. 

Next, we presented numerical experiments based on a real-world case study of parking enforcement at Stanford University's campus. Our results demonstrated that our proposed algorithms outperform the status-quo parking enforcement mechanism and, in particular, can increase earnings from parking permit purchases for the university by over \$300,000 (a 2\% increase) annually. Finally, we studied several model extensions, including incorporating contracts into our framework to bridge the gap between the administrator's payoff and revenue-maximizing outcomes and generalizing our model to incorporate additional constraints beyond a resource budget constraint.

There are several future research directions. First, it would be interesting to investigate whether an FPTAS exists for the setting of homogeneous payoff-maximization setting and whether a PTAS exists for the heterogeneous revenue and payoff-maximization settings when the number of user types is not a constant. Next, it would be interesting to study an online learning setting where the administrator learns users' types through repeated interactions. Moreover, it would be worthwhile to explore more properties of our contract framework and further investigate the algorithm design with additional constraints on allocating security resources. For further directions of future research, see Appendix~\ref{apdx:model-extensions}.

\section*{Acknowledgements}

This work was supported in part by the Stanford Interdisciplinary Graduate Fellowship (SIGF) and the NSF CCF grant 1918549. We also thank Vincent Bergado from Stanford University's Department of Public Safety for his support in providing the data for the computational study in Section~\ref{sec:numerical-experiments-parking-enforcement}.

\bibliographystyle{unsrt}
\bibliography{main}



\appendix

\section{Additional Discussion on Modeling Assumptions} \label{apdx:model-assumptions-discussion} 


In this section, we present some additional discussions on our modeling assumptions for the security game that we study. 


First, as in prior literature on security games~\cite{tambe2011security}, we model our problem as a Stackelberg game, wherein users at each location observe and best respond to the administrator's mixed strategy~\cite{conitzer2006computing}. Such an assumption on the game structure is reasonable for several applications of interest to this paper. For instance, in the case of queue jumping in intermediate public transport services, users typically use the IPT service at the same location every day; thus, users can estimate the likelihood of being caught by a security resource (e.g., a police officer) on a given day. We note that similar assumptions on the game structure have been considered even in related examples in the literature, e.g., the scheduling of randomized patrols for inspecting whether users have paid their fares in transit systems~\cite{yin2012trusts}. While we assume that users observe the probability of a security resource being allocated to their respective locations precisely, exploring alternative models of equilibrium formation when such perfect observability is not possible is an exciting direction for future research.

To model the losses to users from committing fraud, we assume that if the administrator allocates a security resource to a given location, then the security resource will collect fines from all defaulting users, resulting in a loss through fines to a defaulting user of $\sigma_l k$ at location $l$. Such an assumption on the efficacy of a security resource in detecting the presence of a fraudulent user at the location to which it is allocated is ubiquitous in the literature on security games~\cite{tambe2011security} and natural, as, for instance, modern day technology, such as license plate recognition cameras, have enabled police officers to accurately issue parking citations to defaulting users in a timely manner.


Moreover, we assume that the fine levied on defaulting users is a fixed (potentially small) constant. While setting high fines is likely to deter fraud, in most practical applications, fines for engaging in fraudulent or illegal activities are not arbitrarily high, e.g., traffic light or speeding violations cost about \$100 in California. Furthermore, we note that our numerical experiments in Section~\ref{sec:numerical-experiments-optimal-contracts} and Appendix~\ref{apdx:additional-numerics} highlight the benefits of setting low to moderate fines in practice as it deters an administrator from solely maximizing revenues through the collected fines.

In addition, in this work, we assume that fines are the only penalty for users engaging in fraudulent or illegal activities. Such an assumption is particularly applicable for parking violations and queue jumping in IPT services, as the penalty imposed on defaulting users is likely to only come in the form of a fine. However, for other application settings, different forms of punishment may be pertinent depending on the magnitude of the fraud~\cite{becker1968crime}, e.g., users may be sent to prison for certain crimes they commit. While we do not account for other forms of punishment beyond levying fines on defaulting users, there is scope to generalize the model presented in this work to account for different forms of punishment, which we defer to future research. 

Finally, while we consider allocating security resources, e.g., police officers, to monitor and police susceptible nodes or locations in the system, which is typical in the literature on security games~\cite{tambe2011security}, policing can be performed through many other security mechanisms. Most notably, modern security technology, such as security cameras, is frequently used to monitor certain fraudulent or illegal activities, e.g., store theft. Yet we note that while such technology has proven effective in some applications, e.g., in detecting theft at grocery stores, such technology is typically used to aid security officers in patrolling areas, e.g., license plate recognition cameras are used to enable efficient and accurate detection of parking violations. 

Moreover, while security cameras or license plate recognition cameras are often effective in detecting and fining users for road speeding or parking violations, as user records can be determined using license plate information, such technology is likely far more ineffective in monitoring queue jumping in IPT services. In particular, using cameras to identify and fine users that skip queues is difficult, particularly in developing nation cities where IPT services are prevalent, as this would require advanced facial recognition software and a database of records of the entire population in the city. Moreover, given the commotion during the hailing of IPT services and that users often steal public property such as security cameras, a frequent occurrence in developing nation contexts, camera technology on its own may not be the most effective in policing queue jumping in IPT services. Consequently, the traditional mechanism of deploying security resources, such as police officers, as is typically the case for monitoring many traffic and parking violations, despite the prevalence of camera technology, is likely the most effective mechanism in policing queue jumping in IPT services. Yet, we note that modern security technology, such as security cameras, has a role in augmenting traditional enforcement mechanisms, e.g., deploying police officers, in policing and monitoring fraudulent or illegal activities in the applications we consider. To that end, incorporating features of modern-day technology, such as security cameras, and how they can augment policing through traditional mechanisms into our studied framework is a worthwhile direction for future research.






\section{Proofs}

\subsection{Proof of Theorem~\ref{thm:greedy-opt-rev-max-deterministic}} \label{apdx:pf-thm-rev-max-deterministic}

We prove Theorem~\ref{thm:greedy-opt-rev-max-deterministic} using two intermediate lemmas. Our first intermediate lemma shows that the optimal solution of the revenue maximization bi-level Program~\eqref{eq:admin-obj-revenue}-\eqref{eq:bi-level-con-revenue} can be computed using a linear program, as is elucidated by the following lemma.

\begin{lemma} [Linear Programming Characterization of Revenue Maximization Problem] \label{lem:lp-rev-max-deterministic}
The optimal solution of the revenue maximization bi-level Program~\eqref{eq:admin-obj-revenue}-\eqref{eq:bi-level-con-revenue} can be computed through the solution of the following linear program:
\begin{maxi!}|s|[2]   
    {\sigmaa \in \mathbb{R}^{|L|}}                            
    { \sum_{l \in L} \sigma_l k \Lambda_l,  \label{eq:admin-obj-revenue-linear}}   
    {\label{eq:Eg003}}             
    {}          
    \addConstraint{\sigma_l}{\in [0, \frac{d_l}{d_l+k}], \quad \text{for all } l \in L, \label{eq:linear-rev-max-sigma}}
    \addConstraint{\sum_{l \in L} \sigma_l}{\leq R. \label{eq:linear-rev-max-resource-con}}
\end{maxi!}
\end{lemma}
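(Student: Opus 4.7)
The plan is to prove Lemma~\ref{lem:lp-rev-max-deterministic} by using the closed-form user best-response in Equation~\eqref{eq:best-response-users-rev-max} to eliminate the lower-level problem, showing that the bi-level program and the linear program have identical optimal objective values and share at least one common optimizer. Specifically, I would establish two inequalities: (i) the optimal value of Problem~\eqref{eq:admin-obj-revenue}-\eqref{eq:bi-level-con-revenue} is at most the optimal value of Problem~\eqref{eq:admin-obj-revenue-linear}-\eqref{eq:linear-rev-max-resource-con}, and (ii) the reverse. Once these are in place, the statement follows immediately.

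For direction (i), I would take any feasible $\sigmaa$ of the bi-level program and define $\tilde{\sigma}_l = \min\{\sigma_l, \tfrac{d_l}{d_l+k}\}$. Then $\tilde{\sigmaa}$ clearly satisfies Constraints~\eqref{eq:linear-rev-max-sigma}-\eqref{eq:linear-rev-max-resource-con}, since $\tilde{\sigma}_l \le \sigma_l$ preserves the resource budget. The revenue comparison splits location-by-location: if $\sigma_l \le \tfrac{d_l}{d_l+k}$ nothing changes since $\tilde{\sigma}_l = \sigma_l$ and $y_l(\sigmaa) = 1$; if $\sigma_l > \tfrac{d_l}{d_l+k}$ then by Equation~\eqref{eq:best-response-users-rev-max} we have $y_l(\sigmaa) = 0$, so the original revenue at $l$ is zero, while under $\tilde{\sigmaa}$ we obtain $\tilde{\sigma}_l k \Lambda_l = \tfrac{d_l}{d_l+k}k\Lambda_l \geq 0$. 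Hence the LP objective at $\tilde{\sigmaa}$ is at least the bi-level objective at $\sigmaa$, which yields the desired upper bound.

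For direction (ii), I would take any feasible $\sigmaa$ of the LP. Since $\sigma_l \le \tfrac{d_l}{d_l+k}$, the strong Stackelberg tie-breaking convention discussed after Equation~\eqref{eq:best-response-users-rev-max} gives $y_l(\sigmaa) = 1$ at every location, so $\sigmaa$ together with $(y_l(\sigmaa))_{l \in L} = \mathbf{1}$ is feasible for the bi-level program and achieves a revenue of $\sum_l \sigma_l k \Lambda_l$, matching the LP objective exactly. Combining both directions proves the lemma.

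The main obstacle I expect is cleanly handling the boundary case $\sigma_l = \tfrac{d_l}{d_l+k}$, where multiple values of $y_l$ are best responses. Here I would invoke the strong Stackelberg convention already discussed in Section~\ref{subsec:deterministic-revenue-max}, namely that ties are broken in favor of the administrator, so $y_l(\sigmaa) = 1$ at the threshold. This convention is precisely what makes Problem~\eqref{eq:admin-obj-revenue-linear}-\eqref{eq:linear-rev-max-resource-con} include the right endpoint $\tfrac{d_l}{d_l+k}$ in Constraint~\eqref{eq:linear-rev-max-sigma} and what allows the two programs to coincide; without it one would only obtain a supremum rather than a maximum. No heavy machinery beyond Equation~\eqref{eq:best-response-users-rev-max} is required, so the bulk of the work is the careful bookkeeping above.
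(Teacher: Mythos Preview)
Your proposal is correct and takes essentially the same approach as the paper: both use the truncation $\tilde{\sigma}_l = \min\{\sigma_l, \tfrac{d_l}{d_l+k}\}$ together with the threshold best-response in Equation~\eqref{eq:best-response-users-rev-max} to show that restricting to $\sigma_l \in [0,\tfrac{d_l}{d_l+k}]$ is without loss of optimality, after which $y_l(\sigmaa)=1$ everywhere and the objective becomes linear. The only difference is organizational---you frame it explicitly as two inequalities between optimal values, whereas the paper argues more directly that the bi-level program can be reformulated as the LP---but the underlying argument and the treatment of the tie-breaking at the threshold are identical.
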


For a proof of Lemma~\ref{lem:lp-rev-max-deterministic}, see Appendix~\ref{apdx:pf-lp-rev-max-deterministic}. Lemma~\ref{lem:lp-rev-max-deterministic} establishes that computing an optimal revenue maximizing strategy of the administrator can be done efficiently in polynomial time. Our next intermediate lemma states that Algorithm~\ref{alg:GreedyRevenueMaximization} achieves the optimal solution of the linear Program~\eqref{eq:admin-obj-revenue-linear}-\eqref{eq:linear-rev-max-resource-con}.

\begin{lemma} [Algorithm~\ref{alg:GreedyRevenueMaximization}] \label{lem:opt-algo-is-greedy-rev-max}
The allocation strategy corresponding to Algorithm~\ref{alg:GreedyRevenueMaximization} achieves an optimal solution to the linear Program~\eqref{eq:admin-obj-revenue-linear}-\eqref{eq:linear-rev-max-resource-con}.
\end{lemma}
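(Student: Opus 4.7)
The plan is to view Program~\eqref{eq:admin-obj-revenue-linear}-\eqref{eq:linear-rev-max-resource-con} as a continuous (fractional) knapsack instance and to prove optimality of Algorithm~\ref{alg:GreedyRevenueMaximization} by a standard exchange argument. The key observation is structural: in the LP, each unit of $\sigma_l$ consumes exactly one unit of the resource budget in \eqref{eq:linear-rev-max-resource-con} and contributes $k\Lambda_l$ to the objective in \eqref{eq:admin-obj-revenue-linear}. Consequently the ``value per unit of resource'' at location $l$ is simply $k\Lambda_l$, which is a monotone function of $\Lambda_l$. Therefore, ordering locations by descending $\Lambda_l$ (as in Algorithm~\ref{alg:GreedyRevenueMaximization}) is exactly the greedy rule of descending value-per-unit-resource, and allocating up to the cap $\tfrac{d_l}{d_l+k}$ at each step respects the box constraint \eqref{eq:linear-rev-max-sigma}.

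Next, I would carry out the exchange. Let $\Tilde{\sigmaa}$ be the output of Algorithm~\ref{alg:GreedyRevenueMaximization} and let $\sigmaa^*$ be any optimal LP solution; relabel locations so that $\Lambda_1\geq\Lambda_2\geq\cdots\geq\Lambda_{|L|}$. I may assume without loss of generality that $\sigmaa^*$ uses the budget maximally up to the caps, since the objective coefficients $k\Lambda_l$ are nonnegative, so any slack in \eqref{eq:linear-rev-max-resource-con} paired with an unsaturated cap could only weakly increase the objective upon being filled. Let $l^*$ be the smallest index at which $\Tilde{\sigma}_{l^*}\neq\sigma^*_{l^*}$. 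By construction of greedy, $\Tilde{\sigma}_{l^*}$ either equals its cap or exhausts the remaining budget; combined with the agreement on indices $1,\dots,l^*-1$, the only possibility is $\Tilde{\sigma}_{l^*}>\sigma^*_{l^*}$. Since the total allocations match after the assumption above, there must exist some later index $l'>l^*$ with $\sigma^*_{l'}>\Tilde{\sigma}_{l'}$. Transferring $\delta = \min\{\Tilde{\sigma}_{l^*}-\sigma^*_{l^*},\sigma^*_{l'}-\Tilde{\sigma}_{l'}\}$ units from $\sigma^*_{l'}$ to $\sigma^*_{l^*}$ preserves feasibility (both box constraints remain valid, and the budget is unchanged) and changes the objective by $\delta\cdot k(\Lambda_{l^*}-\Lambda_{l'})\geq 0$. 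Iterating this exchange monotonically decreases the index of the first disagreement, so in finitely many steps $\sigmaa^*$ is transformed into $\Tilde{\sigmaa}$ without decreasing the LP objective, establishing optimality of $\Tilde{\sigmaa}$.

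The main obstacle I anticipate is the careful bookkeeping around the preliminary ``no-slack'' reduction on $\sigmaa^*$: one needs to rule out the degenerate case in which $\sigmaa^*$ allocates strictly less than $R$ while leaving some cap unsaturated, and to argue that the exchange process terminates. Both are standard (the first follows from nonnegativity of $k\Lambda_l$ and the box feasibility of small increases; the second from the fact that the smallest index of disagreement strictly increases after each exchange).

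As a cleaner alternative I would mention, one can bypass the exchange argument entirely via LP duality: writing dual variables $\mu\geq 0$ for the budget constraint and $\nu_l\geq 0$ for the caps $\sigma_l\leq\tfrac{d_l}{d_l+k}$, one sets $\mu^*=k\Lambda_{l^\dagger}$ where $l^\dagger$ is the critical ``fractional'' location at which Algorithm~\ref{alg:GreedyRevenueMaximization} exhausts its budget, and $\nu_l^*=\max\{0,k\Lambda_l-\mu^*\}$. Feasibility and complementary slackness with $\Tilde{\sigmaa}$ are then immediate from the greedy ordering, certifying optimality in one line.
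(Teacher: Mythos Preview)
Your proposal is correct and follows essentially the same approach as the paper: both use an exchange argument on the fractional-knapsack structure of the LP, transferring mass from a lower-$\Lambda$ location to a higher-$\Lambda$ one to show the greedy output cannot be dominated. The paper's version phrases it as a one-step contradiction (improving the alleged optimum) under the simplifying assumption of distinct $\Lambda_l$ values, while you iterate the exchange to transform any optimum into the greedy solution and additionally sketch a clean LP-duality certificate; these are minor presentational differences rather than a different idea.
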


For a proof of Lemma~\ref{lem:opt-algo-is-greedy-rev-max}, see Appendix~\ref{apdx:pf-lem-opt-algo-greedy}. Lemma~\ref{lem:opt-algo-is-greedy-rev-max} implies that the linear Program~\eqref{eq:admin-obj-revenue-linear}-\eqref{eq:linear-rev-max-resource-con} can be solved in $O(|L| \log(|L|))$ time as the time taken to sort the locations is $O(|L| \log(|L|))$ and that taken to iterate through the locations in linear in $|L|$.

Notice that jointly Lemmas~\ref{lem:lp-rev-max-deterministic} and~\ref{lem:opt-algo-is-greedy-rev-max} imply Theorem~\ref{thm:greedy-opt-rev-max-deterministic}.

\subsubsection{Proof of Lemma~\ref{lem:lp-rev-max-deterministic}} \label{apdx:pf-lp-rev-max-deterministic}

To prove this claim, we first present the linear program that can be used to compute the solution of the bi-level Program~\eqref{eq:admin-obj-revenue}-\eqref{eq:bi-level-con-revenue}. To construct this linear program, we note that any optimal solution to the bi-level Program~\eqref{eq:admin-obj-revenue}-\eqref{eq:bi-level-con-revenue} must satisfy $\sigma_l \in [0, \frac{d_l}{d_l+k}]$ for all locations $l \in L$. Note that if this relation is not satisfied and $\sigma_l > \frac{d_l}{d_l+k}$ for some location $l$, then we can consider an alternative allocation strategy $\Tilde{\sigmaa}$ where $\Tilde{\sigma}_l = \min \{ \sigma_l, \frac{d_l}{d_l+k} \}$ for all locations $l$. We note that $\Tilde{\sigmaa}$ is clearly feasible, i.e., $\Tilde{\sigmaa} \in \Omega_R$, as the strategy $\sigmaa$ is feasible. Next, by the best-response function of users in the revenue maximization setting given by Equation~\eqref{eq:best-response-users-rev-max}, we note that $y_l(\Tilde{\sigmaa}) = 1$ for all locations $l$, while $y_l(\sigmaa) = 0$ if $\sigma_l > \frac{d_l}{d_l+k}$ and $y_l(\sigmaa) = 1$ otherwise. As a consequence, we obtain for any strategy $\sigmaa$ that:
\begin{align*}
    Q_R(\sigmaa) &= \sum_{l \in L} \sigma_l y_l(\sigmaa) k \Lambda_l = \sum_{l: \sigma_l > \frac{d_l}{d_l+k}} \sigma_l y_l(\sigmaa) k \Lambda_l +  \sum_{l: \sigma_l \leq \frac{d_l}{d_l+k}} \sigma_l y_l(\sigmaa) k \Lambda_l, \\
    &\stackrel{(a)}{=} \sum_{l: \sigma_l \leq \frac{d_l}{d_l+k}} \sigma_l k \Lambda_l \stackrel{(b)}{=}  \sum_{l : \sigma_l \leq \frac{d_l}{d_l+k}} \Tilde{\sigma}_l k \Lambda_l, \\
    &\leq \sum_{l \in L} \Tilde{\sigma}_l k \Lambda_l, \\
    &\stackrel{(c)}{=} Q_R(\Tilde{\sigmaa})
\end{align*}
where (a) follows as $y_l(\sigmaa) = 0$ if $\sigma_l > \frac{d_l}{d_l+k}$ and $y_l(\sigmaa) = 1$ otherwise, (b) follows as $\Tilde{\sigma}_l = \sigma_l$ when $\sigma_l \leq \frac{d_l}{d_l+k}$, and (c) follows as $y_l(\Tilde{\sigmaa}) = 1$ for all locations $l$. 

The above analysis implies that there exists an revenue-maximizing allocation such that $\Tilde{\sigma}_l \in [0, \frac{d_l}{d_l+k}]$ for all locations $l$, where it holds by Equation~\eqref{eq:best-response-users-rev-max} that $y_l(\Tilde{\sigmaa}) = 1$ for all locations $l$. Thus, imposing the restriction that $\sigma_l \in [0, \frac{d_l}{d_l+k}]$ and that $y_l(\Tilde{\sigmaa}) = 1$ for all locations $l$, we have that the bi-level Program~\eqref{eq:admin-obj-revenue}-\eqref{eq:bi-level-con-revenue} can be reformulated as the following linear program:
\begin{maxi!}|s|[2]   
    {\sigmaa \in \mathbb{R}^{|L|}}                            
    { Q_R(\sigmaa) = \sum_{l \in L} \sigma_l k \Lambda_l,  \label{eq:admin-obj-revenue-linear2}}   
    {\label{eq:Eg004}}             
    {}          
    \addConstraint{\sigma_l}{\in [0, \frac{d_l}{d_l+k}], \quad \text{for all } l \in L, \label{eq:linear-rev-max-sigma2}}
    \addConstraint{\sum_{l \in L} \sigma_l}{\leq R, \label{eq:linear-rev-max-resource-con2}}
\end{maxi!}
where Constraints~\eqref{eq:linear-rev-max-sigma2} captures the upper bound on the administrator strategy required for optimality, Constraint~\eqref{eq:linear-rev-max-resource-con2} represents the resource constraint of the administrator, and Objective~\eqref{eq:admin-obj-revenue-linear2} represents the revenue maximization objective of the administrator when $y_l(\Tilde{\sigmaa}) = 1$, which follows from Equation~\eqref{eq:best-response-users-rev-max} under Constraint~\eqref{eq:linear-rev-max-resource-con2}. This establishes our claim.

\subsubsection{Proof of Lemma~\ref{lem:opt-algo-is-greedy-rev-max}} \label{apdx:pf-lem-opt-algo-greedy}


For simplicity of exposition, we prove this result in the setting when the values of $\Lambda_l$ are all distinct, i.e., for any two locations $l, l' \in L$, it holds that $\Lambda_l \neq \Lambda_{l'}$. We note that our analysis can be extended naturally to the setting when there are ties.

To prove this claim, let $\sigmaa^G = (\sigma_l^G)_{l \in L}$ denote the allocations of the greedy algorithm and $\sigmaa^* = (\sigma_l^*)_{l \in L}$ denote the revenue-maximizing allocation. Further, suppose for contradiction that the greedy algorithm is not optimal, i.e., $\sum_{l \in L} \sigma_l^G \Lambda_l < \sum_{l \in L} \sigma_l^* \Lambda_l$, where we have dropped the fine $k$ from the objective of the linear Program~\eqref{eq:admin-obj-revenue-linear2}-\eqref{eq:linear-rev-max-resource-con2}. Note that this implies that there is some location $l$ such that $\sigma_l^G \neq \sigma_l^*$, which consequently implies by the nature of the greedy algorithm that there is a location $\Tilde{l}$ such that $\sigma_{\Tilde{l}}^G > \sigma_{\Tilde{l}}^{*}$. However, by the feasibility constraint that $\sum_{l \in L} \sigma_l \leq R$ and the optimality of the revenue-maximizing solution it must hold that there is some location $l'$ such that $\sigma_{l'}^G < \sigma_{l'}^{*}$.

To derive our desired contradiction, we now construct another feasible strategy with a strictly higher objective than the strategy $\sigmaa^*$. In particular, consider $\epsilon>0$ as some small positive constant such that $\sigma_{\Tilde{l}}^{*} + \epsilon \leq \sigma_{\Tilde{l}}^{G}$ and $\sigma_{l'}^{*} \geq \sigma_{l'}^{G} + \epsilon$. Then, define the strategy $\sigmaa' = (\sigma_l')_{l \in L}$, where $\sigma_l' = \sigma_l^*$ for all $l \neq l', \Tilde{l}$, $\sigma_{l'} = \sigma_{l'}^* - \epsilon$ and $\sigma_{\Tilde{l}}' = \sigma_{\Tilde{l}}^{*} + \epsilon$. Note that this new strategy is feasible. In particular, it is straightforward to check that $\sigma_l' \in [0, \frac{d_l}{d_l+k}]$ for all locations $l$ and that the resource constraint is satisfied as:
\begin{align*}
    \sum_{l} \sigma_l' = \sum_{l \neq \{ l', \Tilde{l}\} } \sigma_l' + \sigma_{l'}' + \sigma_{\Tilde{l}}' = \sum_{l \neq \{ l', \Tilde{l}\} } \sigma_l^* + \sigma_{l'}^*-\epsilon + \sigma_{\Tilde{l}}^*+\epsilon = \sum_{l} \sigma_l^* \leq R,
\end{align*}
where the final inequality follows by the feasibility of $\sigmaa^*$.

Next, we show that the revenue of the new strategy $\sigmaa'$ is greater than that of $\sigmaa^*$. To see this note that:
\begin{align*}
    Q_R(\sigmaa') &= \sum_{l \in L} \sigma_l' \Lambda_l, \\
    &= \sum_{l \neq \{ l', \Tilde{l}\} } \sigma_l^* \Lambda_l + (\sigma_{l'}^*-\epsilon) \Lambda_{l'} + (\sigma_{\Tilde{l}}^* + \epsilon) \Lambda_{\Tilde{l}}, \\
    &= \sum_{l \in L} \sigma_l^* \Lambda_l + \epsilon (\Lambda_{\Tilde{l}} - \Lambda_{l'}), \\
    &> \sum_{l \in L} \sigma_l^* \Lambda_l = Q_R(\sigmaa^*),
\end{align*}
where the inequality follows as $\Lambda_{l'} < \Lambda_{\Tilde{l}}$ due to the nature of the greedy algorithm which allocates resources to locations in the descending order of the values of $\Lambda_l$. The above relationship implies that the strategy $\sigmaa'$ achieves a greater revenue compared to $\sigmaa^*$, a contradiction. Thus, it follows that Algorithm~\ref{alg:GreedyRevenueMaximization} cannot have a strictly lower objective than the optimal algorithm, implying its optimality, which establishes our claim.

\subsection{Proof of Proposition~\ref{prop:opt-mixed-strategy-solution}} \label{apdx:pf-thm-opt-mixed-strategy-soln}

Given the best-response function of the users at a given location $l$ in Equation~\eqref{eq:best-response-users}, we prove Proposition~\ref{prop:opt-mixed-strategy-solution} using two intermediate lemmas. Our first lemma establishes that, without loss of generality, it suffices to restrict our attention to administrator strategies where the total allocation of security personnel $\sigma_l$ at any location $l$ does not exceed $\frac{d_l}{d_l+k}$, i.e., establishing the first point in the statement of Proposition~\ref{prop:opt-mixed-strategy-solution}.

\begin{lemma} [Upper Bound on Administrator's Mixed Strategy Vector] \label{lem:ub-helper-mixed-strategy}
Suppose that users at each location best-respond by solving Problem~\eqref{eq:userOpt-eachLoc}. Then, there exists an administrator strategy $\Tilde{\sigmaa}^*$ that satisfies $\Tilde{\sigma}_l^* \leq \frac{d_l}{d_l+k}$ for all locations $l$ and is a solution to the Problem~\eqref{eq:admin-obj-fraud}-\eqref{eq:bi-level-con-fraud}.
\end{lemma}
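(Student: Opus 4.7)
The plan is to start with an arbitrary solution $\sigmaa^*$ of Problem~\eqref{eq:admin-obj-fraud}-\eqref{eq:bi-level-con-fraud} and construct from it another feasible strategy $\Tilde{\sigmaa}^*$ that respects the claimed upper bound $\Tilde{\sigma}_l^* \leq \frac{d_l}{d_l+k}$ while achieving the same (or smaller) value of the non-accrued payoff objective $P_R^-$. The natural candidate is the pointwise truncation $\Tilde{\sigma}_l^* := \min\!\big\{\sigma_l^*,\tfrac{d_l}{d_l+k}\big\}$ for every $l \in L$. The intuition is geometric: from the right panel of Figure~\ref{fig:social-welfare-revenue-best-response-plots}, the per-location payoff is constant at $p_l$ once $\sigma_l$ reaches the threshold $\tfrac{d_l}{d_l+k}$, so any additional budget spent beyond this threshold is simply wasted and can be reclaimed without any loss in payoff.

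First I would verify feasibility of $\Tilde{\sigmaa}^*$. Since $\Tilde{\sigma}_l^* \leq \sigma_l^* \in [0,1]$ for all $l$ and $\sum_{l \in L} \Tilde{\sigma}_l^* \leq \sum_{l \in L} \sigma_l^* \leq R$ by feasibility of $\sigmaa^*$, we have $\Tilde{\sigmaa}^* \in \Omega_R$. Next, I would compare $P_R^-(\Tilde{\sigmaa}^*)$ to $P_R^-(\sigmaa^*)$ location by location using the user best-response in Equation~\eqref{eq:best-response-users}. For each $l$, split into two cases. If $\sigma_l^* < \tfrac{d_l}{d_l+k}$, then $\Tilde{\sigma}_l^* = \sigma_l^*$ and the term $(1-\sigma_l^*) y_l(\sigmaa^*) p_l$ is unchanged. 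If $\sigma_l^* \geq \tfrac{d_l}{d_l+k}$, then by Equation~\eqref{eq:best-response-users} we have $y_l(\sigmaa^*) = 0$, so the contribution of location $l$ to $P_R^-(\sigmaa^*)$ is $0$; simultaneously, $\Tilde{\sigma}_l^* = \tfrac{d_l}{d_l+k}$ also triggers $y_l(\Tilde{\sigmaa}^*) = 0$ under the same convention, so the contribution to $P_R^-(\Tilde{\sigmaa}^*)$ is likewise $0$. Summing across locations gives $P_R^-(\Tilde{\sigmaa}^*) = P_R^-(\sigmaa^*)$, which together with feasibility shows $\Tilde{\sigmaa}^*$ is also a solution of Problem~\eqref{eq:admin-obj-fraud}-\eqref{eq:bi-level-con-fraud}.

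The only subtle point, and the one I expect to need care to state cleanly, is the boundary behavior at $\sigma_l = \tfrac{d_l}{d_l+k}$: here any $y_l \in [0,1]$ is a user best response, and one must invoke the tie-breaking convention (discussed just after Equation~\eqref{eq:best-response-users}) that selects $y_l = 0$ at the threshold, since this is what makes $(\Tilde{\sigmaa}^*, y(\Tilde{\sigmaa}^*))$ constitute a valid lower-level response attaining the optimal administrator payoff. Because this tie-breaking aligns with the strong Stackelberg equilibrium selection already adopted in Section~\ref{subsec:welfare-max-deterministic}, the construction is consistent with the rest of the model and the lemma follows.
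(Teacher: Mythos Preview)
Your proposal is correct and mirrors the paper's own proof essentially step for step: the paper also takes an optimal $\sigmaa^*$, truncates via $\Tilde{\sigma}_l^* = \min\{\sigma_l^*, \tfrac{d_l}{d_l+k}\}$, verifies feasibility, and then uses the same two-case analysis on Equation~\eqref{eq:best-response-users} to conclude $P_R^-(\Tilde{\sigmaa}^*) = P_R^-(\sigmaa^*)$. Your attention to the tie-breaking at the threshold is exactly the point the paper also relies on.
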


For a proof of Lemma~\ref{lem:ub-helper-mixed-strategy}, see Appendix~\ref{apdx:pf-lem-ub-helper-mixed-strategy}. Lemma~\ref{lem:ub-helper-mixed-strategy} establishes that allocating more than a certain threshold of resources, as specified by the fraction $\frac{d_l}{d_l+k}$, at any location $l$ will not improve the administrator's payoff. Such a result holds as, by Equation~\eqref{eq:best-response-users}, allocating $\sigma_l = \frac{d_l}{d_l+k}$ is enough to deter users from engaging in fraudulent behavior at a given location $l$; hence, allocating any more resources than this threshold at a given location will not result in a further improvement in the administrator's payoff (see right of Figure~\ref{fig:social-welfare-revenue-best-response-plots}).

Our second intermediate lemma establishes the existence of an optimal administrator strategy $\Tilde{\sigmaa}^*$ such that for at most one location $l'$ $\Tilde{\sigma}_{l'}^* \in \left( 0, \frac{d_{l'}}{d_{l'}+k} \right)$, thereby establishing the second condition in Proposition~\ref{prop:opt-mixed-strategy-solution}.

\begin{lemma} [Structure of Optimal Mixed Strategy Solution] \label{lem:structure-opt-mixed-strategy}
Suppose that users at each location best-respond by solving Problem~\eqref{eq:userOpt-eachLoc}. Then, there exists an administrator strategy $\Tilde{\sigmaa}^*$ that is a solution to Problem~\eqref{eq:admin-obj-fraud}-\eqref{eq:bi-level-con-fraud} such that for at most one location $l'$ $\Tilde{\sigma}_{l'}^* \in \left( 0, \frac{d_{l'}}{d_{l'}+k} \right)$. For all other locations $l$, it holds that either $\Tilde{\sigma}_l^* = 0$ or $\Tilde{\sigma}_l^* = \frac{d_l}{d_l+k}$.
\end{lemma}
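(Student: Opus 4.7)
The plan is to start from an optimal solution $\tilde\sigmaa^*$ satisfying $\tilde\sigma_l^* \in [0, d_l/(d_l+k)]$ for every $l$, which exists by Lemma~\ref{lem:ub-helper-mixed-strategy}, and iteratively transform it into another optimal solution with the claimed structure, without ever decreasing the administrator's payoff. The whole argument is driven by the piecewise shape of the per-location payoff contribution induced by Equation~\eqref{eq:best-response-users}: on $[0, d_l/(d_l+k))$ the user best-response is $y_l(\sigmaa) = 1$, so the contribution equals the linear function $\sigma_l p_l$; at $\sigma_l = d_l/(d_l+k)$ the best-response flips to $y_l(\sigmaa) = 0$ and the contribution jumps up to the full $p_l$ (see right of Figure~\ref{fig:social-welfare-revenue-best-response-plots}). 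All of the analysis happens within this two-regime structure.

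The workhorse is a pairwise consolidation step. Let $L' = \{\, l : \tilde\sigma_l^* \in (0, d_l/(d_l+k))\, \}$ and suppose $|L'| \geq 2$. I would pick any two $l_1, l_2 \in L'$ with $p_{l_1} \leq p_{l_2}$ (relabel if necessary) and transfer mass from $l_1$ to $l_2$ while holding $\tilde\sigma_{l_1}^* + \tilde\sigma_{l_2}^*$ fixed, stopping at $\epsilon^\star = \min\{\,\tilde\sigma_{l_1}^*,\, d_{l_2}/(d_{l_2}+k) - \tilde\sigma_{l_2}^*\,\}$, the smallest $\epsilon$ at which one of the two coordinates hits a boundary of its interval. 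If the binding event is $\epsilon^\star = \tilde\sigma_{l_1}^*$, then $l_1$ drops to $0$ and the net payoff change is the purely linear term $\epsilon^\star(p_{l_2}-p_{l_1}) \geq 0$. If instead the binding event is $\epsilon^\star = d_{l_2}/(d_{l_2}+k) - \tilde\sigma_{l_2}^*$, then $l_2$ reaches its threshold and its contribution picks up the upward jump from $\tilde\sigma_{l_2}^* p_{l_2}$ to $p_{l_2}$; collecting the linear transfer piece with the jump and using $p_{l_1} \leq p_{l_2}$, a short calculation shows the net payoff change equals $\epsilon^\star(p_{l_2}-p_{l_1}) + p_{l_2}\, k/(d_{l_2}+k) \geq 0$. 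In either case the new allocation is still in $\Omega_R$, achieves at least the same payoff, and $|L'|$ has strictly decreased by at least one.

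Iterating this consolidation finitely many times produces an optimal allocation, still denoted $\tilde\sigmaa^*$, with $|L'| \leq 1$. Partitioning $L$ into $L_1 = \{\, l : \tilde\sigma_l^* = d_l/(d_l+k)\, \}$, $L_2 = \{\, l : \tilde\sigma_l^* = 0\, \}$, and the (possibly empty) singleton $\{l'\} = L'$ then gives exactly the structure claimed. The main technical obstacle is the second boundary case above: the per-location payoff is discontinuous at the threshold, and naively one might worry that a small transfer could destroy optimality. Controlling this requires combining both the linear transfer gain and the jump gain, and it is precisely at this point that the ordering $p_{l_1} \leq p_{l_2}$ is used in a nontrivial way; feasibility (both $\sigma_l \in [0,1]$ and the budget $\sum_l \sigma_l \leq R$) is preserved trivially, since each transfer is interior to the already-feasible interval and preserves the total mass.
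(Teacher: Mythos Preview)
Your proposal is correct and follows essentially the same mass-transfer (``water-filling'') argument as the paper: both start from an optimal $\tilde\sigmaa^*$ with $\tilde\sigma_l^* \in [0, d_l/(d_l+k)]$, identify the set $L'$ of strictly interior coordinates, and repeatedly shift mass from a lower-$p_l$ location to a higher-$p_l$ location until $|L'| \le 1$. Your treatment is in fact slightly more explicit than the paper's, as you separately account for the linear transfer gain and the upward jump when $l_2$ hits its threshold, whereas the paper absorbs the jump into the single inequality $y_l(\tilde\sigmaa^*) \le 1 = y_l(\sigmaa^*)$ for $l \in L'$.
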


For a proof of Lemma~\ref{lem:structure-opt-mixed-strategy}, see Appendix~\ref{apdx:pf-lem-structure-opt-mixed-strategy}. Using the structure of the optimal administrator strategy established in Lemmas~\ref{lem:ub-helper-mixed-strategy} and~\ref{lem:structure-opt-mixed-strategy}, it is straightforward to see that the optimal administrator payoff satisfies the relation in the statement of Proposition~\ref{prop:opt-mixed-strategy-solution}. To see this, let $L_1$, $L_2$, and $l'$ be as defined in the statement of Proposition~\ref{prop:opt-mixed-strategy-solution}. Then, we have that:
\begin{align*}
    P_R^-(\Tilde{\sigmaa}^*) &= \sum_{l \in L} (1 - \Tilde{\sigma}_l^*) y_l(\Tilde{\sigmaa}^*) p_l, \\
    &= \sum_{l \in L_1} (1 - \Tilde{\sigma}^*_l) y_l(\Tilde{\sigmaa}^*) p_l + \sum_{l \in L_2} (1 - \Tilde{\sigma}_l^*) y_l(\Tilde{\sigmaa}^*) p_l + (1 - \Tilde{\sigma}_{l'}^*) y_{l'}(\Tilde{\sigmaa}^*) p_{l'}, \\
    &\stackrel{(a)}{=} \sum_{l \in L_2} (1 - \Tilde{\sigma}_l^*) y_l(\Tilde{\sigmaa}^*) p_l + (1 - \Tilde{\sigma}_{l'}^*) y_{l'}(\Tilde{\sigmaa}^*) p_{l'}, \\
    &\stackrel{(b)}{=} \sum_{l \in L_2} p_l + (1 - \Tilde{\sigma}_{l'}^*) p_{l'},
\end{align*}
where (a) follows from the fact that $y_l(\Tilde{\sigmaa}^*) = 0$ for $l \in L_1$ by Equation~\eqref{eq:best-response-users} as $\Tilde{\sigma}_l^* = \frac{d_l}{d_l+k}$ for all $l \in L_1$ and (b) follows as $y_l(\Tilde{\sigmaa}^*) = 1$ for $l \in L \backslash L_1$ and the fact that $\Tilde{\sigma}_l^* = 0$ for all $l \in L_2$ and that $\Tilde{\sigma}_{l'}^* \in \left(0, \frac{d_{l'}}{d_{l'}+k} \right)$. 

Using the above inequality, we have the following relation for the payoff under the strategy $\Tilde{\sigmaa}^*$:
\begin{align}
    P_R(\Tilde{\sigmaa}^*) &= \sum_{l \in L} p_l - P_R^-(\Tilde{\sigmaa}^*), \\
    &= \sum_{l \in L} p_l - \left( \sum_{l \in L_2} p_l + (1 - \Tilde{\sigma}_{l'}^*) p_{l'} \right), \\
    &= \sum_{l \in L_1} p_l + \Tilde{\sigma}_{l'}^* p_{l'},
\end{align}
which establishes the desired relation for the administrator's payoff under the optimal strategy $\Tilde{\sigmaa}^*$, thereby proving our claim.


\subsubsection{Proof of Lemma~\ref{lem:ub-helper-mixed-strategy}} \label{apdx:pf-lem-ub-helper-mixed-strategy}

Suppose that $\sigmaa^*$ is a payoff maximizing strategy of the administrator and $\y(\sigmaa^*)$ is the corresponding best-response of users given by Equation~\eqref{eq:best-response-users}, i.e., $(\sigmaa^*, \y(\sigmaa^*))$ is an optimal solution to Problem~\eqref{eq:admin-obj-fraud}-\eqref{eq:bi-level-con-fraud}. Then, we define a new feasible strategy $\Tilde{\sigmaa}^*$ that satisfies $\Tilde{\sigma}_l^* \leq \frac{d_l}{d_l+k}$ for all locations $l$ and show that it achieves the same payoff as the strategy $\sigmaa^*$ using users' best response function given by Equation~\eqref{eq:best-response-users}. In particular, define $\Tilde{\sigmaa}^*$ such that $\Tilde{\sigma}_l^* = \min \{ \sigma_l^*, \frac{d_l}{d_l+k} \}$ for all locations $l$. 

It is straightforward to check that $\Tilde{\sigmaa}^*$ is a feasible strategy, i.e., $\Tilde{\sigmaa}^* \in \Omega_R$. To see this, note that $\Tilde{\sigma}_l^* = \min \{ \sigma_l^*, \frac{d_l}{d_l+k} \} \geq 0$ for all locations $l$ as $\sigma_l^* \geq 0$ by its feasibility to Problem~\eqref{eq:admin-obj-fraud}-\eqref{eq:bi-level-con-fraud}. Furthermore, we have that $\Tilde{\sigma}_l^* = \min \{ \sigma_l^*, \frac{d_l}{d_l+k} \} \leq 1$ as $\frac{d_l}{d_l+k} \leq 1$ for all locations $l$. Finally, since it holds that $\sum_{l \in L} \sigma_l^* \leq R$ by the feasibility of $\sigmaa^*$ to Problem~\eqref{eq:admin-obj-fraud}-\eqref{eq:bi-level-con-fraud}, it follows that
\begin{align*}
    \sum_{l \in L} \Tilde{\sigma}_l^* = \sum_{l \in L} \min \left\{ \sigma_l^*, \frac{d_l}{d_l+k} \right\} \leq \sum_{l \in L} \sigma_l^* \leq R,
\end{align*}
which establishes the feasibility of strategy $\Tilde{\sigmaa}^*$ for Problem~\eqref{eq:admin-obj-fraud}-\eqref{eq:bi-level-con-fraud}. 

Next, to see that $\Tilde{\sigmaa}^*$ achieves the same payoff for the administrator as $\sigmaa^*$, we first note that the best-response strategy of users at each location is unchanged under $\sigmaa^*$ and $\Tilde{\sigmaa}^*$, i.e., $\y(\sigmaa^*) = \y(\Tilde{\sigmaa}^*)$. To see this, we show that $y_l(\sigmaa^*) = y_l(\Tilde{\sigmaa}^*)$ for all locations $l$ by considering two cases: (i) $\sigma_l^* < \frac{d_l}{d_l+k}$ and (ii) $\sigma_l^* \geq \frac{d_l}{d_l+k}$. Note that in the first case, $\Tilde{\sigma}_l^* = \sigma_l^*$ and thus it must be that $y_l(\sigmaa^*) = y_l(\Tilde{\sigmaa}^*) = 1$. On the other hand, in the second case, note that $y_l(\sigmaa^*) = 0$ and that $\Tilde{\sigma}_l^* = \frac{d_l}{d_l+k}$, from which it follows by Equation~\eqref{eq:best-response-users} that $y_l(\Tilde{\sigmaa}^*) = 0$ as well. Having shown that $\y(\sigmaa^*) = \y(\Tilde{\sigmaa}^*)$, we now obtain that:
\begin{align*}
    P_R^-(\sigmaa^*) &= \sum_{l \in L} (1-\sigma_l^*) y_l^*(\sigmaa^*) p_l, \\
    &\stackrel{(a)}{=} \sum_{l \in L: y_l(\sigmaa^*) = 0} (1-\sigma_l^*) y_l^*(\sigmaa^*) p_l + \sum_{l \in L: y_l(\sigmaa^*) = 1} (1-\sigma_l^*) y_l^*(\sigmaa^*) p_l, \\
    &\stackrel{(b)}{=} \sum_{l \in L: y_l(\Tilde{\sigmaa}^*) = 1} (1-\Tilde{\sigma_l}^*) y_l^*(\Tilde{\sigmaa}^*)p_l, \\
    &\stackrel{(c)}{=} P_R^-(\Tilde{\sigmaa}^*)
\end{align*}
where (a) follows by splitting the sum, (b) follows from the fact that $\Tilde{\sigma}_l^* = \sigma_l^*$ for all $l$ where $y_l(\sigmaa^*) = 1$, as in this regime $\sigma_l^* < \frac{d_l}{d_l+k}$ by Equation~\eqref{eq:best-response-users}, and (c) follows from the fact that $\y(\sigmaa^*) = \y(\Tilde{\sigmaa}^*)$.

Finally, by the optimality of $\sigmaa^*$ and from the above obtained relations for the new administrator strategy $\Tilde{\sigmaa}^*$, it holds $P_R^-(\Tilde{\sigmaa}^*) = P_R^-(\sigmaa^*) \leq P_R^-(\sigmaa)$, for all feasible $\sigmaa \in \Omega_R$. Thus, we have established our claim that there exists an administrator strategy $\Tilde{\sigmaa}^*$ that satisfies $\Tilde{\sigma}_l^* \leq \frac{d_l}{d_l+k}$ for all locations $l$ and is a solution to Problem~\eqref{eq:admin-obj-fraud}-\eqref{eq:bi-level-con-fraud}.

\subsubsection{Proof of Lemma~\ref{lem:structure-opt-mixed-strategy}} \label{apdx:pf-lem-structure-opt-mixed-strategy}

To prove this claim, we show that any administrator strategy $\sigmaa^*$ can be transformed into another strategy $\Tilde{\sigmaa}^*$ satisfying the condition in the statement of the lemma with at most the same payoff as that corresponding to $\sigmaa^*$.

To see this, consider an optimal administrator strategy $\sigmaa^*$ such that the strategies $(\sigmaa^*, \y(\sigmaa^*))$ are an optimal solution to Problem~\eqref{eq:admin-obj-fraud}-\eqref{eq:bi-level-con-fraud}. Further, suppose that there are at least two locations $l_1$ and $l_2$ such that $\sigma_{l_1}^* \in \left( 0, \frac{d_{l_1}}{d_{l_1}+k} \right)$ and $\sigma_{l_2}^* \in \left( 0, \frac{d_{l_2}}{d_{l_2}+k} \right)$. In particular, let $L' \subseteq L$ be the set of locations such that for any $l' \in L'$ it holds that $\sigma_{l'}^* \in \left( 0, \frac{d_{l'}}{d_{l'}+k} \right)$. Then, we order the locations in the set $L'$ in descending order of the payoffs $p_l$ and without loss of generality number these locations $1, \ldots, |L'|$. Then, we construct $\Tilde{\sigmaa}^*$ through a water-filling approach. To elucidate this approach, we begin by transferring the mass $\sigma_{|L'|}^*$ corresponding the location with the smallest payoff $p_l$ in the set $L'$ to the location with the highest payoff $p_l$ in the set $L'$. In particular, we transfer $\min \{ \sigma_{|L'|}^*, \frac{d_1}{d_1+k} - \sigma_{1}^* \}$ from location $|L'|$ to location $1$. If location $|L'|$'s mass is exhausted before filling up location one to $\frac{d_1}{d_1+k}$, then we transfer further mass of location $|L'-1|$ to location one. On the other hand, if location $1$ is filled up to $\frac{d_1}{d_1+k}$ with location $|L'|$, then the remaining mass from location $|L'|$ is transferred to location two in an analogous way. We then repeat this process of transferring mass from locations with lower payoffs $p_l$ to those with higher payoffs $p_l$ in the set $L'$ until there is at most one location remaining such that $\Tilde{\sigma}_{l'}^* \in \left( 0, \frac{d_{l'}}{d_{l'}+k} \right)$, where $\Tilde{\sigmaa}^*$ is the administrator strategy constructed by the above procedure.

Having constructed the strategy $\Tilde{\sigmaa}^*$ in the above manner, it is then straightforward to see that $P_R^-(\Tilde{\sigmaa}^*) \leq P_R^-(\sigmaa^*)$ as:
\begin{align*}
    P_R^-(\sigmaa^*) &= \sum_{l \in L} (1-\sigma_l^*) y_l(\sigmaa^*) p_l, \\
    &= \sum_{l \in L'} (1-\sigma_l^*) y_l(\sigmaa^*) p_l + \sum_{l \in L \backslash L'} (1-\sigma_l^*) y_l(\sigmaa^*) p_l , \\
    &\stackrel{(a)}{=} \sum_{l \in L'} (1-\sigma_l^*) y_l(\sigmaa^*) p_l + \sum_{l \in L \backslash L'} (1-\Tilde{\sigma}_l^*) y_l(\Tilde{\sigmaa}^*) p_l , \\
    &\stackrel{(b)}{\geq} \sum_{l \in L'} (1-\Tilde{\sigma}_l^*) y_l(\sigmaa^*) p_l + \sum_{l \in L \backslash L'} (1-\Tilde{\sigma}_l^*) y_l(\Tilde{\sigmaa}^*) p_l , \\
    &\stackrel{(c)}{\geq} \sum_{l \in L'} (1-\Tilde{\sigma}_l^*) y_l(\Tilde{\sigmaa}^*) p_l + \sum_{l \in L \backslash L'} (1-\Tilde{\sigma}_l^*) y_l(\Tilde{\sigmaa}^*) p_l , \\
    &= P_R^-(\Tilde{\sigmaa}^*),
\end{align*}
where (a) follows as $\sigma_l^* = \Tilde{\sigma}_l^*$ for all $l \in L \backslash L'$, (b) follows by our algorithmic procedure of constructing $\Tilde{\sigmaa}^*$ as we transfer mass from locations with lower payoffs $p_l$ to those with higher payoffs $p_l$ in the set $L'$, and (c) follows as $y_l(\Tilde{\sigmaa}^*) \leq 1 = y_l(\sigmaa^*)$ for all $l \in L'$.

The above analysis establishes our claim that there exists an administrator strategy $\Tilde{\sigmaa}^*$ that is a solution to Problem~\eqref{eq:admin-obj-fraud}-\eqref{eq:bi-level-con-fraud} such that for at most one location $l'$, it holds that $\Tilde{\sigma}_{l'}^* \in \left( 0, \frac{d_{l'}}{d_{l'}+k} \right)$, which establishes our claim.

\subsection{Proof of Theorem~\ref{thm:npHardness-swm-fm}} \label{apdx:pf-np-hardness-swm-fm}


We prove this result through a reduction from an instance of the partition problem. A partition instance consists of a sequence of numbers $a_1, \ldots, a_n$ with $\sum_{l \in [n]} a_l = A$ and involves the task of deciding whether there is some subset $S_1$ of numbers such that $\sum_{l \in S_1} a_l = \frac{A}{2}$. Without loss of generality, we consider a partition instance where $\max_{l \in [n]} a_l \leq \frac{A}{2}$. Note that if $\max_{l \in [n]} a_l > \frac{A}{2}$, then clearly, there is no subset $S_1$ of numbers such that $\sum_{l \in S_1} a_l = \frac{A}{2}$, i.e., such an instance of partition can be solved in polynomial time.

We now construct an instance of the payoff maximization problem. In particular, we consider an instance with $n+1$ locations, where the first $n$ locations correspond to each number of the partition instance, where we define $p_l = a_l$ and let $\frac{d_l}{d_l+k} = \frac{a_l}{A}$ for all locations $l \in [n]$. Furthermore, we consider a location $n+1$ such that $p_{n+1} = \max_{l \in [n]} p_l + \epsilon$ and $\frac{d_{n+1}}{d_{n+1}+k} = \frac{1}{2} + \delta$, where $\epsilon>0$ and $\delta>0$ can be interpreted as small constants. We choose $\epsilon$ such that $\max_{l \in [n]} a_l + \epsilon < A$, which is well defined as we consider partition instances where $\max_{l \in [n]} a_l \leq \frac{A}{2}$. Finally, we let the total resource budget $R = \frac{1}{2}$. Given this instance of the payoff maximization problem, we now show that a sequence of numbers correspond to a ``Yes'' instance of the partition problem if and only if the objective of Problem~\eqref{eq:admin-obj-fraud}-\eqref{eq:bi-level-con-fraud} for above defined instance is at most $\frac{A}{2} + v_{n+1}$, i.e., the total payoff is at least $\frac{A}{2}$.

($\implies$:) We first suppose that we have a ``Yes'' instance of the partition problem, i.e., there exists a set $S$ of numbers such that $\sum_{l \in S} a_l = \frac{A}{2}$. In this case, we present a method to construct a feasible resource allocation strategy $\sigmaa$ of the administrator that achieves $P_R(\sigmaa) \geq \frac{A}{2}$. In particular, we consider an allocation $\sigmaa$ for the above defined instance of the payoff maximization problem, where $\sigma_l = \frac{d_l}{d_l+k}$ units of resources are allocated to all locations $l \in S$ and $\sigma_l = 0$ for all $l \in [n+1] \backslash S$. Such an allocation is feasible for this instance of the payoff maximization problem as:
\begin{align*}
    \sum_{l \in [n+1]} \sigma_l = \sum_{l \in S} \frac{d_l}{d_l+k} = \sum_{l \in S} \frac{a_l}{A} = \frac{1}{A} \sum_{l \in S} a_l = \frac{1}{A} \times \frac{A}{2} = \frac{1}{2} = R,
\end{align*}
i.e., the resource constraint is satisfied.

Furthermore, the objective of Problem~\eqref{eq:admin-obj-fraud}-\eqref{eq:bi-level-con-fraud} under the strategy $\sigmaa$ is given by 
\begin{align*}
    P_R^-(\sigmaa) &= \sum_{l \in [n+1]} (1 - \sigma_l) y_l(\sigmaa) p_l, \\
    &\stackrel{(a)}{=} \sum_{l \in S} (1 - \frac{d_l}{d_l+k}) y_l(\sigmaa) p_l + \sum_{l \in [n+1] \backslash S} (1 - 0) y_l(\sigmaa) p_l, \\
    &\stackrel{(b)}{=} \sum_{l \in [n] \backslash S} p_l + p_{n+1}, \\
    &\stackrel{(c)}{=} \frac{A}{2} + p_{n+1},
\end{align*}
where (a) follows as $\sigma_l = \frac{d_l}{d_l+k}$ for all $l \in S$ and $\sigma_l = 0$ for all $l \in [n+1] \backslash S$, (b) follows as $y_l(\sigmaa) = 0$ when $\sigma_l = \frac{d_l}{d_l+k}$ and $y_l(\sigmaa) = 1$ when $\sigma_l = 0$ by the best-response of users in Equation~\eqref{eq:best-response-users}, and (c) follows as $\sum_{l \in [n] \backslash S} p_l = \frac{A}{2}$. The above relation implies that the payoff of the allocation $\sigmaa$ is given by
\begin{align*}
    P_R(\sigmaa) = \sum_{l \in [n+1]} p_l -  P_R^-(\sigmaa) = A + p_{n+1} - \left( \frac{A}{2} + p_{n+1} \right) = \frac{A}{2}.
\end{align*}
The above relation implies that $P_R(\sigmaa) \geq \frac{A}{2}$, and thus we have shown that a ``Yes'' instance of the partition problem implies that the administrator's optimal payoff for the above defined instance is at least $\frac{A}{2}$, which establishes the forward direction of our claim.

($\impliedby$:) Next, suppose that the optimal resource allocation strategy $\sigmaa$ of the administrator is such that $P_R(\sigmaa) \geq \frac{A}{2}$, i.e., $P_R^-(\sigmaa) \leq \frac{A}{2} + p_{n+1}$ where $\sigmaa \in \Omega_R$ satisfies $\sum_{l \in [n+1]} \sigma_l \leq R = \frac{1}{2}$. From the structure of the optimal administrator strategy established in Proposition~\ref{prop:opt-mixed-strategy-solution}, we note that the objective of Problem~\eqref{eq:admin-obj-fraud}-\eqref{eq:bi-level-con-fraud} is given by:
\begin{align} \label{eq:helper-wr-minus-relation}
    P_R^-(\sigmaa) = \sum_{l \in L_2} p_l + (1-\sigma_{l'} ) p_{l'},
\end{align}
for some location $l'$ and a set of locations $L_2$, where it holds that $\sum_{l \in [n+1] \backslash L_2} \sigma_l \leq \frac{1}{2}$. Then, we show that we have a ``Yes'' instance of the partition problem by considering two cases: (i) $l' = \emptyset$ and (ii) $l' \neq \emptyset$. 

\textbf{Case (i):} In the setting when $l' = \emptyset$, without loss of generality, for ease of notation, we define $L_2 \cup \{ n+1\}$ as the set of locations for which $\sigma_l = 0$ and $L_1$ as the set of locations for which $\sigma_l = \frac{d_l}{d_l+k}$. Note here that as $R = \frac{1}{2} < \frac{1}{2}+\delta = \frac{d_{n+1}}{d_{n+1}+k}$ it follows for location $n+1$ that $\sigma_{n+1} = 0$ as $l' = \emptyset$. Next, for our payoff maximization instance, we note by assumption that the following two inequalities are satisfied: (i) $\sum_{l \in L_2 \cup \{ n+1\}} p_l \leq \frac{A}{2} + p_{n+1}$ and (ii) $\sum_{l \in L_1} \frac{a_l}{A} = \sum_{l \in L_1} \frac{d_l}{d_l+k} \leq \frac{1}{2}$. From the first inequality, we have that
\begin{align*}
    \sum_{l \in L_1} a_l = \sum_{l \in L_1} p_l = \sum_{l \in L} p_l - \sum_{l \in L_2 \cup \{ n+1\}} p_l \geq \frac{A}{2},
\end{align*}
where the inequality follows as $\sum_{l \in [n+1]} a_l = A + p_{n+1}$ and the fact that $\sum_{l \in L_2 \cup \{ n+1\}} p_l \leq \frac{A}{2} + p_{n+1}$. Combining this derived inequality with the resource constraint that $\sum_{l \in L_1} \frac{a_l}{A} \leq \frac{1}{2}$, it follows that $\sum_{l \in L_1} a_l = \frac{A}{2}$, i.e., we have a ``Yes'' instance of the partition problem.

\textbf{Case (ii):} In the setting when $l' \neq \emptyset$, we first show that it must follow that $l' = \{ n+1\}$. At a high-level, this result follows as location $n+1$ has the highest payoff of $p_l$ among all the locations. We now proceed by contradiction to establish this claim by supposing that $l' \neq \{ n+1\}$. Next, note that by our constructed instance that $\frac{d_{n+1}}{d_{n+1}+k} = \frac{1}{2} + \delta > R$ and thus the location $\{n+1\} \in L_2$. 

To derive our contradiction, we consider an allocation $\Tilde{\sigmaa}$ where $\Tilde{\sigma}_{l'} = 0$ and $\Tilde{\sigma}_{n+1} = \sigma_{l'}$. Note that $\sigma_{l'} \in \left( 0, \frac{d_{l^{'}}}{d_{l^{'}}+k}\right)$ and thus it also holds that $\sigma_{l'} \in \left( 0, \frac{d_{n+1}}{d_{n+1}+k}\right)$, as $\frac{d_{n+1}}{d_{n+1}+k} > \frac{1}{2} \geq \frac{\max_{l \in [n]} a_l}{A} = \frac{d_{l^{'}}}{d_{l^{'}}+k}$. Clearly the new allocation $\Tilde{\sigmaa}$ is feasible as $\sigmaa$ is feasible. Next, we observe that
\begin{align*}
    P_R^-(\sigmaa) &\stackrel{(a)}{=} \sum_{l \in L_2} p_l + (1-\sigma_{l'} ) p_{l'} = \sum_{l \in L_2 \backslash \{ n+1\}} p_l + p_{n+1} + (1-\sigma_{l'} ) p_{l'}, \\
    &\stackrel{(b)}{>} \sum_{l \in L_2 \cup \{ l'\}} p_l + (1-\sigma_{l'} ) p_{n+1}, \\
    &\stackrel{(c)}{=} \sum_{l \in L_2 \cup \{ l'\}} p_l + (1-\Tilde{\sigma}_{n+1} ) p_{n+1}, \\
    &\stackrel{(d)}{=} P_R^-(\Tilde{\sigmaa}),
\end{align*}
where (a) follows from Equation~\eqref{eq:helper-wr-minus-relation}, (b) follows as $p_{n+1} = \max_{l \in [n]} v_l + \epsilon > p_{l'}$, (c) follows as $\Tilde{\sigma}_{n+1} = \sigma_{l'}$, and (d) follows by the definition of $P_R^-(\Tilde{\sigmaa})$ and the fact that the only change in the allocation between $\sigmaa$ and $\Tilde{\sigmaa}$ is for locations $l'$ and $n+1$. The above relations imply that the allocation strategy $\Tilde{\sigmaa}$ achieves a lower objective for Problem~\eqref{eq:admin-obj-fraud}-\eqref{eq:bi-level-con-fraud} than the strategy $\sigmaa$, a contradiction. Thus, it follows that if $l' \neq \emptyset$, then it must be that $l' = \{ n+1\}$ for our above defined instance of the payoff maximization problem.




Next, recall that $L_2$ are the set of locations for which $\sigma_l = 0$ and $L_1$ are the set of locations for which $\sigma_l = \frac{d_l}{d_l+k}$. Then, for our payoff maximization instance, we have that the following two inequalities hold:
\begin{align}
    &\sum_{l \in L_2} p_l + (1-\sigma_{n+1}) p_{n+1} \leq \frac{A}{2} + p_{n+1}, \label{eq:helper-1-} \\
    &\sum_{l \in L_2} \frac{d_l}{d_l+k} + \sigma_{n+1} \leq \frac{1}{2}. \label{eq:helper-2-}
\end{align}
The first inequality represents the fact that the optimal objective of the Problem~\eqref{eq:admin-obj-fraud}-\eqref{eq:bi-level-con-fraud} is at most $\frac{A}{2} + p_{n+1}$ and the second inequality implies that the total resource spending does not exceed the available resources.

Then, noting that $\sum_{l \in [n+1]} p_l = A + p_{n+1}$, we have from the Equation~\eqref{eq:helper-1-} that:
\begin{align*}
    \sum_{l \in L_1} p_l + \sigma_{n+1} p_{n+1} = \sum_{l \in [n+1]} p_l - \left( \sum_{l \in L_2} p_l + (1-\sigma_{n+1}) p_{n+1} \right) \geq A + p_{n+1} - \left( \frac{A}{2} + p_{n+1} \right) = \frac{A}{2}.
\end{align*}
Substituting $p_l = a_l$ and $p_{n+1} = \max_{l \in [n]} a_l + \epsilon$ in the above inequality, we obtain that:
\begin{align} \label{eq:modified-objective-helper-}
    \sum_{l \in L_1} a_l + \sigma_{n+1} (\max_{l \in [n]} a_l + \epsilon) \geq \frac{A}{2}.
\end{align}
Moreover, substituting $\frac{d_l}{d_l+k} = \frac{a_l}{A}$ in Equation~\eqref{eq:helper-2-}, we obtain that:
\begin{align} \label{eq:mnodified-resource-helper-}
    \sum_{l \in L_1} a_l + \sigma_{n+1} A \leq \frac{A}{2}.
\end{align}
Combining Equations~\eqref{eq:modified-objective-helper-} and~\eqref{eq:mnodified-resource-helper-}, we obtain that:
\begin{align*}
    \sigma_{n+1} (\max_{l \in [n]} a_l + \epsilon) \geq \sigma_{n+1} A.
\end{align*}
Note that this inequality can only be satisfied if either $\sigma_{n+1} = 0$ or $\max_{l \in [n]} a_l + \epsilon \geq A = \sum_{l \in [n]} a_l$. Note that the latter inequality cannot be satisfied as $\epsilon$ has been chosen to be a small constant such that $\max_{l \in [n]} a_l + \epsilon < A$, where, recall that $\max_{l \in [n]} a_l \leq \frac{A}{2}$. As a result, the above relation implies $\sigma_{n+1} = 0$, which, from Equations~\eqref{eq:modified-objective-helper-} and~\eqref{eq:mnodified-resource-helper-}, implies the following two relations:
\begin{align*}
    \sum_{l \in L_1} a_l \geq \frac{A}{2} \text{ and } \sum_{l \in L_1} a_l \leq \frac{A}{2},
\end{align*}
which together imply that $\sum_{l \in L_1} a_l = \frac{A}{2}$. Thus, again, we have a ``Yes'' instance of partition, which establishes our claim.

\subsection{Proof of Theorem~\ref{thm:greedy-half-approx-fraud-min}} \label{apdx:pf-greedy-half-approx-welfare}

\subsubsection{Proof Overview} \label{apdx:sketch-pf-thm-greedy-half-approx-welfare}

We first define a linear program resembling a fractional knapsack like optimization and leverage Proposition~\ref{prop:opt-mixed-strategy-solution} to show that the optimal administrator payoff is upper bounded by the optimal objective of this linear program. Next, since the optimal objective of the fractional knapsack problem satisfies $\sum_{l \in S} p_l + x_{\Tilde{l}} p_{\Tilde{l}}$ for some subset of locations $S$ and a location $\Tilde{l}$ with $x_{\Tilde{l}} \leq \frac{d_{\Tilde{l}}}{d_{\Tilde{l}}+k}$, to establish the desired half approximation, we show that (i) $P_R(\sigmaa') \geq x_{\Tilde{l}} p_{\Tilde{l}}$ and (ii) $P_R(\Tilde{\sigmaa}) \geq \sum_{l \in S} p_l$, where the allocations $\sigmaa'$ and $\Tilde{\sigmaa}$ are as defined in Algorithm~\ref{alg:GreedyFraudminimizationDeterministic}. Notice that the proof of claim (i) is by construction as $\sigmaa'$ by definition is chosen to maximize the payoff from spending on a single location. The proof of claim (ii) relies on the fact that the greedy algorithm in Step 1 on Algorithm~\ref{alg:GreedyFraudminimizationDeterministic} precisely corresponds to the greedy algorithm to optimize the knapsack linear program. Finally, combining the results of claims (i) and (ii), the desired result follows.

\subsubsection{Complete Proof of Theorem~\ref{thm:greedy-half-approx-fraud-min}}

To prove this claim, we first recall from Proposition~\ref{prop:opt-mixed-strategy-solution} that the payoff maximizing objective of the administrator at the solution $\sigmaa^*$ is given by $P_R(\sigmaa^*) = \sum_{l \in L_1} p_l + \sigma_{\Tilde{l}}^* p_{\Tilde{l}}$ for some subset of locations $L_1$ where $\sigma_l^* = \frac{d_l}{d_l+k}$ for all $l \in L_1$ and a location $\Tilde{l}$ such that $\sum_{l \in L_1} \frac{d_l}{d_l+k} + \sigma_{\Tilde{l}}^* \leq R$. Furthermore, we consider the following knapsack optimization problem:
\begin{align} \label{eq:fractional-knapsack-ub}
    G_R' = \max_{0 \leq x_l \leq 1} \sum_{l \in L} x_l p_l \text{ s.t. } \sum_{l \in L} t_l x_l \leq R.
\end{align}
Then, to prove this claim, we proceed in two steps. First, we show that the optimal payoff-maximizing objective of the administrator is no more than $G_R'$, the optimal objective corresponding to Problem~\eqref{eq:fractional-knapsack-ub}. Then, we use this result and the property of the optimal payoff-maximizing objective of the administrator from Proposition~\ref{prop:opt-mixed-strategy-solution} to establish the half approximation guarantee of Algorithm~\ref{alg:GreedyFraudminimizationDeterministic}.

\paragraph{Proof of $P_R(\sigmaa^*) \leq G_R'$:} To prove this claim, first note by the structure of the optimal solution established in Proposition~\ref{prop:opt-mixed-strategy-solution} that $\sigma^*_l \geq 0$ and $\sigma^*_l \leq \frac{d_l}{d_l+k}$ for all locations $l$, and by the feasibility of $\sigmaa^*$ that $\sum_{l \in L} \sigma^*_l \leq R$. Then, we transform $\sigmaa^*$ to a feasible solution of Problem~\eqref{eq:fractional-knapsack-ub} by defining $x_l = \frac{1}{t_l} \sigma_l^*$ for all $l$. Note that $\x = (x_l)_{l \in L}$ is a feasible solution to Problem~\eqref{eq:fractional-knapsack-ub} as, by definition of $\x$ it holds that $0 \leq x_l \leq 1$, and the resource constraint is satisfied:
\begin{align*}
    \sum_{l \in L} t_l x_l = \sum_{l \in L} t_l \frac{1}{t_l} \sigma_l^* = \sum_{l \in L} \sigma_l^* \leq R.
\end{align*}
Next, to show that $G_R' \geq P_R(\sigmaa^*)$, we have:
\begin{align*}
    G_R' &\stackrel{(a)}{\geq} \sum_{l \in L} x_l p_l \stackrel{(b)}{=} \sum_{l \in L_1} \frac{1}{t_l} \sigma_l^* p_l + \frac{1}{t_{\Tilde{l}}} \sigma_{\Tilde{l}}^* p_{\Tilde{l}} \stackrel{(c)}{\geq} \sum_{l \in L_1} p_l + \sigma_{\Tilde{l}}^* p_{\Tilde{l}} \stackrel{(d)}{=} P_R(\sigmaa^*),
\end{align*}
where (a) follows from the fact that $\x = (x_l)_{l \in L}$ is one of the feasible solutions to Problem~\eqref{eq:fractional-knapsack-ub}, (b) follows from the fact that $x_l = \frac{1}{t_l} \sigma^*_l$ and that $\sigma^*_l = 0$ for $l \notin L_1 \cup \{ \Tilde{l} \}$, (c) follows from the fact that $\sigma_{l}^* = \frac{d_l}{d_l+k} = t_l$ for all $l \in L_1$ and that $\frac{1}{t_{\Tilde{l}}} \geq 1$, and (d) follows by the definition of $P_R(\sigmaa^*)$. Thus, we have shown that $G_R' \geq P_R(\sigmaa^*)$.

\paragraph{Proving the Approximation Ratio:} Finally, to complete the proof of our claim, notice that as Problem~\eqref{eq:fractional-knapsack-ub} is a fractional knapsack problem that $G_R' = \sum_{l \in S} p_l + x_{l'}^* p_{l'}$ for some subset $S$ and allocation $\x^*$, where resources are allocated to locations in the descending orders of the affordable bang-per-buck ratios. Next, note by Algorithm~\ref{alg:GreedyFraudminimizationDeterministic} that 
\begin{align} \label{eq:helper-greedy-lb}
    P_R(\sigmaa^*_A) \geq P_R(\Tilde{\sigmaa}) \geq \sum_{l \in S} p_l,
\end{align}
where recall that $\Tilde{\sigmaa}$ is the allocation corresponding to the greedy algorithm in Step 1 of Algorithm~\ref{alg:GreedyFraudminimizationDeterministic} and the second inequality follows as the greedy algorithm selects at least the first $S$ locations in the descending order of their affordable bang-per-buck values. Furthermore, by Step 2 of Algorithm~\ref{alg:GreedyFraudminimizationDeterministic} and the construction of the allocation $\sigmaa'$, it follows that 
\begin{align} \label{eq:helper-single-location-lb}
    P_R(\sigmaa^*_A) \geq P_R(\sigmaa') \geq x_{l'}^* p_{l'}.
\end{align}
Then, summing Equations~\eqref{eq:helper-greedy-lb} and~\eqref{eq:helper-single-location-lb}, we obtain that:
\begin{align*}
    2 P_R(\sigmaa^*_A) \geq \sum_{l \in S} p_l + x_{l'}^* p_{l'} = G_R' \geq P_R(\sigmaa^*),
\end{align*}
which establishes our claim that $P_R(\sigmaa^*_A) \geq \frac{1}{2} P_R(\sigmaa^*)$, i.e., Algorithm~\ref{alg:GreedyFraudminimizationDeterministic} is a half approximation to the optimal solution to the bi-level Program~\eqref{eq:admin-obj-fraud}-\eqref{eq:bi-level-con-fraud}.

\subsection{Proof of Theorem~\ref{thm:greedy-resource-augmentation-frauad-min}} \label{apdx:pf-resource-augmentation-welfare-max}

To prove this result, we again consider the fractional knapsack Problem~\eqref{eq:fractional-knapsack-ub} defined in the proof of Theorem~\ref{thm:greedy-half-approx-fraud-min} and use the fact that the objective of the fractional knapsack Problem~\eqref{eq:fractional-knapsack-ub} upper bounds the optimal administrator payoff, given by $P_R(\sigmaa^*)$. In particular, to establish the desired resource augmentation guarantee, we first recall from the optimal solution of the fractional knapsack problem that $G_R' = \sum_{l \in S} p_l + x_{l'}^* p_{l'}$ for some subset $S$ and allocation $\x^*$, where resources are allocated to locations in the descending orders of their affordable bang-per-buck ratios. Next, given $R+1$ resources, we define $\Tilde{\sigmaa}^{R+1}$ as the allocation corresponding to Step 1 of Algorithm~\ref{alg:GreedyFraudminimizationDeterministic} and let $\sigmaa_A^{R+1}$ be the allocation corresponding to Algorithm~\ref{alg:GreedyFraudminimizationDeterministic} with $R+1$ resources. We now show that $P_{R+1}(\sigmaa_A^{R+1}) \geq P_R(\sigmaa^*)$ to prove our claim.

To show this, we note by our greedy procedure in Step 1 of Algorithm~\ref{alg:GreedyFraudminimizationDeterministic} that $\Tilde{\sigmaa}^{R+1} \geq \Tilde{\sigmaa}$ and that $\Tilde{\sigma}_{l'}^{R+1} = \frac{d_{l'}}{d_{l'}+k}$. Such a result holds as the additional resource ensures that the administrator will allocate $ \frac{d_{l'}}{d_{l'}+k} \leq 1$ units of resources to location $l'$, as location $l'$ has the highest affordable bang-per-buck ratio among the remaining locations by definition (as $G_R' = \sum_{l \in S} p_l + x_{l'}^* p_{l'}$ for some subset $S$ and allocation $\x^*$, where resources are allocated to locations in the descending orders of the bang-per-buck). Since $\Tilde{\sigma}^{R+1}_{l'} = \frac{d_{l'}}{d_{l'}+k}$, note by the best-response Problem~\eqref{eq:best-response-users} of users that $y_{l'}(\Tilde{\sigmaa}^{R+1}) = 0$. Consequently, as $\Tilde{\sigmaa}^{R+1}$ corresponds to an allocation where resources are allocated to locations in the descending order of the affordable bang-per-buck ratios and, in particular, corresponds to an allocation where $\frac{d_{l'}}{d_{l'}+k} \leq 1$ fraction of resources are allocated to location $l'$, it follows that 
\begin{align*}
    P_{R+1}(\sigmaa_A^{R+1}) \stackrel{(a)}{\geq} P_R(\Tilde{\sigmaa}^{R+1}) \stackrel{(b)}{\geq} \sum_{l \in S} p_l + p_{l'} \stackrel{(c)}{\geq} \sum_{l \in S} p_l + x_{l'}^* p_{l'} = G_R' \stackrel{(d)}{\geq} P_R(\sigmaa^*)
\end{align*}
where (a) follows from the fact that $\sigmaa_A^{R+1}$ corresponds to the outcome with a greater payoff from steps 1 and 2 of Algorithm~\ref{alg:GreedyFraudminimizationDeterministic}, (b) follows as $\Tilde{\sigmaa}^{R+1}$ corresponds to an allocation of resources in the descending orders of the affordable bang-per-buck ratios when an administrator has an additional resource that it can spend on location $l'$ and that $y_{l'}(\Tilde{\sigmaa}^{R+1}) = 0$. Furthermore, (c) follows from the fact that $x_{l'}^* \in (0, \frac{d_{l'}}{d_{l'}+k})$ and (d) follows from our relation derived in proof of Theorem~\ref{thm:greedy-half-approx-fraud-min} that the optimal objective of the fractional knapsack Problem~\eqref{eq:fractional-knapsack-ub} is at least the payoff-maximizing objective. Thus, we have shown that $P_{R+1}(\sigmaa_A^{R+1}) \geq P_R(\sigmaa^*)$, which establishes our claim.

\subsection{Proof of Theorem~\ref{thm:ptas}} \label{apdx:pf-thm-ptas}

To prove this claim, we first recall from Proposition~\ref{prop:opt-mixed-strategy-solution} that the administrator's optimal payoff is given by: $P_R(\sigmaa^*) = \sum_{l \in L_1} p_l + \sigma_{l'}^* p_{l'}$ for some set of locations $L_1$ for which $\sigma_l^* = \frac{d_l}{d_l+k}$ for all $l \in L_1$ and at most one location $l'$ with $\sigma_{l'}^* = R - \sum_{l \in L_1} \frac{d_l}{d_l+k} \leq \frac{d_{l'}}{d_{l'}+k}$ and $\sigma_{l'}^* < \frac{d_{l'}}{d_{l'}+k}$. We proceed by analysing two cases: (i) $|L_1| \leq m$ and (ii) $|L_1| > m$.


\paragraph{Case (i) - $|L_1| \leq m$:} In this case, observe that given $\delta$ additional resources there is one feasible pair $(S, \sigma_{l'}^{\delta})$, where $S = L_1$ and $\sigma_{l'}^{\delta} = \min \{ \sigma_{l'}^* + \delta, \frac{d_{l'}}{d_{l'}+k} \}$, which is one of the pairs considered in the brute-force step of Algorithm~\ref{alg:PTASWelMaxHom}. Let $\Tilde{\sigmaa}$ be allocation corresponding to applying the greedy procedure in step two of Algorithm~\ref{alg:PTASWelMaxHom} that extends the allocation corresponding to this pair $(S, \sigma_{l'}^{\delta})$. Then, the payoff of the allocation $\Tilde{\sigmaa}$ satisfies
\begin{align*}
    P_R(\Tilde{\sigmaa}) \geq \sum_{l \in L_1} p_l +  \min \left\{ \sigma_{l'}^* + \delta, \frac{d_{l'}}{d_{l'}+k} \right\} p_{l'} \geq P_R(\sigmaa^*),
\end{align*}
where the first inequality follows as the payoff corresponding to $\Tilde{\sigmaa}$ is at least that of the allocation to the pair $(S, \sigma_{l'}^{\delta})$ and the final inequality follows as $\sigma_{l'}^* < \frac{d_{l'}}{d_{l'}+k}$. Finally, noting that the allocation $\sigmaa'$ corresponds to the maximum payoff of Algorithm~\ref{alg:PTASWelMaxHom} across all feasible pairs, it follows that $P_R(\sigmaa') \geq P_R(\Tilde{\sigmaa}) \geq P_R(\sigmaa^*)$, which establishes our claim for $|L_1| \leq m$.

\paragraph{Case (ii) - $|L_1| > m$:} In the case when $|L_1| \geq m+1$, consider a set $(S, \sigma_{l'}^{\delta})$ as follows. In particular, let $S = \{ l_1, \ldots, l_m \}$ be the $m$ locations under the optimal allocation with $\sigma_l^* = \frac{d_l}{d_l+k}$ corresponding to the $m$ highest payoffs, i.e., for each $l \in S$ and $\Tilde{l} \in L_1 \backslash S$ it holds that $p_l \geq p_{\Tilde{l}}$. Moreover, let $\sigma_{l'}^{\delta} = \min \{ \sigma_{l'}^* + \delta, \frac{d_{l'}}{d_{l'}+k} \}$ be the allocation to location $l'$. Then, observe that since all feasible pairs are considered in the brute-force step of Algorithm~\ref{alg:PTASWelMaxHom}, $(S, \sigma_{l'}^{\delta})$ corresponds to one such pair. Let $\Tilde{\sigmaa}$ be allocation corresponding to applying the greedy procedure in step two of Algorithm~\ref{alg:PTASWelMaxHom} that extends the allocation corresponding to this pair $(S, \sigma_{l'}^{\delta})$ to the remaining location. We will now show that the allocation $\Tilde{\sigmaa}$ achieves the desired approximation guarantee. 




To establish the desired approximation guarantee for the allocation $\Tilde{\sigmaa}$, we first define $L_2 = \{ l_{m+1}, \ldots, l_x \}$ to be the location set with $\sigma_{l}^* = \frac{d_l}{d_l+k}$ for all $l \in L_1 \backslash S$ under the optimal allocation, where the locations are ordered in descending order of their affordable bang-per-buck ratios. Furthermore, let $\Bar{l}$ be the first location in the set $L_2$ such that $\Tilde{\sigma}_{\Bar{l}} < \sigma_{\Bar{l}}^*$ and let $\Tilde{\sigma}^G$ denote the total number of resources that are not allocated to any location by Algorithm~\ref{alg:PTASWelMaxHom}. 
Then, consider the set of locations $S_1$ that are allocated resources under the greedy procedure in step two of Algorithm~\ref{alg:PTASWelMaxHom} with at least the bang-per-buck of location $\Bar{l}$ but not including location $\Bar{l}$ and let $S_2$ be the set of locations allocated resources by the greedy procedure in step two of Algorithm~\ref{alg:PTASWelMaxHom} with a strictly lower bang-per-buck ratio than $\Bar{l}$ and includes the location $\Bar{l}$. Then, by construction of the greedy procedure in step two of Algorithm~\ref{alg:PTASWelMaxHom} and the fact that $\Tilde{\sigma}_{\Bar{l}} < \sigma_{\Bar{l}}^*$, note that $\sum_{l \in S_2} \Tilde{\sigma}_l + \Tilde{\sigma}^G < \frac{d_{\Bar{l}}}{d_{\Bar{l}}+k}$. Note that if this inequality were not true then the greedy procedure in step two of Algorithm~\ref{alg:PTASWelMaxHom} would allocate at least $\sigma_{\Bar{l}}^* = \frac{d_{\Bar{l}}}{d_{\Bar{l}}+k}$ to location $\Bar{l}$, a contradiction.

Next, given the above definitions of the sets $S_1$ and $S_2$ and the location $\Bar{l}$, we obtain the following lower bound on the payoff of the allocation $\Tilde{\sigmaa}$:
\begin{align} \label{eq:helper-greedy-lb-ptas-approx}
    P_R(\Tilde{\sigmaa}) \geq \sum_{l = 1}^m p_l + \sigma_{l'}^* p_{l'} + \sum_{l = m+1}^{\Bar{l}-1} p_l + \sum_{l \in S_1} p_{l} + \sum_{l \in S_2} p_{l}.
\end{align}

Note further that the locations picked in the greedy step of Algorithm~\ref{alg:PTASWelMaxHom} in set $S_1$ are such that they satisfy $\frac{p_l}{\frac{d_l}{d_l+k}} \geq \frac{p_{\Bar{l}}}{\frac{d_{\Bar{l}}}{d_{\Bar{l}}+k}}$ for all $l \in S_1$, as the greedy procedure allocates resources to locations in the descending order of their bang-per-buck ratios. Consequently, it holds that:
\begin{align} \label{eq:helper-lb-ptas-approx}
    \sum_{l \in S_1} p_{l} \geq \frac{p_{\Bar{l}}}{\frac{d_{\Bar{l}}}{d_{\Bar{l}}+k}} \left(R + \delta - \Tilde{\sigma}^G - \sum_{l \in S_2} \Tilde{\sigma}_l - \sum_{l = 1}^{{\Bar{l}}-1} \Tilde{\sigma}_l - \Tilde{\sigma}_{l'} \right).
\end{align}
Then, from Equations~\eqref{eq:helper-greedy-lb-ptas-approx} and~\eqref{eq:helper-lb-ptas-approx}, we obtain the following upper bound on the optimal payoff:
\begin{align*}
    P_R(\sigmaa^*) &= \sum_{l \in L_1} p_l + \sigma_{l'}^* p_{l'} = \sum_{l = 1}^m p_l + \sigma_{l'}^* p_{l'} + \sum_{l = m+1}^{\Bar{l}-1} p_l + \sum_{l = \Bar{l}}^x p_l, \\
    &\stackrel{(a)}{\leq} \sum_{l = 1}^m p_l + \sigma_{l'}^* p_{l'} + \sum_{l = m+1}^{\Bar{l}-1} p_l + \left( R - \sum_{l = 1}^{\Bar{l}-1} \sigma_l^* - \sigma_{l'}^* \right) \frac{p_{\Bar{l}}}{\frac{d_{\Bar{l}}}{d_{\Bar{l}}+k}}, \\
    &\stackrel{(b)}{=} \sum_{l = 1}^m p_l + \sigma_{l'}^* p_{l'} + \sum_{l = m+1}^{\Bar{l}-1} p_l + \sum_{l \in S_1} p_l - \sum_{l \in S_1} p_l + \left( R - \sum_{l = 1}^{\Bar{l}-1} \sigma_l^* - \sigma_{l'}^* \right) \frac{p_{\Bar{l}}}{\frac{d_{\Bar{l}}}{d_{\Bar{l}}+k}}, \\
    &\stackrel{(c)}{\leq} \sum_{l = 1}^m p_l + \sigma_{l'}^* p_{l'} + \sum_{l = m+1}^{\Bar{l}-1} p_l + \sum_{l \in S_1} p_l - \frac{p_{\Bar{l}}}{\frac{d_{\Bar{l}}}{d_{\Bar{l}}+k}} \left(R + \delta - \Tilde{\sigma}^{G} - \sum_{l \in S_2} \Tilde{\sigma}_l - \sum_{l = 1}^{\Bar{l}-1} \Tilde{\sigma}_l - \Tilde{\sigma}_{l'} \right) \\ 
    &+ \left( R - \sum_{l = 1}^{\Bar{l}-1} \sigma_l^* - \sigma_{l'}^* \right) \frac{p_{\Bar{l}}}{\frac{d_{\Bar{l}}}{d_{\Bar{l}}+k}}, \\
    &\stackrel{(d)}{\leq} \sum_{l = 1}^m p_l + \sigma_{l'}^* p_{l'} + \sum_{l = m+1}^{\Bar{l}-1} p_l + \sum_{l \in S_1} p_l + \frac{p_{\Bar{l}}}{\frac{d_{\Bar{l}}}{d_{\Bar{l}}+k}}  \left( \Tilde{\sigma}^{G} + \sum_{l \in S_2} \Tilde{\sigma}_l  \right), \\
    &\stackrel{(e)}{\leq} \sum_{l = 1}^m p_l + \sigma_{l'}^* p_{l'} + \sum_{l = m+1}^{\Bar{l}-1} p_l + \sum_{l \in S_1} p_l + p_{\Bar{l}}, \\
    &\stackrel{(f)}{\leq} P_R(\Tilde{\sigmaa}) + p_{\Bar{l}}, 
\end{align*}
where (a) follows from the fact that all locations $\{ l_{\Bar{l}}, \ldots, l_{x}\}$ have at most the bang-per-buck ratio as location ${\Bar{l}}$, (b) follows from adding an subtracting the term $\sum_{l \in S_1} p_l$, (c) follows from Equation~\eqref{eq:helper-lb-ptas-approx}, (d) follows from the fact that $\Tilde{\sigma}_l = \sigma_l^*$ for all $l \in [\Bar{l}-1]$ for our selected subset in the brute-force step and the definition of $\Bar{l}$ along with the fact that $\Tilde{\sigma}_{l'} \in [\sigma_{l'}^*, \sigma_{l'}^* + \delta]$. Moreover, (e) follows from the fact that $\Tilde{\sigma}^{G} + \sum_{l \in S_2} \Tilde{\sigma}_l < \frac{d_{\Bar{l}}}{d_{\Bar{l}}+k}$ from our earlier analysis and (f) follows from Equation~\eqref{eq:helper-greedy-lb-ptas-approx}. 

Then, observe that $p_{\Bar{l}} \leq \frac{P_R(\sigmaa^*)}{m+1}$, as the first $m$ locations in the set $S$ all have a higher payoff than $p_{\Bar{l}}$. Consequently, rearranging the above derived inequality $P_R(\sigmaa^*) \leq P_R(\Tilde{\sigmaa}) + p_{\Bar{l}}$, it follows that $\left(1 - \frac{1}{m+1} \right) P_R(\sigmaa^*) \leq P_R(\Tilde{\sigmaa})$. Finally, noting that the allocation $\sigmaa'$ corresponds to the maximum payoff of Algorithm~\ref{alg:PTASWelMaxHom} across all feasible pairs, it follows that $P_R(\sigmaa') \geq P_R(\Tilde{\sigmaa})$. Consequently, the obtained approximation guarantee for the allocation $\Tilde{\sigmaa}$ given by $\left(1 - \frac{1}{m+1} \right) P_R(\sigmaa^*) \leq P_R(\Tilde{\sigmaa})$ implies our desired approximation guarantee that $\left(1 - \frac{1}{m+1} \right) P_R(\sigmaa^*) \leq P_R(\sigmaa')$, which establishes our claim.

\subsection{Proof of Theorem~\ref{thm:npHardness-erm}} \label{apdx:pf-np-hardness-erm}

We prove Theorem~\ref{thm:npHardness-erm} through a reduction from an instance of the partition problem. Recall that a partition instance consists of a sequence of numbers $a_1, \ldots, a_n$ with $\sum_{l \in [n]} a_l = A$ and involves the task of deciding whether there is some subset $S_1$ of numbers such that $\sum_{l \in S_1} a_l = \frac{A}{2}$. 

In the following, we construct an instance of the heterogeneous revenue maximization problem (HRMP), henceforth referred to as HRMP for brevity, based on a partition instance and present four intermediate lemmas to complete the proof of Theorem~\ref{thm:npHardness-erm}. We then present the detailed proofs of these lemmas in later subsections.

To this end, we begin by constructing an instance of the HRMP problem with with two types, i.e., where $|\I| = 2$, and $n$ locations, where each number $a_l$ corresponds to a given location. In this setting, for ease of exposition, we drop the fine $k$ from the revenue maximization Objective~\eqref{eq:admin-obj-revenue} of the administrator as this is a uniform constant that applies to all locations $l$ and types $i$. Then, we define the following four quantities: (i) $\frac{d_l^1}{d_l^1+k} = \frac{1}{2} \frac{a_l}{A}$, (ii) $\frac{d_l^2}{d_l^2+k} = \frac{a_l}{A}$, (iii) $\Lambda_l^1 = A \left( 2 \frac{\max_{l' \in [n]} a_{l'}}{a_l} - 1 \right)$, and (iv) $\Lambda_l^2 = A \left(1+ \frac{2 \max_{l' \in [n]} a_{l'}}{a_l} \right)$ for all $l \in [n]$. Moreover, we define the resource budget as $R = \frac{3}{4}$.

Given this instance of HRMP, we claim that we have a ``Yes'' instance of partition if and only if there is a feasible allocation of resources to locations with total revenue that is at least $\frac{A}{2} + 2 n \times \max_{l \in [n]} a_l$ for the above defined HRMP instance.

To establish this claim, we first prove its forward direction that if we have a ``Yes'' instance of partition, then there exists a feasible allocation of resources to locations with total revenue that is at least $\frac{A}{2} + 2 n \times \max_{l \in [n]} a_l$ for the above defined HRMP instance.

\begin{lemma} [Forward Direction of NP-Hardness of HRMP] \label{lem:forward-partition-to-ermp}
Consider a partition instance with a sequence of numbers $a_1, \ldots, a_n$ with $\sum_{l \in [n]} a_l = A$, such that there exists a subset $S_1$ of numbers such that $\sum_{l \in S_1} a_l = \frac{A}{2}$. Then, there exists a feasible allocation of resources to locations with total expected revenue that is at least $\frac{A}{2} + 2 n \times \max_{l \in [n]} a_l$ for the above defined HRMP instance.
\end{lemma}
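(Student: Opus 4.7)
The plan is to prove the forward direction by exhibiting an explicit feasible allocation, built directly from the partition witness $S_1$, and then computing its revenue in closed form. Concretely, given $S_1 \subseteq [n]$ with $\sum_{l \in S_1} a_l = A/2$, I would set
\[
\sigma_l \;=\; \frac{d_l^2}{d_l^2+k} \;=\; \frac{a_l}{A} \quad \text{for } l \in S_1,
\qquad
\sigma_l \;=\; \frac{d_l^1}{d_l^1+k} \;=\; \frac{1}{2}\,\frac{a_l}{A} \quad \text{for } l \in [n]\setminus S_1.
\]
The first step is a direct feasibility check against the budget $R = 3/4$:
\[
\sum_{l \in [n]} \sigma_l \;=\; \frac{1}{A}\sum_{l \in S_1} a_l \;+\; \frac{1}{2A}\sum_{l \notin S_1} a_l \;=\; \frac{1}{2} + \frac{1}{4} \;=\; \frac{3}{4} \;=\; R.
\]

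The second step is to evaluate the revenue location-by-location using the threshold best-response~\eqref{eq:best-response-users-rev-max-prob} (with the strong Stackelberg tie-breaking $y_l^i(\sigmaa)=1$ at the threshold, exactly as in Section~\ref{subsec:deterministic-revenue-max}). For $l \in S_1$, since $\sigma_l > d_l^1/(d_l^1+k)$, type 1 is deterred ($y_l^1=0$), while $\sigma_l = d_l^2/(d_l^2+k)$ gives $y_l^2 = 1$; the contribution is $\sigma_l \Lambda_l^2 = \frac{a_l}{A}\cdot A\bigl(1 + \tfrac{2\max_{l'}a_{l'}}{a_l}\bigr) = a_l + 2\max_{l'}a_{l'}$. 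For $l \notin S_1$, both types commit fraud ($y_l^1=1$ at the threshold, $y_l^2=1$ strictly below the type-2 threshold), so the contribution is $\sigma_l(\Lambda_l^1 + \Lambda_l^2) = \frac{a_l}{2A}\cdot A \cdot \tfrac{4\max_{l'}a_{l'}}{a_l} = 2\max_{l'}a_{l'}$.

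Summing over locations then gives
\[
Q_R(\sigmaa) \;=\; \sum_{l \in S_1}\bigl(a_l + 2\max_{l'}a_{l'}\bigr) \;+\; \sum_{l \notin S_1} 2\max_{l'}a_{l'} \;=\; \frac{A}{2} + 2n \cdot \max_{l \in [n]} a_l,
\]
which is the claimed bound. There is no real obstacle in this direction; the whole content sits in the gadget design. The parameters $\Lambda_l^1$ and $\Lambda_l^2$ were chosen precisely so that the revenue of each ``off'' location (where only type-1 is at threshold) collapses to $2\max a_{l'}$ independent of $a_l$, and each ``on'' location contributes exactly $a_l + 2\max a_{l'}$. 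So the main thing to be careful about when writing this up is just the two algebraic simplifications above and the invocation of the strong Stackelberg tie-breaking rule that makes $y_l^2 = 1$ at $\sigma_l = d_l^2/(d_l^2+k)$ for $l \in S_1$ and $y_l^1 = 1$ at $\sigma_l = d_l^1/(d_l^1+k)$ for $l \notin S_1$.
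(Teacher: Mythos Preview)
Your proposal is correct and matches the paper's proof essentially line for line: the same allocation $\sigma_l = a_l/A$ on $S_1$ and $\sigma_l = \tfrac{1}{2}a_l/A$ off $S_1$, the same feasibility arithmetic summing to $3/4$, and the same per-location revenue calculation yielding $a_l + 2\max_{l'} a_{l'}$ on $S_1$ and $2\max_{l'} a_{l'}$ off $S_1$. Your explicit invocation of the strong Stackelberg tie-breaking at the thresholds is exactly what the paper uses implicitly via Equation~\eqref{eq:best-response-users-rev-max-prob}.
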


To prove this claim, we present a method to construct a feasible administrator strategy $\sigmaa$ for the above defined HRMP instance and show that the revenue of this administrator strategy satisfies $Q_R(\sigmaa) \geq \frac{A}{2} + 2 n \times \max_{l \in [n]} a_l$. For a proof of Lemma~\ref{lem:forward-partition-to-ermp}, see Appendix~\ref{apdx:pf-lem-forward-partition-to-ermp}.

Next, to prove the reverse direction of this claim, we suppose that the optimal resource allocation strategy corresponding to the above defined HRMP instance is such that the revenue $Q_R(\sigmaa) \geq \frac{A}{2} + 2 n \times \max_{l \in [n]} a_l$ and the total resources $\sum_{l \in [n]} \sigma_l \leq \frac{3}{4} = R$. To prove that such a setting corresponds to a ``Yes'' instance of partition, we use three intermediate lemmas. Our first two lemmas establish certain structural properties of the administrator's revenue maximizing solution. In particular, we first establish a lower bound on the resources allocated to each location under the revenue maximizing strategy of the administrator, as is elucidated through the following lemma.

\begin{lemma} [Lower Bound on Administrator Allocation Strategy] \label{lem:back-direction-lower-bound-admin-allocation}
Suppose that there is some feasible administrator strategy $\sigmaa$ that satisfies $Q_R(\sigmaa) \geq \frac{A}{2} + 2 n \times \max_{l \in [n]} a_l$ for $R = \frac{3}{4}$. Then, there exists a revenue maximizing administrator strategy $\Tilde{\sigmaa}^*$ for the above defined HRMP instance such that $\Tilde{\sigma}_l^* \geq \frac{d_l^1}{d_l^1+k}$ for all locations $l$.
\end{lemma}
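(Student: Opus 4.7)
The plan is to upper bound the administrator's revenue by a concave relaxation of $Q_R$ and exploit its tightness conditions to force the desired lower bound on $\Tilde{\sigma}_l^*$. For each location $l$, I would define $\bar{r}_l(\sigma_l)$ as the piecewise-linear concave envelope of the revenue function $r_l(\sigma_l)$ at location $l$, passing through the three corners $(0,0)$, $\bigl(\tfrac{a_l}{2A},\,2\max_{l'\in[n]}a_{l'}\bigr)$, and $\bigl(\tfrac{a_l}{A},\,a_l+2\max_{l'\in[n]}a_{l'}\bigr)$, and held constant at $a_l+2\max_{l'\in[n]}a_{l'}$ thereafter. The corner values follow from $\Lambda_l^1+\Lambda_l^2=4A\max_{l'\in[n]}a_{l'}/a_l$ and $\Lambda_l^2=A(1+2\max_{l'\in[n]}a_{l'}/a_l)$. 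A direct algebraic check reduces the non-trivial case $\sigma_l\in\bigl(\tfrac{a_l}{2A},\tfrac{a_l}{A}\bigr]$ to $\bar{r}_l(\sigma_l)-r_l(\sigma_l)=(2\max_{l'\in[n]}a_{l'}-a_l)(a_l-A\sigma_l)/a_l\geq 0$; the other cases are immediate, and concavity of $\bar{r}_l$ is ensured by $4A\max_{l'\in[n]}a_{l'}/a_l\geq 2A$.

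I would next maximize $\sum_l\bar{r}_l(\sigma_l)$ over $\{\sum_l\sigma_l\leq 3/4,\,\sigma_l\geq 0\}$ by a fractional-knapsack-style greedy on the two segments at each location. The first segments have per-unit slope $4A\max_{l'\in[n]}a_{l'}/a_l>2A$ (since $a_l<2\max_{l'\in[n]}a_{l'}$), strictly exceeding the uniform second-segment slope $2A$. Greedy therefore fully packs every first segment, using total width $\sum_l\tfrac{a_l}{2A}=1/2$ for value $\sum_l2\max_{l'\in[n]}a_{l'}=2n\max_{l'\in[n]}a_{l'}$; the remaining $1/4$ of budget then fills second segments at rate $2A$, contributing $A/2$. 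Thus the relaxed optimum equals $A/2+2n\max_{l'\in[n]}a_{l'}$. The strict slope inequality further guarantees that any maximizer of the relaxation must fully pack every first segment, since underfilling location $l$'s first segment by $\epsilon>0$ and reallocating that width to second segments changes the objective by $\epsilon(2A-4A\max_{l'\in[n]}a_{l'}/a_l)<0$.

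Combining this with the hypothesis, any revenue-maximizer $\Tilde{\sigmaa}^*$ satisfies $A/2+2n\max_{l'\in[n]}a_{l'}\leq Q_R(\sigmaa)\leq Q_R(\Tilde{\sigmaa}^*)\leq\sum_l\bar{r}_l(\Tilde{\sigma}_l^*)\leq A/2+2n\max_{l'\in[n]}a_{l'}$, so all inequalities collapse to equalities. Therefore $\Tilde{\sigmaa}^*$ attains the concave relaxation's maximum, which by the previous paragraph forces $\Tilde{\sigma}_l^*\geq\tfrac{a_l}{2A}=\tfrac{d_l^1}{d_l^1+k}$ for every $l$. The main obstacle will be the careful algebraic verification that $r_l\leq\bar{r}_l$ across the downward jump of $r_l$ at the threshold $\sigma_l=\tfrac{a_l}{2A}$: the second segment of $r_l$ has slope $A(1+2\max_{l'\in[n]}a_{l'}/a_l)$, strictly larger than $\bar{r}_l$'s slope $2A$, yet $r_l$ restarts from the smaller value $\tfrac{a_l}{2}+\max_{l'\in[n]}a_{l'}$ and rejoins $\bar{r}_l$ precisely at the corner $\sigma_l=\tfrac{a_l}{A}$, which is exactly what the identity above captures.
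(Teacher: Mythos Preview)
Your proof is correct and takes a genuinely different route from the paper's. The paper argues by a direct swap: supposing some optimal $\sigmaa^*$ has $\sigma_{l'}^*<\tfrac{d_{l'}^1}{d_{l'}^1+k}$ on a set $S'$, it transfers mass from locations with $\sigma_l^*>\tfrac{d_l^1}{d_l^1+k}$ into $S'$ and checks that revenue strictly increases, comparing the per-unit return on first thresholds, $\Lambda_l^1+\Lambda_l^2=4A\max_{l'}a_{l'}/a_l$, against the at-most-$2A$ per-unit effect of second-segment spending. Your approach instead builds the concave envelope $\bar r_l$ of each location's revenue, solves the relaxed problem exactly by greedy (obtaining precisely the hypothesis value $A/2+2n\max_{l'}a_{l'}$), and collapses the chain $A/2+2n\max_{l'}a_{l'}\le Q_R(\Tilde{\sigmaa}^*)\le\sum_l\bar r_l(\Tilde{\sigma}_l^*)\le A/2+2n\max_{l'}a_{l'}$ to force $\Tilde{\sigmaa}^*$ to be a relaxation maximizer, whence first-segment saturation follows from the strict slope gap. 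The paper's route never invokes the hypothesis $Q_R(\sigmaa)\ge A/2+2n\max_{l'}a_{l'}$ and so actually proves the slightly stronger unconditional claim that \emph{every} optimum satisfies the lower bound; your route uses the hypothesis essentially but ties cleanly into the MCUA machinery the paper develops for its approximation guarantees and avoids the somewhat delicate accounting of gains and losses across the revenue function's downward jump.
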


For a proof of Lemma~\ref{lem:back-direction-lower-bound-admin-allocation}, see Appendix~\ref{apdx:pf-back-direction-lb-admin}. Given the result obtained in Lemma~\ref{lem:back-direction-lower-bound-admin-allocation}, for the remainder of this proof, we focus on revenue maximizing administrator strategies $\sigmaa$ that satisfy $\sigma_l \geq \frac{d_l^1}{d_l^1+k}$ for all locations $l$ for our above defined HRMP instance.


Next, following a similar analysis in the proof of Proposition~\ref{prop:opt-mixed-strategy-solution}, we can establish the following property on the optimal solution of Problem~\eqref{eq:admin-obj-revenue}-\eqref{eq:bi-level-con-revenue}.


\begin{lemma} [Structure of Revenue-Maximizing Solution] \label{lem:reverse-direction-opt-sol-ermp}
Suppose that users at each location best-respond based on Equation~\eqref{eq:best-response-users-rev-max-prob}. Then, there exists a revenue maximizing strategy $\Tilde{\sigmaa}^*$ of the administrator corresponding to the solution of Problem~\eqref{eq:admin-obj-revenue}-\eqref{eq:bi-level-con-revenue} such that:
\begin{itemize}
    \item There exists a set of locations $L_1$ such that $\Tilde{\sigma}_l^* = \frac{d_l^{i_l}}{d_l^{i_l}+k}$ for all $l \in L_1$ for some type $i_l \in \I$ that can be location specific, a set of locations $L_2$ such that $\Tilde{\sigma}_l^* = 0$ for all $l \in L_2$, and at most one location $l'$ such that either $\Tilde{\sigma}_{l'}^* \in (0, \frac{d_{l'}^{1}}{d_{l'}^{1}+k})$ or $\Tilde{\sigma}_{l'}^* \in (\frac{d_{l'}^{i-1}}{d_{l'}^{i-1}+k}, \frac{d_{l'}^{i}}{d_{l'}^{i}+k})$ for some $i>1$. Here, $L_1$, $L_2$, and $l'$ are all disjoint and $L_1 \cup L_2 \cup \{ l'\} = L$.
\end{itemize}
\end{lemma}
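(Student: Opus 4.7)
The plan is to mirror the structure of the proof of Proposition~\ref{prop:opt-mixed-strategy-solution}, adapted to handle the piecewise-linear, discontinuous revenue function that arises with heterogeneous types. The key observation, which I would establish first, is that the per-location revenue as a function of $\sigma_l$ is piecewise linear on the pieces $[0,\frac{d_l^1}{d_l^1+k}]$ and $(\frac{d_l^{i-1}}{d_l^{i-1}+k},\frac{d_l^i}{d_l^i+k}]$ for $i>1$, with slope $c_l^i := k \sum_{i' \geq i} \Lambda_l^{i'}$ on the $i$-th piece (using the best-response convention $y_l^i(\sigmaa)=1$ at the threshold from Equation~\eqref{eq:best-response-users-rev-max-prob}). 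Each piece is left-continuous and increasing on its interval, and the revenue jumps downward immediately above each threshold since one user type drops out. As a preliminary, I would argue that WLOG $\tilde{\sigma}_l^* \leq \frac{d_l^{|\I|}}{d_l^{|\I|}+k}$ for every location, since allocating more yields zero revenue at $l$ while consuming resources that could be reallocated; truncating $\sigma_l$ down to this value weakly improves the objective and preserves feasibility.

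The heart of the argument is an exchange / water-filling step that eliminates all but one ``fractional'' location (i.e., a location whose allocation sits strictly inside the open interior of some piece). I would proceed by contradiction: assume a revenue-maximizing $\sigmaa^*$ has two fractional locations $l_1, l_2$ lying in pieces with slopes $c_{l_1}^{i_1}$ and $c_{l_2}^{i_2}$, and WLOG $c_{l_1}^{i_1} \geq c_{l_2}^{i_2}$. Transferring an infinitesimal $\epsilon>0$ of mass from $l_2$ to $l_1$ changes total revenue by $\epsilon (c_{l_1}^{i_1} - c_{l_2}^{i_2}) \geq 0$, so the total revenue weakly improves. I would then continue the transfer until one of two events occurs: (a) $\sigma_{l_1}$ reaches the right endpoint $\frac{d_{l_1}^{i_1}}{d_{l_1}^{i_1}+k}$ of its piece, so $l_1$ is now at a threshold and therefore no longer fractional; or (b) $\sigma_{l_2}$ reaches the left endpoint of its piece, which is either $0$ (putting $l_2$ into $L_2$) or a threshold $\frac{d_{l_2}^{i_2-1}}{d_{l_2}^{i_2-1}+k}$ (putting $l_2$ into $L_1$ and in fact strictly increasing revenue because of the upward jump at the threshold). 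In all cases the number of fractional locations strictly decreases while the objective weakly (and in some subcases strictly) improves. Iterating yields a revenue-maximizing allocation with at most one fractional location $l'$, and setting $L_1 := \{l : \tilde{\sigma}_l^* = \frac{d_l^{i_l}}{d_l^{i_l}+k} \text{ for some } i_l\}$ and $L_2 := \{l : \tilde{\sigma}_l^* = 0\}$ gives the desired partition.

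The main obstacle is ensuring the exchange terminates cleanly in the presence of the downward discontinuities above each threshold. The subtle case is case (b) with $i_2 > 1$, where one might worry that pushing $\sigma_{l_2}$ below $\frac{d_{l_2}^{i_2-1}}{d_{l_2}^{i_2-1}+k}$ would create a new fractional location in the previous piece. I would resolve this by stopping the transfer exactly at the threshold: the left-continuity of the revenue function at each threshold (recorded by the tiebreaking convention) means that stopping at the threshold records the \emph{higher} revenue value, so this is not only safe but strictly beneficial. A symmetric concern in case (a) is absent, since the right endpoint of a piece is included in that piece and is itself a threshold. Together with the preliminary truncation step, these two observations give the claimed structure, completing the proof.
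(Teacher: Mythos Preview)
Your proposal is correct and takes essentially the same approach as the paper, which explicitly says the lemma ``follows similar arguments to that in the proof of Proposition~\ref{prop:opt-mixed-strategy-solution}'' and omits the details. Your water-filling/exchange argument is exactly the adaptation of Lemma~\ref{lem:structure-opt-mixed-strategy} to the piecewise-linear revenue function, and you have handled the discontinuities at the thresholds correctly (in particular, the observation that stopping $\sigma_{l_2}$ exactly at a threshold captures the upward jump due to the tiebreaking convention in Equation~\eqref{eq:best-response-users-rev-max-prob} is the right way to resolve the only real subtlety).

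One minor remark on presentation: the phrase ``proceed by contradiction'' is slightly misleading, since your argument is actually constructive---you start from an arbitrary optimum $\sigmaa^*$ and transform it into one with the desired structure while weakly preserving the objective, exactly as in the proof of Lemma~\ref{lem:structure-opt-mixed-strategy}. The strict improvement you note in case~(b) with $i_2>1$ is a true observation but is not needed for the argument to go through; weak improvement plus a strict decrease in the fractional count suffices.
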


Since the analysis of the above result follows similar arguments to that in the proof of Proposition\ref{prop:opt-mixed-strategy-solution}, we omit it for brevity. 

Finally, we use the structure of the optimal (revenue maximizing) solution of the above defined HRMP instance established in Lemmas~\ref{lem:back-direction-lower-bound-admin-allocation} and~\ref{lem:reverse-direction-opt-sol-ermp} to show that our HRMP instance with $Q_R(\sigmaa) \geq \frac{A}{2} + 2 n \times \max_{l \in [n]} a_l$ corresponds to a ``Yes'' instance of partition.

\begin{lemma} [Reverse Direction of NP-Hardness of HRMP] \label{lem:reverse-np-hardness-ermp}
Suppose that there is some feasible administrator strategy $\sigmaa$ that satisfies $Q_R(\sigmaa) \geq \frac{A}{2} + 2 n \times \max_{l \in [n]} a_l$ for $R = \frac{3}{4}$. Then, for the above defined HRMP instance, there exists a subset $S_1$ of numbers such that $\sum_{l \in S_1} a_l = \frac{A}{2}$, i.e., we have a ``Yes'' instance of partition.
\end{lemma}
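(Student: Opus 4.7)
The plan is to combine Lemmas~\ref{lem:back-direction-lower-bound-admin-allocation} and~\ref{lem:reverse-direction-opt-sol-ermp} with a direct bookkeeping of revenue and resources at each location; the constants $\Lambda_l^1$ and $\Lambda_l^2$ were engineered precisely so that once the optimal allocation is pinned down to the type-thresholds, the resource bound $R = \tfrac{3}{4}$ and the revenue bound $\tfrac{A}{2} + 2n\max_{l''} a_{l''}$ become the two matched inequalities forcing a $\tfrac{A}{2}$-partition.

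First, I would invoke Lemma~\ref{lem:back-direction-lower-bound-admin-allocation} to restrict attention to a revenue-maximizing $\Tilde{\sigmaa}^*$ with $\Tilde{\sigma}_l^* \geq \tfrac{a_l}{2A}$ at every location, which forces the set $L_2$ in Lemma~\ref{lem:reverse-direction-opt-sol-ermp} to be empty and rules out the sub-case $\Tilde{\sigma}_{l'}^* \in (0, \tfrac{a_{l'}}{2A})$. Hence every location $l$ has $\Tilde{\sigma}_l^* \in \{\tfrac{a_l}{2A}, \tfrac{a_l}{A}\}$, except possibly a single fractional location $l'$ with $\Tilde{\sigma}_{l'}^* \in (\tfrac{a_{l'}}{2A}, \tfrac{a_{l'}}{A})$. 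A short calculation using~\eqref{eq:best-response-users-rev-max-prob} and the values of $\Lambda_l^1, \Lambda_l^2$ gives the key per-location revenues (with the common factor $k$ dropped): $2\max_{l''} a_{l''}$ at a type-1 threshold and $a_l + 2\max_{l''} a_{l''}$ at a type-2 threshold. In the main case $l' = \emptyset$, letting $S_2$ denote the set of locations at the type-2 threshold, the revenue hypothesis simplifies to $\sum_{l \in S_2} a_l \geq \tfrac{A}{2}$ and the resource constraint (using $\sum_l a_l = A$) simplifies to $\sum_{l \in S_2} a_l \leq \tfrac{A}{2}$, so $\sum_{l \in S_2} a_l = \tfrac{A}{2}$ yields the required ``Yes'' partition.

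The main obstacle will be ruling out the fractional case $l' \neq \emptyset$. I would parametrize $\Tilde{\sigma}_{l'}^* = x \cdot \tfrac{a_{l'}}{A}$ with $x \in (\tfrac{1}{2}, 1)$ and repeat the accounting on $[n] \setminus \{l'\}$: the revenue inequality becomes $\sum_{l \in S_2} a_l \geq \tfrac{A}{2} + 2\max_{l''} a_{l''}(1 - x) - xa_{l'}$, while the resource inequality becomes $\sum_{l \in S_2} a_l \leq \tfrac{A}{2} + a_{l'}(1 - 2x)$. Chaining these and dividing through by $1 - x > 0$ yields $2\max_{l''} a_{l''} \leq a_{l'}$, contradicting $a_{l'} \leq \max_{l''} a_{l''}$; hence $l' = \emptyset$ and the main case completes the proof.
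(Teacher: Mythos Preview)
Your proof is correct and follows the same structure as the paper: invoke Lemmas~\ref{lem:back-direction-lower-bound-admin-allocation} and~\ref{lem:reverse-direction-opt-sol-ermp} to pin every location to a type threshold except possibly one, then match the revenue and resource inequalities to force $\sum_{l \in S_2} a_l = \tfrac{A}{2}$. The only difference is in ruling out the fractional location: the paper upper-bounds the revenue at $l'$ by the chord from the type-1 to the type-2 threshold (a strict geometric inequality, Figure~\ref{fig:helper-plot-revenue-and-upper-bound}) and observes that this chord value coincides exactly with the resource slack $2A x_{\tilde l}$, giving an immediate contradiction, whereas you keep the exact revenue and algebraically reduce the chained inequalities to $2\max_{l''} a_{l''} \le a_{l'}$---a slightly more direct derivation that avoids the geometric argument.
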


For a proof of Lemma~\ref{lem:reverse-np-hardness-ermp}, see Appendix~\ref{apdx:pf-reverse-np-hardness-ermp}. Note that Lemmas~\ref{lem:forward-partition-to-ermp}-\ref{lem:reverse-np-hardness-ermp} prove Theorem~\ref{thm:npHardness-erm}. 

\subsubsection{Proof of Lemma~\ref{lem:forward-partition-to-ermp}} \label{apdx:pf-lem-forward-partition-to-ermp}

We first suppose that we have a ``Yes'' instance of partition, i.e., there exists a set of $S$ numbers such that $\sum_{l \in S} a_l = \frac{A}{2}$. Then, consider an administrator strategy $\sigmaa$ such that $\sigma_l = \frac{a_l}{A}$ for all $l \in S$ and $\sigma_l = \frac{1}{2} \frac{a_l}{A}$ for all $l \in [n] \backslash S$. We first note that $\sigmaa$ is feasible, i.e., $\sigmaa \in \Omega_R$, as $\sigma_l \in [0, 1]$ by construction and the resource constraint is satisfied as:
\begin{align*}
    \sum_{l \in [n]} \sigma_l &= \sum_{l \in S} \sigma_l + \sum_{l \in S \backslash [n]} \sigma_l = \sum_{l \in S} \frac{a_l}{A} + \sum_{l \in [n] \backslash S} \frac{1}{2} \frac{a_l}{A}, \\
    &\stackrel{(a)}{=} \frac{1}{A} \frac{A}{2} + \frac{1}{2A} \frac{A}{2} = \frac{3}{4} = R,
\end{align*}
where (a) follows as $\sum_{l \in S} a_l = \frac{A}{2}$ and $\sum_{l \in [n] \backslash S} a_l = \frac{A}{2}$ as we have a ``Yes'' instance of partition.

Next, we obtain the following revenue objective for the administrator when it plays the above defined strategy $\sigmaa$:
\begin{align*}
    Q_R(\sigmaa) &= \sum_{i \in \I} \sum_{l \in L} \sigma_l y_l^i(\sigmaa) \Lambda_l^i \stackrel{(a)}{=} \sum_{l \in S} \Lambda_l^2 \frac{d_l^2}{d_l^2+k} + \sum_{l \in [n] \backslash S} (\Lambda_l^1 + \Lambda_l^2) \frac{d_l^1}{d_l^1+k}, \\
    &\stackrel{(b)}{=}  \sum_{l \in S} A \left(1 \! + \! \frac{2 \max_{l' \in [n]} a_{l'}}{a_l} \right)  \! \! \frac{a_l}{A} + \! \! \sum_{l \in S \backslash [n]} \! \! \! \left( A \left( 2 \frac{\max_{l' \in [n]} a_{l'}}{a_l} \! - \! 1 \right) \! + \! A \left(1 \! + \! \frac{2 \max_{l' \in [n]} a_{l'}}{a_l} \right) \right) \frac{1}{2} \frac{a_l}{A} , \\
    &= \sum_{l \in S} \left( a_l + 2 \max_{l' \in [n]} a_{l'} \right) + \sum_{l \in S \backslash [n]} 2 \max_{l' \in [n]} a_{l'}, \\
    &= \frac{A}{2} + 2 \sum_{l \in [n]} \max_{l' \in [n]} a_{l'}, \\
    &= \frac{A}{2} + 2n \max_{l' \in [n]} a_{l'},
\end{align*}
where we have dropped the fine $k$ from the objective for ease of exposition, (a) follows by the definition of $y_l^i(\sigmaa)$ as given in Equation~\eqref{eq:best-response-users-rev-max-prob}, and (b) follows by plugging in the values of the respective quantities in the above defined instance. The above analysis implies that $\sigmaa$ is both feasible and achieves a total revenue of $\frac{A}{2} + 2 n \times \max_{l \in [n]} a_l$, which establishes the forward direction of our claim.

\subsubsection{Proof of Lemma~\ref{lem:back-direction-lower-bound-admin-allocation}} \label{apdx:pf-back-direction-lb-admin}


To prove this claim, we first show that in the setting when $R = \frac{3}{4}$ any optimal administrator strategy $\Tilde{\sigmaa}^*$ satisfies $\Tilde{\sigma}_l^* \geq \frac{d_l^1}{d_l^1+k}$ for all locations $l$.
To this end, suppose for contradiction that there is an optimal solution $\sigmaa^*$ to the HRMP instance such that there is some set of locations $S'$ for which $\sigma_{l'}^* < \frac{d_{l'}^1}{d_{l'}^1+k}$ for all $l' \in S'$. Furthermore, define $\gamma = \sum_{l \in L} \min \{ \sigma_l^*, \frac{d_l^1}{d_l^1+k} \}$ to represent the total spending across locations to cover the first type $i = 1$. In this case we note that $\gamma < \frac{1}{2}$, as 
\begin{align*}
    \gamma &= \sum_{l \in L} \min \{ \sigma_l^*, \frac{d_l^1}{d_l^1+k} \} \stackrel{(a)}{<} \sum_{l \in L} \frac{d_l^1}{d_l^1+k} \stackrel{(b)}{=} \sum_{l \in L} \frac{1}{2} \frac{a_l}{A} \stackrel{(c)}{=} \frac{1}{2},
\end{align*}
where (a) follows as there is some set of location $S'$ for which $\sigma_{l'}^* < \frac{d_{l'}^1}{d_{l'}^1+k}$ for all $l' \in S'$ by our contradiction assumption, (b) follows by substituting $\frac{d_l^1}{d_l^1+k} = \frac{1}{2} \frac{a_l}{A}$ for our defined HRMP instance, and (c) follows as $\sum_{l \in [n]} a_l = A$.

We now define an allocation strategy $\Tilde{\sigmaa}$ with a strictly higher revenue than $\sigmaa^*$ to derive our desired contradiction. To do so, we first define $\delta = \sum_{l \in L} \max \{ 0, \sigma_l^* - \frac{d_l^1}{d_l^1+k} \}$ as the cumulative resources allocated to locations beyond $\frac{d_l^1}{d_l^1+k}$. In this case, we note that $\delta \geq \frac{1}{2} - \gamma$, as if this were not the case then $\Tilde{\sigmaa}^*$ would not be optimal (as the amount of resources $\frac{1}{2} - \gamma$ could be spent on locations in the set $S'$ for which $\sigma_{l'}^* < \frac{d_{l'}^1}{d_{l'}^1+k}$ for all $l' \in S'$ to obtain a higher revenue for the administrator). Then, we construct $\Tilde{\sigmaa}$ by transferring $\frac{1}{2} - \gamma$ of mass from locations $l \in [n] \backslash S'$ with $\sigma_l^* > \frac{d_l^1}{d_l^1+k}$ to locations in the set $S'$ with $\sigma_l^* < \frac{d_l^1}{d_l^1+k}$, such that $\Tilde{\sigma}_l \geq \frac{d_l^1}{d_l^1+k}$ for all locations $l$ and the total resource spending $\sum_{l \in L} \Tilde{\sigma}_l = \sum_{l \in L} \sigma_l^*$.

We now show that such a defined strategy $\Tilde{\sigmaa}$ achieves a strictly higher revenue than $\sigmaa^*$. To see this, first note that the maximum additional gain in the revenue from spending $\sigma_l' = \frac{d_l^2}{d_l^2+k}$ as compared to $\sigma_l' = \frac{d_l^1}{d_l^1+k}$ is given by $\Lambda_l^2 \frac{d_l^2}{d_l^2+k} - (\Lambda_l^1 + \Lambda_l^2) \frac{d_l^1}{d_l^1+k} = a_l$ for all locations $l$ (see left of Figure~\ref{fig:concave-upper-approximation}). Furthermore, the spending required to gain this amount $a_l$ is given by $\frac{d_l^2}{d_l^2+k} - \frac{d_l^1}{d_l^1+k} = \frac{1}{2}\frac{a_l}{A}$. In other words, for the additional spending of $\frac{1}{2} - \gamma$ on locations in the set $[n] \backslash S'$, the maximum gain in the administrator's revenue is given by $(\frac{1}{2} - \gamma) \max_{l \in [n]} a_l$, regardless of which location(s) this additional amount $\frac{1}{2} - \gamma$ is spent on. However, spending the same amount on locations in the set $S'$ with $\sigma_{l'}^* < \frac{d_{l'}^1}{d_{l'}^1+k}$ for all $l' \in S'$, will result in an increase in the administrator revenue by
\begin{align*}
    \sum_{l \in S'} (\Lambda_l^1 + \Lambda_l^2) (\frac{d_l^1}{d_l^1+k} - \sigma_l^*) &\stackrel{(a)}{=}  \sum_{l \in S'} 4 \frac{A}{a_l} \max_{l \in [n]} a_l (\frac{d_l^1}{d_l^1+k} - \sigma_l^*) \stackrel{(b)}{\geq}  4 \max_{l \in [n]} a_l \sum_{l \in S'} (\frac{d_l^1}{d_l^1+k} - \sigma_l^*), \\
    &\stackrel{(c)}{=} 4 \max_{l \in [n]} a_l (\frac{1}{2} - \gamma) > (\frac{1}{2} - \gamma) \max_{l \in [n]} a_l,
\end{align*}
where (a) follows by substituting the relations for $\Lambda_l^1$ and $\Lambda_l^2$ for our above defined HRMP instance, (b) follows $A \geq a_l$ for all locations $l$ as $\sum_l a_l = A$, and (c) follows as $\sum_{l \in S'} (\frac{d_l^1}{d_l^1+k} - \sigma_l^*) = \frac{1}{2} - \gamma$ by our definition of $\gamma$. Thus, we have obtained that the increase in the administrator's revenue corresponding to shifting $\frac{1}{2} - \gamma$ of resources from locations $l \in [n] \backslash S'$ to locations $l \in S'$ (as in our allocation $\Tilde{\sigmaa}$) outweighs the maximum loss in the administrator objective from shifting these resources from $\sigmaa^*$ to $\Tilde{\sigmaa}$, given by $(\frac{1}{2} - \gamma) \max_{l \in [n]} a_l$. As a result, this relation implies that $\sigmaa^*$ is not optimal, a contradiction. Consequently, any optimal solution to the above defined HRMP instance must be such that $\Tilde{\sigma}_l^* \geq \frac{d_l^1}{d_l^1+k}$ for all locations $l$, which establishes our claim.

\subsubsection{Proof of Lemma~\ref{lem:reverse-np-hardness-ermp}} \label{apdx:pf-reverse-np-hardness-ermp}

We now use the structure of the optimal solution established in Lemmas~\ref{lem:back-direction-lower-bound-admin-allocation} and~\ref{lem:reverse-direction-opt-sol-ermp} to show that our HRMP instance with $Q_R(\sigmaa) \geq \frac{A}{2} + 2 n \times \max_{l \in [n]} a_l$ corresponds to a ``Yes'' instance of partition. To do so, we first note that the total revenue when spending exactly $\frac{d_l^1}{d_l^1+k}$ at each location $l$ is given by: $\sum_{l \in [n]} (\Lambda_l^1 + \Lambda_l^2) \frac{d_l^1}{d_l^1+k} = 2n \max_{l' \in [n]} a_{l'}$. Note that the total such spending on locations is $\sum_{l \in [n]}\frac{d_l^1}{d_l^1+k} = \frac{1}{2}$; hence, the total additional spending beyond $\frac{d_l^1}{d_l^1+k}$ across all locations $l$ must not be more than $\frac{1}{4}$ as the total available resources $R = \frac{3}{4}$. In particular, we have that $\sum_{l \in [n]} x_l \leq \frac{1}{4}$, where $\sigma_l = \frac{d_l}{d_l+k} + x_l$ for all locations $l$. 

Next, from Lemma~\ref{lem:reverse-direction-opt-sol-ermp}, we have that there exists a solution to the HRMP problem such that there is at most one location $\Tilde{l}$ for which $x_{\Tilde{l}} \in (0, \frac{d_{\Tilde{l}}^2}{d_{\Tilde{l}}^2 + k} - \frac{d_{\Tilde{l}}^1}{d_{\Tilde{l}}^1 + k}) = (0, \frac{1}{2} \frac{a_{\Tilde{l}}}{A})$. Such a result holds as from Lemma~\ref{lem:back-direction-lower-bound-admin-allocation} we have that $\sigma_l \geq \frac{d_l^1}{d_l^1+k}$ for all locations $l$ and that from Lemma~\ref{lem:reverse-direction-opt-sol-ermp}, there exists at most one location $\Tilde{l}$ for which $\sigma_l \in (\frac{d_{\Tilde{l}}^1}{d_{\Tilde{l}}^1 + k}, \frac{d_{\Tilde{l}}^2}{d_{\Tilde{l}}^2 + k})$. As a consequence, the spending of the administrator beyond $\frac{d_l^1}{d_l^1+k}$ must satisfy $\sum_{l \in S''} \frac{1}{2} \frac{a_l}{A} + x_{\Tilde{l}} \leq \frac{1}{4}$ for some location set $S''$ and some location $\Tilde{l}$, and the revenue of the administrator must satisfy:
\begin{align*}
     Q_R(\sigmaa) &= \sum_{l \in S''} \Lambda_l^2 \frac{d_l^2}{d_l^2+k} + \Lambda_{\Tilde{l}}^2 \left( x_{\Tilde{l}} + \frac{d_{\Tilde{l}}^1}{d_{\Tilde{l}}^1+k} \right) + \sum_{l \in [n] \backslash (S'' \cup \{ \Tilde{l}\})} (\Lambda_l^1 + \Lambda_l^2) \frac{d_l^1}{d_l^1+k}, \\
     &\stackrel{(a)}{=} \sum_{l \in S''} \Lambda_l^2 \frac{d_l^2}{d_l^2+k} - \sum_{l \in S''} (\Lambda_l^1 + \Lambda_l^2) \frac{d_l^1}{d_l^1+k} + \Lambda_{\Tilde{l}}^2 \left( x_{\Tilde{l}} + \frac{d_{\Tilde{l}}^1}{d_{\Tilde{l}}^1+k} \right) \\
     &- \frac{d_{\Tilde{l}}^1}{d_{\Tilde{l}}^1+k} ( \Lambda_{\Tilde{l}}^1 + \Lambda_{\Tilde{l}}^2) + \sum_{l \in [n]} (\Lambda_l^1 + \Lambda_l^2) \frac{d_l^1}{d_l^1+k}, \\
     &\stackrel{(b)}{=}\sum_{l \in S''} a_l + \left( x_{\Tilde{l}} + \frac{d_{\Tilde{l}}^1}{d_{\Tilde{l}}^1+k} \right) \Lambda_{\Tilde{l}}^2 - \frac{d_{\Tilde{l}}^1}{d_{\Tilde{l}}^1+k} ( \Lambda_{\Tilde{l}}^1 + \Lambda_{\Tilde{l}}^2) + 2n \max_{l' \in [n]} a_{l'},
\end{align*}
where (a) follows from adding and subtracting $\sum_{l \in S'' \cup \Tilde{l}} (\Lambda_l^1 + \Lambda_l^2) \frac{d_l^1}{d_l^1+k}$, and (b) follows by substituting the relations for $\Lambda_l^1$, $\Lambda_l^2$, and $\frac{d_l^i}{d_l^i+k}$ for our above defined HRMP instance. 

Then, since $Q_R(\sigmaa) \geq \frac{A}{2} + 2n \max_{l' \in [n]} a_{l'}$, the above equalities implies that, we have the following relation is satisfied:
\begin{align} \label{eq:helper-ermp-obj}
    \sum_{l \in S''} a_l + \left( x_{\Tilde{l}} + \frac{d_{\Tilde{l}}^1}{d_{\Tilde{l}}^1+k} \right) \Lambda_{\Tilde{l}}^2 - \frac{d_{\Tilde{l}}^1}{d_{\Tilde{l}}^1+k} (\Lambda_{\Tilde{l}}^1 + \Lambda_{\Tilde{l}}^2) \geq \frac{A}{2}.
\end{align}

We now show that we have a ``Yes'' instance of the partition problem by showing that $\Tilde{l} = \emptyset$.

To show this, we proceed by contradiction and suppose that there is some location $\Tilde{l}$ such that $x_{\Tilde{l}} \in (0, \frac{1}{2} \frac{a_l}{A})$. In this setting, we first note that the resource constraint $\frac{1}{4} \geq \sum_{l \in S'' \cup \{ \Tilde{l} \}} x_l = \sum_{l \in S''} \frac{1}{2} \frac{a_l}{A} + x_{\Tilde{l}}$ can be rewritten as:
\begin{align}
    &\sum_{l \in S''} \frac{1}{2} \frac{a_l}{A} + x_{\Tilde{l}} \leq \frac{1}{4}, \nonumber \\
    \implies &\sum_{l \in S''} a_l + 2A x_{\Tilde{l}} \leq \frac{A}{2} \label{eq:helper-23-resource-con}
\end{align}

Furthermore, from Equation~\eqref{eq:helper-ermp-obj}, we have that:
\begin{align}
    \frac{A}{2} &\leq \sum_{l \in S''} a_l + \left( x_{\Tilde{l}} + \frac{d_{\Tilde{l}}^1}{d_{\Tilde{l}}^1+k} \right) \Lambda_{\Tilde{l}}^2 -  \frac{d_{\Tilde{l}}^1}{d_{\Tilde{l}}^1+k} (\Lambda_{\Tilde{l}}^1 + \Lambda_{\Tilde{l}}^2), \nonumber \\
    &< \sum_{l \in S''} a_l + \frac{x_{\Tilde{l}}}{\frac{d_{\Tilde{l}}^2}{d_{\Tilde{l}}^2+k} - \frac{d_{\Tilde{l}}^1}{d_{\Tilde{l}}^1+k}} a_{\Tilde{l}}, \label{eq:helper-ermp-obj-22}
\end{align}
where the second inequality follows as the total gain from spending $\frac{d_{\Tilde{l}}^2}{d_{\Tilde{l}}^2+k}$ compared to $\frac{d_{\Tilde{l}}^1}{d_{\Tilde{l}}^1+k}$ is given by $\frac{d_{\Tilde{l}}^2}{d_{\Tilde{l}}^2+k} \Lambda_{\Tilde{l}}^2 -  \frac{d_{\Tilde{l}}^1}{d_{\Tilde{l}}^1+k} (\Lambda_{\Tilde{l}}^1 + \Lambda_{\Tilde{l}}^2) = a_{\Tilde{l}}$ and the fact that $\frac{x_{\Tilde{l}}}{\frac{d_{\Tilde{l}}^2}{d_{\Tilde{l}}^2+k} - \frac{d_{\Tilde{l}}^1}{d_{\Tilde{l}}^1+k}} a_{\Tilde{l}}$ represents a strict upper bound on the expected revenue that the administrator can gain for any $x_{\Tilde{l}} \in (0, \frac{d_{\Tilde{l}}^2}{d_{\Tilde{l}}^2+k} - \frac{d_{\Tilde{l}}^1}{d_{\Tilde{l}}^1+k})$. For more intuition on the derivation of the second inequality, we refer to Figure~\ref{fig:helper-plot-revenue-and-upper-bound}, which provides a geometric intuition for this claim through a depiction of the revenue function (in blue) and its corresponding upper bound (in orange).

\begin{figure}[tbh!]
    \centering
    \includegraphics[width=0.6\linewidth]{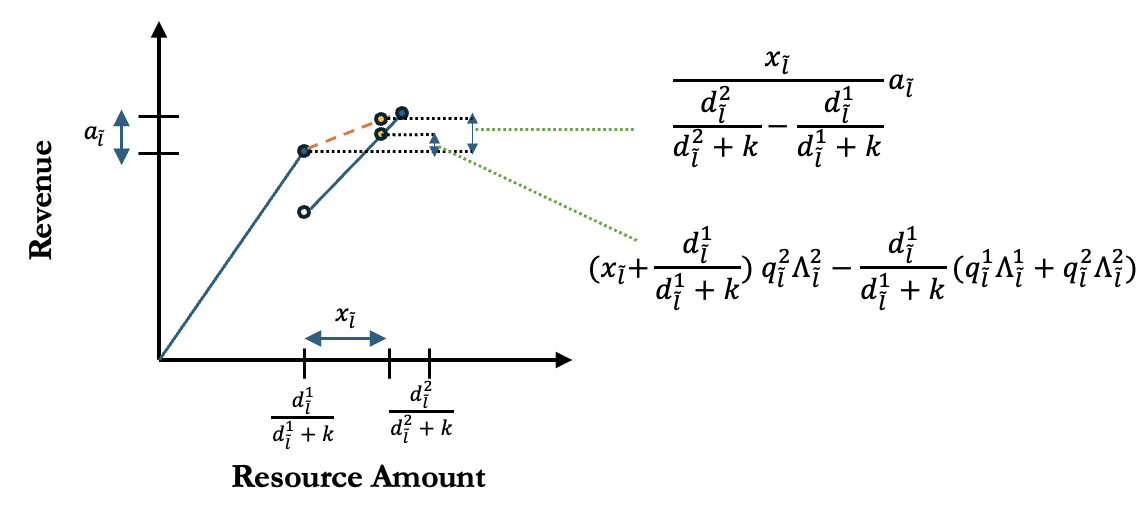}
    \vspace{-10pt}
    \caption{{\small \sf Depiction of the revenue of the above defined HRMP instance with the amount of resources spent at location $\Tilde{l}$ in the range $\left[0, \frac{d_{\Tilde{l}}^2}{d_{\Tilde{l}}^2+k} \right]$. The curve in blue represents the revenue as a function of the number of resources allocated to location $\Tilde{l}$, while the dashed line in orange represents the upper bound on the revenue function. Moreover, the two points marked in yellow represent the revenue (bottom yellow point) and its corresponding upper bound (top yellow point) corresponding to a resource allocation of $\frac{d_{\Tilde{l}}^1}{d_{\Tilde{l}}^1+k} + x_{\Tilde{l}}$ at location $\Tilde{l}$.
    }}
    \label{fig:helper-plot-revenue-and-upper-bound}
\end{figure} 


From Equations~\eqref{eq:helper-23-resource-con} and~\eqref{eq:helper-ermp-obj-22}, we obtain that:
\begin{align*}
    \sum_{l \in S''} a_l + \frac{x_{\Tilde{l}}}{\frac{d_l^2}{d_l^2+k} - \frac{d_l^1}{d_l^1+k}} a_{\Tilde{l}} \stackrel{(a)}{=} \sum_{l \in S''} a_l + \frac{x_{\Tilde{l}}}{\frac{1}{2} \frac{a_{\Tilde{l}}}{A}} a_{\Tilde{l}} = \sum_{l \in S''} a_l + 2A x_{\Tilde{l}} \stackrel{(b)}{\leq} \frac{A}{2} \stackrel{(c)}{<} \sum_{l \in S''} a_l + \frac{x_{\Tilde{l}}}{\frac{d_{\Tilde{l}}^2}{d_{\Tilde{l}}^2+k} - \frac{d_{\Tilde{l}}^1}{d_{\Tilde{l}}^1+k}} a_{\Tilde{l}},
\end{align*}
for any $x_{\Tilde{l}} \in (0, \frac{1}{2} \frac{a_l}{A})$, a contradiction. Here, (a) follows by substituting for the values of $\frac{d_l^2}{d_l^2+k}$ and $\frac{d_l^1}{d_l^1+k}$ for our above defined HRMP instance, (b) follows by Equation~\eqref{eq:helper-23-resource-con}, and (c) follows by Equation~\eqref{eq:helper-ermp-obj-22}. Thus, we have that $\Tilde{l} \neq \emptyset$ is not possible. 

Finally, since $\Tilde{l} = \emptyset$, we have that Equation~\eqref{eq:helper-ermp-obj} reduces to:
\begin{align} \label{eq:helper-1-revenue-obj}
    \sum_{l \in S''} a_l \geq \frac{A}{2}.
\end{align}
Moreover, we have by the resource constraint that for these locations $S''$ that $\frac{1}{4} \geq \sum_{l \in S''} x_l =  \frac{d_{\Tilde{l}}^2}{d_{\Tilde{l}}^2 + k} - \frac{d_{\Tilde{l}}^1}{d_{\Tilde{l}}^1 + k} = \sum_{l \in S''} \frac{1}{2} \frac{a_l}{A}$. This inequality for the resource constraint implies the following relation:
\begin{align} \label{eq:helper-2-allocation}
    \sum_{l \in S''} a_l \leq \frac{A}{2}.
\end{align}
Together, Equations~\eqref{eq:helper-1-revenue-obj} and~\eqref{eq:helper-2-allocation} imply that there exists a set $S''$ such that $ \sum_{l \in S''} a_l = \frac{A}{2}$, i.e., we have a ``Yes'' instance of partition. This establishes our claim.

\subsection{Proof of Theorem~\ref{thm:greedy-half-approx-rev-max}} \label{apdx:greedy-half-approx-rev-max}

To prove this claim, we begin by first introducing some notation. In particular, let $Q_l(\cdot)$ denote the revenue function at each location as a function of the amount of allocated resources $\sigma_l$ to that location (e.g., see left of Figure~\ref{fig:concave-upper-approximation}). Moreover, we denote $\Hat{Q}_l(\cdot)$ as the MCUA of the revenue function at each location $l$. We then have the following problem of maximizing the MCUA of the revenue function:
\begin{maxi!}|s|[2]   
    {\sigmaa}                   
    { \sum_{l \in L} \Hat{Q}_l(\sigma_l),  \label{eq:revenue-obj-mcua}}   
    {\label{eq:Eg001}}             
    {}          
    \addConstraint{\sum_{l \in L} \sigma_l}{\leq R, \label{eq:resource-constraint-mcua}}    
    \addConstraint{\sigma_l}{\in [0, 1], \quad \forall l \in L, \label{eq:probability-feasibility-mcua}}
\end{maxi!}
which rather than maximizing the sum of the revenue functions across all locations corresponds to maximizing the sum of the MCUA of the revenue functions across locations subject to the constraint that $\sigmaa \in \Omega_R$, as given by the Constraints~\eqref{eq:resource-constraint-mcua}-\eqref{eq:probability-feasibility-mcua}.

Furthermore, for convenience of analysis, we introduce Algorithm~\ref{alg:GreedyMCUA-OPT}, which is analogous to the greedy allocation rule in step one of Algorithm~\ref{alg:GreedyRevMaxProb} other than that the resources are allocated until either all the resources are exhausted or there are no further segments remaining to iterate over. Recall that step one of Algorithm~\ref{alg:GreedyRevMaxProb} terminates when a segment's resource requirement exceeds the amount of available resources.

\begin{algorithm}
\footnotesize
\SetAlgoLined
\SetKwInOut{Input}{Input}\SetKwInOut{Output}{Output}
\Input{Total Resource capacity $R$, Segment set $\S = \{(l_s, c_s, x_s)\}_{s \in \S}$ of MCUA of revenue function}
\Output{Resource Allocation Strategy $\Tilde{\sigmaa}$}
Define affordability threshold $t_l \leftarrow \min \left\{ R, \max_i \frac{d_l^i}{d_l^i+k}\right\}$ for all locations $l$ \;
Generate MCUA of the revenue function in range $[0, t_l]$ for each location $l$ \;
$\Tilde{\S} \leftarrow $ Ordered list of segments $s$ across all locations of this MCUA in descending order of slopes $c_s$  \;
Initialize available resources $\Tilde{R} \leftarrow R$ \;
Initialize allocation strategy $\Tilde{\sigmaa} \leftarrow \mathbf{0}$ \;
\For{\text{segment $s \in \Tilde{\S}$}}{
$\Tilde{\sigma}_{l_s} \leftarrow \Tilde{\sigma}_{l_s} + \min\{\Tilde{R}, x_s\}$ ; \texttt{\footnotesize \sf Allocate the minimum of the available resources and $x_{s}$ to location $l_s$} \;
        $\Tilde{R} \leftarrow \Tilde{R} -  \min\{\Tilde{R}, x_s\}$; \quad \texttt{\footnotesize \sf Update amount of remaining resources} \;
  }
\caption{\footnotesize Greedy Algorithm for MCUA of Revenue Function}
\label{alg:GreedyMCUA-OPT}
\end{algorithm}

We now establish Theorem~\ref{thm:greedy-half-approx-rev-max} using three intermediate lemmas. Our first lemma shows that the allocation computed via Algorithm~\ref{alg:GreedyMCUA-OPT} optimizes the MCUA of the revenue function, as is elucidated by the following lemma.

\begin{lemma}[MCUA is optimized by Algorithm~\ref{alg:GreedyMCUA-OPT}] \label{lem:greedy-opt-for-mcua}
The allocation $\Tilde{\sigmaa}$ computed using Algorithm~\ref{alg:GreedyMCUA-OPT} is an optimal solution to the problem of maximizing the MCUA of the revenue function, i.e., $\Tilde{\sigmaa}$ is an optimal solution to Problem~\eqref{eq:revenue-obj-mcua}-\eqref{eq:probability-feasibility-mcua}.
\end{lemma}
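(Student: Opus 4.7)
The plan is to reformulate Problem~\eqref{eq:revenue-obj-mcua}-\eqref{eq:probability-feasibility-mcua} as a classical fractional knapsack over the segment collection $\S$ and then invoke the optimality of the slope-ordered greedy rule for that knapsack. The essential observation is that each $\hat{Q}_l$ is a concave, monotone, piecewise linear function, so within location $l$ the segment slopes $c_s$ are non-increasing from left to right. Consequently, for every $\sigma_l \in [0, t_l]$ the slope-width representation of the MCUA yields the identity
\[
\hat{Q}_l(\sigma_l) \;=\; \max\Bigl\{\,\sum_{s:\,l_s=l} c_s\, z_s \;:\; z_s \in [0, x_s],\; \sum_{s:\,l_s=l} z_s = \sigma_l\,\Bigr\},
\]
where the maximum is attained by saturating the segments of location $l$ in decreasing order of slope.

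Using this identity, I would argue that Problem~\eqref{eq:revenue-obj-mcua}-\eqref{eq:probability-feasibility-mcua} is equivalent to the segment-level linear program
\[
\max_{(z_s)_{s \in \S}} \sum_{s \in \S} c_s z_s \quad \text{subject to} \quad z_s \in [0, x_s] \text{ for all } s \in \S,\; \sum_{s \in \S} z_s \leq R,
\]
via the correspondence $\sigma_l = \sum_{s:\,l_s=l} z_s$. The per-location bound $\sigma_l \leq 1$ is automatically satisfied because $\sum_{s:\,l_s=l} x_s = t_l \leq 1$ by construction of the affordability threshold. Any $\mathbf{z}$ feasible for the segment LP induces a feasible $\sigmaa$ of at most the same value (inequality since $\hat{Q}_l(\sigma_l) \geq \sum_s c_s z_s$), and by the displayed identity above, for each feasible $\sigmaa$ there is a matching $\mathbf{z}$ attaining equality, so the two problems share the same optimal value.

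I would then invoke optimality of greedy for fractional knapsack. Since every segment $s$ has value-per-unit-resource exactly $c_s$, the rule that processes segments in descending order of $c_s$ and allocates $\min\{\widetilde{R}, x_s\}$ is optimal by the standard exchange argument: if an alleged optimum $\mathbf{z}^\dagger$ differed from the greedy $\mathbf{z}^\star$, at the first disagreement $\mathbf{z}^\dagger$ would either under-allocate to a high-slope segment $s$ or over-allocate to a lower-slope segment $s'$, and shifting mass from $s'$ to $s$ weakly improves the objective while preserving $\sum z_s \leq R$ and the box constraints.

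Finally, I would observe that Algorithm~\ref{alg:GreedyMCUA-OPT} is precisely this greedy rule expressed in terms of the $\sigmaa$-variables: it sorts $\S$ by decreasing $c_s$ and, segment by segment, adds $\min\{\widetilde{R}, x_s\}$ to $\widetilde{\sigma}_{l_s}$. Since within each location the segments appear in decreasing slope order inside the global ordering (again by concavity of $\hat{Q}_l$), the induced $\widetilde{\sigmaa}$ realizes the value $\sum_s c_s z_s^\star$ and hence the optimum of the reformulated LP, which equals the optimum of the MCUA maximization. The main obstacle is the reformulation equivalence, which hinges on the slope-width identity above; once that is in hand the rest is standard knapsack reasoning.
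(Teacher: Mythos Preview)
Your argument is correct and takes a genuinely different route from the paper. The paper proves Lemma~\ref{lem:greedy-opt-for-mcua} by a direct exchange argument on the location-level variables $\sigma_l$: assuming greedy is suboptimal, it first argues that the greedy and optimal allocations spend the same total amount of resource (namely $\min\{R,\sum_l t_l\}$), finds locations $\tilde l,l'$ with $\tilde\sigma_{\tilde l}>\tilde\sigma_{\tilde l}^*$ and $\tilde\sigma_{l'}<\tilde\sigma_{l'}^*$, and then shifts $\epsilon$ mass from $l'$ to $\tilde l$ in the optimum, using concavity and supergradients of $\hat Q_l$ to show the objective weakly improves. Iterating this yields the contradiction.

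Your approach instead lifts the problem to the segment level and reduces it to a standard fractional knapsack, which you then dispatch with the classical slope-ordered greedy result. The reformulation step---the identity $\hat Q_l(\sigma_l)=\max\{\sum_s c_s z_s:\,z_s\in[0,x_s],\,\sum_s z_s=\sigma_l\}$ together with $\hat Q_l(0)=0$---is what buys you this reduction, and it relies on exactly the same structural fact the paper uses (concavity of $\hat Q_l$, hence non-increasing segment slopes left to right). One small point: the equivalence in the backward direction needs the observation that any optimal $\sigmaa$ may be taken with $\sigma_l\le t_l$, since $\hat Q_l$ is constant beyond $t_l$; you implicitly use this when matching feasible $\sigmaa$ with feasible $\mathbf z$. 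Both proofs are short; yours is arguably more transparent because it names the underlying problem (fractional knapsack), while the paper's is self-contained and does not appeal to an external result.
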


Our next lemma establishes a key structural property that the allocation corresponding to Algorithm~\ref{alg:GreedyMCUA-OPT} is such that the associated MCUA coincides with the original revenue function at all but at most one location.


\begin{lemma} [Structure of Greedy Allocation] \label{lem:greedy-algorithm-structure}
The allocation $\Tilde{\sigmaa}$ computed via Algorithm~\ref{alg:GreedyMCUA-OPT} is such that it does not coincide with the original revenue function at at most one location, i.e., there is at most one location $l'$ such that $\Hat{Q}_{l'}(\Tilde{\sigma}_{l'}) \neq Q_{l'}(\Tilde{\sigma}_{l'})$.
\end{lemma}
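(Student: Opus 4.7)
\begin{hproof}[Proof Plan for Lemma~\ref{lem:greedy-algorithm-structure}]
The plan is to exploit the piecewise-linear structure of the MCUA and the ordering of its segments induced by concavity. First, I will set up notation for the breakpoints of each location's MCUA. For each location $l$, the MCUA $\Hat{Q}_l$ is piecewise linear and concave on $[0, t_l]$, with breakpoints at a subset of the points $\{0\} \cup \{\frac{d_l^i}{d_l^i+k} : i \in \I\} \cap [0, t_l]$ that lie on the upper concave envelope of $Q_l$. By construction of the concave upper closure, $\Hat{Q}_l$ coincides with $Q_l$ precisely at these breakpoints (including $\sigma_l = 0$), and lies strictly above $Q_l$ in the interior of each segment.

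Next, I will argue that because $\Hat{Q}_l$ is concave, the slopes $c_s$ of its segments at any fixed location $l$ are non-increasing as one moves from left to right. Hence, when Algorithm~\ref{alg:GreedyMCUA-OPT} processes the globally sorted segment list $\Tilde{\S}$ in descending order of slopes, the segments belonging to any single location are encountered in their natural left-to-right order (ties within a location can be removed without loss of generality by merging adjacent equal-slope segments into one). This gives the key invariant I will maintain by induction on the iteration count: at the start of any iteration that processes a segment $s$ with $l_s = l$, the current value $\Tilde{\sigma}_l$ equals the left endpoint of $s$, which is a breakpoint of $\Hat{Q}_l$.

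Finally, I will conclude by examining how the allocation at each location can fail to terminate at a breakpoint. By the update rule $\Tilde{\sigma}_{l_s} \gets \Tilde{\sigma}_{l_s} + \min\{\Tilde{R}, x_s\}$, a location's allocation lands at the right endpoint of the processed segment (again a breakpoint) whenever $\Tilde{R} \geq x_s$. The only iteration that can leave $\Tilde{\sigma}_{l_s}$ strictly inside a segment is one where $\Tilde{R} < x_s$, producing a strictly partial allocation; by the update $\Tilde{R} \gets \Tilde{R} - \min\{\Tilde{R}, x_s\}$, this immediately drives $\Tilde{R}$ to $0$, so every subsequent segment contributes $\min\{0, x_{s'}\} = 0$ to its respective location and leaves that location's allocation at its existing breakpoint. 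Consequently there is at most one location $l'$ whose final allocation lies in the strict interior of a segment, for which $\Hat{Q}_{l'}(\Tilde{\sigma}_{l'}) > Q_{l'}(\Tilde{\sigma}_{l'})$; every other location has $\Tilde{\sigma}_l$ at a breakpoint and hence $\Hat{Q}_l(\Tilde{\sigma}_l) = Q_l(\Tilde{\sigma}_l)$.

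The main obstacle will be justifying the left-to-right processing order within a single location (including the harmless handling of slope ties via segment merging) and carefully verifying the edge case in which the greedy run exhausts $\Tilde{R}$ exactly at a segment boundary, in which case no partial allocation occurs and the claim holds trivially with $l' = \emptyset$.
\end{hproof}
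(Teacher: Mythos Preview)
Your proposal is correct and follows essentially the same approach as the paper: both arguments hinge on the observation that Algorithm~\ref{alg:GreedyMCUA-OPT} processes segments sequentially, so at most one segment can be left partially filled (the one where the budget is exhausted), and hence at most one location's final allocation lies strictly in the interior of a segment where $\Hat{Q}_l$ and $Q_l$ can differ. The paper phrases this via contradiction and more tersely, whereas you give a direct inductive argument with explicit attention to the left-to-right processing order induced by concavity and to tie-handling and boundary cases; the substance is the same.
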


Our final lemma establishes that the MCUA of the revenue function coincides with the original revenue function at $\sigma_l = 0$ and at the allocation $\sigma_l = \sigma_l^{\max}$ at which the revenue function is maximized.


\begin{lemma} [MCUA Coincides with Revenue Function at Endpoints] \label{lem:mcua-coincides-revenue-het}
Let $\Hat{Q}_l(\sigma_l)$ be the MCUA of the revenue function $Q_l(\sigma_l)$. Then, the MCUA of the revenue function coincides with the revenue function at the allocations $\sigma_l = 0$ and $\sigma_l = \sigma_l^{\max}$, i.e., $\Hat{Q}_l(0) = Q_l(0)$ and $\Hat{Q}_l(\sigma_l^{\max}) = Q_l(\sigma_l^{\max})$.
\end{lemma}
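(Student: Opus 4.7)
The plan is to establish each of the two equalities by sandwiching $\Hat{Q}_l$ between $Q_l$ (from one side, via $\Hat{Q}_l \geq Q_l$, which holds by the defining property of the MCUA as an upper bound) and an explicitly constructed continuous, concave, non-decreasing upper bound of $Q_l$ (from the other side, via the pointwise minimality of the MCUA). The easy direction gives immediately $\Hat{Q}_l(0) \geq Q_l(0) = 0$ and $\Hat{Q}_l(\sigma_l^{\max}) \geq Q_l(\sigma_l^{\max})$, so the task reduces to exhibiting, at each endpoint, a valid monotone concave upper bound of $Q_l$ on the MCUA domain that attains the matching value.

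For $\sigma_l = \sigma_l^{\max}$, I would use the constant witness $g(\sigma) = Q_l(\sigma_l^{\max})$. Since $\sigma_l^{\max}$ is by definition the maximizer of $Q_l$, this constant dominates $Q_l$ pointwise, and being a constant it is trivially continuous, concave, and (weakly) non-decreasing. Pointwise minimality of $\Hat{Q}_l$ then gives $\Hat{Q}_l(\sigma_l^{\max}) \leq g(\sigma_l^{\max}) = Q_l(\sigma_l^{\max})$, closing the sandwich.

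For $\sigma_l = 0$, the constant witness fails since the matching value $Q_l(0) = 0$ requires an upper bound that vanishes at the origin. Instead, I would use the linear witness $g(\sigma) = c\sigma$ with slope $c = k \sum_{i \in \I} \Lambda_l^i$, i.e., the slope of the first linear piece of $Q_l$ on $\big[0, \frac{d_l^1}{d_l^1+k}\big]$ on which every user type still commits fraud. This $g$ is linear, hence continuous, concave, and non-decreasing. To verify $g \geq Q_l$ over the whole MCUA domain, observe that on each interval $\big(\frac{d_l^{i-1}}{d_l^{i-1}+k}, \frac{d_l^i}{d_l^i+k}\big]$, only types $j \geq i$ still commit fraud by Equation~\eqref{eq:best-response-users-rev-max-prob}, so $Q_l(\sigma) = \sigma k \sum_{j \geq i} \Lambda_l^j \leq \sigma k \sum_{j \in \I} \Lambda_l^j = g(\sigma)$. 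Pointwise minimality of $\Hat{Q}_l$ then yields $\Hat{Q}_l(0) \leq g(0) = 0$, closing the sandwich.

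The main step in both cases is guessing the right witness; once the witnesses are written down, the required inequalities $g \geq Q_l$ follow transparently from the piecewise linear structure of $Q_l$ and the fact that its maximum slope from the origin is attained on the very first linear piece, while its global maximum is by definition realized at $\sigma_l^{\max}$.
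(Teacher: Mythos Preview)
Your proof is correct and takes a genuinely different, more direct route than the paper. The paper argues by contradiction: assuming $\Hat{Q}_l(0) > Q_l(0)$, it locates the first point $\sigma_l^1$ at which $\Hat{Q}_l$ touches $Q_l$, then lowers the value at $0$ by a small $\epsilon$ and redraws the initial segment to $(\sigma_l^1,\Hat{Q}_l(\sigma_l^1))$, obtaining a strictly smaller concave increasing upper bound and hence a contradiction; the case of $\sigma_l^{\max}$ is handled by an ``analogous'' argument. You instead use the minimality of the MCUA directly, by exhibiting explicit witnesses in the admissible class: the constant function $g\equiv Q_l(\sigma_l^{\max})$ for the right endpoint, and the linear function $g(\sigma)=k\sum_{i\in\I}\Lambda_l^i\,\sigma$ for the left. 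Your approach is cleaner and exploits the concrete piecewise structure of $Q_l$ (the first linear piece has the maximal slope from the origin, and the maximum value is by definition attained at $\sigma_l^{\max}$); the paper's approach is more generic in spirit---it would apply to any bounded $Q_l$ without knowing its exact formula---but is sketchier in execution, e.g., the existence of a uniform $\epsilon$-gap on $[0,\sigma_l^1)$ is asserted rather than justified. One cosmetic remark: your constant witness is only \emph{weakly} non-decreasing, so if one reads ``increasing'' in the MCUA definition strictly, you would need the standard perturbation $g(\sigma)=Q_l(\sigma_l^{\max})+\delta\sigma$ and let $\delta\to 0$; this does not affect the substance.
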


Finally, we leverage Lemmas~\ref{lem:greedy-opt-for-mcua},~\ref{lem:greedy-algorithm-structure}, and~\ref{lem:mcua-coincides-revenue-het} to establish Theorem~\ref{thm:greedy-half-approx-rev-max}, as is elucidated by the following corollary.

\begin{corollary} [1/2 Approximation for Heterogeneous Revenue Maximization] \label{cor:half-approx-het-cont}
Suppose $\sigmaa^*$ is the optimal enforcement strategy corresponding to the solution of Problem~\eqref{eq:admin-obj-revenue}-\eqref{eq:bi-level-con-revenue} and $\sigmaa_A^*$ is the allocation corresponding to Algorithm~\ref{alg:GreedyRevMaxProb}.
Then, $Q_R(\sigmaa_A^*) \geq \frac{1}{2} Q_R(\sigmaa^*)$.
\end{corollary}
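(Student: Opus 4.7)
The plan is to mirror the proof of Theorem~\ref{thm:greedy-half-approx-fraud-min}, with the monotone concave upper approximation of the revenue function playing the role of the fractional knapsack LP used there. Let $M(\sigmaa) := \sum_{l \in L} \hat{Q}_l(\sigma_l)$, and let $\sigmaa_M^*$ denote any maximizer of $M$ over $\Omega_R$, i.e.\ any solution of Problem~\eqref{eq:revenue-obj-mcua}-\eqref{eq:probability-feasibility-mcua}. Since by construction $\hat{Q}_l \geq Q_l$ pointwise on $[0, t_l]$, any feasible $\sigmaa \in \Omega_R$ satisfies $Q_R(\sigmaa) \leq M(\sigmaa)$, so in particular $Q_R(\sigmaa^*) \leq M(\sigmaa^*) \leq M(\sigmaa_M^*)$. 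By Lemma~\ref{lem:greedy-opt-for-mcua}, $M(\sigmaa_M^*) = M(\tilde{\sigmaa}^{OPT})$, where $\tilde{\sigmaa}^{OPT}$ is the allocation produced by Algorithm~\ref{alg:GreedyMCUA-OPT}. The whole argument therefore reduces to proving
\[
M(\tilde{\sigmaa}^{OPT}) \;\leq\; Q_R(\tilde{\sigmaa}) + Q_R(\sigmaa'),
\]
where $\tilde{\sigmaa}$ and $\sigmaa'$ are the two candidates produced in steps 1 and 2 of Algorithm~\ref{alg:GreedyRevMaxProb}, since combining this with $\sigmaa_A^* = \argmax\{Q_R(\tilde{\sigmaa}), Q_R(\sigmaa')\}$ gives $2\,Q_R(\sigmaa_A^*) \geq M(\tilde{\sigmaa}^{OPT}) \geq Q_R(\sigmaa^*)$, i.e.\ the claimed $1/2$-approximation.

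The key structural observation for this reduction is that step 1 of Algorithm~\ref{alg:GreedyRevMaxProb} and Algorithm~\ref{alg:GreedyMCUA-OPT} execute identically until the first segment $s^\star$ whose width $x_{s^\star}$ exceeds the remaining budget, at which point Algorithm~\ref{alg:GreedyRevMaxProb} terminates while Algorithm~\ref{alg:GreedyMCUA-OPT} still allocates the residual $\min\{\tilde{R}, x_{s^\star}\}$ to the location $l' := l_{s^\star}$. Hence $\tilde{\sigmaa}$ and $\tilde{\sigmaa}^{OPT}$ agree at every location $l \neq l'$, while $\tilde{\sigma}_{l'}$ is the cumulative sum of full segment widths at $l'$, i.e.\ an endpoint of a segment of $\hat{Q}_{l'}$, at which the MCUA coincides with the revenue function by construction. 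Combined with Lemma~\ref{lem:greedy-algorithm-structure}, which identifies $l'$ as the unique location at which $\hat{Q}_l$ and $Q_l$ may differ under $\tilde{\sigmaa}^{OPT}$, this yields
\[
M(\tilde{\sigmaa}^{OPT}) = \sum_{l \neq l'} Q_l(\tilde{\sigma}_l^{OPT}) + \hat{Q}_{l'}(\tilde{\sigma}_{l'}^{OPT}) = \sum_{l \neq l'} Q_l(\tilde{\sigma}_l) + \hat{Q}_{l'}(\tilde{\sigma}_{l'}^{OPT}) \leq Q_R(\tilde{\sigmaa}) + \hat{Q}_{l'}(\tilde{\sigma}_{l'}^{OPT}).
\]

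It then remains to bound $\hat{Q}_{l'}(\tilde{\sigma}_{l'}^{OPT})$ by $Q_R(\sigmaa')$. Since $\hat{Q}_{l'}$ is monotone non-decreasing on $[0, t_{l'}]$ and $\tilde{\sigma}_{l'}^{OPT} \leq t_{l'} \leq R$, we have $\hat{Q}_{l'}(\tilde{\sigma}_{l'}^{OPT}) \leq \hat{Q}_{l'}(t_{l'}) = \max_{\sigma \in [0, t_{l'}]} Q_{l'}(\sigma)$, where the latter identity follows from the minimality of the MCUA together with Lemma~\ref{lem:mcua-coincides-revenue-het}. This maximum is attained by the feasible single-location strategy that places all available resources at $l'$, which is one of the candidates considered in step 2 of Algorithm~\ref{alg:GreedyRevMaxProb}, so $\hat{Q}_{l'}(\tilde{\sigma}_{l'}^{OPT}) \leq Q_R(\sigmaa')$. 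Plugging this into the previous display completes the proof.

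The one place where the argument departs nontrivially from the homogeneous payoff-maximization proof, and which I expect to be the main technical obstacle, is the verification that $\hat{Q}_{l'}(t_{l'}) = \max_{\sigma \in [0, t_{l'}]} Q_{l'}(\sigma)$ in the budget-binding case $t_{l'} = R < \sigma_{l'}^{\max}$, since Lemma~\ref{lem:mcua-coincides-revenue-het} is stated at the unconstrained maximizer $\sigma_{l'}^{\max}$. The correct analogue for the restricted domain $[0, R]$ follows from the observation that the constant continuation at $\max_{\sigma \leq R} Q_{l'}(\sigma)$ past the largest discontinuity of $Q_{l'}$ not exceeding $R$ is itself a valid monotone concave upper bound of $Q_{l'}$ on $[0, R]$, which forces the MCUA to equal this value at $R$; this simple but somewhat delicate verification is the only piece that does not transfer verbatim from the earlier proofs.
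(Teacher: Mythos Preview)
Your proof is correct and follows essentially the same approach as the paper: upper-bound $Q_R(\sigmaa^*)$ by the optimal MCUA objective via Lemma~\ref{lem:greedy-opt-for-mcua}, use Lemma~\ref{lem:greedy-algorithm-structure} to isolate the single location $l'$ where Algorithm~\ref{alg:GreedyMCUA-OPT} and step~1 of Algorithm~\ref{alg:GreedyRevMaxProb} disagree, and then bound the two resulting pieces by $Q_R(\tilde{\sigmaa})$ and $Q_R(\sigmaa')$ respectively using Lemma~\ref{lem:mcua-coincides-revenue-het} and monotonicity. Your extra care about the endpoint case $t_{l'} = R < \sigma_{l'}^{\max}$ is not needed in the paper's setup because the MCUA is constructed on $[0,t_l]$ from the outset and Lemma~\ref{lem:mcua-coincides-revenue-het} is applied with $\sigma_{l'}^{\max}$ taken as the constrained maximizer on that interval, but your observation that the MCUA is constant past this point (so $\hat Q_{l'}(t_{l'}) = \hat Q_{l'}(\sigma_{l'}^{\max})$) is exactly what makes that reading work.
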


\subsubsection{Proof of Lemma~\ref{lem:greedy-opt-for-mcua}}

To prove this claim, let $\Tilde{\sigmaa}$ be the solution corresponding to Algorithm~\ref{alg:GreedyMCUA-OPT} and let $\Tilde{\sigmaa}^*$ be the allocation that maximizes the MCUA of the revenue function. Furthermore, suppose for contradiction that Algorithm~\ref{alg:GreedyMCUA-OPT} is not optimal, i.e., $\sum_{l \in L} \Hat{Q}_l(\Tilde{\sigma}_l) < \sum_{l \in L} \Hat{Q}_l(\Tilde{\sigma}_l^*)$. Such a relation implies by the (strict) monotonicity of the MCUA of the revenue function at each location that there exists a location $l'$ such that $\Tilde{\sigma}_{l'} < \sigma_{l'}^*$. However, we note that the total amount of resources allocated by Algorithm~\ref{alg:GreedyMCUA-OPT} is given by $\min \left\{ R, \sum_{l \in L} \max_{i} \frac{d_l^i}{d_l^i+k} \right\}$. However, note that the optimal allocation also cannot allocate any more resources than this amount to guarantee feasibility, i.e., it must be the case that $\sum_{l \in L} \Tilde{\sigma}_l = \sum_{l \in L} \Tilde{\sigma}_l^*$, which implies that there exists some location $\Tilde{l}$ such that $\Tilde{\sigma}_{\Tilde{l}} > \Tilde{\sigma}_{\Tilde{l}}^*$.

To derive our desired contradiction, we now construct another feasible strategy with a strictly higher objective than the strategy $\Tilde{\sigmaa}^*$. In particular, consider $\epsilon>0$ as some small positive constant such that $\Tilde{\sigma}_{\Tilde{l}}^{*} + \epsilon \leq \Tilde{\sigma}_{\Tilde{l}}$ and $\Tilde{\sigma}_{l'}^{*} \geq \Tilde{\sigma}_{l'} + \epsilon$. Then, define the strategy $\sigmaa' = (\sigma_l')_{l \in L}$, where $\sigma_l' = \Tilde{\sigma}_l^*$ for all $l \neq l', \Tilde{l}$, $\sigma_{l'} = \Tilde{\sigma}_{l'}^* - \epsilon$ and $\sigma_{\Tilde{l}}' = \Tilde{\sigma}_{\Tilde{l}}^{*} + \epsilon$. Note that this new strategy is feasible. In particular, it is straightforward to check that $\sigma_l' \in \left[0, \max_i \frac{d_l^i}{d_l^i+k} \right]$ for all locations $l$ and that the resource constraint is satisfied as:
\begin{align*}
    \sum_{l} \sigma_l' = \sum_{l \neq \{ l', \Tilde{l}\} } \sigma_l' + \sigma_{l'}' + \sigma_{\Tilde{l}}' = \sum_{l \neq \{ l', \Tilde{l}\} } \Tilde{\sigma}_l^* + \Tilde{\sigma}_{l'}^*-\epsilon + \Tilde{\sigma}_{\Tilde{l}}^*+\epsilon = \sum_{l} \Tilde{\sigma}_l^* \leq R,
\end{align*}
where the final inequality follows by the feasibility of $\Tilde{\sigmaa}^*$. 

Next, we show that the MCUA Objective~\eqref{eq:revenue-obj-mcua} of this new strategy $\sigmaa'$ is greater than that of $\Tilde{\sigmaa}^*$. To see this, note that:
\begin{align*}
    \Hat{Q}_R(\sigmaa') &= \sum_{l \in L} \Hat{Q}_l(\sigma_l), \\
    &= \sum_{l \neq \{ l', \Tilde{l}\} } \Hat{Q}_l(\Tilde{\sigma}_l^*) + \Hat{Q}_{l'}(\Tilde{\sigma}_{l'}^*-\epsilon) + \Hat{Q}_{\Tilde{l}}(\sigma_{\Tilde{l}}^* + \epsilon), \\
    &\stackrel{(a)}{\geq} \sum_{l \neq \{ l', \Tilde{l}\} } \Hat{Q}_l(\Tilde{\sigma}_l^*) + \Hat{Q}_{l'}(\Tilde{\sigma}_{l'}^*) - \partial \Hat{Q}_{l'}(\Tilde{\sigma}_{l'}^*-\epsilon) \epsilon + \Hat{Q}_{\Tilde{l}}(\sigma_{\Tilde{l}}^*) + \partial \Hat{Q}_{\Tilde{l}}(\sigma_{\Tilde{l}}^* + \epsilon) \epsilon, \\
    &= \sum_{l \in L} \Hat{Q}_l(\Tilde{\sigma}_l^*) + \epsilon (\partial \Hat{Q}_{\Tilde{l}}(\Tilde{\sigma}_{\Tilde{l}}^* + \epsilon) - \partial \Hat{Q}_{l'}(\Tilde{\sigma}_{l'}^*-\epsilon)), \\
    &\stackrel{(b)}{\geq} \sum_{l \in L} \Hat{Q}_l(\Tilde{\sigma}_l^*) + \epsilon (\partial \Hat{Q}_{\Tilde{l}}(\Tilde{\sigma}_{\Tilde{l}}^* + \epsilon) - \partial \Hat{Q}_{l'}(\Tilde{\sigma}_{l'})), \\
    &\stackrel{(c)}{\geq} \sum_{l \in L} \Hat{Q}_l(\Tilde{\sigma}_l^*) = \Hat{Q}_R(\Tilde{\sigmaa}^*),
\end{align*}
where (a) follows by the concavity of the MCUA of the revenue function and the fact that the MCUA of the revenue function is super-differentiable at all points in the domain $\sigma_l \in [0, t_l]$. Note that any concave function on a non-empty convex set is super-differentiable at each interior point and the super-differentiability at the points $\sigma_l = 0$ and $\sigma_l = t_l$ follow as an immediate consequence of the fact that an affine function upper bounds the welfare function due to the discreteness of the distribution. Next, inequality (b) follows by the fact that $\Hat{Q}_{l'}$ is a monotonically increasing concave function and hence its derivative (and minimum sub-gradient) is monotonically non-increasing and (c) follows from the fact that the minimum sub-gradient $\partial \Hat{Q}_{\Tilde{l}}(\Tilde{\sigma}_{\Tilde{l}}^* + \epsilon) \geq \partial \Hat{Q}_{l'}(\Tilde{\sigma}_{l'}))$ by the nature of the greedy algorithm that allocates resources to locations in the descending order of the slopes of the segments of the MCUA of the revenue function (as otherwise the greedy algorithm would have allocated more resources to location $l'$ instead of location $\Tilde{l}$). The above relationship implies that the allocation strategy $\sigmaa'$ is optimal. We can then repeat the above procedure of comparing the allocation $\sigmaa'$ to the greedy allocation corresponding to Algorithm~\ref{alg:GreedyMCUA-OPT} and construct a sequence of allocations that will eventually correspond exactly to the greedy allocation, which yields our desired contradiction. Consequently, Algorithm~\ref{alg:GreedyMCUA-OPT} maximizes the MCUA of the revenue function.

\subsubsection{Proof of Lemma~\ref{lem:greedy-algorithm-structure}}


Suppose for contradiction that there are at least two locations $l_1, l_2$ such that $\Hat{Q}_{l_1}(\Tilde{\sigma}_{l_1}) > Q_{l_1}(\Tilde{\sigma}_{l_1})$ and $\Hat{Q}_{l_2}(\Tilde{\sigma}_{l_2}) > Q_{l_2}(\Tilde{\sigma}_{l_2})$. Note by the definition of the MCUA that the points at which the MCUA and the original revenue function do not coincide correspond to line segments with a fixed slope. Consequently, since the greedy algorithm, i.e., Algorithm~\ref{alg:GreedyMCUA-OPT}, allocates some resources to both line segments corresponding to locations $l_1$ and $l_2$ it follows that the slope of these segments must be the same by the nature of the greedy algorithm. But, under our greedy algorithm, we iterate through each of the segments sequentially, which implies that there is at most one segment $s$ for which the allocation is strictly less than $x_s$. 
Finally, noting that if a segment is allocated either zero or $x_s$ that the resulting allocation lies on the original revenue function, it follows that there is at most one location for which $\Hat{Q}_{l}(\Tilde{\sigma}_{l}) > Q_{l}(\Tilde{\sigma}_{l})$, which establishes our claim.


\subsubsection{Proof of Lemma~\ref{lem:mcua-coincides-revenue-het}}

We first show that $\Hat{Q}_l(0) = Q_l(0)$. To see this, we proceed by contradiction and assume that $\Hat{Q}_l(0) > Q_l(0)$. 

To establish our desired contradiction, we first note two properties of the MCUA of the revenue function. First note that at the points where the MCUA of the revenue function does not coincide with the revenue function, the associated MCUA is affine. Note that if the MCUA was not affine at the points where it does not coincide with the revenue function, then this would violate the definition of the MCUA, which is a point-wise smallest concave function that upper bounds the revenue function. Next, note by the definition of the MCUA of the revenue function that the MCUA must coincide with the revenue function at some point. Note that if the MCUA did not coincide with the MCUA at at least one point in our compact convex domain, we could point-wise decrease the associated function by a constant, which would violate the fact that the function was an MCUA of the revenue function.

Consequently, let $\sigma_l^1$ be the first resource consumption value at which the MCUA of the revenue function coincides with the revenue function, i.e., $\Hat{Q}_l(\sigma_l^1) = Q_l(\sigma_l^1)$. Moreover, let $s_1$ be the slope of the segment from $\Hat{Q}_l(0)$ to the first $\sigma_l$ at which $\Hat{Q}_l(\sigma_l) = Q_l(\sigma_l)$. Since $\Hat{Q}_l(\sigma_l) > Q_l(\sigma_l)$ for all $\sigma_l \in [0, \sigma_l^1)$, i.e., there exists some $\epsilon>0$ such that $\Hat{Q}_l(\sigma_l) + \epsilon \geq Q_l(\sigma_l)$ for all $\sigma_l \in [0, \sigma_l^1)$, it follows that if we consider the segment from $\Hat{Q}_l(0)-\epsilon$ to $\Hat{Q}_l(\sigma_l^1)$, the corresponding curve is point-wise no greater than the original MCUA function and is strictly smaller over the domain $\sigma_l \in [0, \sigma_l^1)$. Moreover, the resulting curve preserves concavity, as the new segment is a slope that is strictly greater than $s_1$. 
Consequently, we have constructed a new function that satisfies the required properties of an MCUA and is strictly smaller than the original MCUA function over the domain $\sigma_l \in [0, \sigma_l^1)$. Thus, we have obtained our desired contradiction that the original function was an MCUA, which implies that $\Hat{Q}_l(0) = Q_l(0)$.

We can apply an entirely analogous line of reasoning as above to establish that $\Hat{Q}_l(\sigma_l^{\max}) = Q_l(\sigma_l^{\max})$.

\subsubsection{Proof of Corollary~\ref{cor:half-approx-het-cont}}

To establish this claim, we first introduce some notation. In particular, let $\Tilde{\sigmaa}^*$ be the solution corresponding to maximizing the MCUA of the revenue function. Furthermore, let $\sigmaa^*$ be the solution corresponding to maximizing the revenue function. Then, it follows that:
\begin{align} \label{eq:mcua-ub-opt}
    \Hat{Q}_R(\Tilde{\sigmaa}^*) \stackrel{(a)}{\geq} \Hat{Q}_R(\sigmaa^*) \stackrel{(b)}{\geq} Q_R(\sigmaa^*),
\end{align}
where (a) follows from the optimality of $\Tilde{\sigmaa}^*$ for the problem of maximizing the MCUA of the revenue function and (b) follows from the fact that the MCUA of the revenue function is a point-wise upper bound on the revenue function at each location. Consequently, the above inequalities imply that the optimal objective of the MCUA of the revenue function is at least that of maximizing the revenue function directly.

In the remainder of this proof, we leverage Lemmas~\ref{lem:greedy-opt-for-mcua},~\ref{lem:greedy-algorithm-structure}, and~\ref{lem:mcua-coincides-revenue-het} to show that the allocation corresponding to Algorithm~\ref{alg:GreedyRevMaxProb} in fact upper bounds the optimal objective of the MCUA of the revenue function (up to a factor of two), which thus implies the desired half approximation guarantee by Equation~\eqref{eq:mcua-ub-opt}. Thus, in the remainder of this proof we show that $2 Q_R(\sigmaa_A^*) \geq \Hat{Q}_R(\Tilde{\sigmaa}^*)$.

To establish this inequality, we first note from Lemmas~\ref{lem:greedy-opt-for-mcua},~\ref{lem:greedy-algorithm-structure}, and~\ref{lem:mcua-coincides-revenue-het} that the greedy allocation corresponding to Algorithm~\ref{alg:GreedyMCUA-OPT} achieves an optimal MCUA objective of $\Hat{Q}_R(\Tilde{\sigmaa}^*) = \sum_{l \neq l'} Q_l(\Tilde{\sigma}_l^*) + \Hat{Q}_{l'}(\Tilde{\sigma}_{l'}^*)$, with at most one location $l'$ such that $\Hat{Q}_{l'}(\Tilde{\sigma}_{l'}^*) > Q_{l'}(\Tilde{\sigma}_{l'}^*)$. We now establish the following two inequalities: (i) $Q_R(\Tilde{\sigmaa}) \geq \sum_{l \neq l'} Q_l(\Tilde{\sigma}_l^*)$ and (ii) $Q_R(\sigmaa') \geq \Hat{Q}_{l'}(\Tilde{\sigma}_{l'}^*)$, where $\Tilde{\sigmaa}$ and $\sigmaa'$ are the allocations corresponding to steps one and two of Algorithm~\ref{alg:GreedyRevMaxProb}, respectively.

\paragraph{Inequality (i):} We recall from earlier that the solution maximizing the MCUA of the revenue function exactly corresponds to the solution in step one of Algorithm~\ref{alg:GreedyRevMaxProb} other than that resources are allocated until either all the resources are exhausted or there are no further segments remaining to iterate over. Consequently, it holds that $\Tilde{\sigma}^*_l = \Tilde{\sigma}_l$ for all $l \neq l'$ and for $l = l'$ it holds that $\Tilde{\sigma}^*_{l'} \geq \Tilde{\sigma}_{l'}$, where $\Tilde{\sigma}^*_{l'} \leq \Tilde{\sigma}_{l'} + x_{s'}$. Thus, it follows that $Q_R(\Tilde{\sigmaa}) = \sum_{l \in L \backslash \{ l' \}} Q_l(\Tilde{\sigma}_l^*)$, which implies our desired inequality.

\paragraph{Inequality (ii):} Furthermore, observe that $Q_R(\sigmaa') \geq Q_{l'}(\sigma_{l'}^{\max})$, where $\sigma_{l'}^{\max}$ corresponds to the allocation that maximizes the revenue at location $l'$ among the set of all affordable allocations, i.e., $\sigma_{l'}^{\max} = \argmax_{\sigma_{l'} \in [0, t_l]} Q_{l'}(\sigma_l)$. Since the MCUA of the revenue function is such that $\Hat{Q}_{l}(\sigma_l^{\max}) = Q_l(\sigma_l^{\max})$ (see Lemma~\ref{lem:mcua-coincides-revenue-het}) for all locations $l$ and is monotonically increasing, it follows that $Q_R(\sigmaa') \geq Q_{l'}(\sigma_{l'}^{\max}) = \Hat{Q}_{l}(\sigma_l^{\max}) \geq \Hat{Q}_{l'}(\Tilde{\sigma}_{l'}^*)$.

Together, inequalities (i) and (ii) imply that:
\begin{align} \label{eq:ub-mcua-greedy-algorithm-het}
    Q_R(\Tilde{\sigmaa}^*) + Q_R(\sigmaa') \geq \sum_{l \neq l'} Q_l(\Tilde{\sigma}_l^*) + \Hat{Q}_{l'}(\Tilde{\sigma}_{l'}^*) = \Hat{Q}_R(\Tilde{\sigmaa}^*).
\end{align}
Finally, since $\sigmaa_A^* = \argmax \{ Q_R(\Tilde{\sigmaa}^*), Q_R(\sigmaa') \}$ from step 3 of Algorithm~\ref{alg:GreedyRevMaxProb}, it follows from the above
inequality that
\begin{align*}
    2 Q_R(\sigmaa_A^*) \geq Q_R(\Tilde{\sigmaa}^*) + Q_R(\sigmaa') \stackrel{(a)}{\geq} \Hat{Q}_R(\Tilde{\sigmaa}^*) \stackrel{(a)}{\geq} Q_R(\sigmaa^*),
\end{align*}
where (a) follows by Equation~\eqref{eq:ub-mcua-greedy-algorithm-het} and (b) follows by Equation~\eqref{eq:mcua-ub-opt}. Thus, we have obtained our desired half approximation guarantee, which establishes our desired result.

\subsection{Proof of Theorem~\ref{thm:greedy-resource-augmentation-rev-max}} \label{apdx:pf-resource-aug-rev-max}

To prove this claim, we first recall that, by construction, the MCUA is an upper bound on the revenue function for all feasible administrator strategies, i.e., $\Hat{Q}_R(\sigmaa) \geq Q_R(\sigmaa)$ for all $\sigmaa \in \Omega_R$ (also see proof of Theorem~\ref{thm:greedy-half-approx-rev-max}). Next, to establish our resource augmentation guarantee with $R+1$ resources, we define $\Tilde{\sigmaa}^{R+1}$ as the allocation corresponding to Step 1 of Algorithm~\ref{alg:GreedyRevMaxProb} with $R+1$ resources and $\Tilde{\sigmaa}_A^{R+1}$ as the allocation corresponding to Algorithm~\ref{alg:GreedyRevMaxProb} with $R+1$ resources. Moreover, we let $\Tilde{\sigmaa}^*$ be the optimal solution corresponding to directly optimizing the MCUA of the expected revenue function as given by Algorithm~\ref{alg:GreedyMCUA-OPT}. We now show that $Q_R(\sigmaa^*) \leq Q_{R+1}(\Tilde{\sigmaa}_A^{R+1})$ to establish our claim.

To show that $Q_R(\sigmaa^*) \leq Q_{R+1}(\Tilde{\sigmaa}_A^{R+1})$, we establish four key relations: (i) $Q_R(\sigmaa^*) \leq \Hat{Q}_R(\Tilde{\sigmaa}^*)$, (ii) $\Hat{Q}_R(\Tilde{\sigmaa}^*) \leq \Hat{Q}_{R+1}(\Tilde{\sigmaa}^{R+1})$, (iii) $\Hat{Q}_{R+1}(\Tilde{\sigmaa}^{R+1}) = Q_{R+1}(\Tilde{\sigmaa}^{R+1})$, and (iv) $Q_{R+1}(\Tilde{\sigmaa}^{R+1}) \leq Q_{R+1}(\Tilde{\sigmaa}_A^{R+1})$.

\paragraph{Relation (i): $Q_R(\sigmaa^*) \leq \Hat{Q}_R(\Tilde{\sigmaa}^*)$:} By construction, the MCUA of the revenue function upper bounds the revenue for all allocation strategies $\sigmaa$. Hence, it follows that:
\begin{align*}
    Q_R(\sigmaa^*) \leq \Hat{Q}_R(\sigmaa^*) \leq \Hat{Q}_R(\Tilde{\sigmaa}^*),
\end{align*}
where the first inequality follows as the MCUA of the revenue function upper bounds the revenue at $\sigmaa^*$ and the latter inequality follows by the optimality of $\Tilde{\sigmaa}^*$ for the MCUA of the revenue function.

\paragraph{Relation (ii): $\Hat{Q}_R(\Tilde{\sigmaa}^*) \leq \Hat{Q}_{R+1}(\Tilde{\sigmaa}^{R+1})$:}

We recall from the proof of Theorem~\ref{thm:greedy-half-approx-rev-max} that the allocation maximizing the MCUA of the revenue function corresponds to the solution of Algorithm~\ref{alg:GreedyMCUA-OPT}. In other words, the allocation maximizing the MCUA of the revenue function corresponds to the solution in the step 1 of Algorithm~\ref{alg:GreedyRevMaxProb}, where resources are allocated until either all the resources are exhausted or there are no further segments are remaining to iterate over. Consequently, by the termination condition of step one of Algorithm~\ref{alg:GreedyMCUA-OPT}, there is at most one segment $s'$ and associated location $l_{s'}$ such that the solutions of step one of Algorithm~\ref{alg:GreedyRevMaxProb} and that of Algorithm~\ref{alg:GreedyMCUA-OPT} differ. In particular, it holds that $\Tilde{\sigma}^*_l = \Tilde{\sigma}_l$ for all $l \neq l_{s'}$ and for $l = l_{s'}$ it holds that $\Tilde{\sigma}^*_{l_{s'}} \geq \Tilde{\sigma}_{l_{s'}}$, where $\Tilde{\sigma}^*_{l_{s'}} \leq \Tilde{\sigma}_{l_{s'}} + x_{s'}$. Without loss of generality, we assume that $\Tilde{\sigma}^*_{l_{s'}} > \Tilde{\sigma}_{l_{s'}}$, as if this were not the case, then our desired inequality trivially holds.

Next, with $R+1$ resources, it must hold that the allocation $\Tilde{\sigmaa}^{R+1}$ corresponding to step one of Algorithm~\ref{alg:GreedyRevMaxProb} satisfies $\Tilde{\sigma}_l^{R+1} \geq \Tilde{\sigma}_l = \Tilde{\sigma}^*_l$ for all $l \neq l_{s'}$, as the greedy allocation with $R+1$ resources will at least allocate the same amount of resources to all locations as that with $R$ resources. Furthermore, since $x_{s'} \leq 1$ for all segments $s$, it also holds for $l = l_{s'}$ that the greedy allocation will completely allocate $x_{s'}$ to location $l_{s'}$; hence, it holds that $\Tilde{\sigma}_{l_{s'}}^{R+1} \geq \Tilde{\sigma}_{l_{s'}} + x_{s'} \geq \Tilde{\sigma}^*_{l_{s'}}$.

Thus, we have established for all locations $l$ that $\Tilde{\sigma}^{R+1}_l \geq \Tilde{\sigma}^*_l$. Finally, our desired inequality that $\Hat{Q}_R(\Tilde{\sigmaa}^*) \leq \Hat{Q}_{R+1}(\Tilde{\sigmaa}^{R+1})$ follows by the monotonicity of the MCUA of the revenue function in the resource spending at each location.

\paragraph{Relation (iii): $\Hat{Q}_{R+1}(\Tilde{\sigmaa}^{R+1}) = Q_{R+1}(\Tilde{\sigmaa}^{R+1})$:} We note that the allocation corresponding to the greedy procedure in Step 1 of Algorithm~\ref{alg:GreedyRevMaxProb} corresponds to a point where both the revenue function and its corresponding MCUA coincide. Consequently, it follows that $\Hat{Q}_{R+1}(\Tilde{\sigmaa}^{R+1}) = Q_{R+1}(\Tilde{\sigmaa}^{R+1})$.


\paragraph{Relation (iv): $Q_{R+1}(\Tilde{\sigmaa}^{R+1}) \leq Q_{R+1}(\Tilde{\sigmaa}_A^{R+1})$:} This result directly follows, by definition, from the fact that $\Tilde{\sigmaa}_A^{R+1} = \argmax \{ Q_{R+1}(\Tilde{\sigmaa}), Q_{R+1}(\sigmaa') \}$. 

Finally, combining the above four relations that we have shown, we obtain that:
\begin{align*}
    Q_R(\sigmaa^*) \leq \Hat{Q}_{R}(\Tilde{\sigmaa}^*) \leq \Hat{Q}_{R+1}(\Tilde{\sigmaa}^{R+1}) = Q_{R+1}(\Tilde{\sigmaa}^{R+1}) \leq Q_{R+1}(\Tilde{\sigmaa}_A^{R+1}),
\end{align*}
which proves our claim that $Q_R(\sigmaa^*) \leq Q_{R+1}(\Tilde{\sigmaa}_A^{R+1})$, i.e., Algorithm~\ref{alg:GreedyRevMaxProb} with $R+1$ resources achieves a total revenue that is at least that corresponding to the optimal solution of Problem~\eqref{eq:admin-obj-revenue}-\eqref{eq:bi-level-con-revenue}. This establishes our claim.

\subsection{Proof of Theorem~\ref{thm:apx-greedy-resource-augmentation-constraints}} \label{apdx:pf-thm-additional-constraints-apx}


In the following, we prove Theorem~\ref{thm:apx-greedy-resource-augmentation-constraints} using four intermediate lemmas, which we elucidate below. After presenting the statements of these lemmas, we present their proofs in the remainder of this section.

To elucidate our lemmas, we begin by defining the following optimization problem, which rather than maximizing the payoff across all locations, maximizes the sum of the administrator's payoffs across all sets of locations in the level $\L_1$ of the hierarchy:
\begin{maxi!}|s|[2]   
    {(\Tilde{R}_S)_{S \in \L_1} \in \mathbbm{R}^{|\L_1|}_{\geq 0}}                   
    { \sum_{S \in \L_1} P_S(\Tilde{R}_S),  \label{eq:welfare-obj-decomp21}}   
    {\label{eq:Eg00121}}             
    {}          
    \addConstraint{\sum_{S \in \L_1} \Tilde{R}_S}{\leq R, \label{eq:resource-constraint-decomp21}} 
    \addConstraint{\Tilde{R}_S}{\leq \Bar{\lambda}_S, \quad \forall S \in \L_1 \label{eq:L1-con21}} 
    \addConstraint{\sum_{S' \subseteq S} \Tilde{R}_{S'}}{\leq \Bar{\lambda}_S, \quad \forall S \in \L_2 \cup \ldots \cup \L_t, \label{eq:probability-feasibility21}}
\end{maxi!}
where, for each set $S \in \L_1$, we define
\begin{maxi!}|s|[2]   
    {(\sigma_l)_{l\in S}}                   
    { \sum_{l \in S} P_l(\sigma_l),  \label{eq:welfare-obj-decomp2}}   
    {\label{eq:Eg001}}             
    {P_S(\Tilde{R}_S) = }          
    \addConstraint{\sum_{l\in S} \sigma_l}{\leq \Bar{\lambda}_S, \label{eq:resource-constraint-decomp2}} 
    \addConstraint{\sigma_l}{\in [0, 1]. \label{eq:probab-constraint-decomp2}} 
\end{maxi!}

Our first lemma establishes an equivalence between Problem~\eqref{eq:welfare-obj-decomp21}-\eqref{eq:probability-feasibility21} and the administrator's payoff maximization Problem~\eqref{eq:welfare-obj}-\eqref{eq:ub-lb-con} by establishing that the optimal payoff corresponding to both problems is equal.

\begin{lemma} \label{lem:opt-sol-reparametrization}
Suppose that $\sigmaa^*$ is the optimal solution to Problem~\eqref{eq:welfare-obj}-\eqref{eq:ub-lb-con} and $(R_S^*)_{S \in \L_1}$ is the optimal solution to Problem~\eqref{eq:welfare-obj-decomp21}-\eqref{eq:probability-feasibility21}. Then, the optimal payoff corresponding to the solutions of both optimization problems are equal, i.e., $\sum_{S \in \L_1} P_S(R_S^*) = P_R(\sigmaa^*)$.
\end{lemma}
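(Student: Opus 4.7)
The plan is to establish the equality $\sum_{S \in \L_1} P_S(R_S^*) = P_R(\sigmaa^*)$ by showing both inequalities via explicit constructions that exploit the hierarchical structure. The central observation is that the sets in $\L_1$ form a partition of $L$: since $\L_1$ consists of the bottom-most elements of the hierarchy, any two distinct sets in $\L_1$ must be disjoint (by Definition~\ref{def:hierarchy}, neither can be strictly contained in the other), and the assumption that every location belongs to some set in each layer forces these disjoint sets to cover $L$. Consequently, the objective in Problem~\eqref{eq:welfare-obj} decomposes additively as $\sum_{l \in L} P_l(\sigma_l) = \sum_{S \in \L_1} \sum_{l \in S} P_l(\sigma_l)$, which is precisely the structure exploited by the inner problems defining $P_S(\cdot)$.

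For the direction $\sum_{S \in \L_1} P_S(R_S^*) \geq P_R(\sigmaa^*)$, I take the optimal $\sigmaa^*$ and define $\Tilde{R}_S = \sum_{l \in S} \sigma_l^*$ for each $S \in \L_1$. I then need to verify that $(\Tilde{R}_S)_{S \in \L_1}$ is feasible for Problem~\eqref{eq:welfare-obj-decomp21}-\eqref{eq:probability-feasibility21}: the global budget constraint~\eqref{eq:resource-constraint-decomp21} follows from $\sum_{S \in \L_1} \Tilde{R}_S = \sum_{l \in L} \sigma_l^* \leq R$ (using the partition property); constraint~\eqref{eq:L1-con21} follows directly from Constraint~\eqref{eq:ub-lb-con} applied to $S \in \L_1$; and constraint~\eqref{eq:probability-feasibility21} for any $S \in \L_2 \cup \ldots \cup \L_t$ follows because the hierarchy forces each $S' \in \L_1$ to be either contained in $S$ or disjoint from $S$, so $\sum_{S' \subseteq S, S' \in \L_1} \Tilde{R}_{S'} = \sum_{l \in S} \sigma_l^* \leq \Bar{\lambda}_S$. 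Finally, since $(\sigma_l^*)_{l \in S}$ is feasible for the inner Problem~\eqref{eq:welfare-obj-decomp2}-\eqref{eq:probab-constraint-decomp2} defining $P_S(\Tilde{R}_S)$ (using $\Bar{\lambda}_S \geq \Tilde{R}_S$), the optimality of $P_S(\Tilde{R}_S)$ gives $P_S(\Tilde{R}_S) \geq \sum_{l \in S} P_l(\sigma_l^*)$, and summing over $S \in \L_1$ yields the inequality.

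For the reverse direction $\sum_{S \in \L_1} P_S(R_S^*) \leq P_R(\sigmaa^*)$, I take the optimal $(R_S^*)_{S \in \L_1}$ together with the corresponding inner optimizers $(\sigma_l^S)_{l \in S}$ achieving $P_S(R_S^*)$, and stitch them into a global strategy $\Tilde{\sigmaa}$ by setting $\Tilde{\sigma}_l = \sigma_l^S$ where $S$ is the unique $\L_1$-set containing $l$. The feasibility verification is symmetric to the forward direction: $\Tilde{\sigma}_l \in [0,1]$ is inherited from the inner problem, the global budget follows from $\sum_{l \in L} \Tilde{\sigma}_l = \sum_{S \in \L_1} \sum_{l \in S} \sigma_l^S \leq \sum_{S \in \L_1} R_S^* \leq R$, and for any $S \in \L_2 \cup \ldots \cup \L_t$ the hierarchy again gives $\sum_{l \in S} \Tilde{\sigma}_l = \sum_{S' \in \L_1, S' \subseteq S} \sum_{l \in S'} \sigma_l^{S'} \leq \sum_{S' \subseteq S} R_{S'}^* \leq \Bar{\lambda}_S$. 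The payoff computation $P_R(\Tilde{\sigmaa}) = \sum_{S \in \L_1} \sum_{l \in S} P_l(\sigma_l^S) = \sum_{S \in \L_1} P_S(R_S^*)$ then yields the bound via $P_R(\sigmaa^*) \geq P_R(\Tilde{\sigmaa})$.

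The only subtle point—and thus the main bookkeeping obstacle—is the repeated use of the hierarchy property to argue that for any $S \in \L_2 \cup \ldots \cup \L_t$, the sets $\{S' \in \L_1 : S' \subseteq S\}$ partition $S$. This requires combining Definition~\ref{def:hierarchy} (each pair is nested or disjoint) with the assumption that every location belongs to some set in each layer, and then invoking the fact that $\L_1$ sets are minimal. Once this partition fact is cleanly established, both inequalities follow from routine verification.
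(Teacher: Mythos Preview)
Your proposal is correct and follows essentially the same approach as the paper: define $\Tilde{R}_S = \sum_{l \in S}\sigma_l^*$ and verify feasibility for the forward direction, then stitch the inner optimizers into a global $\Tilde{\sigmaa}$ for the reverse direction, in both cases using that $\L_1$ partitions $L$ and that any higher-layer $S$ is partitioned by the $\L_1$-sets it contains. The paper in fact only spells out the forward direction and omits the reverse as ``analogous,'' so your write-up is, if anything, more complete than the paper's.
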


Lemma~\ref{lem:opt-sol-reparametrization} establishes that, without loss of generality, it suffices to focus our attention on Problem~\eqref{eq:welfare-obj-decomp21}-\eqref{eq:probability-feasibility21} to establish our desired approximation ratio guarantee. We also note that analogous to Problem~\eqref{eq:welfare-obj-decomp21}-\eqref{eq:probability-feasibility21}, which corresponds to maximizing the administrator's payoffs, we can formulate an analogous problem of maximizing the MCUA of the payoff function as follows:
\begin{maxi!}|s|[2]   
    {(\Tilde{R}_S)_{S \in \L_1} \in \mathbbm{R}^{|\L_1|}_{\geq 0}}                   
    { \sum_{S \in \L_1} \Hat{P}_S(\Tilde{R}_S),  \label{eq:welfare-obj-decomp21-mcua}}   
    {\label{eq:Eg00121-mcua}}             
    {}          
    \addConstraint{\sum_{S \in \L_1} \Tilde{R}_S}{\leq R, \label{eq:resource-constraint-decomp21-mcua}} 
    \addConstraint{\Tilde{R}_S}{\leq \Bar{\lambda}_S, \quad \forall S \in \L_1 \label{eq:L1-con21-mcua}} 
    \addConstraint{\sum_{S' \subseteq S} \Tilde{R}_{S'}}{\leq \Bar{\lambda}_S, \quad \forall S \in \L_2 \cup \ldots \cup \L_t, \label{eq:probability-feasibility21-mcua}}
\end{maxi!}
where, for each set $S \in \L_1$, we define
\begin{maxi!}|s|[2]   
    {(\sigma_l)_{l\in S}}                   
    { \sum_{l \in S} \Hat{P}_l(\sigma_l),  \label{eq:welfare-obj-decomp2-mcua}}   
    {\label{eq:Eg001-mcua}}             
    {\Hat{P}_S(\Tilde{R}_S) = }          
    \addConstraint{\sum_{l\in S} \sigma_l}{\leq \Bar{\lambda}_S, \label{eq:resource-constraint-decomp2-mcua}} 
    \addConstraint{\sigma_l}{\in [0, 1], \label{eq:probab-constraint-decomp2-mcua}} 
\end{maxi!}
where we denote $\Hat{P}_l(\sigma_l)$ as the MCUA of the payoff function $P_l(\sigma_l)$ at each location $l$. 

Our next lemma establishes that maximizing the MCUA of the payoff function, i.e., solving Problem~\eqref{eq:welfare-obj-decomp21-mcua}-\eqref{eq:probability-feasibility21-mcua}, results in a higher objective compared to directly maximizing the payoff function, i.e., solving Problem~\eqref{eq:welfare-obj-decomp21}-\eqref{eq:probability-feasibility21}.

\begin{lemma} \label{lem:ub-greedy-reparametrization}
Suppose that $(R_S^*)_{S \in \L_1}$ is the optimal solution to Problem~\eqref{eq:welfare-obj-decomp21}-\eqref{eq:probability-feasibility21} and $(\Hat{R}_S^*)_{S \in \L_1}$ is the optimal solution to Problem~\eqref{eq:welfare-obj-decomp21-mcua}-\eqref{eq:probability-feasibility21-mcua}. Then, it holds that $\sum_{S \in \L_1} P_S(R_S^*) \leq  \sum_{S \in \L_1} \Hat{P}_S(\Hat{R}_S^*)$.
\end{lemma}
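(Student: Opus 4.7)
The plan is to show this in three short steps, leveraging only the defining property of the MCUA and the fact that the two outer optimization problems share an identical feasible region. First, I would establish a pointwise comparison at the level of a single set $S \in \L_1$: for any non-negative $\Tilde{R}_S$ (feasible for the inner budget), $\Hat{P}_S(\Tilde{R}_S) \geq P_S(\Tilde{R}_S)$. This follows because the inner Problems \eqref{eq:welfare-obj-decomp2}-\eqref{eq:probab-constraint-decomp2} and \eqref{eq:welfare-obj-decomp2-mcua}-\eqref{eq:probab-constraint-decomp2-mcua} have identical feasible regions in $(\sigma_l)_{l \in S}$, and by construction of the MCUA we have $\Hat{P}_l(\sigma_l) \geq P_l(\sigma_l)$ for every location $l$ and every $\sigma_l$. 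Thus, evaluating the MCUA inner objective at any maximizer of the payoff inner problem gives an objective value that is at least $P_S(\Tilde{R}_S)$, hence the MCUA inner maximum $\Hat{P}_S(\Tilde{R}_S)$ is at least as large.

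Second, I would observe that the outer Problems \eqref{eq:welfare-obj-decomp21}-\eqref{eq:probability-feasibility21} and \eqref{eq:welfare-obj-decomp21-mcua}-\eqref{eq:probability-feasibility21-mcua} share the very same feasible set (Constraints \eqref{eq:resource-constraint-decomp21}-\eqref{eq:probability-feasibility21} and \eqref{eq:resource-constraint-decomp21-mcua}-\eqref{eq:probability-feasibility21-mcua} are identical), so the tuple $(R_S^*)_{S \in \L_1}$ is feasible for the MCUA outer problem. Combining this with the pointwise inequality from the first step and the optimality of $(\Hat{R}_S^*)_{S \in \L_1}$ yields
\[
    \sum_{S \in \L_1} \Hat{P}_S(\Hat{R}_S^*) \;\geq\; \sum_{S \in \L_1} \Hat{P}_S(R_S^*) \;\geq\; \sum_{S \in \L_1} P_S(R_S^*),
\]
which is exactly the claimed inequality.

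There is no serious obstacle here: the entire argument is a standard relaxation/feasibility comparison. The only point that needs a touch of care is the inner-problem comparison, which must be stated carefully because $\Hat{P}_S(\Tilde{R}_S)$ and $P_S(\Tilde{R}_S)$ are themselves the values of constrained optimizations; this is handled cleanly by noting that any feasible allocation of resources for the payoff inner problem is also feasible for the MCUA inner problem (the constraint sets coincide), together with the defining pointwise dominance $\Hat{P}_l \geq P_l$ of the MCUA.
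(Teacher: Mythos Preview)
Your proposal is correct and follows essentially the same approach as the paper: both arguments use the pointwise dominance $\Hat{P}_l \geq P_l$ to conclude $\Hat{P}_S(R_S^*) \geq P_S(R_S^*)$ via the shared inner feasible region, and then invoke the optimality of $(\Hat{R}_S^*)_{S \in \L_1}$ over the shared outer feasible region. The paper just writes the chain in one line by fixing an inner maximizer $(\Bar{\sigma}_l^*)_{l \in S}$ of $P_S(R_S^*)$ and applying the MCUA bound directly at that point.
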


Lemma~\ref{lem:ub-greedy-reparametrization} establishes that it suffices to obtain the desired approximation ratio guarantees with respect to the objective of maximizing the MCUA of the payoff functions.

Our third lemma establishes a characterization of the optimal solution of Problem~\eqref{eq:welfare-obj-decomp2-mcua}-\eqref{eq:probab-constraint-decomp2-mcua}. In particular, Lemma~\ref{lem:parametrization-simple-lem1} establishes that for any allocation $\Tilde{R}_S$ corresponding to a given set $S \in \L_1$, a greedy algorithm that allocates resources to locations in set $S$ in the descending order of slopes of the MCUA of the payoff functions achieves the optimal solution to Problem~\eqref{eq:welfare-obj-decomp2-mcua}-\eqref{eq:probab-constraint-decomp2-mcua}.

\begin{lemma} \label{lem:parametrization-simple-lem1}
Suppose $\Tilde{R}_S$ corresponds to the allocation for a set $S \in \L_1$ and $(\sigma_l')_{l \in S}$ is the allocation computed using step one of Algorithm~\ref{alg:GreedyWelMaxProb} for the location subset $S$ given $\Tilde{R}_S$ resources. Then, it holds that $\Hat{P}_S(\Tilde{R}_S) = \sum_{l \in S} \Hat{P}_l(\sigma_l')$.
\end{lemma}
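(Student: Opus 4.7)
The plan is to mirror the contradiction-based swap argument used in the proof of Lemma~\ref{lem:greedy-opt-for-mcua}, specialized to a single subset $S \in \L_1$ with resource budget $\Tilde{R}_S$. First, I would let $(\sigma_l')_{l \in S}$ denote the allocation produced by step one of Algorithm~\ref{alg:GreedyWelMaxProb} applied to $S$ with budget $\Tilde{R}_S$, and let $(\sigma_l^*)_{l \in S}$ denote an optimal solution of Problem~\eqref{eq:welfare-obj-decomp2-mcua}-\eqref{eq:probab-constraint-decomp2-mcua}. Since the MCUA $\Hat{P}_l$ is monotonically non-decreasing on $[0,t_l]$, both allocations will exhaust the budget up to the affordability caps, so $\sum_{l \in S} \sigma_l' = \sum_{l \in S} \sigma_l^* = \min\{\Tilde{R}_S, \sum_{l \in S} t_l\}$. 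Hence, if $\sum_{l \in S} \Hat{P}_l(\sigma_l') < \sum_{l \in S} \Hat{P}_l(\sigma_l^*)$, there must exist two locations $l', \Tilde{l} \in S$ with $\sigma_{l'}' < \sigma_{l'}^*$ and $\sigma_{\Tilde{l}}' > \sigma_{\Tilde{l}}^*$.

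Next, I would construct a perturbed allocation from the optimal by shifting a small mass $\epsilon > 0$, chosen so that $\sigma_{l'}^* - \epsilon \geq \sigma_{l'}'$ and $\sigma_{\Tilde{l}}^* + \epsilon \leq \sigma_{\Tilde{l}}'$, thereby preserving feasibility of the probability and capacity constraints. Using concavity of $\Hat{P}_l$ together with its super-differentiability at every point of $[0,t_l]$ (exactly as argued in the proof of Lemma~\ref{lem:greedy-opt-for-mcua}), the change in the MCUA objective induced by this perturbation is bounded below by $\epsilon\bigl(\partial\Hat{P}_{\Tilde{l}}(\sigma_{\Tilde{l}}^* + \epsilon) - \partial\Hat{P}_{l'}(\sigma_{l'}^* - \epsilon)\bigr)$, and by monotonic non-increase of the super-gradient we further have $\partial\Hat{P}_{l'}(\sigma_{l'}^* - \epsilon) \leq \partial\Hat{P}_{l'}(\sigma_{l'}')$.

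The crucial step is then to invoke the greedy ordering of segments: because Algorithm~\ref{alg:GreedyWelMaxProb} processes segments of the MCUAs across all locations in $S$ in descending order of slopes and chose to push allocation at $\Tilde{l}$ beyond $\sigma_{\Tilde{l}}^*$ while leaving $\sigma_{l'}' < \sigma_{l'}^*$, the segment filled at $\Tilde{l}$ around $\sigma_{\Tilde{l}}^* + \epsilon$ must have slope at least $\partial\Hat{P}_{l'}(\sigma_{l'}')$ (else greedy would have extended at $l'$ first). This yields a non-negative change in the objective, so the perturbed allocation is also optimal; iterating this swap produces a finite sequence of feasible optimal allocations converging to $(\sigma_l')$, giving $\sum_{l \in S} \Hat{P}_l(\sigma_l') = \sum_{l \in S} \Hat{P}_l(\sigma_l^*) = \Hat{P}_S(\Tilde{R}_S)$.

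The main obstacle will be carefully handling the super-gradient bookkeeping at interior and boundary breakpoints of the piecewise-linear MCUA: when $\sigma_{l'}'$ coincides with a kink of $\Hat{P}_{l'}$ or with $0$ or $t_{l'}$, one must use the appropriate one-sided super-gradient and verify that the greedy ordering still implies $\partial\Hat{P}_{\Tilde{l}}(\sigma_{\Tilde{l}}^* + \epsilon) \geq \partial\Hat{P}_{l'}(\sigma_{l'}')$. A clean way to handle this is to observe that each $\Hat{P}_l$ is piecewise linear concave and that the super-gradient at $\sigma_l'$ equals the slope of the next unfilled segment of $\Hat{P}_{l'}$ in the greedy ordering, which by the greedy rule is no larger than the slope of any segment already allocated at $\Tilde{l}$. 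A secondary subtlety is that the greedy rule is defined with the global affordability thresholds $t_l$ rather than the subset budget $\Tilde{R}_S$; however, since step one of Algorithm~\ref{alg:GreedyWelMaxProb} when applied only to the locations in $S$ with budget $\Tilde{R}_S$ terminates exactly when the remaining resources are exhausted or no further segment can be afforded, the resulting allocation coincides with the restriction of the global greedy ordering to $S$, so the argument above applies directly.
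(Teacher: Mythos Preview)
Your proposal is correct and takes essentially the same approach as the paper: the paper simply notes that Lemma~\ref{lem:parametrization-simple-lem1} is akin to Lemma~\ref{lem:greedy-opt-for-mcua} and omits the proof, while you spell out precisely that swap-based contradiction argument specialized to the subset $S$ with budget $\Tilde{R}_S$. Your additional remarks on handling super-gradients at kinks and on the affordability thresholds are reasonable elaborations but not points of divergence.
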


Lemma~\ref{lem:parametrization-simple-lem1} is akin to Lemma~\ref{lem:greedy-opt-for-mcua} and thus we omit its proof for brevity. 

Our final lemma establishes that Step 1 of the Constrained-Greedy algorithm computes the optimal allocation vector $(\Hat{R}_S^*)_{S \in \L_1}$ that corresponds to the solution to Problem~\eqref{eq:welfare-obj-decomp21-mcua}-\eqref{eq:probability-feasibility21-mcua}.

\begin{lemma} \label{lem:parametrization-simple-lem2}
Suppose that $(\Hat{R}_S^*)_{S \in \L_1}$ is the optimal solution to Problem~\eqref{eq:welfare-obj-decomp21-mcua}-\eqref{eq:probability-feasibility21-mcua}. Then, the resulting allocation computed in step 1 of the Constrained-Greedy algorithm achieves the optimal objective corresponding to the solution of Problem~\eqref{eq:welfare-obj-decomp21-mcua}-\eqref{eq:probability-feasibility21-mcua}.
\end{lemma}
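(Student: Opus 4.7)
The plan is to recast Problem~\eqref{eq:welfare-obj-decomp21-mcua}-\eqref{eq:probability-feasibility21-mcua} as a segment-selection problem and then establish optimality of Step 1 of Constrained-Greedy by an exchange argument directly on segments. Invoking Lemma~\ref{lem:parametrization-simple-lem1} together with the fact that, under a hierarchy, each location $l$ belongs to exactly one set in $\L_1$, I would rewrite the decomposed objective $\sum_{S \in \L_1}\Hat{P}_S(\Tilde{R}_S)$ as a sum of segment contributions $\sum_s c_s \tilde{x}_s$, where $\tilde{x}_s \in [0, x_s]$ denotes the mass selected from segment $s$ of the MCUA of $P_{l_s}$. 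The hierarchical upper bound constraints then translate into $\sum_{s:\, l_s \in S} \tilde{x}_s \le \Bar{\lambda}_S$ for every $S \in \H$, and the overall budget becomes $\sum_s \tilde{x}_s \le R$. Any feasible $(\tilde{x}_s)$ induces feasible aggregate budgets $(\Tilde{R}_S)_{S \in \L_1}$ and vice versa, so the two formulations share the same optimum, and Step 1 of Constrained-Greedy is exactly the greedy procedure on this segment-selection formulation.

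Next, I would suppose for contradiction that the greedy allocation $(\tilde{x}_s^A)$ is not optimal, and let $(\tilde{x}_s^*)$ be an optimum that differs from greedy. Let $s^{\star}$ be the first segment (in descending slope order) on which the two disagree. Since greedy allocates the maximum feasible mass to $s^{\star}$ at its iteration, we must have $\tilde{x}_{s^{\star}}^A > \tilde{x}_{s^{\star}}^*$, and feasibility of $(\tilde{x}_s^*)$ combined with agreement on all earlier-processed segments forces some later segment $s'$ with $c_{s'} \le c_{s^{\star}}$ to satisfy $\tilde{x}_{s'}^* > \tilde{x}_{s'}^A$. Transferring an $\epsilon > 0$ mass from $s'$ to $s^{\star}$ in $(\tilde{x}_s^*)$ changes the objective by $\epsilon(c_{s^{\star}} - c_{s'}) \ge 0$, yielding either a contradiction to strict optimality or a new optimum that agrees with greedy on one more segment; iterating the exchange then completes the argument.

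The main obstacle will be verifying feasibility of the swap under the hierarchical upper bound constraints. For any $S \in \H$ containing both $l_{s^{\star}}$ and $l_{s'}$, the sum $\sum_{s:\, l_s \in S} \tilde{x}_s$ is preserved by the transfer, so its constraint is unaffected. For any $S \in \H$ containing only $l_{s^{\star}}$, feasibility requires slack to absorb the increment at $s^{\star}$; here I would use the hierarchy property (Definition~\ref{def:hierarchy}) to argue that any such $S$ lies strictly below the smallest common ancestor of $l_{s^{\star}}$ and $l_{s'}$, and since greedy was able to allocate the larger mass $\tilde{x}_{s^{\star}}^A$ at its iteration, while $(\tilde{x}_s^*)$ agrees with greedy on all previously-processed (higher-slope) segments by choice of $s^{\star}$, the slack remaining in $S$ under $(\tilde{x}_s^*)$ is at least $\tilde{x}_{s^{\star}}^A - \tilde{x}_{s^{\star}}^*$, sufficient to accommodate the transfer. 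A symmetric argument handles sets $S$ containing only $l_{s'}$, and the overall resource budget is preserved trivially by the swap. This hierarchy-based slack accounting, exploiting the non-intersecting structure of $\H$ within each layer, is the crux of the proof.
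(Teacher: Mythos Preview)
Your high-level plan---recasting the problem as a linear segment-selection problem over laminar (polymatroidal) constraints and running an exchange argument---is essentially the same idea the paper uses, and you are right that the hierarchy-based slack accounting is the crux. However, your specific slack claim, as written, is false, and this is not a cosmetic issue: it is precisely the step where the hierarchy must be exploited carefully, and your argument does not do so.

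Concretely, you assert that for any $S\in\H$ with $l_{s^\star}\in S$ and $l_{s'}\notin S$, the slack in $S$ under $(\tilde x_s^*)$ is at least $\tilde x_{s^\star}^A-\tilde x_{s^\star}^*$. Your justification only controls the mass that $(\tilde x_s^*)$ places on segments processed \emph{up to} $s^\star$; it says nothing about mass that $(\tilde x_s^*)$ places on \emph{later} segments whose location also lies in $S$. A small counterexample: take $\L_1$-sets $\{l_1,l_2\}=S$ with $\bar\lambda_S=1$, a third location $l_3$ unconstrained, budget $R=2$, and three segments all of slope $5$: $s_1$ at $l_1$ with width $1$, $s_2$ at $l_2$ with width $1$, $s_3$ at $l_3$ with width $2$. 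Greedy (processing $s_1,s_2,s_3$) gives $(\tilde x^A)=(1,0,1)$; the alternative optimum $(\tilde x^*)=(0,1,1)$ has the same objective. Here $s^\star=s_1$, $\tilde x_{s_1}^A-\tilde x_{s_1}^*=1$, but the slack in $S$ under $(\tilde x^*)$ is $0$. Any transfer from $s'=s_3$ into $s_1$ is infeasible. The fix requires choosing $s'$ \emph{inside} the binding set (here $s_2$), but nothing in your argument identifies such an $s'$ or proves one exists.

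The paper handles exactly this difficulty, but at a different granularity. It works with the aggregate budgets $(\Hat R_S)_{S\in\L_1}$ rather than individual segments, and it does not pick the swap pair arbitrarily: it first locates the lowest layer $\L_{i'}$ in which some set simultaneously contains an ``over-allocated'' $\L_1$-set (where $\Hat R_S^*>\Hat R_S$) and an ``under-allocated'' one (where $\Hat R_S^*<\Hat R_S$), and selects $S',S''$ from inside that common set. This guarantees that every constraint at layers $\geq i'$ either contains both or neither (so the sum is preserved), while every constraint at layers $<i'$ containing only the under-allocated set $S''$ contains \emph{no} over-allocated $\L_1$-sets at all (by minimality of $i'$), hence $\sum_{S\subseteq S_1}\Hat R_S^*\leq \sum_{S\subseteq S_1}\Hat R_S\leq\bar\lambda_{S_1}$ with slack at least the carefully chosen $\epsilon$. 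Your segment-level argument can be repaired along the same lines---choose $s'$ so that $l_{s'}$ lies in the smallest binding set containing $l_{s^\star}$, and argue inductively up the layers---but the slack bound you currently state, and the reasoning you give for it, do not survive the example above.
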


We note that Lemma~\ref{lem:parametrization-simple-lem1} is analogous to several classical results in the literature which establish the optimality of greedy based algorithms when optimizing over a constraint structure that corresponds to a polymatroid (note that a hierarchy is a special case of a polymatroid constraint structure).

Finally, we combine Lemmas~\ref{lem:opt-sol-reparametrization}-\ref{lem:parametrization-simple-lem2} to complete the proof of Theorem~\ref{thm:apx-greedy-resource-augmentation-constraints} in Appendix~\ref{apdx:completing-pf-thm10-additional-constraints}.

\subsubsection{Proof of Lemma~\ref{lem:opt-sol-reparametrization}}

($\implies$:) Suppose that $\sigmaa^*$ is an optimal solution to Problem~\eqref{eq:welfare-obj}-\eqref{eq:ub-lb-con}. Then, we show that the optimal objective of Problem~\eqref{eq:welfare-obj-decomp21}-\eqref{eq:probability-feasibility21} is at least $P_R(\sigmaa^*)$. To see this, let $\Tilde{R}_S = \sum_{l \in S} \sigma_l^*$. Then, it clearly holds by the feasibility of $\sigmaa^*$ that $(\Tilde{R}_S)_{S \in \L_1}$ is feasible. To see this, note that:
\begin{itemize}
    \item $\sum_{S \in \L_1} \Tilde{R}_S = \sum_{S \in \L_1} \sum_{l \in S} \sigma_l^* = \sum_{l \in L} \sigma_l^* \leq R$, where the second equality follows by the definition of $\L_1$ for which it holds that each location $l$ belongs to exactly one set $S \in \L_1$.
    \item $\Tilde{R}_S = \sum_{l \in S} \sigma_l^* \leq \lambda_S,$ for all $S \in \L_1$
    \item $\sum_{S' \subseteq S} \Tilde{R}_{S'} = \sum_{S' \subseteq S} \sum_{l \in S'} \sigma_l^* = \sum_{l \in S} \sigma_l^* \leq \Bar{\lambda}_S$, for all $S \in \L_2 \cup \ldots \cup \L_t$, where the second equality follows by the definition of the layer $\L_1$ for which it holds that each location $l$ belongs to exactly one set $S' \in \L_1$.
\end{itemize}
Next, by the definition of $P_S(\Tilde{R}_S)$ in Equations~\eqref{eq:welfare-obj-decomp2}-\eqref{eq:probab-constraint-decomp2}, it follows that $P_S(\Tilde{R}_S) \geq \sum_{l \in S} P_l(\sigma_l^*)$. Thus, we have shown that:
\begin{align}
    P_R(\sigmaa^*) \stackrel{(a)}{=} \sum_{S \in \L_1} \sum_{l \in S} P_l(\sigma_l^*) \stackrel{(b)}{\leq} \sum_{S \in \L_1} P_S(\Tilde{R}_S) \stackrel{(c)}{\leq} \sum_{S \in \L_1} P_S(R_S^*),
\end{align}
where (a) follows by the definition of the layers $\L_1$ for which it holds that each location $l$ belongs to exactly one set $S \in \L_1$, (b) follows from our above established fact that $P_S(\Tilde{R}_S) \geq \sum_{l \in S} P_l(\sigma_l^*)$, and (c) follows by the optimality of $(R_S^*)_{S \in \L_1}$ for Problem~\eqref{eq:welfare-obj-decomp21}-\eqref{eq:probability-feasibility21}.

The proof of the reverse direction of this claim follows an almost analogous line of reasoning where we establish that the optimal solution of Problem~\eqref{eq:welfare-obj-decomp21}-\eqref{eq:probability-feasibility21} can be transformed to a feasible solution to Problem~\eqref{eq:welfare-obj}-\eqref{eq:probability-feasibility}. We omit the details of this claim for brevity.

\subsubsection{Proof of Lemma~\ref{lem:ub-greedy-reparametrization}}

Let $(\Bar{\sigma}_l^*)_{l \in S}$ be the optimal solution corresponding to Problem~\eqref{eq:welfare-obj-decomp2}-\eqref{eq:probab-constraint-decomp2} given the allocation $R_S^*$ for each set $S \in \L_1$, i.e., $P_S(R_S^*) = \sum_{l \in S} P_l(\Bar{\sigma}_l^*)$.

Then, to establish this claim, consider the following sequence of inequalities:
\begin{align*}
    \sum_{S \in \L_1} P_S(R_S^*) &= \sum_{S \in \L_1} \sum_{l \in S} P_l(\Bar{\sigma}_l^*) \stackrel{(a)}{\leq} \sum_{S \in \L_1} \sum_{l \in S} \Hat{P}_l(\Bar{\sigma}_l^*) \stackrel{(b)}{\leq} \sum_{S \in \L_1} \Hat{P}_S(R_S^*) \stackrel{(c)}{\leq} \sum_{S \in \L_1} \Hat{P}_S(\Hat{R}_S^*),
\end{align*}
where (a) follows as $\Hat{P}_l(\Bar{\sigma}_l^*) \geq P_l(\Bar{\sigma}_l^*)$ as $\Hat{P}_l(\cdot)$ is the MCUA of the payoff function $P_l(\cdot)$ and thus point-wise upper bounds the payoff function, (b) follows by the definition of $\Hat{P}_S(R_S^*)$, and (c) follows by the optimality of $(\Hat{R}_S^*)_{S \in \L_1}$ when maximizing the MCUA of the payoff function. This establishes our claim.

\subsubsection{Proof of Lemma~\ref{lem:parametrization-simple-lem2}} \label{apdx:pf-lem-resource-same}

To prove this claim, consider two solutions $(\Hat{R}_S)_{S \in \L_1}$ and $(\Hat{R}_S^*)_{S \in \L_1}$, where the first corresponds to the solution of step one of Constrained-Greedy and the latter corresponds to the optimal solution to Problem~\eqref{eq:welfare-obj-decomp21-mcua}-\eqref{eq:probability-feasibility21-mcua}.

Now, suppose for contradiction that $\sum_{S \in \L_1} \Hat{P}_S(\Hat{R}_S^*) > \sum_{S \in \L_1} \Hat{P}_S(\Hat{R}_S)$. This implies that there is some set $S'$ such that $\Hat{R}_{S'}^* \neq \Hat{R}_{S'}$. In particular, there must be some set $S'$ such that $\Hat{R}_{S'}^* > \Hat{R}_{S'}$ as the functions $\Hat{P}_S(\cdot)$ are monotonically non-decreasing in their argument for all $S \in \L_1$. Furthermore, given the hierarchical nature of the constraint set, where no two constraints in a given layer intersect with each other, we note that the total number of resources allocated under step one of Constrained-Greedy is equal to that under the solution to Problem~\eqref{eq:welfare-obj-decomp21-mcua}-\eqref{eq:probability-feasibility21-mcua}, i.e., $\sum_{S \in \L_1} \Hat{R}_S^* = \sum_{S \in \L_1} \Hat{R}_{S}$. Consequently, it holds for some set $S'' \in \L_1$ that $\Hat{R}_{S''}^* < \Hat{R}_{S''}$ since for some set $S' \in \L_1$ it holds that that $\Hat{R}_{S'} < \Hat{R}_{S'}^*$.

We will now construct another feasible strategy with a strictly higher objective for Problem~\eqref{eq:welfare-obj-decomp21-mcua}-\eqref{eq:probability-feasibility21-mcua} than that achieved under the allocation $(\Hat{R}_S^*)_{S \in \L_1}$. To do so, we first define an appropriate constant $\epsilon$ as follows. To define this constant, we first define $\epsilon_1 = \min_{|\Hat{R}_S^* - \Hat{R}_S|>0} |\Hat{R}_S^* - \Hat{R}_S|$, i.e., the minimum difference between the two solutions for which the solutions have some discrepancy. Furthermore, we define $\epsilon_{\min} = \min_{|\sum_{S' \subseteq S} \Hat{R}_{S'}^* - \Hat{R}_{S'}|>0} |\sum_{S' \subseteq S} \Hat{R}_{S'}^* - \Hat{R}_{S'}|$. Note here that for a set $S$, the equality $\sum_{S' \subseteq S} \Hat{R}_{S'}^* - \Hat{R}_{S'} = 0$ can only happen in one of two cases: (i) for all sets $S'$, it holds that $\Hat{R}_{S'}^* = \Hat{R}_{S'}$ or (ii) for some sets $S_1$ it holds that $\Hat{R}_{S_1}^* > \Hat{R}_{S_1}$, for some sets $S_2$ it holds that $\Hat{R}_{S_2}^* = \Hat{R}_{S_2}$, and for some sets $S_3$, it holds that $\Hat{R}_{S_3}^* < \Hat{R}_{S_3}$. Finally, we define $\epsilon = \min \{ \epsilon_1, \epsilon_{\min} \}$. Note by definition that $\epsilon > 0$.

Having defined $\epsilon$, to construct a feasible strategy with a strictly higher objective for Problem~\eqref{eq:welfare-obj-decomp21-mcua}-\eqref{eq:probability-feasibility21-mcua} than that achieved under the allocation $(\Hat{R}_S^*)_{S \in \L_1}$, we now select two sets $S', S''$. To define these two sets, first consider all sets $S$ in the layer $\L_2$ and check if there is any set that contains both a set $S'$ such that $\Hat{R}_{S'}^* > \Hat{R}_{S'}$ and a set $S''$ such that $\Hat{R}_{S''}^* < \Hat{R}_{S''}$. If there is not such set in $\L_1$, we proceed to layer $\L_3$ and so on. Note this process certainly terminates at layer $\L_{t+1}$ which corresponds to the overall resource constraint by our assumption that there exists some $S'$ such that $\Hat{R}_{S'}^* > \Hat{R}_{S'}$ and a set $S''$ such that $\Hat{R}_{S''}^* < \Hat{R}_{S''}$. Thus, define $\L_{i'}$ as the first layer at which the above-mentioned phenomena occurs.

Then, define allocation $(\Tilde{R}_S)_{S \in \L_1}$ such that $\Tilde{R}_S = \Hat{R}_S^*$ for all $S \in \L_1 \backslash (S' \cup S'')$, $\Tilde{R}_{S'} = \Hat{R}_{S'}^* - \epsilon$ and $\Tilde{R}_{S''} = \Hat{R}_{S''}^*+\epsilon$.
We first show that this new strategy is feasible:
\begin{align*}
    &\sum_{S \in \L_1} \Tilde{R}_S = \sum_{S \in \L_1 \backslash (S' \cup S'')} \Hat{R}_S^* + \Hat{R}_{S'}^* - \epsilon + \Hat{R}_{S''}^* + \epsilon = \sum_{S \in \L_1} \Hat{R}_S^* \leq R, \\
    &\Tilde{R}_S = \Hat{R}_S^* \leq \Bar{\lambda}_S, \quad \forall S \in \L_1 \backslash (S' \cup S''), \\
    &\Bar{\lambda}_{S'} \geq \Hat{R}_{S'}^* \geq \Hat{R}_{S'}^* - \epsilon = \Tilde{R}_{S'} \implies \Tilde{R}_{S'} \leq \Bar{\lambda}_{S'}, \\
    &\Tilde{R}_{S''} = \Hat{R}_{S''}^*+\epsilon \leq \Hat{R}_{S''} \leq \Bar{\lambda}_{S''} \implies \Tilde{R}_{S''} \leq \Bar{\lambda}_{S''},
\end{align*}
Moreover for all layers $i \geq i'$, it holds that:
\begin{align*}
    &\sum_{S \in S_1} \Tilde{R}_S = \sum_{S \in S_1 } \Hat{R}_S^* \leq \Bar{\lambda}_{S_1}, \forall S_1 \subseteq \L_{i'} \cup \ldots \cup \L_t, \text{ s.t. } S', S'' \notin S_1 \\
    &\sum_{S \in S_1} \Tilde{R}_S = \sum_{S \in S_1 \backslash (S' \cup S'') } \Hat{R}_S^* + \Hat{R}_{S'}^* - \epsilon + \Hat{R}_{S''}^* + \epsilon = \sum_{S \in S_1} \Hat{R}_S^* \leq \Bar{\lambda}_{S_1}, \forall S_1 \subseteq \L_{i'} \cup \ldots \cup \L_t, \text{ s.t. } S', S'' \in S_1,
\end{align*}
as either $S', S'' \in S_1$ or $S', S'' \notin S_1$ for all layers $i \geq i'$.

Next, for $i<i'$, it holds that:
\begin{align*}
    &\sum_{S \in S_1} \Tilde{R}_S = \sum_{S \in S_1 } \Hat{R}_S^* \leq \Bar{\lambda}_{S_1}, \forall S_1 \subseteq \L_{2} \cup \ldots \cup \L_{i'}, \text{ s.t. } S', S'' \notin S_1 \\
    &\sum_{S \in S_1} \Tilde{R}_S = \sum_{S \in S_1 \backslash S' } \Hat{R}_S^* + \Hat{R}_{S'}^* - \epsilon \leq \sum_{S \in S_1} \Hat{R}_S^* \leq \Bar{\lambda}_{S_1}, \forall S_1 \subseteq \L_{2} \cup \ldots \cup \L_{i'}, \text{ s.t. } S' \in S_1, S'' \notin S_1, \\
    &\sum_{S \in S_1} \Tilde{R}_S = \sum_{S \in S_1 \backslash S'' } \Hat{R}_S^* + \Hat{R}_{S''}^* + \epsilon = \sum_{S \in S_1} \Hat{R}_S^* + \epsilon \leq \sum_{S \in S_1} \Hat{R}_S \leq \Bar{\lambda}_{S_1}, \forall S_1 \subseteq \L_{2} \cup \ldots \cup \L_{i'}, \text{ s.t. } S' \notin S_1, S'' \in S_1,
\end{align*}
where the inequality that $\sum_{S \in S_1} \Hat{R}_S^* + \epsilon \leq \sum_{S \in S_1} \Hat{R}_S$ follows from our construction of $i'$ and also the definition of $\epsilon$. Thus, we have established that the allocation $(\Tilde{R}_S)_{S \in \L_1}$ is feasible.

Finally, we show that the objective of Problem~\eqref{eq:welfare-obj-decomp21-mcua}-\eqref{eq:probability-feasibility21-mcua} is greater under the allocation $(\Tilde{R}_S)_{S \in \L_1}$. To see this, first recall by Lemma~\ref{lem:parametrization-simple-lem1} that the objective $\Hat{P}_S(R_S) = \sum_{l \in S} \Hat{P}_l(\Hat{\sigma}_l)$ for any set $S$ and allocation $R_S$, where $(\Hat{\sigma}_l)_{l \in S}$ is given by the solution of a greedy procedure corresponding to the step one of Algorithm~\ref{alg:GreedyWelMaxProb} for the location subset $S$ given $R_S$ resources. Furthermore, let $c_{S'}$ be the maximum slope of the segments corresponding to set $S'$ that have been allocated resources under the optimal allocation but not allocated resources under step one of Constrained-Greedy and let $c_{S''}$ be the minimum slope of the segments corresponding to the set $S''$ that have been allocated resources under step one of Constrained-Greedy but not allocated resources under the optimal allocation. For simplicity of exposition, we assume that the slopes of all segments of the MCUA of the payoff function are distinct (and, in particular, that $c_{S''}>c_{S'}$) but note that the proof can readily be generalized to the setting when this condition is not necessarily met. Then, we have that:
\begin{align*}
    \sum_{S \in \L_1} \Hat{P}_S(\Tilde{R}_S) &= \sum_{S \in \L_1 \backslash (S' \cup S'')} \Hat{P}_S(\Hat{R}_S^*) + \Hat{P}_{S'}(\Hat{R}_{S'}^* -\epsilon) + \Hat{P}_{S''}(\Hat{R}_{S''}^* + \epsilon), \\
    &\stackrel{(a)}{\geq} \sum_{S \in \L_1 \backslash (S' \cup S'')} \Hat{P}_S(\Hat{R}_S^*) + \Hat{P}_{S'}(\Hat{R}_{S'}^*) - \epsilon c_{S'} + \Hat{P}_{S''}(\Hat{R}_{S''}^*) + \epsilon c_{S''}, \\
    &= \sum_{S \in \L_1} \Hat{P}_S(\Hat{R}_S^*) + \epsilon (c_{S''} - c_{S'}), \\
    &\stackrel{(b)}{>} \sum_{S \in \L_1} \Hat{P}_S(\Hat{R}_S^*),
\end{align*}
where (a) follows from the definition of $c_{S'}$ and $c_{S''}$ and (b) follows as $c_{S''}>c_{S'}$ due to the nature of step one of Constrained-Greedy which allocates resources in the descending order of the slopes of the segments. Thus, we have obtained that the objective under the new allocation $(\Tilde{R}_S)_{S \in \L_1}$ is greater than that under $(\Hat{R}_S^*)_{S \in \L_1}$, a contradiction. Hence, we have shown that the step one of Constrained-Greedy indeed computes the optimal allocations corresponding to the solution to Problem~\eqref{eq:welfare-obj-decomp21-mcua}-\eqref{eq:probability-feasibility21-mcua}, which establishes our claim.

\subsubsection{Completing Proof of Theorem~\ref{thm:apx-greedy-resource-augmentation-constraints}} \label{apdx:completing-pf-thm10-additional-constraints}

We establish the approximation ratio guarantee in the statement of Theorem~\ref{thm:apx-greedy-resource-augmentation-constraints} for the allocation $\sigmaa_{R+|\L_2|}^A$ and note that the approximation ratio guarantee for the allocation $\sigmaa_{R+|\L_1|}^A$ follows similarly.

To establish the approximation ratio guarantee for the allocation $\sigmaa_{R+|\L_2|}^A$, we begin by noting the structure of the solution of Constrained-Greedy. In particular, the solution of the first stage of Constrained-Greedy is such that for all sets $S \in \L_1$, one of the following three conditions holds:
\begin{enumerate}
    \item The total allocation to set $S$ is equal to the upper bound $\Bar{\lambda}_S$
    \item The condition in case (i) does not hold and for each location $l \in S$, either $\Hat{\sigma}_l = 0$ or $\Hat{\sigma}_l = \frac{d_l^i}{d_l^i+k}$ for some $i \in \I$
    \item The condition in case (i) does not hold and for at most one location $l \in S$ it holds that $\Hat{\sigma}_l \neq 0$ and $\Hat{\sigma}_l \neq \frac{d_l^i}{d_l^i+k}$ for all $i \in \I$. For all other locations in the set $S$, it holds that either $\Hat{\sigma}_l = 0$ or $\Hat{\sigma}_l = \frac{d_l^i}{d_l^i+k}$ for some $i \in \I$.
\end{enumerate}
Note that there can be at most $|\L_2|$ sets $S$ belonging to case three. We now consider each of the above three cases in turn.

\paragraph{Case (i):} Suppose that under stage one of the Constrained-Greedy algorithm the total allocation to a set $S \in \L_1$ is equal to the upper bound $\Bar{\lambda}_S$. In this case, it follows from Lemma~\ref{lem:parametrization-simple-lem1} that $\Hat{P}_S(\Hat{R}_S^*) = \sum_{l \in S} \Hat{P}_l(\Hat{\sigma}_l) = \sum_{l \in S_1} P_l(\Hat{\sigma}_l) + \Hat{P}_{l'}(\Hat{\sigma}_{l'})$, where $S = S_1 \cup \{ l'\}$ and the second equality follows from the fact that the allocation in stage one will coincide with the original payoff function for all but at most one location $l'$ in each set $S \in \L_1$. Next, without relaxing any of the upper bound quotas corresponding to the constraint sets in case (i), we note that the allocation computed in stage two of Constrained-Greedy achieves at least half the objective $\Hat{P}_S(\Hat{R}_S^*)$, i.e., $P_S((\sigma^A_{R+|\L_2|, l})_{l \in S}) \geq \frac{1}{2} \Hat{P}_S(\Hat{R}_S^*)$, which follows similar arguments to our earlier obtained half approximation guarantees (e.g., see proof of Theorem~\ref{thm:greedy-half-approx-rev-max}).




\paragraph{Case (ii):} Suppose that under step 1 of the Constrained-Greedy algorithm the total allocation to a set $S \in \L_1$ is strictly less than the upper bound $\Bar{\lambda}_S$ and that case (ii) does not hold. Moreover, suppose for each location $l \in S$ that either $\Hat{\sigma}_l = 0$ or $\Hat{\sigma}_l = \frac{d_l^i}{d_l^i+k}$ for some $i$. In this case, it holds for some location set $S_1 \subseteq S$ that $\Hat{P}_S(\Hat{R}_S^*) = \sum_{l \in S_1} P_l(\Hat{\sigma}_l)$. Note that with $R+|\L_2|$ resources, it holds that the greedy algorithm allocates at least $\Hat{R}_S^*$ to the locations in the set $S$. Consequently, it holds that $P_S((\sigma^A_{R+|\L_1|, l})_{l \in S}) \geq P_S((\Tilde{\sigma}^A_{R+|\L_1|, l})_{l \in S}) \geq P_S((\Tilde{\sigma}^A_{R, l})_{l \in S}) = \sum_{l \in S_1} P_l(\Hat{\sigma}_l) = \Hat{P}_S(\Hat{R}_S^*)$, where $(\Tilde{\sigma}^A_{R, l})_{l \in S}$ ($(\Tilde{\sigma}^A_{R+|\L_2|, l})_{l \in S}$) represents the allocation computed in step one of Algorithm~\ref{alg:GreedyWelMaxProb} in stage two of Constrained-Greedy for the set $S \in \L_1$ given $R$ resources ($R+|\L_2|$ resources and an associated relaxation of the upper bound quotas in the constraint hierarchy).

\paragraph{Case (iii):} Finally, suppose that under step 1 of the Constrained-Greedy algorithm neither of cases (i) or (ii) hold and for at most one location $l \in S$ it holds that $\Hat{\sigma}_l \neq 0$ and $\Hat{\sigma}_l \neq \frac{d_l^i}{d_l^i+k}$ for all $i \in \I$. For all other locations in the set $S$, it holds that either $\Hat{\sigma}_l = 0$ or $\Hat{\sigma}_l = \frac{d_l^i}{d_l^i+k}$ for some $i \in \I$. In this case, it holds that $\Hat{P}_S(\Hat{R}_S^*) = \sum_{l \in H_3^S} P_l(\Hat{\sigma}_l) + \Hat{P}_{l^{''}_S}(\Hat{\sigma}_{l^{''}_S})$, where $S = H_3^S \cup \{ l^{''}_S\}$ for some set $H_3^S$ and location $l^{''}_S$ specific to the set $S$.

Thus, denoting $\L_1^j$ as the set of locations corresponding to case $j$ above for $j \in \{ 1, 2, 3\}$, it follows that:
\begin{align*}
    \sum_{S \in \L_1} \Hat{P}_S(\Hat{R}_S^*) &= \sum_{S \in \L_1^1} \Hat{P}_S(\Hat{R}_S^*) +  \sum_{S \in \L_1^2} \Hat{P}_S(\Hat{R}_S^*) + \sum_{S \in \L_1^3} \Hat{P}_S(\Hat{R}_S^*), \\
    &\stackrel{(a)}{\leq} \sum_{S \in \L_1^1} 2 P_S((\sigma^A_{R+|\L_2|, l})_{l \in S}) + \sum_{S \in \L_1^2} P_S((\sigma^A_{R+|\L_2|, l})_{l \in S}) + \sum_{S \in \L_1^3} \left( \sum_{l \in H_3^S} P_l(\Hat{\sigma}_l) + \Hat{P}_{l''_S}(\Hat{\sigma}_{l''_S}) \right), 
\end{align*}
where (a) follows from our analysis of the three cases above. Next, noting that if we relax the upper bound constraints in the set $l^{''}_S$ for all $S \in \L_1^3$ by one unit, where we recall that $|\L_1^3| \leq |\L_2|$, following arguments similar to our earlier established resource augmentation guarantees (e.g., see proof of Theorem~\ref{thm:greedy-resource-augmentation-rev-max}), the resulting allocation computed in stage two of Constrained-Greedy given $R+|\L_2|$ resources with the relaxation of the upper bound quotas will yield the following inequality:
\begin{align*}
    \sum_{S \in \L_1} \Hat{P}_S(\Hat{R}_S^*) &\leq 2 \sum_{S \in \L_1^1} P_S((\sigma^A_{R+|\L_2|, l})_{l \in S}) + \sum_{S \in \L_1^2} P_S((\sigma^A_{R+|\L_2|, l})_{l \in S}) + \sum_{S \in \L_1^3} \left( \sum_{l \in H_3^S} P_l(\Hat{\sigma}_l) + P_{l^{''}_S}(\Tilde{\sigma}_{l^{''}_S}) \right), \\
    &\leq 2 \sum_{S \in \L_1^1} P_S((\sigma^A_{R+|\L_2|, l})_{l \in S}) + \sum_{S \in \L_1^2} P_S((\sigma^A_{R+|\L_2|, l})_{l \in S}) + \sum_{S \in \L_1^3} P_S((\sigma^A_{R+|\L_2|, l})_{l \in S}), \\
    &\leq 2 \sum_{S \in \L_1} P_S((\sigma^A_{R+|\L_2|, l})_{l \in S})
\end{align*}
where the first inequality follows as $\Tilde{\sigma}_{l^{''}_S} \geq \Hat{\sigma}_{l''_S}$ due to the relaxation of the upper bound quotas and the second inequality follows as the allocation computed via Constrained-Greedy for the sets in $\L_1^3$ achieves at least the objective as $P_l(\Hat{\sigma}_l) + P_{l^{''}_S}(\Tilde{\sigma}_{l^{''}_S})$ for each set $S \in \L_1^3$. 

Finally, applying Lemma~\ref{lem:ub-greedy-reparametrization} establishes our desired half approximation guarantee as:
\begin{align*}
    \sum_{S \in \L_1} P_S(R_S^*) \leq \sum_{S \in \L_1} \Hat{P}_S(\Hat{R}_S^*) \leq 2 \sum_{S \in \L_1} P_S((\sigma^A_{R+|\L_2|, l})_{l \in S}),
\end{align*}
which establishes our claim. We note that we can apply a similar argument to the one above to establish the desired one-approximation guarantee for the allocation $\sigmaa_{R+|\L_1|}^A$ and we omit this for brevity.

\subsection{Extending Constrained-Greedy to Lower Bound Constraints} \label{apdx:lb-con-extension-pf}

We generalize of Constrained-Greedy presented in Algorithm~\ref{alg:ConstrainedGreedy} to the setting with lower bound constraints. In presenting this generalization, we assume that a feasible allocation that satisfies the constraints of Problem~\eqref{eq:welfare-obj}-\eqref{eq:ub-lb-con} exists. Then, in stage one of Constrained-Greedy, after computing the MCUA of the payoff functions and ordering the segments in the descending order of their slopes, we first ensure that we satisfy all the lower bound constraints as follows:
\begin{itemize}
    \item For all sets in $\L_1$, allocate resources greedily in the descending order of the slopes of the MCUA of the payoff functions to the respective sets of locations to exactly satisfy the lower bound constraint for each set $S \in \L_1$.
    \item Given the allocations in the first step to satisfy the lower bound for the layer $\L_1$, then allocate resources to satisfy lower bound constraint in layer $\L_2$ without violating any upper bound constraints in the process of doing so.
    \item Continue this process in the increasing depth of the layers from $\L_3$ to $\L_t$ until all lower bound constrains are satisfied.
\end{itemize}
Finally, once all lower bound constraints across all layers are satisfied, we proceed by applying stage one of Constrained-Greedy as presented in Algorithm~\ref{alg:ConstrainedGreedy}, given the already allocated resources to satisfy the lower bound constraints. We output the corresponding allocation $\Hat{\sigmaa}$ as the final allocation.

Thus, incorporating lower bound constraints involves two modifications to Algorithm~\ref{alg:ConstrainedGreedy}, which applies to upper bound constraints. First, we require an initial processing step to ensure the allocation computed in the first stage of Constrained-Greedy meets the lower bound constraints across layers. Next, the allocation computed using stage one is the final allocation, unlike the Constrained-Greedy algorithm with upper bound constraints (that has a second stage), as applying the sub-routine corresponding to Algorithm~\ref{alg:GreedyWelMaxProb} is only guaranteed to satisfy an upper bound resource constraint and thus may violate the lower bound constraints.


\subsection{Key Ideas in Proof of Proposition~\ref{prop:apx-greedy-lb-quotas}} \label{apdx:pf-prop-lb-quotas}

The proof of Proposition~\ref{prop:apx-greedy-lb-quotas} follows similar arguments to that in the proof of Theorem~\ref{thm:apx-greedy-resource-augmentation-constraints}. In the following, we note some caveats required to prove Proposition~\ref{prop:apx-greedy-lb-quotas}.

To this end, first note that the corresponding analogues of Lemmas~\ref{lem:opt-sol-reparametrization},~\ref{lem:ub-greedy-reparametrization}, and~\ref{lem:parametrization-simple-lem1} naturally hold in the setting with lower bound constraints. Moreover, the corresponding analogue of Lemma~\ref{lem:parametrization-simple-lem2} also holds in the setting with lower bound constraints, with the additional caveat that we also need to show that the new solution $(\Tilde{R}_S)_{S \in \L_1}$ we construct also satisfies the lower bound constraints, which follows similar arguments to the satisfaction of the upper bound constraints in the proof of Lemma~\ref{lem:parametrization-simple-lem2}. We note that establishing the analogues of Lemmas~\ref{lem:opt-sol-reparametrization}-\ref{lem:parametrization-simple-lem2} applies for any class of lower bound constraints and does not rely on the assumption that the lower bound quotas for all sets $S \in \L_1$ are two.

Finally, we leverage the fact that the lower bound quotas are such that $\underline{\lambda}_S \geq 2$ for all $S \in \L_1$. To this end, first note that the optimal objective for each location set $S \in \L_1$ satisfies: $\Hat{P}_S(\Hat{R}_S^*) = \sum_{l \in S} \Hat{P}_l(\Hat{\sigma}_l) = \sum_{l \in H_S} P_l(\Hat{\sigma}_l) + \Hat{P}_{l'_S}(\Hat{\sigma}_{l'_S})$, where $S = H_S \cup \{ l'_S\}$ and the second equality follows from the fact that the allocation in stage one will coincide with the original payoff function for all but at most one location $l'_S$ in each set $S \in \L_1$.

Next, fix any set $S \in \L_1$. Note that the solution corresponding to maximizing the MCUA of the payoff function achieves an outcome with a payoff given by $P_S(\Hat{R}_S^*) \geq \sum_{l \in H_S} P_l(\Hat{\sigma}_l)$. 

Next, let $P_S^{*,1}$ correspond to the maximum payoff corresponding to spending all resources to a single location in a set $S \in \L_1$ and let $P_S^{*,R=1}$ represent optimal payoff corresponding to spending one unit of resources to a set $S \in \L_1$, where note that $P_S^{*,R=1} \geq P_S^{*,1} \geq \Hat{P}_{l'_S}(\Hat{\sigma}_{l'_S})$. Finally, since we have at least two resources being allocated to each $S \in \L_1$ since $\underline{\lambda}_S = 2$ for all $S \in \L_1$, it follows that $\sum_{l \in H_S} P_l(\Hat{\sigma}_l) \geq P_S^{*,R=1}$ by our earlier analysis for our resource augmentation results (e.g., see proof of Theorem~\ref{thm:greedy-half-approx-rev-max}).

Thus, from the above two derived relations we have that $P_S(\Hat{R}_S^*) \geq \sum_{l \in H_S} P_l(\Hat{\sigma}_l)$ and that $P_S(\Hat{R}_S^*) \geq P_S^{*,R=1} \geq \Hat{P}_{l'_S}(\Hat{\sigma}_{l'_S})$, which implies that $2 P_S(\Hat{R}_S^*) \geq \sum_{l \in H_S} P_l(\Hat{\sigma}_l) + \Hat{P}_{l'_S}(\Hat{\sigma}_{l'_S}) = \Hat{P}_S(\Hat{R}_S^*) \geq P_S(R_S^*)$ for each set $S \in \L_1$. Consequently, summing the above relation for all sets $S \in \L_1$, we obtain our desired approximation ratio guarantee that $2 \sum_{S \in \L_1} P_S(\Hat{R}_S^*) \geq \sum_{S \in \L_1} P_S(R_S^*)$, which establishes our claim.

\section{Payoff Maximization with Heterogeneous Users} \label{apdx:welfare-max-prob-setting}

In this section, we extend our results obtained in the setting with heterogeneous users under the revenue maximization objective to the payoff maximization objective. To this end, in the following, we present the associated greedy algorithm for the administrator's payoff maximization objective and its corresponding approximation ratio and resource augmentation guarantees. 

In studying the payoff maximization problem of the administrator with heterogeneous user types, we first note that the problem of computing the administrator's payoff-maximizing strategy in the setting with heterogeneous users is NP-hard, which follows by our hardness result on solving for the administrator's payoff-maximizing strategy in the setting with homogeneous users (Theorem~\ref{thm:npHardness-swm-fm}). Thus, in this section, we introduce a greedy algorithm, described in Algorithm~\ref{alg:GreedyWelMaxProb}, to approximately solve for the administrator's payoff maximizing resource allocation strategy and highlight its corresponding approximation ratio and resource augmentation guarantees.

We begin by first describing our algorithmic approach to achieve an approximate solution to the administrator's payoff maximization problem, which proceeds as follows. First, rather than directly optimizing the payoff of the administrator, which as noted earlier is NP-hard to optimize, we maximize its corresponding monotone concave upper approximation (MCUA), which we depict in Figure~\ref{fig:helper-exp_welfare_mcua}. In particular, similar of maximizing the MCUA of the revenue function in the presence of heterogeneous user types (see Section~\ref{sec:near-opt-greedy-rev-max}), optimizing the MCUA of the payoff function can be reduced to solving a linear program whose solution boils down to a greedy-like procedure presented in Step 1 of Algorithm~\ref{alg:GreedyWelMaxProb} (see Theorem~\ref{thm:greedy-half-approx-wel-max-exp}). 

To elucidate this greedy-like procedure, we first define an affordability threshold $t_l = \min \left\{ R, \max_i \frac{d_l^i}{d_l^i+k}\right\}$ for each location $l$. Next, we define the MCUA of the payoff function over the range $[0, t_l]$ for each location $l$, which we note is a piece-wise linear function; thus, we define $\S$ as the set of all such piece-wise linear segments of the MCUA of the payoff function across all locations. We then characterize each segment $s \in \S$ by three parameters: (i) $l_s$, which represents the location corresponding to segment $s$, (ii) $c_s$, which corresponds to the slope of segment $s$, and (iii) $x_s$, which represents the horizontal width, i.e., resource requirement, of segment $s$. In particular, Figure~\ref{fig:helper-exp_welfare_mcua} (right) depicts the associated slopes $c_s$ and resource requirements $x_s$ corresponding to the segments for the MCUA of the payoff function for a given location $l$. We then order the segments in the set $\S$ in the descending order of the slopes of the MCUA of the payoff function and find the solution corresponding to a greedy algorithm that allocates at most $z_s$ to each segment in the descending order of the slopes of the MCUA of the payoff function.

Next, as with Algorithm~\ref{alg:GreedyRevMaxProb}, we compute an allocation corresponding to spending all the available resources at a single location that yields the highest payoff to the administrator. Finally, we return the best of the greedy allocation corresponding to optimizing the MCUA of the payoff function and the allocation corresponding to spending all the available resources at a single location that achieves a higher payoff for the administrator. This procedure is formally presented in Algorithm~\ref{alg:GreedyWelMaxProb}.

\begin{algorithm}
\footnotesize
\SetAlgoLined
\SetKwInOut{Input}{Input}\SetKwInOut{Output}{Output}
\Input{Total Resource capacity $R$, User Types $\Theta_l^i = (\Lambda_l^i, d_l^i, p_l^i)$ for all locations $l$ and types $i$}
\Output{Resource Allocation Strategy $\sigmaa_A^*$}
\textbf{Step 1: Greedy Allocation Based on Slopes of MCUA of Payoff Function:} \\
Define affordability threshold $t_l \leftarrow \min \{ R, \frac{d_l}{d_l+k}\}$ for all locations $l$ \;
Generate MCUA of the payoff function in range $[0, t_l]$ for each location $l$ \;
$\Tilde{\S} \leftarrow $ Ordered list of segments $s$ across all locations of this MCUA in descending order of slopes $c_s$  \;
Initialize allocation strategy $\Tilde{\sigmaa} \leftarrow \mathbf{0}$ \;
\For{\text{segment $s \in \Tilde{\S}$}}{
$\Tilde{\sigma}_{l_s} \leftarrow \Tilde{\sigma}_{l_s} + \min\{ R, x_s\}$ ; \texttt{\footnotesize \sf Allocate $x_{s}$ to location $l_s$} \;
$R \leftarrow R -  \min\{R, x_s\}$; \quad \texttt{\footnotesize \sf Update amount of remaining resources} \;
  }
\textbf{Step 2: Find Solution $\sigmaa'$ that maximizes payoff from spending on single location:} \\
$\sigmaa^l \leftarrow \argmax_{\sigmaa \in \Omega_R: \sigma_{l'} = 0, \forall l' \neq l} P_R(\sigmaa)$ for all locations $l$ \; 
$\sigmaa' \leftarrow \argmax_{l \in L} P_R(\sigmaa^l)$ \; 
\textbf{Step 3: Return Solution with a Higher Payoff:} \\
$\sigmaa^*_A \leftarrow \argmax \{ P_R(\Tilde{\sigmaa}), P_R(\sigmaa') \}$ \;
\caption{\footnotesize Greedy Algorithm for Administrator's Heterogeneous Payoff Maximization Objective}
\label{alg:GreedyWelMaxProb}
\end{algorithm}

\begin{figure}[tbh!]
    \centering
    \includegraphics[width=0.95\linewidth]{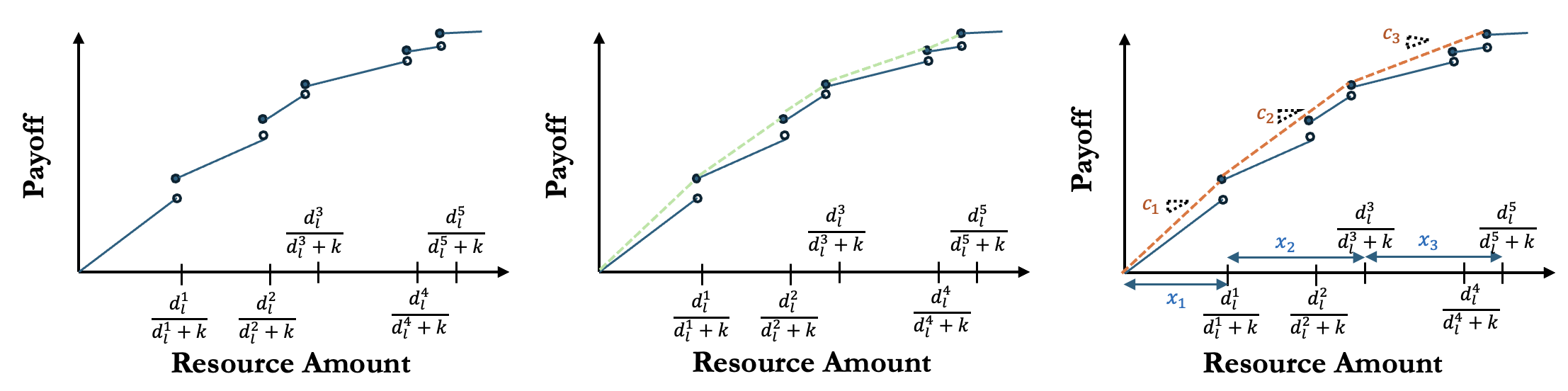}
    \vspace{-10pt}
    \caption{{\small \sf Depiction of the payoff as a function of the amount of resources allocated to location $l$ (left), its upper bound (center), and its corresponding monotone concave upper approximation (right) for a setting with five types, i.e., $|\I|  = 5$. Note that the payoff remains constant after a resource allocation that exceeds $\max_i \frac{d_l^i}{d_l^i+k}$. Further, for the MCUA of the payoff function, there are three segments $s$ for location $l$, with corresponding slopes $c_s$ of the segments and the corresponding width $x_s$ of the segments marked in the plot on the right. 
    }}
    \label{fig:helper-exp_welfare_mcua}
\end{figure} 

We now present the main results of this section, which establish the approximation guarantees of Algorithm~\ref{alg:GreedyWelMaxProb} to the optimal payoff of the administrator. Our first result establishes that Algorithm~\ref{alg:GreedyWelMaxProb} achieves at least half the payoff as that corresponding to the solution of Problem~\eqref{eq:admin-obj-fraud}-\eqref{eq:bi-level-con-fraud}.

\begin{theorem} [1/2 Approximation of Greedy Algorithm for Heterogeneous Payoff Maximization] \label{thm:greedy-half-approx-wel-max-exp}
Denote $\sigmaa^*_A$ as the solution corresponding to Algorithm~\ref{alg:GreedyWelMaxProb} and let $\sigmaa^*$ be the payoff maximizing allocation that solves Problem~\eqref{eq:admin-obj-fraud}-\eqref{eq:bi-level-con-fraud}. Then, $\sigmaa^*_A$ achieves at least half the payoff as compared to $\sigmaa^*$, i.e., $P_R(\sigmaa_A^*) \geq \frac{1}{2} P_R(\sigmaa^*)$.
\end{theorem}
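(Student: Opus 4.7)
The plan is to mirror the proof strategy used for Theorem~\ref{thm:greedy-half-approx-rev-max} in the heterogeneous revenue maximization setting, adapting it to the payoff function whose shape is depicted in Figure~\ref{fig:helper-exp_welfare_mcua}. Let $P_l(\sigma_l)$ denote the payoff at location $l$ as a function of the resources allocated to that location, and let $\Hat{P}_l(\sigma_l)$ denote its monotone concave upper approximation on the interval $[0, t_l]$. I would introduce the MCUA optimization problem analogous to Problem~\eqref{eq:revenue-obj-mcua}--\eqref{eq:probability-feasibility-mcua}, namely $\max_{\sigmaa \in \Omega_R} \sum_{l \in L} \Hat{P}_l(\sigma_l)$, and work with the auxiliary ``full-run'' greedy procedure (analogous to Algorithm~\ref{alg:GreedyMCUA-OPT}) that allocates resources segment-by-segment in descending order of slopes until either resources are exhausted or no segments remain.

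Next I would establish three structural lemmas parallel to Lemmas~\ref{lem:greedy-opt-for-mcua},~\ref{lem:greedy-algorithm-structure}, and~\ref{lem:mcua-coincides-revenue-het}. First, the segment-ordered greedy procedure maximizes $\sum_l \Hat{P}_l(\sigma_l)$; the argument is the same exchange/concavity argument used in the proof of Lemma~\ref{lem:greedy-opt-for-mcua}, relying only on the concavity and super-differentiability of each $\Hat{P}_l$ and the fact that every concave segment has a well-defined slope used for ordering. Second, the greedy allocation $\Tilde{\sigmaa}^*$ to the MCUA problem coincides with the original payoff function at all but at most one location $l'$, since all affine MCUA segments allocated to capacity land on $P_l$, and only the one interrupted segment (where the greedy process runs out of budget) can strictly lie above $P_l$. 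Third, $\Hat{P}_l(0) = P_l(0) = 0$ and $\Hat{P}_l(t_l) = P_l(t_l)$; this follows from the same ``lowering'' argument used in Lemma~\ref{lem:mcua-coincides-revenue-het}, which applies identically here because the payoff function is bounded and takes its maximum value $\sum_i p_l^i$ at $t_l = \max_i \frac{d_l^i}{d_l^i+k}$.

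I would then combine these three ingredients exactly as in Corollary~\ref{cor:half-approx-het-cont}. Since the MCUA point-wise upper bounds the payoff function, $P_R(\sigmaa^*) \leq \Hat{P}_R(\Tilde{\sigmaa}^*) = \sum_{l \neq l'} P_l(\Tilde{\sigma}_l^*) + \Hat{P}_{l'}(\Tilde{\sigma}_{l'}^*)$. For the greedy allocation $\Tilde{\sigmaa}$ produced in Step~1 of Algorithm~\ref{alg:GreedyWelMaxProb} (which stops as soon as a segment does not fit), one verifies $\Tilde{\sigma}_l = \Tilde{\sigma}_l^*$ for $l \neq l'$, hence $P_R(\Tilde{\sigmaa}) \geq \sum_{l \neq l'} P_l(\Tilde{\sigma}_l^*)$. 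For the single-location allocation $\sigmaa'$ from Step~2, one has $P_R(\sigmaa') \geq P_{l'}(\sigma_{l'}^{\max}) = \Hat{P}_{l'}(\sigma_{l'}^{\max}) \geq \Hat{P}_{l'}(\Tilde{\sigma}_{l'}^*)$ by monotonicity of $\Hat{P}_{l'}$ and the endpoint coincidence lemma. Adding these two inequalities and using that $\sigmaa_A^*$ is the better of $\Tilde{\sigmaa}$ and $\sigmaa'$ yields $2 P_R(\sigmaa_A^*) \geq \Hat{P}_R(\Tilde{\sigmaa}^*) \geq P_R(\sigmaa^*)$.

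The main obstacle I anticipate is the monotonicity/concavity verification for the upper envelope when $P_l$ is itself non-monotone over $[0, \max_i \frac{d_l^i}{d_l^i+k}]$ in the heterogeneous case (the payoff loses and then regains value as $\sigma_l$ crosses each threshold $\frac{d_l^i}{d_l^i+k}$ for intermediate types); however, since we work with its \emph{monotone} concave upper approximation on the affordability interval $[0, t_l]$ and use the endpoint coincidence $\Hat{P}_l(t_l) = P_l(t_l)$, this is handled exactly as in the revenue case. A minor subtlety worth checking carefully is the boundary behavior $\Tilde{\sigma}_{l'}^* \le \Tilde{\sigma}_{l'} + x_{s'}$ for the interrupted segment $s'$, which ensures that the payoff at locations other than $l'$ under $\Tilde{\sigmaa}$ matches the MCUA-optimal allocation; otherwise the argument is a direct port of the heterogeneous revenue proof, and so the full proof can be omitted for brevity as in the paper's treatment of parallel extensions.
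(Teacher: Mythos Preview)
Your proposal is correct and matches the paper's approach: the paper explicitly states that the proof follows the heterogeneous revenue-maximization argument (Theorem~\ref{thm:greedy-half-approx-rev-max}) and omits the details. Two minor clarifications: Step~1 of Algorithm~\ref{alg:GreedyWelMaxProb} does \emph{not} terminate early---it allocates $\min\{R,x_s\}$ and continues---so in fact $\Tilde{\sigmaa}=\Tilde{\sigmaa}^*$ and your inequality $P_R(\Tilde{\sigmaa})\ge\sum_{l\neq l'}P_l(\Tilde{\sigma}_l^*)$ holds trivially; and the heterogeneous payoff $P_l$ is monotone non-decreasing (see Figure~\ref{fig:helper-exp_welfare_mcua} and the paper's remark following Algorithm~\ref{alg:GreedyWelMaxProb}), so the non-monotonicity obstacle you anticipate does not arise.
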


Our next result establishes that if the administrator had just one additional resource, i.e., $R+1$ resources, then Algorithm~\ref{alg:GreedyWelMaxProb} will achieve at least the payoff as the payoff-maximizing solution of the NP-hard bi-level Program~\eqref{eq:admin-obj-fraud}-\eqref{eq:bi-level-con-fraud} with $R$ resources.

\begin{theorem} [Resource Augmentation Guarantee for Payoff Maximization] \label{thm:greedy-resource-augmentation-wel-max-exp}
Denote $\sigmaa^*_A$ as the solution corresponding to Algorithm~\ref{alg:GreedyWelMaxProb} with $R+1$ resources and let $\sigmaa^*$ be the payoff-maximizing allocation that solves Problem~\eqref{eq:admin-obj-fraud}-\eqref{eq:bi-level-con-fraud} with $R$ resources. Then, the payoff under the allocation $\sigmaa^*_A$ with $R+1$ resources is at least that corresponding to $\sigmaa^*$ with $R$ resources, i.e., $P_{R+1}(\sigmaa^*_A) \geq P_{R}(\sigmaa^*)$.
\end{theorem}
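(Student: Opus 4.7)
My plan is to mirror the four-step chain used to establish the analogous resource augmentation guarantee for heterogeneous revenue maximization (Theorem~\ref{thm:greedy-resource-augmentation-rev-max} in Appendix~\ref{apdx:pf-resource-aug-rev-max}), substituting the payoff function $P_l(\sigma_l)$ and its monotone concave upper approximation $\widehat{P}_l(\sigma_l)$ in place of their revenue counterparts $Q_l$ and $\widehat{Q}_l$. Concretely, let $\sigmaa^*$ denote the payoff-maximizing allocation with $R$ resources, let $\widetilde{\sigmaa}^*$ denote the allocation that maximizes $\widehat{P}_R(\sigmaa) = \sum_l \widehat{P}_l(\sigma_l)$ over $\Omega_R$, let $\widetilde{\sigmaa}^{R+1}$ denote the allocation produced by Step~1 of Algorithm~\ref{alg:GreedyWelMaxProb} with $R+1$ resources (terminated once all segments are exhausted, as in Algorithm~\ref{alg:GreedyMCUA-OPT}), and let $\sigmaa_A^{R+1}$ denote the final output of Algorithm~\ref{alg:GreedyWelMaxProb} with $R+1$ resources. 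The plan is to verify the four links
\[
P_R(\sigmaa^*) \;\leq\; \widehat{P}_R(\widetilde{\sigmaa}^*) \;\leq\; \widehat{P}_{R+1}(\widetilde{\sigmaa}^{R+1}) \;=\; P_{R+1}(\widetilde{\sigmaa}^{R+1}) \;\leq\; P_{R+1}(\sigmaa_A^{R+1}),
\]
which together give the theorem.

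The first inequality is immediate because $\widehat{P}_l$ is by construction a pointwise upper bound on $P_l$ at every location $l$, so $\widehat{P}_R(\sigmaa^*) \geq P_R(\sigmaa^*)$, and then $\widehat{P}_R(\widetilde{\sigmaa}^*) \geq \widehat{P}_R(\sigmaa^*)$ follows from optimality of $\widetilde{\sigmaa}^*$. For the second inequality, I would first establish the payoff-function analogue of Lemma~\ref{lem:greedy-opt-for-mcua}, showing that Step~1 of Algorithm~\ref{alg:GreedyWelMaxProb} (run in the untruncated manner of Algorithm~\ref{alg:GreedyMCUA-OPT}) optimizes the MCUA of the payoff function; the argument is identical to the revenue case, relying only on monotonicity and concavity of each $\widehat{P}_l$ on $[0,t_l]$ together with super-differentiability at boundary points, neither of which uses specifics of the revenue function. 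Given this, both $\widetilde{\sigmaa}^*$ and $\widetilde{\sigmaa}^{R+1}$ can be realized by the same greedy process that picks segments in descending order of slope, and with one additional unit of resource the greedy allocation can only be (weakly) coordinate-wise larger than $\widetilde{\sigmaa}^*$, hence $\widehat{P}_{R+1}(\widetilde{\sigmaa}^{R+1}) \geq \widehat{P}_R(\widetilde{\sigmaa}^*)$ by monotonicity of $\widehat{P}_l$.

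The third link, the equality $\widehat{P}_{R+1}(\widetilde{\sigmaa}^{R+1}) = P_{R+1}(\widetilde{\sigmaa}^{R+1})$, is the analogue of Lemma~\ref{lem:greedy-algorithm-structure} and is where the extra resource plays its role. With only $R$ resources the greedy process could be forced to stop partway through some segment $s'$, leaving one location $l_{s'}$ whose allocation lies strictly inside a linear segment of $\widehat{P}_{l_{s'}}$ where the MCUA strictly exceeds $P_{l_{s'}}$. Because every segment width satisfies $x_s \leq 1$, a single additional unit of resource is enough to finish segment $s'$; then at $\widetilde{\sigmaa}^{R+1}$ every location's allocation is either $0$, equals some threshold $\tfrac{d_l^i}{d_l^i+k}$, or equals $\max_i \tfrac{d_l^i}{d_l^i+k}$, i.e.\ a breakpoint where the MCUA coincides with $P_l$ (this uses the payoff analogue of Lemma~\ref{lem:mcua-coincides-revenue-het}, which holds by the same geometric argument since the breakpoints of $\widehat{P}_l$ are precisely the breakpoints where it touches $P_l$). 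The final inequality is immediate from Step~3 of Algorithm~\ref{alg:GreedyWelMaxProb}, which returns the max of the Step~1 allocation and the best single-location allocation.

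The main obstacle, and essentially the only nontrivial one, is confirming that the structural MCUA lemmas (Lemmas~\ref{lem:greedy-opt-for-mcua},~\ref{lem:greedy-algorithm-structure},~\ref{lem:mcua-coincides-revenue-het}) carry over verbatim to the payoff setting. This does require a small check because, unlike the revenue function, the payoff function $P_l(\sigma_l) = \sum_i (p_l^i - (1-\sigma_l)\, y_l^i(\sigma_l)\, p_l^i)$ is not only piecewise affine on each subinterval $\bigl(\tfrac{d_l^{i-1}}{d_l^{i-1}+k}, \tfrac{d_l^{i}}{d_l^{i}+k}\bigr)$ but also monotonically non-decreasing on $[0, \max_i \tfrac{d_l^i}{d_l^i+k}]$; this extra monotonicity actually simplifies the MCUA argument relative to the revenue case (whose MCUA had to handle non-monotonicity), so the three structural lemmas go through with minor notational adjustments. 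Once that is in hand, the four-step chain above closes the proof.
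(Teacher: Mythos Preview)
Your approach matches the paper's exactly: the paper explicitly says that the proofs of Theorems~\ref{thm:greedy-half-approx-wel-max-exp} and~\ref{thm:greedy-resource-augmentation-wel-max-exp} follow the corresponding revenue-maximization arguments (Theorems~\ref{thm:greedy-half-approx-rev-max} and~\ref{thm:greedy-resource-augmentation-rev-max}), and the latter is proved via precisely your four-link chain through the MCUA.

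There is, however, one small gap in your argument for the third link (the equality $\widehat{P}_{R+1}(\widetilde{\sigmaa}^{R+1}) = P_{R+1}(\widetilde{\sigmaa}^{R+1})$). You assert that the extra unit of resource suffices to finish the partial segment $s'$, and therefore every coordinate of $\widetilde{\sigmaa}^{R+1}$ sits at a breakpoint. But Step~1 of Algorithm~\ref{alg:GreedyWelMaxProb}, unlike Step~1 of Algorithm~\ref{alg:GreedyRevMaxProb}, does \emph{not} break when a segment is unaffordable: after completing $s'$ the greedy continues to the next segment $s''$ and may allocate a partial amount there, leaving location $l_{s''}$ strictly inside a linear piece of $\widehat{P}_{l_{s''}}$ where the MCUA strictly exceeds $P_{l_{s''}}$. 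So the equality as stated can fail.

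The fix is immediate from the monotonicity of $P_l$ that you already highlighted. Let $\widetilde{\sigmaa}^{R+1}_{\mathrm{tr}}$ be the greedy allocation truncated right after segment $s'$ is completed; this truncated allocation does sit at breakpoints everywhere, so $\widehat{P}(\widetilde{\sigmaa}^{R+1}_{\mathrm{tr}}) = P(\widetilde{\sigmaa}^{R+1}_{\mathrm{tr}})$, and it dominates $\widetilde{\sigmaa}^*$ coordinate-wise, giving $\widehat{P}_R(\widetilde{\sigmaa}^*) \leq P(\widetilde{\sigmaa}^{R+1}_{\mathrm{tr}})$. Since $\widetilde{\sigmaa}^{R+1} \geq \widetilde{\sigmaa}^{R+1}_{\mathrm{tr}}$ coordinate-wise and each $P_l$ is nondecreasing, $P_{R+1}(\widetilde{\sigmaa}^{R+1}) \geq P(\widetilde{\sigmaa}^{R+1}_{\mathrm{tr}})$, closing the chain. (Equivalently, as the paper remarks after the theorem, one may simply run the analysis with the early-terminating Step-1 variant, whose output Algorithm~\ref{alg:GreedyWelMaxProb} dominates in payoff.)
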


We note that the administrator only requires $R+\max_{l \in L, i\in \I} \frac{d_l^i}{d_l^i+k}$ resources to obtain the resource augmentation guarantee in Theorem~\ref{thm:greedy-half-approx-fraud-min}; however, we present the result with $R+1$ resources for ease of exposition. Theorems~\ref{thm:greedy-half-approx-wel-max-exp} and~\ref{thm:greedy-resource-augmentation-wel-max-exp} imply that Algorithm~\ref{alg:GreedyWelMaxProb} obtains the same approximation ratio and resource augmentation guarantees for the administrator's payoff maximization problem in the setting with heterogeneous users as Algorithm~\ref{alg:GreedyFraudminimizationDeterministic} achieved in the setting with homogeneous users (see Section~\ref{subsec:greedy-fraud-min}). Furthermore, we note that Algorithm~\ref{alg:GreedyWelMaxProb} in the payoff maximization setting follows a similar idea to Algorithm~\ref{alg:GreedyRevMaxProb} in the revenue maximization setting in terms of generating a monotone concave upper approximation to reduce the originally NP-hard bi-level optimization into a tractable linear program that can be solved with a greedy like procedure. Consequently, the proofs of Theorems~\ref{thm:greedy-half-approx-wel-max-exp} and~\ref{thm:greedy-resource-augmentation-wel-max-exp} follow similarly to the corresponding results in the revenue maximization setting with heterogeneous user types; hence, we omit these proofs for brevity.

Yet, we note that while Algorithm~\ref{alg:GreedyWelMaxProb} in the payoff maximization setting is akin to Algorithm~\ref{alg:GreedyRevMaxProb} in the revenue maximization setting, there is one key point of difference. In particular, unlike in Algorithm~\ref{alg:GreedyRevMaxProb}, we do not terminate the greedy procedure at the point in the algorithm when there is a segment in the ordered set $\Tilde{\S}$ such that the segment’s resource requirement $x_s$ exceeds the available resources. Such a termination of the greedy procedure in step one of Algorithm~\ref{alg:GreedyWelMaxProb} is not necessary in the payoff maximization setting, as, unlike the revenue function, the payoff function is monotonically (non)-decreasing in the amount of resources allocated to each location (see Figures~\ref{fig:concave-upper-approximation} and~\ref{fig:helper-exp_welfare_mcua}). Yet, we note from our analysis in the revenue maximization setting that the obtained approximation ratio and resource augmentation guarantees obtained in Theorems~\ref{thm:greedy-half-approx-wel-max-exp} and~\ref{thm:greedy-resource-augmentation-wel-max-exp} would continue to hold even in the setting when the greedy procedure in step 1 of Algorithm~\ref{alg:GreedyWelMaxProb} is terminated in a similar manner to Algorithm~\ref{alg:GreedyRevMaxProb}.


Finally, we note that Theorems~\ref{thm:greedy-half-approx-wel-max-exp} and~\ref{thm:greedy-resource-augmentation-wel-max-exp} further highlight the benefit to administrators for recruiting one additional security personnel and applying a simple algorithm, i.e., Algorithm~\ref{alg:GreedyWelMaxProb}, rather than investing significant computational power and effort to compute the NP-hard payoff-maximizing strategy of the administrator.

\section{Model Extensions} \label{apdx:model-extensions-main}

In this section, we present several natural extensions to the model studied in this work.

\subsection{Extension to Continuous Set of User Types} \label{apdx:extension-continuous-distributions}

In this section, we present the additional notation and the regularity condition necessary to extend the results obtained in this work for a discrete set of user types to a continuous set of user types. For brevity, we present the regularity condition for the payoff maximization setting and note that an analogous condition can be developed for the revenue maximization setting as well.

To define our security game in the setting with a continuous set of user types at each location, we define the total mass of users at a location $l$ as $\Lambda_l$ where each user belongs to a \emph{type} $\Theta_l^i = (d_l^i, p_l^i)$, where $d_l^i$ corresponds to the benefit received by users of type $i$ who engage in fraud, and $p_l^i$ represents the payoff to the administrator for allocating a security resource to mitigate fraud at location $l$ for users of type $i$. The two dimensions of the users' type vectors at each location $l$ have joint distribution in the population that is known to the administrator.

Let $P_l(\cdot)$ be a (potentially) non-convex and discontinuous payoff function at a given location $l$ and let $t_l = \min\left\{ R, \max_i \frac{d_l^i}{d_l^i+k} \right\}$ be the maximum amount of resources the administrator can feasibly allocate to location $l$. Then, we make the following regularity assumption on the boundedness of the payoff function.

\begin{assumption} [Boundedness of Payoff Function] \label{asmpn:ub-linear-function}
There exists an affine function that upper bounds payoff function, i.e., there exists some bounded $r$ such that $P_l(0) + \sigma_l r \geq P_l(\sigma_l)$ for all $\sigma_l \in [0, t_l]$.
\end{assumption}

Assumption~\ref{asmpn:ub-linear-function} imposes a mild regularity condition on the administrator's payoff function and states that the administrator's payoff function can be upper bounded by an affine function with some bounded slope $r$. Note that Assumption~\ref{asmpn:ub-linear-function}, in particular, holds for discrete distributions with finite support. We note that Assumption~\ref{asmpn:ub-linear-function} is crucial in ensuring the existence of the super-gradients of the MCUA of the payoff function at all points in the domain $[0, t_l]$ and in ensuring that the MCUA of the payoff function coincides with the original payoff function at its endpoints even for continuous distributions, as is elucidated through the following lemma.

\begin{lemma} [MCUA Coincides with Payoff Function at Endpoints] \label{lem:mcua-coincides-welfare-het}
Let $\Hat{P}_l(\sigma_l)$ be the MCUA of the payoff function $P_l(\sigma_l)$ and that Assumption~\ref{asmpn:ub-linear-function} holds. Then, the MCUA of the payoff function coincides with the payoff function at its endpoints, i.e., $\Hat{P}_l(0) = P_l(0)$ and $\Hat{P}_l(t_l) = P_l(t_l)$.
\end{lemma}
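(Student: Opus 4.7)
The plan is to adapt the contradiction argument used in the proof of Lemma~\ref{lem:mcua-coincides-revenue-het} from the discrete setting to the continuous setting, with Assumption~\ref{asmpn:ub-linear-function} serving as the replacement for the finite-support structure. The assumption guarantees that the MCUA $\Hat{P}_l$ is well defined (an affine monotone upper bound of $P_l$ exists, obtained by taking the maximum of $P_l(0) + \sigma_l r$ with any monotonization), finite valued on $[0, t_l]$, and has bounded slope. I will focus on proving $\Hat{P}_l(0) = P_l(0)$, since $\Hat{P}_l(t_l) = P_l(t_l)$ follows by a completely symmetric argument (using monotonicity at the right endpoint).

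Suppose for contradiction that $\Hat{P}_l(0) > P_l(0)$. First I would establish, as in the proof of Lemma~\ref{lem:mcua-coincides-revenue-het}, that $\Hat{P}_l$ must coincide with $P_l$ at some point of $[0, t_l]$; otherwise uniformly shifting $\Hat{P}_l$ downward by a small constant would yield a strictly smaller monotone concave function still dominating $P_l$, contradicting the minimality defining the MCUA (this is where Assumption~\ref{asmpn:ub-linear-function} is used implicitly to ensure such a bounded shift is legitimate). Let $\sigma^{\ast} = \inf\{\sigma \in [0, t_l] : \Hat{P}_l(\sigma) = P_l(\sigma)\}$, so that $\sigma^{\ast} > 0$ under our contradiction hypothesis, and observe next that $\Hat{P}_l$ must be affine on $[0, \sigma^{\ast}]$: if it were not, replacing $\Hat{P}_l$ on $[0, \sigma^{\ast}]$ by the chord joining $(0, \Hat{P}_l(0))$ and $(\sigma^{\ast}, \Hat{P}_l(\sigma^{\ast}))$ yields a pointwise-smaller concave monotone function that still dominates $P_l$ on $[0, \sigma^{\ast}]$ (because the chord sits below a concave function that itself dominates $P_l$), again contradicting minimality.

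With the affine structure of $\Hat{P}_l$ on $[0, \sigma^{\ast}]$ in hand, let $s_1$ denote its slope. The final step is to lower $\Hat{P}_l(0)$ by a small $\epsilon > 0$ and replace the affine piece by the segment from $(0, \Hat{P}_l(0) - \epsilon)$ to $(\sigma^{\ast}, \Hat{P}_l(\sigma^{\ast}))$, whose slope is $s_1 + \epsilon/\sigma^{\ast} > s_1$. Concavity at $\sigma^{\ast}$ is preserved because the left derivative strictly increases while the right derivative is unchanged, monotonicity is obvious since $s_1 + \epsilon/\sigma^{\ast} \geq 0$, and the resulting curve is strictly below $\Hat{P}_l$ on $[0, \sigma^{\ast})$. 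Provided the new segment still upper bounds $P_l$, we obtain the desired contradiction against the MCUA being pointwise smallest.

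The main obstacle I expect is justifying the existence of a uniform gap $\epsilon > 0$ for which the lowered segment still dominates $P_l$ on $[0, \sigma^{\ast})$ when $P_l$ may be discontinuous. Here Assumption~\ref{asmpn:ub-linear-function} will do the work: it forces $P_l(\sigma) \leq P_l(0) + r\sigma$ everywhere, and comparing this affine envelope to the affine piece of $\Hat{P}_l$ (which by construction of $\sigma^{\ast}$ strictly dominates $P_l$ on $[0, \sigma^{\ast})$) yields a positive infimum gap on any compact subinterval bounded away from $\sigma^{\ast}$, while near $\sigma^{\ast}$ the two segments meet and the perturbation $\epsilon$ of the chord endpoint can be absorbed without violating the upper bound. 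Handling the behavior at $\sigma^{\ast}$ itself, where $P_l$ may jump, is the delicate point, but by choosing $\epsilon$ smaller than the minimum gap over $[0, \sigma^{\ast} - \delta]$ for some small $\delta$ and using the boundedness of $r$ to control $P_l$ on $[\sigma^{\ast} - \delta, \sigma^{\ast}]$, the construction goes through.
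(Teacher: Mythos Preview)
Your proposal is correct and follows essentially the same contradiction scheme as the paper, which simply declares the proof ``almost entirely analogous'' to that of Lemma~\ref{lem:mcua-coincides-revenue-het} and omits it. In fact you are more careful than the paper about the one genuinely new issue in the continuous setting---the existence of a uniform $\epsilon$-gap on $[0,\sigma^\ast)$---and your use of Assumption~\ref{asmpn:ub-linear-function} to control $P_l$ near $\sigma^\ast$ is exactly the missing ingredient the paper has in mind when invoking that assumption.
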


We note that Lemma~\ref{lem:mcua-coincides-welfare-het} in the heterogeneous payoff maximization setting with continuously distributed user types is akin to Lemma~\ref{lem:mcua-coincides-revenue-het} in the heterogeneous revenue maximization setting with discrete user types. Moreover, since the proof of Lemma~\ref{lem:mcua-coincides-welfare-het} is almost entirely analogous to that of Lemma~\ref{lem:mcua-coincides-revenue-het}, we omit it for brevity.

Given the regularity condition in Assumption~\ref{asmpn:ub-linear-function}, which guarantees that the MCUA of the payoff function coincides with the payoff function at its endpoints (Lemma~\ref{lem:mcua-coincides-welfare-het}), we note that our results and analysis in the setting with discrete user types at each location can be naturally extended to the continuous user type setting with one caveat. In particular, rather than the MCUA of the payoff function corresponding to finitely many segments, as in the setting with discrete user types at each location, the MCUA of the payoff function in the continuous user type setting corresponds to an arbitrary monotonically increasing concave function, which has affine segments at the points at which the MCUA does not coincide with the payoff function. Then, our earlier developed algorithms that achieve the half approximation and resource augmentation guarantees, e.g., Algorithms~\ref{alg:GreedyRevMaxProb} and~\ref{alg:GreedyWelMaxProb}, can be modified such that resources are allocated to locations in the descending order of the minimum super-gradients of the MCUA of the payoff functions, which generalizes allocating resources to locations in the descending order of the slopes of the MCUA of the payoff functions in the setting with discrete types to the continuous user type setting. With the above modification to our earlier developed greedy-based algorithms, under Assumption~\ref{asmpn:ub-linear-function}, our obtained half approximation and resource augmentation guarantees hold even in the setting with a continuous set of user types with almost entirely analogous proofs to that in the discrete user type setting. For brevity, we omit presenting the extensions of the results obtained in the discrete user types setting to the continuous user types setting.

\subsection{Extension to Probabilistic Setting} \label{apdx:probabilistic-extension}

This section presents the additional notation required to extend our security game to the Bayesian (or probabilistic) setting, where user types are not deterministic but can be drawn from some distribution.

To model the probabilistic setting, we let $\I$ denote the set of user \emph{groups} and assume that a user type $\Theta_l^i$ at each location $l$ is drawn independently from some discrete probability distribution with finite support $(\Theta_l^{i,j})_{j \in \J} = (\Lambda_l^{i,j}, d_l^{i,j}, p_l^{i,j})_{j \in \J}$, where the support, defined by $\J$, satisfies $|\J| \in \mathbb{N}$. We define the probability that a user group $i$ at location $l$ has a type $\Theta_l^{i,j}$ as $q_l^{i,j}$ for all $j \in \J$. In other words, $\mathbb{P}(\Theta_l^i = \Theta_l^{i,j}) = q_l^{i,j}$ for all $j \in \J$, where $q_l^{i,j} \geq 0$ and $\sum_{j \in \J} q_l^{i,j} = 1$ for all locations $l \in L$ and user groups $i \in \I$. 

In this setting, in line with Bayesian Stackelberg games~\cite{pita2008deployed}, we assume that while users at each location know the realization of the type $j \in \J$ at that location, the administrator only knows the distribution of each user group's type. Consequently, denoting the best response of users in the group $i$ of type $j$ at location $l$ as $y_l^{i,j}(\sigmaa)$, given an administrator strategy $\sigmaa$, we formulate the following expected revenue maximization problem (ERMP) of the administrator: \vspace{-2pt}
\begin{maxi!}|s|[2]   
    {\substack{\sigmaa \in \Omega_R \\ y_l^i(\sigmaa) \in [0, 1], \forall l \in L, i \in \I}}                            
    { Q_R(\sigmaa) =  \sum_{j \in \J} \sum_{i \in \I} q_{l}^{i,j} \sum_{l \in L}  \sigma_l y_l^{i,j}(\sigmaa) k \Lambda_l^{i,j},  \label{eq:admin-obj-expected-revenue-prob}}   
    {\label{eq:Eg005-prob}}             
    {}          
    \addConstraint{y_l^{i,j}(\sigmaa)}{\in \argmax_{y \in [0, 1]} U_l^{i,j}(\sigmaa, y) = y[ (1-\sigma_l) d_l^{i,j} - \sigma_l k], \quad \text{for all } l \in L, i \in \I, j \in \J \label{eq:bi-level-con-expected-revenue-prob}}    
\end{maxi!}
where in upper-level problem, the administrator uses a strategy $\sigmaa$ that maximizes its expected revenue to which the users of each group at each location best respond by maximizing their utilities in the lower-level problem based on the realized type. We note that an analogous formulation for the administrator's expected payoff maximization problem can also be derived.

Note that the above ERMP exactly corresponds to the revenue maximization Problem~\eqref{eq:admin-obj-revenue}-\eqref{eq:bi-level-con-revenue} in the deterministic setting if we consider a type space $|\I| \times |\J|$, where the number of users belonging to a type $(i, j)$ at any location $l$ is given by $q_{l}^{i,j} \Lambda_l^{i,j}$. Analogously, the administrator's expected payoff maximization problem exactly corresponds to the payoff maximization Problem~\eqref{eq:admin-obj-fraud}-\eqref{eq:bi-level-con-fraud} in the deterministic setting if we consider a type space $|\I| \times |\J|$, where the administrator payoff corresponding to preventing fraud from users belonging to a type $(i, j)$ at any location $l$ is given by $q_{l}^{i,j} p_l^{i,j}$. Given this transformation from the probabilistic setting to the deterministic setting, any algorithms and corresponding guarantees derived in this work for the deterministic setting extend to solving the above-defined ERMP and the administrator's expected payoff maximization problem.

\subsection{Extension to Setting where Administrator can Optimize over Fines} \label{apdx:variability-in-fines-extension}

In this section, we consider the setting when the fines in the system are not fixed, and the administrator is tasked with the problem of not only computing a resource allocation strategy but also the fine from an interval $k \in [\underline{k}, \Bar{k}]$, for some constants $\underline{k}, \Bar{k}>0$. In both the payoff and revenue maximization settings, we show that the optimal strategy for the administrator is to always set the fine to the maximum allowable level, i.e., $k = \Bar{k}$. For brevity of notation and expositional simplicity, we prove these results in the setting with homogeneous user types and thus drop $i$ in the superscript of the notation of user types.



\subsubsection{Variability of Fines under Payoff Maximization Objective}

We first consider the administrator's payoff maximization objective. In particular, defining $P_l(\sigma_l, k)$ as the payoff function when $\sigma_l$ resources are allocated to location $l$ under a fine $k$, the administrator's payoff maximization Problem~\eqref{eq:admin-obj-fraud}-\eqref{eq:bi-level-con-fraud} can be formulated as the following optimization problem:
\begin{maxi!}|s|[2]   
    {\sigmaa, k \in [\underline{k}, \Bar{k}]}                   
    { P_R(\sigmaa, k) = \sum_{l \in L} P_l(\sigma_l, k),  \label{eq:welfare-obj2-fine-variable}}   
    {\label{eq:Eg001-fine-variable-payoff}}             
    {}          
    \addConstraint{\sum_{l \in L} \sigma_l}{\leq R, \label{eq:resource-constraint2-payoff-fine-variable}}    
    \addConstraint{\sigma_l}{\in [0, 1], \quad \forall l \in L \label{eq:probability-feasibility2-payoff-fine-variable}} 
\end{maxi!}
We now show that setting the fine $k = \Bar{k}$ is an optimal solution to the above problem.

To prove this claim, fix fines $k_1$ and $k_2$ and let $\sigmaa^{k_1}$ and $\sigmaa^{k_2}$ be the corresponding optimal resource allocation strategies that maximize the administrator's payoff under the fines $k_1$ and $k_2$, respectively. Furthermore, without loss of generality, let $k_1 < k_2$. Then, it is straightforward to see that the inequality $P_l(\sigma_l, k_1) \leq P_l(\sigma_l, k_2)$ holds for the payoff functions under the two fines for all $\sigma_l \in [0, 1]$. In particular, since $k_1<k_2$, we note that $P_l(\sigma_l, k_1) = P_l(\sigma_l, k_2)$ for $\sigma_l< \frac{d_{l}}{d_l + k_2}$ and for $\sigma_l \geq \frac{d_{l}}{d_l + k_1}$, but $P_l(\sigma_l, k_1) < P_l(\sigma_l, k_2)$ for $\sigma_l \in [\frac{d_{l}}{d_l + k_2}, \frac{d_{l}}{d_l + k_1})$. Consequently, it holds that $P_l(\sigma_l, k_1) \leq P_l(\sigma_l, k_2)$ for all $\sigma_l \in [0, 1]$.


Then, it follows that:
\begin{align*}
    P_R(\sigmaa^{k_2}, k_2) \stackrel{(a)}{\geq} P_R(\sigmaa_{k_1}, k_2) = \sum_{l \in L} P_l(\sigma_l^{k_1}, k_2) \stackrel{(b)}{\geq} \sum_{l \in L} P_l(\sigma_l^{k_1}, k_1) = P_R(\sigmaa_{k_1}, k_1),
\end{align*}
where (a) follows by the optimality of $\sigmaa^{k_2}$ under the fine $k_2$ and (b) follows from our above analysis that $P_l(\sigma_l, k_1) \leq P_l(\sigma_l, k_2)$ for all $\sigma_l \in [0, 1]$. Consequently, our analysis establishes that the payoff is monotonically non-decreasing in the fine, which implies that setting the fine to $k = \Bar{k}$ is optimal in terms of maximizing the administrator's payoff. 

We also note that the above analysis even extends to the setting when fines are allowed to vary across locations.

\subsubsection{Variability of Fines Under Revenue Maximization Objective}

This section considers the administrator's revenue maximization objective. In this setting, we define $Q_l(\sigma_l, k)$ as the revenue function when $\sigma_l$ resources are allocated to location $l$ under a fine $k$. Then, the administrator's revenue maximization Problem~\eqref{eq:admin-obj-revenue}-\eqref{eq:bi-level-con-revenue} can be formulated as the following optimization problem:
\begin{maxi!}|s|[2]   
    {\sigmaa, k \in [\underline{k}, \Bar{k}]}                   
    { Q_R(\sigmaa, k) = \sum_{l \in L} Q_l(\sigma_l, k),  \label{eq:welfare-obj2-rev-fine-variability1}}   
    {\label{eq:Eg001-rev-fine-variability1}}             
    {}          
    \addConstraint{\sum_{l \in L} \sigma_l}{\leq R, \label{eq:resource-constraint2-rev-fine-variability1}}    
    \addConstraint{\sigma_l}{\in [0, 1], \quad \forall l \in L \label{eq:probability-feasibility2-rev-fine-variability1}} 
\end{maxi!}
Note from Lemma~\ref{lem:lp-rev-max-deterministic} that the above problem can be reformulated as:
\begin{maxi!}|s|[2]   
    {\sigmaa, k \in [\underline{k}, \Bar{k}]}                   
    { Q_R(\sigmaa, k) = \sum_{l \in L} \sigma_l k \Lambda_l,  \label{eq:welfare-obj2-rev-fine-variability}}   
    {\label{eq:Eg001-rev-fine-variability}}             
    {}          
    \addConstraint{\sum_{l \in L} \sigma_l}{\leq R, \label{eq:resource-constraint2-rev-fine-variability}}    
    \addConstraint{\sigma_l}{\in \left[0, \frac{d_l}{d_l+k} \right], \quad \forall l \in L \label{eq:probability-feasibility2-rev-fine-variability}} 
\end{maxi!}
We now show that setting the fine $k = \Bar{k}$ is an optimal solution to the above problem.



To prove this claim, first note by the optimality of Algorithm~\ref{alg:GreedyRevenueMaximization} that the optimal administrator revenue for a fixed fine $k$ is given by:
\begin{align*}
    Q_R^*(k) = \frac{d_1}{d_1+k} k \Lambda_1 + \ldots + \frac{d_{l_k}}{d_{l_k}+k} k \Lambda_{l_k} + \left( R - \sum_{l \in [l_k]} \frac{d_l}{d_l+k} \right) k \Lambda_{l_k+1},
\end{align*}
where the locations are ordered in descending order of $\Lambda_l$ and $\sum_{l \in [l_k]} \frac{d_l}{d_l+k} < R$ but $\sum_{l \in [l_k+1]} \frac{d_l}{d_l+k} > R$ for some $l_k$. Note also that any change in the fines does not influence the ordering of the locations under the optimal allocation.

Now, we compute the derivative of the above objective with the fine to get:
\begin{align*}
    \frac{\partial Q_R^*(k)}{\partial k} &= \sum_{l \in [l_k]} \left( \frac{d_l}{d_l+k} \Lambda_l - \frac{k \Lambda_l d_l}{(d_l+k)^2} \right) + \Lambda_{l_k+1} \left( R - \sum_{l \in [l_k]} \frac{d_l}{d_l+k} \right) + \sum_{l \in [l_k]} \frac{k \Lambda_{l_k+1} d_l}{(d_l+k)^2}, \\
    &= R \Lambda_{l_k+1} + \sum_{l \in [l_k]} \left( \frac{d_l^2 \Lambda_l}{(d_l+k)^2} - \frac{d_l^2 \Lambda_{L_k+1}}{(d_l+k)^2}\right), \\
    &= R \Lambda_{l_k+1} + \sum_{l \in [l_k]} \frac{d_l^2}{(d_l+k)^2} (\Lambda_l - \Lambda_{l_k+1}), \\
    &\geq 0,
\end{align*}
where the final inequality follows as $\Lambda_l \geq \Lambda_{l_k+1}$ for all $l \in [l_k]$, as we ordered locations in the descending order of the $\Lambda_l$ values in Algorithm~\ref{alg:GreedyRevenueMaximization}.

The above analysis implies that the revenue is monotonically non-decreasing in the fine, which implies that setting the fine to $k = \Bar{k}$ is optimal in terms of maximizing the administrator's revenues. We note that a similar analysis can be performed even for the heterogeneous revenue maximization setting by applying the structure of the optimal solution of the administrator's revenue maximization Problem~\eqref{eq:admin-obj-revenue}-\eqref{eq:bi-level-con-revenue} established in Lemma~\ref{lem:reverse-direction-opt-sol-ermp}.

\subsubsection{Variability of Fines Under Revenue Maximization Objective where Fines can vary across Locations}

This section extends the result obtained in the previous section for the administrator's revenue maximization problem to the setting when the fines can vary across locations. In particular, consider fines $(k_l)_{l \in L}$, where for one location $l'$, we let the fines vary such that $k_{l'}^1<k_{l'}^2$ and let the fines of all other locations be fixed. Then, the optimal administrator revenue under the fine vector $\mathbf{k}^1 = (k_l^1)_{l\in L}$ is given by:
\begin{align*}
    Q_R^*(\textbf{k}^1) = \frac{d_1}{d_1+k_1^1} k_1^1 \Lambda_1 + \ldots + \frac{d_{l_{k_l^1}}}{d_{l_{k_l^1}}+k_{l_{k_l^1}}} k_{l_{k_l^1}}^1 \Lambda_{l_{k_l^1}} + \left( R - \sum_{l \in [l_{k_l^1}]} \frac{d_l}{d_l+k_l^1} \right) k_{l_{k_l^1}}^1 \Lambda_{l_{k_l^1}+1},
\end{align*}
for some $l_{k_l^1}$, where we order locations by their $k_l^1 \Lambda_l$ values (rather than just their $\Lambda_l$ values). Let $\sigmaa^{\textbf{k}^1}$ be the revenue-maximizing allocation given the fine vector $\textbf{k}^1$.

Now consider the setting corresponding to the fine vector $\textbf{k}^2 = (k_l^2)_{l\in L}$, which is such that for one location $l'$, $k_{l'}^1<k_{l'}^2$, and fines for all other locations are the same as under the vector $\textbf{k}^1$. In this case, we construct an allocation $\sigmaa'$ with at least the same objective as that under the fines $(k_l^1)_{l\in L}$. In particular, there are three cases to consider: (i) $l'>l_{k_l^1+1}$, (ii) $l' < l_{k_l^1+1}$, and (iii) $l'=l_{k_l^1+1}$.

\textbf{Case (i):} Let $\sigmaa' = \sigmaa^{k^1}$. This implies that $Q_R(\sigmaa^{\textbf{k}^2}, \textbf{k}^2) \stackrel{(a)}{\geq} Q_R(\sigmaa', \textbf{k}^2) \stackrel{(b)}{=} Q_R(\sigmaa^{\textbf{k}^1}, \textbf{k}^1)$, where (a) follows by the optimality of $\sigmaa^{\textbf{k}^2}$ under the fine $\textbf{k}^2$ and (b) follows as $l'>l_{k_l^1+1}$.

\textbf{Case (ii):} Let $\sigma_l' = \sigma_l^{\textbf{k}^1}$ for all $l \neq l'$ and let $\sigma_{l'}' = \frac{d_{l'}}{d_{l'} + k_{l'}^2} < \frac{d_{l'}}{d_{l'} + k_{l'}^1}$. In this case, we have that:
\begin{align*}
    Q_R(\sigmaa^{\textbf{k}^2}, \textbf{k}^2) &\geq Q_R(\sigmaa', \textbf{k}^2), \\
    &= \frac{d_1}{d_1+k_1^2} k_1^2 \Lambda_1 + \ldots + \frac{d_{l'}}{d_{l'}+k_{l'}^2} k_{l'}^2 \Lambda_{l'} + \ldots + \frac{d_{l_{k_l^2}}}{d_{l_{k_l^2}}+k_{l_{k_1^2}}} k_{l_{k_l^2}}^1 \Lambda_{l_{k_l^2}} \\ 
    &+ \left( R - \sum_{l \in [l_{k_l^2}]} \frac{d_l}{d_l+k_l^2} \right) k_{l_{k_l^2}}^1 \Lambda_{l_{k_l^2}+1}, \\
    &= \frac{d_1}{d_1+k_1^1} k_1^1 \Lambda_1 + \ldots + \frac{d_{l'}}{d_{l'}+k_{l'}^2} k_{l'}^2 \Lambda_{l'} + \ldots + \frac{d_{l_{k_l^1}}}{d_{l_{k_l^1}}+k_{l_{k_1^1}}} k_{l_{k_l^1}}^1 \Lambda_{l_{k_l^1}} \\ &+ \left( R - \sum_{l \in [l_{k_l^1}] \backslash l'} \frac{d_l}{d_l+k_l^1} - \frac{d_{l'}}{d_{l'}+k_{l'}^2} \right) k_{l_{k_l^1}}^1 \Lambda_{l_{k_l^1}}, \\
    &= Q_R(\sigmaa^{\textbf{k}^1}, \textbf{k}^1) + \frac{d_{l'}}{d_{l'}+k_{l'}^2} k_{l'}^2 \Lambda_{l'} - \frac{d_{l'}}{d_{l'}+k_{l'}^2}  k_{l_{k_l^1}}^1 \Lambda_{l_{k_l^1}} - \frac{d_{l'}}{d_{l'}+k_{l'}^1} k_{l'}^1 \Lambda_{l'} + \frac{d_{l'}}{d_{l'}+k_{l'}^1} k_{l_{k_l^1}}^1 \Lambda_{l_{k_l^1}}
\end{align*}
Next, we compute the derivative of the latter function $f = \frac{d_{l'}}{d_{l'}+k_{l'}^2} k_{l'}^2 \Lambda_{l'}$ on the right-hand-side of the above term to get:
\begin{align*}
    \frac{\partial f}{\partial k_{l'}^2} &= \frac{d_{l'}}{d_{l'}+k_{l'}^2} \Lambda_{l'} - \frac{k_{l'}^2 \Lambda_{l'} d_{l'}}{(d_{l'} + k_{l'}^2)^2} = \frac{d_{l'}^2 \Lambda_{l'}}{(d_{l'} + k_{l'}^2)^2}  \geq 0,
\end{align*}
which implies that the additional term on the right hand side of the above inequality is non-negative, i.e., $Q_R(\sigmaa^{\textbf{k}^2}, \textbf{k}^2) \geq Q_R(\sigmaa^{\textbf{k}^1}, \textbf{k}^1)$.

\textbf{Case (iii):} We consider two cases: (a) $R - \sum_{l \in [l_{k_l^1}]} \frac{d_l}{d_l+k_l^1} \leq \frac{d_{l'}}{d_{l'}+k_{l'}^2}$ and (b) $R - \sum_{l \in [l_{k_l^1}]} \frac{d_l}{d_l+k_l^1} > \frac{d_{l'}}{d_{l'}+k_{l'}^2}$. In case (a), we let $\sigmaa' = \sigmaa^{\textbf{k}^1}$, i.e., it clearly follows that $Q_R(\sigmaa^{\textbf{k}^2}, \textbf{k}^2) \geq Q_R(\sigmaa', \textbf{k}^2) \geq Q_R(\sigmaa^{\textbf{k}^1}, \textbf{k}^1)$, where the final inequality follows as $k_{l'}^2 > k_{l'}^1$. Thus, consider case (b), where we let $\sigmaa'$ be such that $\sigma_l' = \sigma_l^{\textbf{k}^1}$ for all $l \neq l'$ and let $\sigma_{l'} = \frac{d_{l'}}{d_{l'}+k_{l'}^2}$. In this case, we get that:
\begin{align*}
    Q_R(\sigmaa^{\textbf{k}^2}, \textbf{k}^2) &\geq Q_R(\sigmaa^{\textbf{k}^1}, \textbf{k}^1) + \frac{d_{l'}}{d_{l'}+k_{l'}^2} k_{l'}^2 \Lambda_{l'} - \left( R - \sum_{l \in [l_{k_l^1}]} \frac{d_l}{d_l+k_l^1} \right) k_{l'}^1 \Lambda_{l'}
\end{align*}
Without loss of generality, we consider the case when $R - \sum_{l \in [l_{k_l^1}]} \frac{d_l}{d_l+k_l^1} = \frac{d_{l'}}{d_{l'}+k_{l'}^1}$, as we are only interested in infinitesimal changes, as we would always be in case (a) otherwise for small enough change in the fine. Thus, the above term reduces to:
\begin{align*}
    Q_R(\sigmaa^{\textbf{k}^2}, \textbf{k}^2) &\geq Q_R(\sigmaa^{\textbf{k}^1}, \textbf{k}^1) + \frac{d_{l'}}{d_{l'}+k_{l'}^2} k_{l'}^2 \Lambda_{l'} - \frac{d_{l'}}{d_{l'}+k_{l'}^1} k_{l'}^1 \Lambda_{l'}.
\end{align*}
However, from our earlier analysis, we know that the latter term in the right hand side is non-negative as the derivative of the corresponding function is non-negative in the fine. Thus, we have shown that $Q_R(\sigmaa^{\textbf{k}^2}, \textbf{k}^2) \geq Q_R(\sigmaa^{\textbf{k}^1}, \textbf{k}^1)$, which establishes our claim.

Note that in all the above cases $\sigmaa'$ is feasible by construction as the fine is increased and thus the resource requirement for that location is lesser.

Thus, the above analysis implies that the revenue is monotonically non-decreasing in the fine for all locations. Consequently, the revenue optimal fines correspond to setting $k_l = \Bar{k}$ for all locations $l$.


\section{Equilibrium Strategies of Administrator and Users in Contract Game} \label{apdx:eq-strategies-admin-users-contract}


In this section, we study the strategies of the administrator and users in our contract game, given a contract parameter $\alpha \in [0, 1]$. To simplify exposition and elucidate the main ideas of this contract game, we focus on the setting with homogeneous users at each location, and, following ideas developed in Section~\ref{sec:probabilistic-setting}, note that our proposed framework and results can be naturally generalized to the setting with heteregeneous users as well.


In studying the equilibrium strategies of the administrator and users, we first note that the problem of computing equilibria of this contract game is NP-hard, as solving the bi-level Program~\eqref{eq:obj-contract-alpha}-\eqref{eq:alpha-contract-con} is NP-hard, in general. The proof of this claim follows almost entirely analogously to the proof of Theorem~\ref{thm:npHardness-swm-fm}. In particular, as in the proof of Theorem~\ref{thm:npHardness-swm-fm}, we reduce from an instance of the partition problem and consider an instance of the contract game that is akin to the instance in the proof of Theorem~\ref{thm:npHardness-swm-fm} and where the number of fraudulent users $\Lambda_l = \delta$ for all locations $l$ is a small constant. In this setting, Objective~\eqref{eq:obj-contract-alpha} is dominated by the welfare term for a sufficiently large $\alpha \in [0, 1]$, e.g., $\alpha = 1$. Thus, on constructing an instance akin to that in the proof of Theorem~\ref{thm:npHardness-swm-fm} and where the number of fraudulent users $\Lambda$ at each location is a small constant, the remainder of the proof follows from an almost entirely analogous line of reasoning to the proof of Theorem~\ref{thm:npHardness-swm-fm}; thus, we omit the remaining proof details for brevity.



Given the hardness of solving Problem~\eqref{eq:obj-contract-alpha}-\eqref{eq:alpha-contract-con}, we now present a variant of a greedy algorithm to compute an administrator strategy with an approximation guarantee to the optimal solution of Problem~\eqref{eq:obj-contract-alpha}-\eqref{eq:alpha-contract-con} for any given contract parameter $\alpha \in [0, 1]$. To this end, we first note that the best-response strategy $y_l(\sigmaa, \alpha)$ of users at a given location $l$, corresponding to the solution of the lower-level Problem~\eqref{eq:alpha-contract-con}, given a contract parameter $\alpha$ and an administrator strategy $\sigmaa$, is
\begin{align} \label{eq:best-response-users-contract-game}
    y_l(\sigmaa, \alpha) = 
    \begin{cases}
        0, &\text{if } \sigma_l > \frac{d_l}{d_l+k} \text{ or } \left( \sigma_l = \frac{d_l}{d_l+k} \text{ and } (k \Lambda_l + \alpha p_l) \frac{d_l}{d_l+k} \leq \alpha p_l \right), \\ 
        1, &\text{otherwise}.
    \end{cases}
\end{align}
Notice that as with our earlier studied best-response function of users in the revenue and payoff maximization settings, when $\sigma_l = \frac{d_l}{d_l+k}$, any $y_l(\sigmaa, \alpha) \in [0, 1]$ is a best-response for users at location $l$. However, at the threshold $\sigma_l = \frac{d_l}{d_l+k}$, we let $y_l(\sigmaa, \alpha)$ take on the value zero or one depending on whether the administrator's revenue corresponding to location $l$, given by $\alpha p_l$, when $y_l(\sigmaa, \alpha) = 0$ is smaller than the administrator's revenue at location $l$, given by $(\Lambda_l + \alpha p_l) \frac{d_l}{d_l+k}$, when $y_l(\sigmaa, \alpha) = 1$ or not. Moreover, in the case when $\sigma_l = \frac{d_l}{d_l+k}$ and the administrator is indifferent between the outcomes corresponding to $y_l(\sigmaa, \alpha) = 1$ or $y_l(\sigmaa, \alpha) = 0$, i.e., $(k \Lambda_l + \alpha v_l) \frac{d_l}{d_l+k} = \alpha p_l$, we set $y_l(\sigmaa, \alpha) = 0$, as this maximizes the payoff of the principal. Thus, the best-response function of users in Equation~\eqref{eq:best-response-users-contract-game} is in alignment with the notion of strong Stackelberg equilibria~\cite{kjtjft-2009}.

Having presented the best-response function of users in our studied contract game, we now present Algorithm~\ref{alg:GreedyContractDeterministic}, \emph{Contract-Greedy}, which extends our greedy-like algorithmic approach for the welfare maximization setting (see Algorithm~\ref{alg:GreedyFraudminimizationDeterministic}) to compute an administrator strategy in our contract game for any contract $\alpha \in [0, 1]$. We note that Algorithm~\ref{alg:GreedyContractDeterministic} is entirely analogous to Algorithm~\ref{alg:GreedyFraudminimizationDeterministic} in the payoff maximization setting other than in the process of sorting locations, as the administrator in our contract game, maximizes a linear combination of the revenue and payoff objectives. 

\begin{algorithm}
\footnotesize
\SetAlgoLined
\SetKwInOut{Input}{Input}\SetKwInOut{Output}{Output}
\Input{Contract parameter $\alpha$, Resource capacity $R$, User types $\Theta_l = (\Lambda_l, d_l, p_l)$ for all locations $l$}
\textbf{Step 1: Find Greedy Solution $\Tilde{\sigmaa}$:} \\
Define affordability threshold $t_l \leftarrow \min \{ R, \frac{d_l}{d_l+k}\}$ for all locations $l$ \;
Define revenue $z_l$ corresponding to allocating $t_l$ resources to each location $l$, where $z_l = \max \{ \alpha p_l, (k \Lambda_l + \alpha p_l) \frac{d_l}{d_l+k} \}$ if $t_l = \frac{d_l}{d_l+k}$ and $z_l = t_l (k \Lambda_l + \alpha p_l)$ if $t_l < \frac{d_l}{d_l+k}$ \;
Order locations in descending order of $\frac{z_l}{t_l}$ \;
  \For{$l = 1, 2, ..., |L|$}{
      $\Tilde{\sigma}_l \leftarrow \min \{ R, \frac{d_l}{d_l+k} \}$ ; \texttt{\footnotesize \sf Allocate the minimum of the remaining resources and $\frac{d_l}{d_l+k}$ to location $l$} \;
      $R \leftarrow R -  \Tilde{\sigma}_l$; \quad \texttt{\footnotesize \sf Update amount of remaining resources} \;
  }
\textbf{Step 2: Find Solution $\sigmaa'$ that Maximizes Revenue from Spending on Single Location:} \\
$\sigmaa^l \leftarrow \argmax_{\sigmaa \in \Omega_R: \sigma_{l'} = 0, \forall l' \neq l} Q_R(\sigmaa) + \alpha P_R(\sigmaa)$ for all locations $l$ \; 
$\sigmaa' \leftarrow \argmax_{l \in L} Q_R(\sigmaa^l) + \alpha P_R(\sigmaa^l)$ \;
\textbf{Step 3: Return Solution with a Higher Administrator Revenue:} \\
$\sigmaa^A_{\alpha} \leftarrow \argmax \{ Q_R(\Tilde{\sigmaa}) + \alpha P_R(\Tilde{\sigmaa}), Q_R(\sigmaa') + \alpha P_R(\sigmaa') \}$ \;
\caption{\footnotesize \emph{Contract-Greedy}}
\label{alg:GreedyContractDeterministic}
\end{algorithm}

We now show that Algorithm~\ref{alg:GreedyContractDeterministic} achieves at least half the total revenue for the administrator as the optimal solution to the bi-level Program~\eqref{eq:obj-contract-alpha}-\eqref{eq:alpha-contract-con} for any contract parameter $\alpha \in [0, 1]$. 

\begin{theorem} [1/2 Approximation of Greedy Algorithm for Contract Game] \label{thm:contract-game-eq}
For any given contract parameter $\alpha$, let $\sigmaa_{\alpha}^A$ be the solution corresponding to Algorithm~\ref{alg:GreedyContractDeterministic} and let $\sigmaa_{\alpha}^*$ be the optimal solution of the bi-level Program~\eqref{eq:obj-contract-alpha}-\eqref{eq:alpha-contract-con}. Then, $\sigmaa_{\alpha}^A$ achieves at least half the revenue for the administrator as the optimal solution $\sigmaa_{\alpha}^*$, i.e., $Q_R(\sigmaa_{\alpha}^A) + \alpha P_R(\sigmaa_{\alpha}^A) \geq \frac{1}{2} \left( Q_R(\sigmaa_{\alpha}^*) + \alpha P_R(\sigmaa_{\alpha}^*) \right)$.
\end{theorem}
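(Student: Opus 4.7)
My plan is to follow closely the blueprint of the proof of Theorem~\ref{thm:greedy-half-approx-fraud-min}, adapted to the combined objective $F_\alpha(\sigmaa) := Q_R(\sigmaa) + \alpha P_R(\sigmaa)$. The essential observation is that once users best-respond via Equation~\eqref{eq:best-response-users-contract-game}, the administrator's per-location contribution to $F_\alpha$ as a function of $\sigma_l$ is piecewise linear (linear from $0$ up to the threshold $d_l/(d_l+k)$, with a jump at that threshold) and has maximum value $z_l=\max\{\alpha p_l,(k\Lambda_l+\alpha p_l)\tfrac{d_l}{d_l+k}\}$ attained exactly at $\sigma_l=d_l/(d_l+k)$. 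The \emph{affordable bang-per-buck} used in Algorithm~\ref{alg:GreedyContractDeterministic}, namely $z_l/t_l$, is therefore the natural sorting criterion.

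The first step is to prove a structural lemma analogous to Proposition~\ref{prop:opt-mixed-strategy-solution} for the contract game: there exists an optimal strategy $\sigmaa_\alpha^*$ of Problem~\eqref{eq:obj-contract-alpha}-\eqref{eq:alpha-contract-con} such that (i) $\sigma_l^*\in[0,d_l/(d_l+k)]$ for every $l$, since any spending beyond $d_l/(d_l+k)$ strictly wastes resources without changing the user response, and (ii) there is at most one location $l'$ with $\sigma_{l'}^*\in(0,d_{l'}/(d_{l'}+k))$, while every other location has $\sigma_l^*\in\{0,d_l/(d_l+k)\}$. Property (ii) is the analogue of the water-filling argument from Lemma~\ref{lem:structure-opt-mixed-strategy}: if two locations lie strictly inside the interval, the linear contribution of $\alpha p_l+k\Lambda_l$ per unit spending on the interior allows us to shift mass toward the location with the higher bang-per-buck without decreasing $F_\alpha$. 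Consequently $F_\alpha(\sigmaa_\alpha^*)=\sum_{l\in L_1}z_l+\sigma_{l'}^*(k\Lambda_{l'}+\alpha p_{l'})$ for some index set $L_1$ and lone location $l'$.

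Next, I would introduce the fractional-knapsack surrogate
\begin{align*}
G_R'=\max_{0\le x_l\le 1}\sum_{l\in L} x_l z_l\quad\text{s.t.}\quad \sum_{l\in L} t_l x_l\le R,
\end{align*}
and show $F_\alpha(\sigmaa_\alpha^*)\le G_R'$ by the substitution $x_l=\sigma_l^*/t_l$; the chain of inequalities mirrors the argument in Appendix~\ref{apdx:pf-greedy-half-approx-welfare} and uses the structural lemma to bound the contribution of the interior location $l'$ by $z_{l'}$. The classical optimality of the greedy rule for fractional knapsack then gives $G_R'=\sum_{l\in S} z_l+x_{\tilde l}^*z_{\tilde l}$ for some prefix $S$ and split location $\tilde l$ in the descending order of $z_l/t_l$.

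The final step is to match these two pieces with the two candidates produced by Algorithm~\ref{alg:GreedyContractDeterministic}: by construction, the Step~1 greedy output $\tilde\sigmaa$ picks off at least the prefix $S$ in that same order, yielding $F_\alpha(\tilde\sigmaa)\ge\sum_{l\in S}z_l$, while the Step~2 single-location maximizer $\sigmaa'$ satisfies $F_\alpha(\sigmaa')\ge z_{\tilde l}\ge x_{\tilde l}^* z_{\tilde l}$. Summing and using $\sigmaa_\alpha^A=\arg\max\{F_\alpha(\tilde\sigmaa),F_\alpha(\sigmaa')\}$ gives $2F_\alpha(\sigmaa_\alpha^A)\ge G_R'\ge F_\alpha(\sigmaa_\alpha^*)$, i.e.\ the claimed $1/2$-approximation. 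The main obstacle I anticipate is a careful handling of the threshold case $\sigma_l=d_l/(d_l+k)$ in the best-response tie-breaking of Equation~\eqref{eq:best-response-users-contract-game}, because the per-location contribution is $z_l$ precisely because the tie is broken in favor of the administrator; this must be verified to make the surrogate bound $F_\alpha(\sigmaa_\alpha^*)\le G_R'$ genuinely tight at the endpoints, otherwise the structural lemma and knapsack comparison break down.
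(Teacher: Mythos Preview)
Your proposal is correct and follows essentially the same approach as the paper, which explicitly states that the proof of Theorem~\ref{thm:contract-game-eq} ``follows an almost entirely analogous line of reasoning to that of Theorem~\ref{thm:greedy-half-approx-fraud-min}'' and omits it. Your structural lemma, fractional-knapsack surrogate, and two-candidate comparison are precisely the right adaptations of Proposition~\ref{prop:opt-mixed-strategy-solution} and the proof in Appendix~\ref{apdx:pf-greedy-half-approx-welfare} to the combined objective $F_\alpha$, and the tie-breaking concern you flag is handled exactly by Equation~\eqref{eq:best-response-users-contract-game}, which by construction awards the administrator $z_l$ at the threshold.
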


The proof of Theorem~\ref{thm:contract-game-eq} follows an almost entirely analogous line of reasoning to that of Theorem~\ref{thm:greedy-half-approx-fraud-min}, and thus we omit it for brevity. Moreover, as in the earlier studied revenue and payoff maximization settings, we can also establish an analogous resource augmentation guarantee for Algorithm~\ref{alg:GreedyContractDeterministic}. Overall, our earlier developed algorithmic approaches and associated theoretical guarantees naturally carry forward to computing the administrator strategies in the contract game, as Objective~\eqref{eq:obj-contract-alpha} is a linear combination of our earlier studied revenue and payoff objectives.

\section{Theoretical Results on Dense Sampling} \label{apdx:dense-sampling-theory}

In this section, we study the near-optimality of our proposed dense sampling procedure presented in Section~\ref{sec:optimal-contract-framework}. In particular, we prove two key results, which establish that applying the above dense-sampling procedure approximately maximizes the principal’s objective across all contract parameters $\alpha \in [0, 1]$, where the loss in the principal's payoff depends on the chosen step-size $s$. Our first result is a restatement of Theorem~\ref{thm:dense-sampling-near-optimal}, which establishes the near-optimality of dense sampling in the setting when, given any contract parameter $\alpha$, the administrator strategy corresponds to the solution of the bi-level Program~\eqref{eq:obj-contract-alpha}-\eqref{eq:alpha-contract-con}.

\begin{theorem}[Near-Optimality of Dense Sampling under Optimal Administrator Strategy] \label{thm:dense-sampling-near-optimal-apdx-2} 
Let $\alpha^* \in [0, 1]$ be the principal's optimal contract and $\alpha_{s}^* \in \A_s$ be the contract computed through dense-sampling. Further, given any $\alpha$, let $\sigmaa(\alpha)$ be the solution of Problem~\eqref{eq:obj-contract-alpha}-\eqref{eq:alpha-contract-con}. Then, for a step-size $s \leq \frac{\epsilon}{\sum_l p_l}$, the loss in the principal's objective through dense sampling is bounded by $\epsilon$, i.e., $(1-\alpha^*)P_R(\sigmaa(\alpha^*)) \leq (1-\alpha_s^*)P_R(\sigmaa(\alpha_s^*)) + \epsilon$.
\end{theorem}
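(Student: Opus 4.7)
The plan is to exploit the monotonicity of the payoff $P_R(\sigmaa(\alpha))$ in the contract $\alpha$, as hinted at in the discussion after the theorem statement, so that the near-optimality follows from a one-dimensional quantization argument on a bounded monotone function. The core obstacle is that $P_R(\sigmaa(\alpha))$ is a priori neither continuous nor single-valued in $\alpha$ (the best-response $y_l$ has discontinuities at $\sigma_l = d_l/(d_l+k)$, and the upper-level problem is non-convex and NP-hard), so standard continuity arguments will not work. Monotonicity replaces continuity in the proof.

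First I would prove the key lemma: if $\sigmaa(\alpha)$ denotes any optimal solution of Problem~\eqref{eq:obj-contract-alpha}--\eqref{eq:alpha-contract-con} at contract $\alpha$, then $P_R(\sigmaa(\alpha))$ is monotonically non-decreasing in $\alpha$. This is a classical Topkis/supermodular-selection argument: for $\alpha_1 < \alpha_2$, optimality of $\sigmaa(\alpha_1)$ at $\alpha_1$ and of $\sigmaa(\alpha_2)$ at $\alpha_2$ gives
\begin{align*}
Q_R(\sigmaa(\alpha_1)) + \alpha_1 P_R(\sigmaa(\alpha_1)) &\geq Q_R(\sigmaa(\alpha_2)) + \alpha_1 P_R(\sigmaa(\alpha_2)), \\
Q_R(\sigmaa(\alpha_2)) + \alpha_2 P_R(\sigmaa(\alpha_2)) &\geq Q_R(\sigmaa(\alpha_1)) + \alpha_2 P_R(\sigmaa(\alpha_1)).
\end{align*}
Summing these two inequalities and canceling $Q_R$ terms yields $(\alpha_2-\alpha_1)(P_R(\sigmaa(\alpha_2)) - P_R(\sigmaa(\alpha_1))) \geq 0$, and since $\alpha_2 > \alpha_1$ the claim follows. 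This is the hard part, in that it is the only step where the bi-level structure really matters; everything else reduces to an elementary quantization bound, and care must be taken to handle the case where $\sigmaa(\alpha)$ is not unique by picking a consistent optimal selection.

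Next I would carry out the quantization argument. Given $\alpha^* \in [0,1]$, choose $\alpha_s \in \A_s$ with $\alpha_s \geq \alpha^*$ and $\alpha_s - \alpha^* \leq s$ (valid since $1 \in \A_s$). By monotonicity, $P_R(\sigmaa(\alpha_s)) \geq P_R(\sigmaa(\alpha^*))$, and since $P_R$ is a non-negative sum bounded above by $\sum_{l} p_l$, we have
\begin{align*}
(1-\alpha_s)P_R(\sigmaa(\alpha_s))
&= (1-\alpha^*)P_R(\sigmaa(\alpha_s)) - (\alpha_s-\alpha^*)P_R(\sigmaa(\alpha_s)) \\
&\geq (1-\alpha^*)P_R(\sigmaa(\alpha^*)) - s \sum_{l} p_l \\
&\geq (1-\alpha^*)P_R(\sigmaa(\alpha^*)) - \epsilon,
\end{align*}
where the last inequality uses the hypothesis $s \leq \epsilon / \sum_l p_l$.

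Finally I would invoke the definition of $\alpha_s^*$ as the maximizer of $(1-\alpha)P_R(\sigmaa(\alpha))$ over $\A_s$, which gives $(1-\alpha_s^*)P_R(\sigmaa(\alpha_s^*)) \geq (1-\alpha_s)P_R(\sigmaa(\alpha_s))$. Chaining this with the previous display yields $(1-\alpha^*)P_R(\sigmaa(\alpha^*)) \leq (1-\alpha_s^*)P_R(\sigmaa(\alpha_s^*)) + \epsilon$, which is the desired bound. The proof structure makes clear that the argument extends, as noted in Appendix~\ref{apdx:key-lemmas-dense-sampling}, to any administrator-strategy selection $\Tilde{\sigmaa}(\alpha)$ for which $P_R(\Tilde{\sigmaa}(\alpha))$ is monotonically non-decreasing in $\alpha$, since only monotonicity and the trivial bound $P_R \leq \sum_l p_l$ are used past the lemma.
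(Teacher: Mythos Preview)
Your proposal is correct and follows essentially the same approach as the paper: the paper decomposes the proof into precisely the two pieces you identify, namely the Topkis-style monotonicity lemma (summing the two optimality inequalities to obtain $(\alpha_2-\alpha_1)(P_R(\sigmaa(\alpha_2))-P_R(\sigmaa(\alpha_1)))\geq 0$) and the quantization argument using the grid point $\alpha_{s}\in\A_s$ just above $\alpha^*$ together with the bound $P_R\leq \sum_l p_l$. The algebraic manipulations in your quantization step are the same as the paper's, merely rearranged.
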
  

Our second result establishes the near-optimality of dense sampling in the setting when, given any contract parameter $\alpha$, the administrator strategy is computed using Algorithm~\ref{alg:GreedyContractDeterministic} under a correlation assumption on the payoffs $p_l$ and payoff bang-per-buck ratios $\frac{p_l(d_l+k)}{d_l}$ at each location $l$.

\begin{theorem} [Near-Optimality of Dense Sampling under Greedy Administrator Strategy] \label{thm:dense-sampling-near-optimal-apdx} 
Let $\alpha^* \in [0, 1]$ denote the optimal contract, $\alpha_{s}^* \in \A_s$ be the contract computed through dense-sampling, and suppose that the payoffs $p_l$ are positively correlated with the payoff bang-per-buck ratios $\frac{p_l(d_l+k)}{d_l}$ at each location, i.e., if $p_1 \leq \ldots \leq p_{|L|}$, then $\frac{p_{1}(d_1+k)}{d_1} \leq \frac{p_2(d_2+k)}{d_2} \leq \ldots \leq \frac{p_{|L|}(d_{|L|}+k)}{d_{|L|}}$. Further, given $R \geq 1$ resources, suppose that, given any contract $\alpha$, the administrator strategy $\sigmaa_{\alpha}^A$ is computed using Algorithm~\ref{alg:GreedyContractDeterministic}. Then, for a step-size $s \leq \frac{\epsilon}{\sum_l p_l}$, the loss in the principal's objective through dense sampling is bounded by $\epsilon$, i.e., $(1-\alpha^*)P_R(\sigmaa_{\alpha^*}^A) \leq (1-\alpha_s^*)P_R(\sigmaa_{\alpha_s^*}^A) + \epsilon$.
\end{theorem}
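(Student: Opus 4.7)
The plan is to reduce the theorem to the single structural claim that $P_R(\sigmaa_{\alpha}^A)$, the payoff of the allocation returned by \emph{Contract-Greedy}, is monotonically non-decreasing in $\alpha$ under the stated correlation assumption. Once this monotonicity is in hand, the dense-sampling bound follows from a short computation. Specifically, for the true optimum $\alpha^*$, I would let $\alpha_s$ be the smallest grid point in $\A_s$ with $\alpha_s \geq \alpha^*$ (which exists since $1 \in \A_s$, and satisfies $\alpha_s - \alpha^* \leq s$). Then
\begin{align*}
(1-\alpha_s) P_R(\sigmaa_{\alpha_s}^A)
&\geq (1-\alpha^* - s)\, P_R(\sigmaa_{\alpha_s}^A) \\
&\geq (1-\alpha^*)\, P_R(\sigmaa_{\alpha^*}^A) - s\, P_R(\sigmaa_{\alpha_s}^A),
\end{align*}
where the second inequality uses monotonicity to replace $P_R(\sigmaa_{\alpha_s}^A)$ with $P_R(\sigmaa_{\alpha^*}^A)$ in the leading term. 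Since $P_R(\sigmaa_{\alpha_s}^A) \leq \sum_l p_l$ and $s \leq \epsilon/\sum_l p_l$, the slack is at most $\epsilon$, and optimality of $\alpha_s^*$ over $\A_s$ finishes the argument.

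The heart of the proof is therefore establishing monotonicity of $P_R(\sigmaa_{\alpha}^A)$ in $\alpha$. I would analyze the two candidate allocations that \emph{Contract-Greedy} compares in Step 3 separately. For Step 2, the allocation $\sigmaa'$ spends all resources on the single location maximizing $Q_R(\sigmaa) + \alpha P_R(\sigmaa)$ among the single-location strategies; since each candidate's payoff contribution is unchanged by $\alpha$ while its selection weight scales linearly in $\alpha$, a standard envelope argument shows that the payoff of the chosen $\sigmaa'$ is weakly increasing in $\alpha$. For Step 1, the sorting key is $z_l / t_l$ with $z_l = \max\{\alpha p_l,\, (k \Lambda_l + \alpha p_l)\tfrac{d_l}{d_l+k}\}$, which is a convex maximum of two affine functions of $\alpha$. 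The correlation assumption $p_1 \leq \cdots \leq p_{|L|}$ with $\tfrac{p_1(d_1+k)}{d_1} \leq \cdots \leq \tfrac{p_{|L|}(d_{|L|}+k)}{d_{|L|}}$ is crafted precisely so that increasing $\alpha$ reorders locations in a direction consistent with higher payoff: I would formalize this via a pairwise-swap argument, showing that whenever the ordering of two locations $(l, l')$ flips as $\alpha$ rises past some threshold $\Bar{\alpha}$, the location promoted has at least as much $p_l$ as the one demoted, so the greedy allocation's payoff cannot decrease at the swap. Iterating over all such swap thresholds yields monotonicity of $P_R(\Tilde{\sigmaa})$ in $\alpha$, and combining with Step 2 gives the claim for $\sigmaa_{\alpha}^A$.

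The principal obstacle I anticipate is handling the non-smooth transitions: the payoff function itself is discontinuous at $\sigma_l = \tfrac{d_l}{d_l+k}$ (see the right panel of Figure on best-response plots), and the kink in $z_l$ as a function of $\alpha$ can cause the greedy sequence to shift an entire location in or out of the chosen set at discrete values of $\alpha$, not just shuffle rankings. To address this I would decompose $[0,1]$ into the finite set of intervals on which the sorted order of $z_l/t_l$ and the cutoff location of the greedy procedure are constant; on each interval the payoff of $\Tilde{\sigmaa}$ is constant, so it suffices to check that at each boundary $\Bar{\alpha}$ the payoff weakly jumps up. Using the correlation assumption, this amounts to checking that any location newly admitted by the greedy cutoff has payoff at least that of the displaced location, which is exactly the monotonicity of the $(p_l, \tfrac{p_l(d_l+k)}{d_l})$ pair. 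Gluing these piecewise arguments together completes the monotonicity claim and hence the theorem.
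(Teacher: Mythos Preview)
Your overall architecture is exactly the paper's: factor the result into (i) a generic lemma that monotonicity of $\alpha \mapsto P_R(\sigmaa_{\alpha}^A)$ implies the $\epsilon$-bound via the nearest grid point above $\alpha^*$, and (ii) a proof that the greedy allocation's payoff is monotone in $\alpha$ under the correlation hypothesis. Your dense-sampling computation is line-for-line the paper's Lemma (``Monotonicity Implies Near-Optimality''). The separation into Step~1 and Step~2 of \emph{Contract-Greedy} also mirrors the paper.

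Where you diverge is in the mechanics of monotonicity for Step~1. The paper compares $\alpha_1<\alpha_2$ directly by partitioning the selected locations under $\alpha_1$ into $S_1^1$ (payoff branch active in $z_l$) and $S_1^2$ (revenue branch active), and then argues (a) every location in $S_1^1$ remains selected and stays in the payoff branch at $\alpha_2$, and (b) any location in $S_1^2$ that is dropped is replaced by some $l_1$ with $p_{l_1}\geq p_l$, via the correlation assumption. Your pairwise-swap / piecewise-constant framing is a reasonable alternative route, but your reduction ``this amounts to checking that any location newly admitted has payoff at least that of the displaced location'' is too coarse: the payoff contribution of a selected location is not $p_l$ but either $p_l$ or $p_l\,\tfrac{d_l}{d_l+k}$ depending on which branch of $z_l=\max\{\alpha p_l,\,(k\Lambda_l+\alpha p_l)\tfrac{d_l}{d_l+k}\}$ is active, and there are boundaries of your interval decomposition that are not swaps at all but branch transitions (a location moving from $S^2$ to $S^1$, which \emph{increases} its contribution from $p_l\tfrac{d_l}{d_l+k}$ to $p_l$). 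Your swap argument would need to track both the branch of the promoted and demoted locations to conclude that the \emph{contribution} weakly rises; the bare comparison $p_{l'}\geq p_l$ does not suffice, and indeed can fail when the demoted location is in the revenue branch and the promoted one in the payoff branch. The paper's set-based accounting handles this case split explicitly. Once you add the branch bookkeeping, your approach goes through and is arguably more modular; without it, there is a real gap in the Step~1 monotonicity claim.
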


In the remainder of this section, we first present the key lemmas necessary to prove the above results (Appendix~\ref{apdx:key-lemmas-dense-sampling}), following which we present the proofs of these lemmas in subsequent sections (Appendices~\ref{apdx:lem-pf-optimality-of-monotonicity}-\ref{apdx:pf-lem-monotonicity-greedy-welfare-alpha}).

\subsection{Key Lemmas} \label{apdx:key-lemmas-dense-sampling}

To establish Theorems~\ref{thm:dense-sampling-near-optimal} and~\ref{thm:dense-sampling-near-optimal-apdx}, we follow a two-step procedure. In particular, as a first step, we show that if the administrator strategy $\sigmaa(\alpha)$ for any contract parameter $\alpha$ is such that the payoff function $P_R(\sigmaa(\alpha))$ is monotonically non-decreasing in $\alpha$, then our dense-sampling procedure approximately maximizes the principal’s objective across all contract parameters $\alpha \in [0, 1]$, as is elucidated by the following lemma.

\begin{lemma} [Monotonicity of Payoff Function Implies Near-Optimality] \label{lem:optimality-of-monotonicity}
Let $\sigmaa(\alpha)$ be an administrator strategy, given a contract parameter $\alpha$, such that the payoff function is monotonically non-decreasing in $\alpha$, i.e., if $\alpha_1<\alpha_2$, $P_R(\sigmaa(\alpha_2)) \geq P_R(\sigmaa(\alpha_1))$. Furthermore, denote $\alpha^* \in [0, 1]$ as the optimal contract and let $\alpha_{s}^* \in \A_s$ be the contract computed through dense-sampling. Then, given $R\geq 1$ resources, for a step-size $s \leq \frac{\epsilon}{\sum_l p_l}$, the loss in the principal's objective is bounded by $\epsilon$, i.e., $(1-\alpha^*)P_R(\sigmaa_{\alpha^*}^A) \leq (1-\alpha_s^*)P_R(\sigmaa_{\alpha_s^*}^A) + \epsilon$.
\end{lemma}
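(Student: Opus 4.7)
The plan is to reduce the approximation bound to a one-step comparison on the dense-sampling grid. Since the principal's objective is $(1-\alpha) P_R(\sigmaa(\alpha))$, which involves a product of a function decreasing linearly in $\alpha$ and a payoff function assumed non-decreasing in $\alpha$, the monotonicity hypothesis is exactly what will let us trade off a shift along the grid for a bounded error.

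First I would locate the optimal contract $\alpha^*$ between two adjacent grid points: pick $k$ such that $\alpha_k := ks \leq \alpha^* \leq (k+1)s =: \alpha_{k+1}$ (for the boundary case $\alpha^* = 1$, the conclusion is immediate since $(1-\alpha^*)P_R(\sigmaa(\alpha^*)) = 0$; and we may assume $1 \in \A_s$ by appropriate choice of $s$, or handle the last bin separately). Using the monotonicity hypothesis $P_R(\sigmaa(\alpha^*)) \leq P_R(\sigmaa(\alpha_{k+1}))$ together with $1-\alpha^* \leq 1-\alpha_k$, I obtain the key chain
\[
(1-\alpha^*) P_R(\sigmaa(\alpha^*)) \;\leq\; (1-\alpha_k) P_R(\sigmaa(\alpha_{k+1})).
\]

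The next step is to rewrite the right-hand side in terms of the grid point $\alpha_{k+1}$, which is where the dense sampling procedure actually evaluates the objective. I would use the algebraic identity
\[
(1-\alpha_k) P_R(\sigmaa(\alpha_{k+1})) \;=\; (1-\alpha_{k+1}) P_R(\sigmaa(\alpha_{k+1})) + s\, P_R(\sigmaa(\alpha_{k+1})),
\]
bound the stray term via the trivial uniform bound $P_R(\sigmaa(\alpha_{k+1})) \leq \sum_{l\in L} p_l$ (which follows from the definition of the payoff function, since the total payoff cannot exceed the sum of all location-wise payoffs), and finally invoke the fact that $\alpha_{k+1} \in \A_s$ so that $(1-\alpha_{k+1}) P_R(\sigmaa(\alpha_{k+1})) \leq (1-\alpha_s^*) P_R(\sigmaa(\alpha_s^*))$ by the definition of $\alpha_s^*$ as the dense-sampling maximizer. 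Combining these yields
\[
(1-\alpha^*) P_R(\sigmaa(\alpha^*)) \;\leq\; (1-\alpha_s^*) P_R(\sigmaa(\alpha_s^*)) + s\sum_{l\in L} p_l,
\]
and choosing $s \leq \epsilon/\sum_{l} p_l$ delivers the desired bound.

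I do not anticipate a significant obstacle in this argument: the monotonicity hypothesis is strong enough that the proof reduces entirely to the two-line manipulation above together with the trivial payoff upper bound. The one thing to be careful about is the minor bookkeeping at the endpoints of the grid and ensuring $1 \in \A_s$ (or handling the final interval $[1-s, 1]$ by a direct check), but these are nuisances rather than genuine difficulties. The more conceptual work has already been absorbed into the monotonicity hypothesis, which, as noted in the main text, is the substantive property that Theorems~\ref{thm:dense-sampling-near-optimal-apdx-2} and~\ref{thm:dense-sampling-near-optimal-apdx} will separately need to verify for their respective administrator strategies.
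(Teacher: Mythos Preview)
Your proposal is correct and follows essentially the same approach as the paper's proof: locate $\alpha^*$ between adjacent grid points, use the monotonicity hypothesis to replace $P_R(\sigmaa(\alpha^*))$ by $P_R(\sigmaa(\alpha_{k+1}))$, split off an additive term of order $s$, bound it by $s\sum_l p_l$, and invoke optimality of $\alpha_s^*$ over the grid. The only cosmetic difference is that the paper keeps the factor $(1-\alpha^*)$ and splits it as $(1-\alpha_{s_2}) + (\alpha_{s_2}-\alpha^*)$, whereas you first loosen to $(1-\alpha_k)$ and then split as $(1-\alpha_{k+1}) + s$; both routes yield the same bound.
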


For a proof of Lemma~\ref{lem:optimality-of-monotonicity}, see Appendix~\ref{apdx:lem-pf-optimality-of-monotonicity}. Lemma~\ref{lem:optimality-of-monotonicity} establishes that if the algorithm or procedure of computing the administrator's strategy given a contract parameter $\alpha$ satisfies a natural monotonicity property, then dense sampling is near optimal for an appropriately chosen step size of the discretized set $\A_s$. Note that such a monotonicity property is natural as increasing the contract parameter $\alpha$ is synonymous with providing the administrator a higher compensation for its contribution to the payoff of the system.

Our next two results establish that this monotonicity of the payoff function in the parameter $\alpha$ is indeed satisfied for the optimal administrator strategy corresponding to the solution of the bi-level Program~\eqref{eq:obj-contract-alpha}-\eqref{eq:alpha-contract-con} as well as the administrator strategy that is computed using Algorithm~\ref{alg:GreedyContractDeterministic} under a correlation assumption on the payoffs $p_l$ and payoff bang-per-buck ratios $\frac{p_l(d_l+k)}{d_l}$ at each location $l$, as is elucidated by the following two lemmas.

\begin{lemma} [Monotonicity of Optimal Payoff in Contract $\alpha$] \label{lem:monotonicity-optimal-welfare-alpha}
Let $\sigmaa^*(\alpha)$ be the optimal solution of the bi-level Program~\eqref{eq:obj-contract-alpha}-\eqref{eq:alpha-contract-con}, given a contract parameter $\alpha$. Then, the optimal payoff is monotonically non-decreasing in $\alpha$, i.e., for any two contract parameters $\alpha_1, \alpha_2 \in [0, 1]$ with $\alpha_1<\alpha_2$, it holds that $P_R(\sigmaa^*(\alpha_2)) \geq P_R(\sigmaa^*(\alpha_1))$.
\end{lemma}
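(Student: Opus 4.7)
The plan is a textbook revealed-preference (exchange) argument applied to the administrator's composite objective $F_\alpha(\sigmaa) := Q_R(\sigmaa) + \alpha P_R(\sigmaa)$. Fix $\alpha_1 < \alpha_2$ in $[0,1]$ and set $\sigmaa_i := \sigmaa^*(\alpha_i)$ for $i \in \{1,2\}$. By optimality of $\sigmaa_1$ at contract $\alpha_1$ and of $\sigmaa_2$ at contract $\alpha_2$, I would write the two revealed-preference inequalities
\[
Q_R(\sigmaa_1) + \alpha_1 P_R(\sigmaa_1) \;\geq\; Q_R(\sigmaa_2) + \alpha_1 P_R(\sigmaa_2),
\]
\[
Q_R(\sigmaa_2) + \alpha_2 P_R(\sigmaa_2) \;\geq\; Q_R(\sigmaa_1) + \alpha_2 P_R(\sigmaa_1).
\]
Summing them cancels the $Q_R$ terms and produces $(\alpha_2 - \alpha_1)\bigl(P_R(\sigmaa_2) - P_R(\sigmaa_1)\bigr) \geq 0$; dividing by $\alpha_2 - \alpha_1 > 0$ yields the desired monotonicity $P_R(\sigmaa^*(\alpha_2)) \geq P_R(\sigmaa^*(\alpha_1))$.

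The main obstacle is that the user best-response $y_l(\sigmaa,\alpha)$ in Equation~\eqref{eq:best-response-users-contract-game} depends explicitly on $\alpha$ at the threshold $\sigma_l = d_l/(d_l+k)$, so $P_R$ and $Q_R$ carry an implicit dependence on $\alpha$ that must be tracked carefully; more precisely, the two inequalities above should be read as $F_{\alpha_i}(\sigmaa_i;\alpha_i) \ge F_{\alpha_i}(\sigmaa_j;\alpha_i)$, where the $\alpha$-argument of the induced payoff and revenue is made explicit. To handle this, I would (i) invoke the contract-game analogue of Proposition~\ref{prop:opt-mixed-strategy-solution} (provable by the same water-filling argument) to restrict attention to strategies with $\sigma_l \in [0, d_l/(d_l+k)]$ for every $l$; (ii) observe from Equation~\eqref{eq:best-response-users-contract-game} that at the threshold the administrator-preferred tie-breaking switches monotonically from $y_l = 1$ to $y_l = 0$ as $\alpha$ crosses the critical value $\alpha_l^c := \Lambda_l d_l / p_l$, so the per-location contribution equals the upper envelope $\max\bigl\{(k\Lambda_l + \alpha p_l)\tfrac{d_l}{d_l+k},\,\alpha p_l\bigr\}$, which is piecewise linear and convex in $\alpha$; and (iii) conclude that for every feasible $\sigmaa$ the composite $F_\alpha(\sigmaa)$ is the pointwise max of finitely many affine functions of $\alpha$, so the exchange inequalities above telescope exactly as in the tie-free case.

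Finally, under the strong-Stackelberg tie-breaking convention used throughout the paper (the principal-favorable selection among administrator optima), the argument concludes without loss. If the two revealed-preference inequalities collapse to equalities, then $\sigmaa_1$ and $\sigmaa_2$ are both optimal for the administrator under both $\alpha_1$ and $\alpha_2$; the tie-breaking convention then guarantees $P_R(\sigmaa^*(\alpha_2)) \geq P_R(\sigmaa^*(\alpha_1))$, preserving non-strict monotonicity and completing the proof.
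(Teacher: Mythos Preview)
Your core argument---writing the two optimality inequalities at $\alpha_1$ and $\alpha_2$, summing so that the $Q_R$ terms cancel, and concluding $(\alpha_2-\alpha_1)\bigl(P_R(\sigmaa_2)-P_R(\sigmaa_1)\bigr)\ge 0$---is exactly the paper's proof. Your second and third paragraphs add care about the $\alpha$-dependent tie-breaking that the paper leaves implicit; the simplest way to dispatch that concern (and what the paper tacitly uses) is to read $Q_R(\sigmaa^*(\alpha_i))$ and $P_R(\sigmaa^*(\alpha_i))$ as evaluated at the native equilibrium pair $(\sigmaa^*(\alpha_i),\y^*(\alpha_i))$: since each such pair is a feasible strategy/best-response pair at the other contract level, both revealed-preference inequalities hold with identical $Q_R$-arguments on each side and cancel cleanly, so your steps (i)--(iii) and the final tie-breaking paragraph are not needed.
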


\begin{lemma} [Monotonicity of Payoff of Contract-Greedy in Contract $\alpha$] \label{lem:monotonicity-greedy-welfare-alpha}
Let $\sigmaa_{\alpha}^A$ be the allocation computed using Algorithm~\ref{alg:GreedyContractDeterministic} given a contract parameter $\alpha$. Moreover, suppose that the payoffs $p_l$ are positively correlated with the payoff bang-per-buck ratios $\frac{p_l(d_l+k)}{d_l}$ at each location, i.e., if $p_1 \leq \ldots \leq p_{|L|}$, then $\frac{p_{1}(d_1+k)}{d_1} \leq \frac{p_2(d_2+k)}{d_2} \leq \ldots \leq \frac{p_{|L|}(d_{|L|}+k)}{d_{|L|}}$. Then, given $R \geq 1$ resources, the payoff of the solutions corresponding to Algorithm~\ref{alg:GreedyContractDeterministic} is monotonically non-decreasing in $\alpha$, i.e., for any two contract parameters $\alpha_1, \alpha_2 \in [0, 1]$ with $\alpha_1<\alpha_2$, it holds that $P_R(\sigmaa^A_{\alpha_2}) \geq P_R(\sigmaa^A_{\alpha_1})$.
\end{lemma}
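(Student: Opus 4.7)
The plan is to track how the Contract-Greedy output $\sigmaa^A_\alpha$ evolves as $\alpha$ is increased continuously from $\alpha_1$ to $\alpha_2$ and to show that the payoff $P_R(\sigmaa^A_\alpha)$ cannot decrease across any of the finitely many ``events'' that alter the output. First, I would exploit the hypothesis $R \geq 1$ to note that $\frac{d_l}{d_l+k} < 1 \leq R$, so the affordability threshold satisfies $t_l = \frac{d_l}{d_l+k}$ for every $l$. This simplifies the definition of $z_l$ in Algorithm~\ref{alg:GreedyContractDeterministic} and shows that $z_l/t_l$ is a convex, non-decreasing, piecewise-linear function of $\alpha$ with a single kink at $\alpha_l^{*} = \Lambda_l d_l / p_l$: it equals $k\Lambda_l + \alpha p_l$ for $\alpha < \alpha_l^{*}$ (users commit fraud at the threshold) and $\alpha\, p_l(d_l+k)/d_l$ for $\alpha \geq \alpha_l^{*}$ (users do not commit fraud). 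Hence as $\alpha$ varies there are only finitely many \emph{events}: (i)~regime transitions where $\alpha$ crosses some $\alpha_l^{*}$; (ii)~greedy re-orderings where two bang-per-buck curves $z_l/t_l$ and $z_m/t_m$ cross; (iii)~single-location re-selections where the argmax defining $\sigmaa'_\alpha$ changes; and (iv)~contract-algorithm switches where the algorithm flips between $\Tilde{\sigmaa}_\alpha$ and $\sigmaa'_\alpha$. Between consecutive events $P_R(\sigmaa^A_\alpha)$ is constant, so it suffices to bound the change at each event.

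Next, I would dispose of the easy events. For (i), a location $l$ that is unallocated contributes nothing; a fully allocated $l$ gets its payoff contribution jump from $\tfrac{d_l}{d_l+k}p_l$ to $p_l$ as $y_l(\sigmaa,\alpha)$ flips from $1$ to $0$ at $\alpha_l^{*}$, strictly increasing $P_R$. For (iv), continuity of $Q_R + \alpha P_R$ at the switching value $\alpha^{*}$ yields $Q_R(\sigmaa^1) + \alpha^{*} P_R(\sigmaa^1) = Q_R(\sigmaa^2) + \alpha^{*} P_R(\sigmaa^2)$, and the fact that the new choice has strictly larger $Q_R + \alpha P_R$ just past $\alpha^{*}$ immediately forces $P_R(\sigmaa^2) \geq P_R(\sigmaa^1)$. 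For (iii), $\sigmaa'_\alpha$ is the argmax of $Q_R + \alpha P_R$ over the \emph{fixed} finite family of single-location allocations, so a standard revealed-preference / pairwise argmax argument (sum the two optimality inequalities at $\alpha_1$ and $\alpha_2$ and cancel) shows that $P_R(\sigmaa'_\alpha)$ is non-decreasing in $\alpha$, covering all single-location re-selections.

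The main obstacle is event (ii), the greedy re-ordering, which is where the correlation hypothesis enters. When two locations $l,m$ swap in the descending order of $z_l/t_l$, I would split by the regime of each at the crossing. In the both-in-regime-$1$ case the crossing condition $\alpha p_l(d_l+k)/d_l = \alpha p_m(d_m+k)/d_m$ equates the bang-per-buck ratios; the correlation assumption then translates this into an inequality on $p_l$ vs.\ $p_m$, so the fully-allocated payoff $p_m$ at the newly-preferred location dominates $p_l$. In the mixed case (say $l$ in regime $2$, $m$ in regime $1$) the newly-preferred location contributes $p_m$, and correlation together with the sign of the crossing gives $p_m \geq p_l \geq \tfrac{d_l}{d_l+k} p_l$. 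The delicate case is both in regime $2$: the crossing condition $k\Lambda_l + \alpha p_l = k\Lambda_m + \alpha p_m$ together with $m$ rising past $l$ forces $p_m > p_l$ and $\Lambda_l > \Lambda_m$, and a direct comparison of $\tfrac{d_l}{d_l+k}p_l$ with $\tfrac{d_m}{d_m+k}p_m$ need not go the right way in isolation. The hard part will be here: because $R \geq 1$, multiple locations are typically allocated so the reordering does not just replace $l$ by $m$ wholesale but reshuffles the partial ``boundary'' allocation. I would therefore not argue swap-in-place, but instead track the greedy's budget usage before and after the crossing and use the correlation hypothesis together with the crossing condition to show that the gain in $m$'s contribution from receiving the full $t_m$ (as opposed to a fraction) dominates the loss of $l$'s contribution from being demoted to fractional status.

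Combining the four event-analyses yields $P_R(\sigmaa^A_{\alpha_1}) \leq P_R(\sigmaa^A_{\alpha_2})$ for any $\alpha_1 < \alpha_2$, proving the lemma. I would structure the write-up by first stating the simplification for $R\geq 1$ as an observation, then presenting (i), (iii), and (iv) as short lemmas, and then devoting the bulk of the argument to a careful case analysis of (ii) under the correlation hypothesis, with the both-in-regime-$2$ case handled via the budget-boundary reshuffling argument sketched above.
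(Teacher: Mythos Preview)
Your approach is genuinely different from the paper's. The paper does not do a continuous event-based tracking; instead it fixes two arbitrary points $\alpha_1<\alpha_2$ and proves \emph{separately} that $P_R(\Tilde{\sigmaa}^{\alpha})$ (the step~1 output) and $P_R(\sigmaa'^{\alpha})$ (the step~2 output) are each non-decreasing in $\alpha$. For step~2 it does a direct case analysis on whether the selected single location is in the ``$\alpha p_l$ dominates'' or ``$(k\Lambda_l+\alpha p_l)\frac{d_l}{d_l+k}$ dominates'' regime at $\alpha_1$ and at $\alpha_2$. For step~1 it partitions the allocated set at $\alpha_1$ into $S_1^1$ (regime where $\alpha p_l(d_l+k)/d_l$ dominates) and $S_1^2$, shows $S_1^1\subseteq S_2^1$, and then argues that any location in $S_1^2$ that is displaced at $\alpha_2$ is replaced by a location of weakly higher $p$; the correlation hypothesis is invoked exactly once, for the sub-case where the displacing location is in the bang-per-buck regime. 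Your event (iv) argument is actually more careful than the paper's on one point: the paper simply asserts that monotonicity of each branch implies monotonicity of the overall output, without the revealed-preference step you give.

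That said, your sketch has gaps in event~(ii) that the paper's direct approach sidesteps. In the mixed case you assert ``correlation together with the sign of the crossing gives $p_m\ge p_l$,'' but the crossing only gives $p_m(d_m+k)/d_m > p_l$ (slope comparison), not $p_m(d_m+k)/d_m > p_l(d_l+k)/d_l$, so the correlation hypothesis does not immediately apply as stated. In the both-in-regime-$2$ case you concede that the pointwise swap comparison can fail and promise a ``budget-boundary reshuffling'' argument, but this is precisely where the real work lies: when $l$ and $m$ swap rank and one of them is the fractional boundary location, you must show the net change $(\tfrac{d_m}{d_m+k}-r)(p_m-p_l)$ is nonnegative in the simple case and handle the cascades when $r' = r + t_l - t_m$ falls outside $[0,t_l]$. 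The paper avoids this cascade bookkeeping entirely by comparing the full allocated sets at $\alpha_1$ and $\alpha_2$ in one shot rather than tracking the path between them. If you want to keep the event-based structure, you will need to complete the mixed-regime case (likely by comparing $p_m(d_m+k)/d_m$ to $p_l(d_l+k)/d_l$ via the regime inequalities, not just the slopes) and fully execute the boundary-reshuffling argument for the both-regime-$2$ case; otherwise the paper's direct set-comparison for Claim~(ii) is both shorter and cleaner.
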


We note here that there are several settings when the correlation condition in the statement of Lemma~\ref{lem:monotonicity-optimal-welfare-alpha} is satisfied. For instance, this condition is trivially satisfied in settings when the benefits $d_l$ are fixed regardless of the location, i.e., if $d_l = d_{l'}$ for all $l, l' \in L$. Such a setting may be pertinent in several applications of interest where the benefits accrued from engaging in fraud may not vary across locations and represents a fixed constant that is accrued if a user engages in fraud. 
Furthermore, while, for many instances, the above correlation condition may not be exactly met, in our conducted numerical experiments (e.g., see Figure~\ref{fig:social-welfare-proportion-vary-rk}), we observe that the payoff function corresponding to the administrator strategies computed using Algorithm~\ref{alg:GreedyContractDeterministic} given contract parameters $\alpha$ is monotonically increasing in $\alpha$ (for the discrete set of values $\A_s$ where the step size is $s=0.05$).

For proofs of Lemmas~\ref{lem:monotonicity-optimal-welfare-alpha} and~\ref{lem:monotonicity-greedy-welfare-alpha}, see Appendices~\ref{apdx:pf-lem-monotonicity-optimal-welfare} and~\ref{apdx:pf-lem-monotonicity-greedy-welfare-alpha}, respectively. Note that Lemmas~\ref{lem:optimality-of-monotonicity} and~\ref{lem:monotonicity-optimal-welfare-alpha} jointly establish Theorem~\ref{thm:dense-sampling-near-optimal}, while Lemmas~\ref{lem:optimality-of-monotonicity} and~\ref{lem:monotonicity-greedy-welfare-alpha} jointly establish Theorem~\ref{thm:dense-sampling-near-optimal-apdx}.

\subsection{Proof of Lemma~\ref{lem:optimality-of-monotonicity}} \label{apdx:lem-pf-optimality-of-monotonicity}

Let $\alpha^*_s \in \A_s$ be the optimal solution maximizing the principal's objective for $\alpha$ lying in the discrete set of values in the set $\A_s = \{ 0, s, 2s, \ldots, 1 \}$. Furthermore, let $\sigmaa(\alpha)$ be an administrator strategy, given a contract parameter $\alpha$, such that the payoff function is monotonically non-decreasing in $\alpha$, i.e., $P_R(\sigmaa(\alpha_2)) \geq P_R(\sigmaa(\alpha_1))$ if $\alpha_1<\alpha_2$.

Then, it follows by the optimality of $\alpha^*_s$ in the set $\A_s$ that:
\begin{align} \label{eq:helper-monotonicity-implies-near-optimality-1}
    (1-\alpha^*_s) P_R(\sigmaa(\alpha^*_s)) \geq (1-\alpha) P_R(\sigmaa(\alpha)), \quad \text{for all } \alpha \in \A_s.
\end{align}
Next, define $\alpha^* \in [0, 1]$ as the contract parameter that maximizes the administrator's payoff, i.e., it holds that
\begin{align*}
    (1-\alpha^*)P_R(\sigmaa(\alpha^*)) \geq (1-\alpha) P_R(\sigmaa(\alpha)), \quad \text{for all } \alpha \in [0, 1].
\end{align*}
Next, without loss of generality, we suppose that $\alpha^* \notin \A_s$, as otherwise $\alpha_s^* = \alpha^*$ by the optimality of $\alpha_s^*$ in the discrete set $\A_s$ of $\alpha$ values. Thus, it holds that there are two neighboring contract parameters $\alpha_{s_1}, \alpha_{s_2}$, such that $\alpha_{s_1} < \alpha^* < \alpha_{s_2}$ and $\alpha_{s_2} - \alpha_{s_1} \leq s$. Consequently, it holds that $\alpha^* - \alpha_{s_1} \leq s$ and $\alpha_{s_2} - \alpha^* \leq s$. Moreover, by the monotonicity of the payoff in $\alpha$, we have that:
\begin{align*}
    (1-\alpha^*)P_R(\sigmaa(\alpha^*) &\stackrel{(a)}{\leq} (1-\alpha^*)P_R(\sigmaa(\alpha_{s_2})), \\
    &= (1-\alpha_{s_2})P_R(\sigmaa(\alpha_{s_2})) + (\alpha_{s_2} - \alpha^*)P_R(\sigmaa(\alpha_{s_2})), \\
    &\stackrel{(b)}{\leq} (1-\alpha^*_s) P_R(\sigmaa(\alpha^*_s)) + s \sum_{l \in L} p_l, \\
    &\stackrel{(c)}{\leq} (1-\alpha^*_s) P_R(\sigmaa(\alpha^*_s)) + \epsilon,
\end{align*}
where (a) follows by the monotonocity of the payoff in $\alpha$ and thus the fact that $P_R(\sigmaa(\alpha_{s_2})) \geq P_R(\sigmaa(\alpha^*)$, as $\alpha_{s_2} > \alpha^*$, (b) follows by Equation~\eqref{eq:helper-monotonicity-implies-near-optimality-1}, the fact that $\alpha_{s_2} - \alpha^* \leq s$, and that the maximum achievable payoff can never exceed $\sum_{l \in L} p_l$, and (c) follows as $s \leq \frac{\epsilon}{\sum_l p_l}$. The final inequality above implies our desired sub-optimailty result that the loss in the optimal payoff from our dense sampling approach is bounded by at most $\epsilon$ for a step-size $s \leq \frac{\epsilon}{\sum_l p_l}$.

\subsection{Proof of Lemma~\ref{lem:monotonicity-optimal-welfare-alpha}} \label{apdx:pf-lem-monotonicity-optimal-welfare}

Let $\alpha_1 < \alpha_2$ be two contract parameters and let the optimal solution of the administrator's bi-level Program~\eqref{eq:obj-contract-alpha}-\eqref{eq:alpha-contract-con} given $\alpha_1$ be $\sigmaa^*(\alpha_1)$ and given $\alpha_2$ be $\sigmaa^*(\alpha_2)$. Then, by the optimality of $\sigmaa^*(\alpha_1)$ for Problem~\eqref{eq:obj-contract-alpha}-\eqref{eq:alpha-contract-con}, given contract parameter $\alpha_1$, we have that:
\begin{align} \label{eq:helper-monotonicity-optimal-1}
    Q_R(\sigmaa^*(\alpha_1)) + \alpha_1 P_R(\sigmaa^*(\alpha_1)) \geq Q_R(\sigmaa^*(\alpha_2)) + \alpha_1 P_R(\sigmaa^*(\alpha_2))
\end{align}
and by the optimality of $\sigmaa^*(\alpha_2)$ for Problem~\eqref{eq:obj-contract-alpha}-\eqref{eq:alpha-contract-con}, given contract parameter $\alpha_2$, we have that:
\begin{align} \label{eq:helper-monotonicity-optimal-2}
    Q_R(\sigmaa^*(\alpha_2)) + \alpha_2 P_R(\sigmaa^*(\alpha_2)) \geq Q_R(\sigmaa^*(\alpha_1)) + \alpha_2 P_R(\sigmaa^*(\alpha_1)).
\end{align}
Next, summing Equations~\eqref{eq:helper-monotonicity-optimal-1} and~\eqref{eq:helper-monotonicity-optimal-2}, we obtain:
\begin{align*}
     \alpha_1 P_R(\sigmaa^*(\alpha_1)) + \alpha_2 P_R(\sigmaa^*(\alpha_2)) \geq \alpha_1 P_R(\sigmaa^*(\alpha_2)) + \alpha_2 P_R(\sigmaa^*(\alpha_1)).
\end{align*}
The above inequality implies that:
\begin{align*}
    (\alpha_2 - \alpha_1)(P_R(\sigmaa^*(\alpha_2)) - P_R(\sigmaa^*(\alpha_1))) \geq 0.
\end{align*}
Finally, since $\alpha_2 > \alpha_1$, the above inequality can only hold if $P_R(\sigmaa^*(\alpha_2)) \geq P_R(\sigmaa^*(\alpha_1))$, i.e., the payoff function corresponding to the optimal solution of Problem~\eqref{eq:obj-contract-alpha}-\eqref{eq:alpha-contract-con} is monotone in the contract parameter $\alpha$, which establishes our desired result.

\subsection{Proof of Lemma~\ref{lem:monotonicity-greedy-welfare-alpha}} \label{apdx:pf-lem-monotonicity-greedy-welfare-alpha}

Let $\alpha_1 < \alpha_2$ be two contract parameters, $\Tilde{\sigmaa}^{\alpha_1}, \Tilde{\sigmaa}^{\alpha_2}$ be the the solutions corresponding to step one of Algorithm~\ref{alg:GreedyContractDeterministic}, and $\sigmaa^{\alpha_1'}, \sigmaa^{\alpha_2'}$ be the the solutions corresponding to step two of Algorithm~\ref{alg:GreedyContractDeterministic}. In the following we show that: (i) $P_R(\sigmaa^{\alpha_1'}) \leq P_R(\sigmaa^{\alpha_2'})$ and (ii) $P_R(\Tilde{\sigmaa}^{\alpha_1}) \leq P_R(\Tilde{\sigmaa}^{\alpha_2})$. Note that proving both these claims establishes our desired result as $\sigmaa_{\alpha}^A = \argmax \{ P_R(\Tilde{\sigmaa}^{\alpha}), P_R(\sigmaa^{\alpha'}) \}$, and hence, from the above two claims, it is straightforward to see that $P_R(\sigmaa_{\alpha_1}^A) \leq P_R(\sigmaa_{\alpha_2}^A)$. Thus, in the remainder of this proof, we establish claims (i) and (ii).

\paragraph{Proof of Claim (i):} 

Notice that in step two of Algorithm~\ref{alg:GreedyContractDeterministic}, only one location is allocated resources by the administrator. In particular, at the contract level $\alpha_1$, suppose $l_{\alpha_1}$ is the location to which the administrator allocates resources. In this case, note that either (a) $l_{\alpha_1} = \argmax_{l \in L} \alpha_1 p_l$ or (b)  $l_{\alpha_1} = \argmax_{l \in L} (k \Lambda_l + \alpha_1 p_l) \frac{d_l}{d_l+k}$ as $R \geq 1$, depending on whether $\max_{l \in L} \alpha_1 p_l$ or $\max_{l \in L} (k \Lambda_l + \alpha_1 p_l) \frac{d_l}{d_l+k}$ is greater. Analogously, we can define $l_{\alpha_2}$

\textbf{Case (a):} We first suppose that case (a) holds for the parameter $\alpha_1$. We now show that $l_{\alpha_2} = \argmax_{l \in L} \alpha_2 p_l$, and consequently that $l_{\alpha_1} = l_{\alpha_2}$, which subsequently establishes our claim that $P_R(\sigmaa^{\alpha_1'}) \leq P_R(\sigmaa^{\alpha_2'})$.

To see this, first note by the definition of $l_{\alpha_1}$ that $\alpha_1 p_{l_{\alpha_1}} \geq \alpha_1 p_l$ for all $l$, i.e., $p_{l_{\alpha_1}} \geq p_l$ for all $l$. Thus, if $l_{\alpha_2} = \argmax_{l \in L} \alpha_2 p_l$, it follows as $l_{\alpha_2} = l_{\alpha_1}$.

Moreover, it holds by the definition of $l_{\alpha_1}$ that $\alpha_1 p_{l_{\alpha_1}} \geq (k \Lambda_{l_{\alpha_1}} + \alpha_1 p_{l_{\alpha_1}}) \frac{d_{l_{\alpha_1}}}{d_{l_{\alpha_1}}+k}$. Consequently, at $\alpha_2$, it holds for any location $l$ that
\begin{align*}
    (k \Lambda_{l} + \alpha_2 p_{l}) \frac{d_{l}}{d_{l}+k} &= (k \Lambda_{l} + \alpha_1 p_{l}) \frac{d_{l}}{d_{l}+k} + (\alpha_2 - \alpha_1) p_{l} \frac{d_{l}}{d_{l}+k}, \\
    &\stackrel{(a)}{\leq} \alpha_1 p_{l_{\alpha_1}} + (\alpha_2 - \alpha_1) p_{l}, \\
    &\stackrel{(b)}{\leq} \alpha_2 p_{l_{\alpha_1}}
\end{align*}
where (a) follows by the definition of $l_{\alpha_1}$ and the fact that $\frac{d_l}{d_l+k} \leq 1$ and (b) follows from the fact that $p_{l_{\alpha_1}} \geq p_l$ for all locations $l$, as noted above. Thus, we have that $l_{\alpha_2} = \argmax_{l \in L} \alpha_2 p_l$, as, in particular, $\alpha_2 p_{l_{\alpha_1}} \geq (k \Lambda_{l} + \alpha_2 p_{l}) \frac{d_{l}}{d_{l}+k}$. Finally, since $p_{l_{\alpha_1}} \geq p_l$ for all locations $l$, it follows that $l_{\alpha_2} = l_{\alpha_1}$ and hence that the payoff $P_R(\sigmaa^{\alpha_1'}) = P_R(\sigmaa^{\alpha_2'})$, which establishes our desired inequality on the payoff functions for case (a).

\textbf{Case (b):} Next, suppose case (b) holds for the parameter $\alpha_1$. In this case, the payoff accrued by the strategy $\sigmaa^{\alpha_1'}$ is given by $P_R(\sigmaa^{\alpha_1'}) = p_{l_{\alpha_1}} \frac{d_{l_{\alpha_1}}}{d_{l_{\alpha_1}}+k}$. We now show that the corresponding payoff accrued by $\Tilde{\sigmaa}^{\alpha_2}$ is (weakly) higher. To see this, we consider two cases for $l_{\alpha_2}$: (i) $l_{\alpha_2} = \argmax_{l \in L} \alpha_2 p_l$ and (ii) $l_{\alpha_2} = \argmax_{l \in L} (k \Lambda_l + \alpha_2 p_l) \frac{d_l}{d_l+k}$. 

In the first case, it holds that $\alpha_2 p_{l_{\alpha_2}} \geq \alpha_2 p_{l_{\alpha_1}}$ by the definition of $l_{\alpha_2}$, i.e., it follows that $p_{l_{\alpha_2}} \geq p_{l_{\alpha_1}}$. Thus, the payoff accrued under the parameter $\alpha_2$ is given by 
\begin{align*}
    P_R(\sigmaa^{\alpha_2'}) = p_{l_{\alpha_2}} \geq p_{l_{\alpha_1}} \geq p_{l_{\alpha_1}} \frac{d_{l_{\alpha_1}}}{d_{l_{\alpha_1}}+k} = P_R(\sigmaa^{\alpha_1'}),
\end{align*}
which implies that $P_R(\sigmaa^{\alpha_2'}) \geq P_R(\sigmaa^{\alpha_1'})$ in the first case.

In the second case, it holds that $(k \Lambda_{l_{\alpha_2}} + \alpha_2 p_{l_{\alpha_2}}) \frac{d_{l_{\alpha_2}}}{d_{l_{\alpha_2}}+k} \geq (k \Lambda_{l_{\alpha_1}} + \alpha_2 p_{l_{\alpha_1}}) \frac{d_{l_{\alpha_1}}}{d_{l_{\alpha_1}}+k}$ by the definition of $l_{\alpha_2}$. Moreover, by the definition of $l_{\alpha_1}$ it holds that $(k \Lambda_{l_{\alpha_2}} + \alpha_1 p_{l_{\alpha_2}}) \frac{d_{l_{\alpha_2}}}{d_{l_{\alpha_2}}+k} \leq (k \Lambda_{l_{\alpha_1}} + \alpha_1 p_{l_{\alpha_1}}) \frac{d_{l_{\alpha_1}}}{d_{l_{\alpha_1}}+k}$. Adding these two inequalities, we get:
\begin{align*}
    \alpha_2 p_{l_{\alpha_2}} \frac{d_{l_{\alpha_2}}}{d_{l_{\alpha_2}}+k} + \alpha_1 p_{l_{\alpha_1}} \frac{d_{l_{\alpha_1}}}{d_{l_{\alpha_1}}+k} \geq \alpha_1 p_{l_{\alpha_2}} \frac{d_{l_{\alpha_2}}}{d_{l_{\alpha_2}}+k} + \alpha_2 p_{l_{\alpha_1}} \frac{d_{l_{\alpha_1}}}{d_{l_{\alpha_1}}+k}.
\end{align*}
By rearranging the above inequality, we obtain that:
\begin{align*}
    (\alpha_2 - \alpha_1) p_{l_{\alpha_2}} \frac{d_{l_{\alpha_2}}}{d_{l_{\alpha_2}}+k} \geq (\alpha_2 - \alpha_1) p_{l_{\alpha_1}} \frac{d_{l_{\alpha_1}}}{d_{l_{\alpha_1}}+k}.
\end{align*}
Since, $\alpha_2 > \alpha_1$, we cancel this expression from the above equation and obtain in the second case that the payoff
\begin{align*}
    P_R(\sigmaa^{\alpha_2'}) = p_{l_{\alpha_2}} \frac{d_{l_{\alpha_2}}}{d_{l_{\alpha_2}}+k} \geq p_{l_{\alpha_1}} \frac{d_{l_{\alpha_1}}}{d_{l_{\alpha_1}}+k} = P_R(\sigmaa^{\alpha_1'}),
\end{align*}
which implies that $P_R(\sigmaa^{\alpha_2'}) \geq P_R(\sigmaa^{\alpha_1'})$ in the second case.

\paragraph{Proof of Claim (ii):}


We first introduce some notation. In particular, in step 1 of Algorithm~\ref{alg:GreedyContractDeterministic}, we let $S_1$ be the set of locations to which resources are allocated given a parameter $\alpha_1$ and let $S_2$ be the set of locations to which resources are allocated given a parameter $\alpha_2$. Furthermore, we define $S_1^1$ as the set of locations among $S_1$ such that $\frac{\alpha_1 p_l (d_l+k)}{d_l} \geq k \Lambda_l + \alpha_1 p_l$ and $S_1^2$ as the set of locations such that $\frac{\alpha_1 p_l (d_l+k)}{d_l} < k \Lambda_l + \alpha_1 p_l$. Analogously, we can define the sets $S_2^1$ and $S_2^2$. Then, the total payoff given the parameter $\alpha_1$ corresponding to the allocation computed using step 1 of Algorithm~\ref{alg:GreedyContractDeterministic} is given by $P_R(\Tilde{\sigmaa}^{\alpha_1}) = \sum_{l \in S_1^2 \backslash \{ l'\}} p_l \frac{d_l}{d_l+k} + \sum_{l \in S_1^1 } p_l + \Tilde{\sigma}_{l'}^{\alpha_1} p_{l'}$ for at most one location $l'$, where $\sum_{l \in L} \Tilde{\sigma}_l^{\alpha_1} \leq R$. Here, we assume without loss of generality that the location $l'$ with an allocation $\sigma_{l'}^{\alpha_1} \in (0, \frac{d_{l'}}{d_{l'}+k})$ belongs to the set $S_1^2$, but note that the following arguments can be readily generalized to cover the other case when $l' \in S_1^1$ as well.


To prove our desired result, we consider two cases, $\sum_{l \in L} \Tilde{\sigma}_l^{\alpha_1} < R$ and $\sum_{l \in L} \Tilde{\sigma}_l^{\alpha_1} = R$.

First, consider the case when the total resource spending $\sum_{l \in L} \Tilde{\sigma}_l^{\alpha_1} < R$. Note that in this case it must hold by the nature of step 1 of Algorithm~\ref{alg:GreedyContractDeterministic} that $\sum_{l \in L} \frac{d_l}{d_l+k} < R$. Consequently, $\Tilde{\sigma}_l^{\alpha_1} = \frac{d_l}{d_l+k}$ for all $l$. Analogously, by the nature of step 1 of Algorithm~\ref{alg:GreedyContractDeterministic}, it also holds that $\Tilde{\sigma}_l^{\alpha_2} = \frac{d_l}{d_l+k}$ for all locations $l$, i.e., $\Tilde{\sigmaa}^{\alpha_1} = \Tilde{\sigmaa}^{\alpha_2}$. Then, the total payoff under given the parameter $\alpha_1$ corresponding to the allocation computed using step 1 of Algorithm~\ref{alg:GreedyContractDeterministic} is given by $P_R(\Tilde{\sigmaa}^{\alpha_1}) = \sum_{l \in S_1^2} p_l \frac{d_l}{d_l+k} + \sum_{l \in S_1^1} p_l$. 

We now show that if $l \in S_1^1$, then $l \in S_2^1$. To see this, we first recall by the definition of $S_1^1$ that $\frac{\alpha_1 p_l (d_l+k)}{d_l} \geq k \Lambda_l + \alpha_1 p_l$ for all locations $l \in S_1^1$. Then, we have at $\alpha_2$ that the following relation holds for all $l \in S_1^1$:
\begin{align*}
    \frac{\alpha_2 p_l (d_l+k)}{d_l} &= \frac{\alpha_1 p_l (d_l+k)}{d_l} + \frac{(\alpha_2 - \alpha_1) p_l (d_l+k)}{d_l}, \\
    &\geq  k \Lambda_l + \alpha_1 p_l + (\alpha_2 - \alpha_1) p_l, \\
    &= k \Lambda_l + \alpha_2 p_l,
\end{align*}
where the inequality follows by the fact that $l \in S_1^1$ and thus $\frac{\alpha_1 p_l (d_l+k)}{d_l} \geq k \Lambda_l + \alpha_1 p_l$, and the fact that $\frac{d_l+k}{d_l} \geq 1$. Thus, the above relation implies that if $l \in S_1^1$, it holds that $\max \{ \frac{\alpha_2 p_l (d_l+k)}{d_l}, k \Lambda_l + \alpha_2 p_l \} = \frac{\alpha_2 p_l (d_l+k)}{d_l}$. Consequently, $l \in S_2^1$ as well, i.e., $S_1^1 \subseteq S_2^1$.

Finally, we have the following relation for the payoff given the parameter $\alpha_2$ corresponding to the allocation computed using step 1 of Algorithm~\ref{alg:GreedyContractDeterministic}:
\begin{align*}
    P_R(\Tilde{\sigmaa}^{\alpha_2}) &= \sum_{l \in S_2^2} p_l \frac{d_l}{d_l+k} + \sum_{l \in S_2^1} p_l , \\
    &\stackrel{(a)}{\geq} \sum_{l \in S_2^2} p_l \frac{d_l}{d_l+k} + \sum_{l \in S_1^1} p_l + \sum_{l \in S_2^1 \backslash \{ S_1^1\}} p_l, \\
    &\stackrel{(b)}{\geq} \sum_{l \in S_2^2} p_l \frac{d_l}{d_l+k} + \sum_{l \in S_1^1} p_l + \sum_{l \in S_2^1 \backslash \{ S_1^1\}} p_l \frac{d_l}{d_l+k}, \\
    &= \sum_{l \in S_1^2} p_l \frac{d_l}{d_l+k} + \sum_{l \in S_1^1} p_l, \\
    &= P_R(\Tilde{\sigmaa}^{\alpha_1}),
\end{align*}
where (a) follows from the fact that $S_1^1 \subseteq S_2^1$ and (b) follows as $\frac{d_l}{d_l+k} \leq 1$. Consequently, in the setting when $\sum_{l \in L} \Tilde{\sigma}_l^{\alpha_1} < R$, we have established our desired relation that $P_R(\Tilde{\sigmaa}^{\alpha_2}) \geq P_R(\Tilde{\sigmaa}^{\alpha_1})$.


Thus, for the remainder of this proof, we consider the setting when $\sum_{l \in L} \Tilde{\sigma}_l^{\alpha_1} = R$ and $\sum_{l \in L} \Tilde{\sigma}_l^{\alpha_2} = R$. In the following, recall that, given any contract parameter $\alpha$, in step one of Algorithm~\ref{alg:GreedyContractDeterministic} resources are allocated to locations in descending order of their bang-per-buck ratios given by $\max \{ \frac{\alpha p_l (d_l+k)}{d_l}, k \Lambda_l + \alpha p_l \}$. We now analyse two cases separately. First, as in the setting when $\sum_{l \in L} \Tilde{\sigma}_l^{\alpha_1} < R$, we consider the location set $S_1^1$ and show that if $l \in S_1^1$ then $l\in S_2^1$ as well. Then, we consider the location set $S_1^2$. 



\textbf{Locations in set $S_1^1$:} We now show that all locations in the set $S_1^1$ also belong to the set $S_2^1$, where it holds that $\Tilde{\sigma}_l^{\alpha_2} = \Tilde{\sigma}_l^{\alpha_1}$. We show this using two intermediate results. First, we show that if $l \in S_1^1$, then it holds that $\max \{ \frac{\alpha_2 p_l (d_l+k)}{d_l}, k \Lambda_l + \alpha_2 p_l \} = \frac{\alpha_2 p_l (d_l+k)}{d_l}$. Next, we show that if $l \in S_1^1$, and there are locations with a lower bang-per-buck ratio than location $l$ under the parameter $\alpha_1$, then so too will the bang-per-buck ratios of all those locations remain lower than that of $l$ under the parameter $\alpha_2$.


To show the first result, we first recall by the definition of $S_1^1$ that $\frac{\alpha_1 p_l (d_l+k)}{d_l} \geq k \Lambda_l + \alpha_1 p_l$ for all locations $l \in S_1^1$. Then, we have at $\alpha_2$ that the following relation holds for all $l \in S_1^1$:
\begin{align*}
    \frac{\alpha_2 p_l (d_l+k)}{d_l} &= \frac{\alpha_1 p_l (d_l+k)}{d_l} + \frac{(\alpha_2 - \alpha_1) p_l (d_l+k)}{d_l}, \\
    &\geq  k \Lambda_l + \alpha_1 p_l + (\alpha_2 - \alpha_1) p_l, \\
    &= k \Lambda_l + \alpha_2 p_l,
\end{align*}
where the inequality follows by the fact that $l \in S_1^1$ and thus $\frac{\alpha_1 p_l (d_l+k)}{d_l} \geq k \Lambda_l + \alpha_1 p_l$, and the fact that $\frac{d_l+k}{d_l} \geq 1$. Thus, the above relation implies that if $l \in S_1^1$, it holds that $\max \{ \frac{\alpha_2 p_l (d_l+k)}{d_l}, k \Lambda_l + \alpha_2 p_l \} = \frac{\alpha_2 p_l (d_l+k)}{d_l}$, which establishes the first result.

To show the second result, consider a location $l \in S_1^1$ and another location $l_1$ with a lower bang-per-buck ratio under the parameter $\alpha_1$. Then, it holds that $\frac{\alpha_1 p_l (d_l+k)}{d_l} \geq \frac{\alpha_1 p_{l_1} (d_{l_1}+k)}{d_{l_1}}$ for all locations $l \in S_1^1$, which implies that $\frac{p_l (d_l+k)}{d_l} \geq \frac{p_{l_1} (d_{l_1}+k)}{d_{l_1}}$ and consequently that $\frac{\alpha_2 p_l (d_l+k)}{d_l} \geq \frac{\alpha_2 p_{l_1} (d_{l_1}+k)}{d_{l_1}}$. Furthermore, it holds that $\frac{\alpha_1 p_l (d_l+k)}{d_l} \geq k \Lambda_{l_1} + \alpha_1 p_{l_1}$ for all locations $l \in S_1$. Consequently, it holds that:
\begin{align*}
    \frac{\alpha_2 p_l (d_l+k)}{d_l} &= \frac{\alpha_1 p_l (d_l+k)}{d_l} + \frac{(\alpha_2 - \alpha_1) p_l (d_l+k)}{d_l}, \\
    &\stackrel{(a)}{\geq}  k \Lambda_{l_1} + \alpha_1 p_{l_1} + \frac{(\alpha_2 - \alpha_1) p_{l_1} (d_{l_1}+k)}{d_{l_1}}, \\
    &\stackrel{(b)}{=} k \Lambda_{l_1} + \alpha_1 p_{l_1} + (\alpha_2 - \alpha_1) p_{l_1}, \\
    &= k \Lambda_l + \alpha_2 p_{l_1},
\end{align*}
where (a) follows as $\frac{\alpha_1 p_l (d_l+k)}{d_l} \geq k \Lambda_{l_1} + \alpha_1 p_{l_1}$ and that $\frac{p_l (d_l+k)}{d_l} \geq \frac{p_{l_1} (d_{l_1}+k)}{d_{l_1}}$ and (b) follows as $\frac{d_l+k}{d_l} \geq 1$. The above relations imply that if $l \in S_1^1$, and a location $l_1$ has a lower bang-per-buck ratio at $\alpha_1$, then it also has a lower bang-per-buck ratio at $\alpha_2$. 

The above results have two key implications. First, the bang-per-buck ratios of all locations in set $S_1^1$ retain the same ordering after sorting under $\alpha_2$ as they did under $\alpha_1$. Moreover, since all locations $l_1 \notin S_1$ have a lower bang-per-buck ratio compared to any location in the set $S_1^1$ at $\alpha_1$, it also holds that all allocation $l_1 \notin S_1$ have a bang-per-buck ratio compared to any location in the set $S_1^1$ at $\alpha_2$. Consequently, given that step 1 of Algorithm~\ref{alg:GreedyContractDeterministic} will still allocate $R$ resources under $\alpha_2$, it holds that all locations $l \in S_1^1$ also satisfy $l \in S_2^1$, i.e., $S_1^1 \subseteq S_2^1$.

\textbf{Locations in Set $S_1^2$:} We note that unlike for locations in set $S_1^1$, there may be locations in set $S_1^2$ that may not be in the set $S_2^2$. In particular, there are three possibilities for locations in the set $S_1^2$ under $\alpha_1$ when the contract parameter is changed to $\alpha_2$: (i) a location $l \in S_1^2$ also belongs to $S_2^2$, (ii) a location $l \in S_1^2$ belongs to $S_2^1$, (iii) a location $l \in S_1^2$ no longer belongs to $S_2$ and instead is replaced by a location $l_1 \notin S_1$. We denote the location set in case (i) as $L_1$, the location set in case (ii) as $L_2$, and the locations that are no longer allocated in $S_1^2$ as $L_3'$ and the new locations in $L \backslash S_1$ that are allocated under $\alpha_2$ as $L_3''$.

Clearly, the payoff corresponding to the locations in case (i) remain unchanged, including the payoff accrued and the resource spending. Furthermore, the payoff corresponding to locations in case (ii) increases from $p_l \frac{d_l}{d_l+k}$ to $p_l$ for all locations $l \in L_2$ without a change in the resource spending. Thus, we finally consider case (iii). Case (iii) only happens under two possibilities for some location $l\in S_1^2$ and $l_1 \notin S_2$: (a) $\max\{k \Lambda_{l} + \alpha_2 p_{l}, \frac{\alpha_2 p_{l} (d_{l}+k)}{d_{l}} \} \leq k \Lambda_{l_1} + \alpha_2 p_{l_1}$ or (b) $\max\{k \Lambda_{l} + \alpha_2 p_{l}, \frac{\alpha_2 p_{l} (d_{l}+k)}{d_{l}} \} \leq \frac{\alpha_2 p_{l_1} (d_{l_1}+k)}{d_{l_1}}$.

In case (a), we first note that since $l \in S_1^2$ that $k \Lambda_{l} + \alpha_1 p_{l} \geq k \Lambda_{l_1} + \alpha_1 p_{l_1}$. Thus, the inequality in case (a) under $\alpha_2$ can only hold if it holds that $p_{l_1} \geq p_l$.

Next, we note for case (b) that $\frac{p_{l_1} (d_{l_1}+k)}{d_{l_1}} \geq \frac{p_{l} (d_{l}+k)}{d_{l}}$, i.e., the payoff bang-per-buck of location $l_1$ is higher than that of location $l$. By our assumption that higher payoff bang-per-buck ratios correspond to higher payoffs, we have that $p_{l_1} \geq p_l$ in case (b) as well.

Thus, from the fact that $p_{l_1} \geq p_l$ in both cases (a) and (b), it follows that the total payoff corresponding to the locations in case (iii) can only increase, as all the resource spending on locations in set $L_3'$ is allocated to locations with an increased payoff in set $L_3''$.

Finally, using the fact that $S_1^1 \subseteq S_2^1$ and that the payoff is higher under the contract parameter $\alpha_2$ for all three cases corresponding to the $S_1^2$ analyzed above, we have that, it is straightforward to see that $P_R(\Tilde{\sigmaa}^{\alpha_2}) \geq P_R(\Tilde{\sigmaa}^{\alpha_1})$, which establishes our claim.

\section{Additional Details on Numerical Experiments: Parking Enforcement}

\subsection{Costs of Different Parking Permit Types} \label{apdx:costs-parking-permit-types}

Table~\ref{tab:permit-cost}\footnote{The costs of the parking permits are determined based on the university's transportation and parking website: \href{https://transportation.stanford.edu/parking}{https://transportation.stanford.edu/parking}} presents the cost of purchasing the different parking permits at the university campus for the parking enforcement case study in Section~\ref{sec:numerical-experiments-parking-enforcement}.

\begin{table}[]
\caption{\small \sf Costs of different parking permit types per day at university campus for parking enforcement case-study.}
\centering
\small
\begin{tabular}[b]{c|c} \toprule
           Permit Type  & Permit Cost (\$ per day) \\ \midrule
A            & 6.65                        \\
C            & 1.23                         \\
Resident     & 1.50                         \\
Resident/C   & 1.38                       \\
Visitor      & 35.68                        \\
Other Permit & 22.40  \\ \bottomrule      
    \end{tabular} \label{tab:permit-cost}
\end{table}

\subsection{Additional Numerical Results: Parking Enforcement} \label{apdx:additional-numerics-parking}

Tables~\ref{tab:counterfactual1} and~\ref{tab:counterfactual2} compare the permit earnings achieved by Algorithm~\ref{alg:GreedyWelMaxProb} to that achieved under the status quo enforcement mechanism under counterfactual one as the proportion of strategic users is varied and under counterfactual two as the citation multipliers used to calibrate the respective exponential distribution parameters are varied, respectively.

\begin{table}[t] 
\centering
\caption{{\small \sf Comparison of the percentage of permit earnings achieved by Algorithm~\ref{alg:GreedyWelMaxProb} to that achieved under the status quo enforcement mechanism under counterfactual one as the proportion of strategic users is varied. }  }
\begin{tabular}[b]{ccc}\toprule
        Proportion of & Status Quo & Algorithm~\ref{alg:GreedyWelMaxProb} \\ Strategic Users &  (Permit Earning \%) &  (Permit Earning \%)  \\ \midrule
        20\% & 85.5 & 90.2 \\
        50\% & 63.8 & 75.6 \\
        100\% & 27.6 & 51.2 \\ \bottomrule
\end{tabular} \label{tab:counterfactual1}
\end{table}

\begin{table}[t] 
\centering
\caption{{\small \sf Comparison of the percentage of permit earning achieved by Algorithm~\ref{alg:GreedyWelMaxProb} to that achieved under the status quo enforcement mechanism under counterfactual two for different citation multipliers. }  }
\begin{tabular}[b]{ccc}
\toprule
        Citation & Status Quo & Algorithm~\ref{alg:GreedyWelMaxProb} \\ Multiplier &  (Permit Earning \%) &  (Permit Earning \%)  \\ \midrule
        1 & 96.9 & 98.7 \\
        2 & 90.3 & 95.6 \\
        5 & 83.6 & 92.3 \\ \bottomrule
\end{tabular} \label{tab:counterfactual2}
\end{table}

\section{Numerical Experiments: Queue Skipping in IPT Services} \label{apdx:additional-numerics}

In this section, we first present additional details of our experimental setup and model calibration procedure based on the application case of queue jumping in intermediate public transport services in Mumbai, India (Appendix~\ref{apdx:experimentalSetup}). Next, in Appendix~\ref{apdx:gap-welfare-revenue-outcomes}, we present results depicting the contrast in the outcomes corresponding to revenue and payoff maximization administrator objectives. Finally, in Appendix~\ref{apdx:additional-numerical-experiments}, we present a further discussion and analysis of the results presented in Figure~\ref{fig:social-welfare-proportion-vary-rk} in Section~\ref{sec:numerical-experiments-optimal-contracts}.

\subsection{Experimental Setup} \label{apdx:experimentalSetup}

We design a numerical experiment based on an application case of queue jumping in the context of intermediate public transport services in Mumbai, India. In particular, we consider a problem instance with $L = 448$ locations, representing the share-taxi and share-auto-rickshaw stands across the greater metropolitan region in Mumbai~\cite{mmrda-website}, where users can potentially engage in fraudulent activities, e.g., queue jumping. We ssume that each location has one type, i.e., $|\I| = 1$, where the total number of (potentially) fraudulent users that arrive at each location are exponentially distributed with rate $80$ for all locations $l$, i.e., $\Lambda_l \sim Exp(80)$ for all $l$, and the benefits $d_l$ at each location from engaging in fraud are exponentially distributed with rate $20$, i.e., $d_l \sim Exp(20)$. 

These numbers are calibrated based on observational data collected on the number of users that arrived at a share-auto-rickshaw location in an hour at the Aakhruli Mhada share auto-rickshaw stand in Kandivali West in Mumbai and their corresponding average waiting times. In particular, the total number of users that arrive in an hour is about 125 at the studied share-auto-rickshaw location, where about 80 were males, which we assume as the group of potentially fraudulent users that may engage in queue jumping (see Section~\ref{sec:examples-pertinent}). Consequently, we assume that the number of fraudulent users that arrive at each location are exponentially distributed with rate $80$ for all locations, which accounts for the variability in the number of fraudulent users arriving across these locations. 

Moreover, users waited between 4-5 minutes on average from their time of arrival to enter a share-auto (though, in general, the wait time for IPT services can often be on the order of hours~\cite{economist-website}); thus, we calibrated the mean of the benefits of users from engaging in queue jumping as the product of the reduction in the wait time (of 5 minutes, i.e., $\frac{1}{12}$ hours) and the average hourly wage of $Rs. 240$ in Mumbai, resulting in an average gain of $Rs. 20$ from engaging in queue jumping. We note that $Rs. 20$ can be quite substantial for daily wage workers in Mumbai. As with the number of arriving users, we assume that the benefits of engaging in fraud are exponentially distributed with rate $20$, which accounts for the variability in the quantity across the different locations. We also note here that despite an average wait time of about five minutes, the wait times faced by some users, particularly women, were often observed to be as high as 15 minutes, which is attributable to the fact that some users engage in queue jumping. More generally, in other share-taxi, share auto-rickshaw, or mini-bus locations across the world, the wait times are often variable and can be quite high for passengers.

We note here that obtaining more granular information for each of the 448 share-taxi and share-auto-rickshaw locations would result in more accurate results, but the above calibration process serves as a natural starting point to derive key insights and sensitivity relations in our studied security game.


Moreover, for our experiments we vary the number of resources $R$ to lie in the range from one to thirty, with an increment of one, i.e., $R \in \{ 1, 2, \ldots, 30\}$ and the fine to lie in the range from $Rs. 50$ to $Rs. 500$, with increments of $Rs. 50$, i.e., $k \in \{ 50, 100, \ldots, 500 \}$. We note here that these fines are on the same order of magnitude as that for traditional traffic violations in Mumbai~\cite{traffic-fines-website}. Furthermore, we consider a payoff function given by $p_l = \Lambda_l (d_l)^x$ for $x$ lying in the range $\{ 1, 1.25, 1.5, 1.75, 2\}$ for all locations $l$. Our choice of the payoff function stems from the fact that $p_l = \Lambda_l d_l$ corresponds to the total benefits derived by fraudulent users at location $l$, which, in the context of IPT services, can serve as a proxy to capture the \emph{additional} value of the wait time faced by users that have not engaged in fraudulent behavior, as users engaging in fraud skip the queue and take their place in the vehicle. Moreover, setting $p_l = \Lambda_l (d_l)^x$ with an exponent $x>1$ serves as a proxy to capture the fact that the administrator may place a greater value in reducing the additional wait time of non-defaulting users, which is likely a more realistic scenario in practice. We note that the payoff $p_l$ at each location $l$ can be quite general and context dependent, and we use the above functional form of the payoff for the purposes of the experiments in Section~\ref{sec:numerical-experiments-optimal-contracts}.


\subsection{Gap Between Revenue and Payoff Maximization Outcomes} \label{apdx:gap-welfare-revenue-outcomes}

Figure~\ref{fig:social-welfare-revenue-plot-vary-resources} depicts the variation in the payoff and revenue of the allocation corresponding to Algorithm~\ref{alg:GreedyFraudminimizationDeterministic} in the payoff maximization setting and that of Algorithm~\ref{alg:GreedyRevenueMaximization} in the revenue maximization setting with the number of resources for a fine of $k = 500$ and a payoff function $p_l = \Lambda_l (d_l)^{1.25}$. We note from Figure~\ref{fig:social-welfare-revenue-plot-vary-resources} that the payoff corresponding to Algorithm~\ref{alg:GreedyFraudminimizationDeterministic} in the payoff maximization setting monotonically increases in the number of resources as does the revenue corresponding to the revenue-maximizing solution computed using Algorithm~\ref{alg:GreedyRevenueMaximization}. However, the results demonstrate that the objectives of maximizing revenue and payoffs are at odds for the above defined problem instance. In particular, the solution computed using Algorithm~\ref{alg:GreedyRevenueMaximization} in the revenue maximization setting achieves only a small fraction of payoff of about $8 \%$ of that of the allocation corresponding to Algorithm~\ref{alg:GreedyFraudminimizationDeterministic} in the payoff maximization setting. Moreover, the outcome computed using Algorithm~\ref{alg:GreedyFraudminimizationDeterministic} results in almost no revenues due to the nature of the best-response of users to a payoff-maximizing administrator (see Equation~\eqref{eq:best-response-users}) and the structure of the optimal solution of the bi-level Program~\eqref{eq:admin-obj-fraud}-\eqref{eq:bi-level-con-fraud} (see Proposition~\ref{prop:opt-mixed-strategy-solution}). Furthermore, the payoff corresponding to the revenue maximizing outcome also monotonically increases with the number of resources as now more resources can be allocated, which not only results in more revenues but also improved payoffs. Finally, we note that the payoff corresponding to both Algorithms~\ref{alg:GreedyRevenueMaximization} and~\ref{alg:GreedyFraudminimizationDeterministic} and the revenue corresponding to the revenue-maximizing outcome appear concave in the total number of resources, indicating a diminishing marginal returns in these quantities as the number of resources is increased.

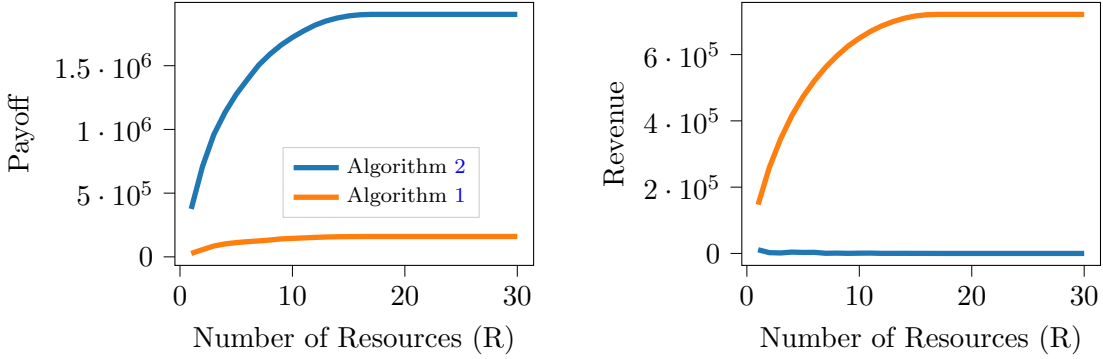
\begin{figure}
    \centering \hspace{-50pt}
\begin{subfigure} {0.4\columnwidth}
\begin{tikzpicture}

\definecolor{darkgray176}{RGB}{176,176,176}
\definecolor{darkorange25512714}{RGB}{255,127,14}
\definecolor{steelblue31119180}{RGB}{31,119,180}

\begin{axis}[
width=2.5in,
height=2in,
tick align=outside,
tick pos=left,
x grid style={darkgray176},
xlabel={Number of Resources (R)},
xmin=-0.45, xmax=31.45,
xtick style={color=black},
y grid style={darkgray176},
ylabel={Payoff},
ymin=-66934.9174975263, ymax=1996534.92054854,
ytick style={color=black},
ytick={-250000,0,500000,1000000,1500000,2000000},
legend style={
  at={(0.3,0.45)},
  anchor=north west,
  draw=white!80!black,
  font = {\scriptsize\arraycolsep=2pt}
},
]
\addplot [line width = 2pt, steelblue31119180]
table {%
1 375977.939599964
2 708838.63508663
3 957187.243065655
4 1132878.3526539
5 1275394.84769862
6 1392211.07773048
7 1504846.62221499
8 1590529.2821707
9 1662993.91937391
10 1722418.32794874
11 1773056.57520871
12 1817931.73059602
13 1851107.82174731
14 1874598.24880075
15 1890848.39393685
16 1899776.54558678
17 1902740.83700099
18 1902740.83700099
19 1902740.83700099
20 1902740.83700099
21 1902740.83700099
22 1902740.83700099
23 1902740.83700099
24 1902740.83700099
25 1902740.83700099
26 1902740.83700099
27 1902740.83700099
28 1902740.83700099
29 1902740.83700099
30 1902740.83700099
};
\addlegendentry{Algorithm~\ref{alg:GreedyFraudminimizationDeterministic}}
\addplot [line width = 2pt, darkorange25512714]
table {%
1 26859.166050022
2 56601.1540588824
3 84731.4117060983
4 101460.419908018
5 111819.825434853
6 119231.16870258
7 125051.184206762
8 132116.004787505
9 140959.586733422
10 144612.345588115
11 148775.70881221
12 152598.910307024
13 155774.70067006
14 157718.263451923
15 158728.02880528
16 159345.068215892
17 159847.180166746
18 159847.180166746
19 159847.180166746
20 159847.180166746
21 159847.180166746
22 159847.180166746
23 159847.180166746
24 159847.180166746
25 159847.180166746
26 159847.180166746
27 159847.180166746
28 159847.180166746
29 159847.180166746
30 159847.180166746
};
\addlegendentry{Algorithm~\ref{alg:GreedyRevenueMaximization}}
\end{axis}

\end{tikzpicture}
\end{subfigure} \hspace{30pt}
\begin{subfigure} {0.4\columnwidth}
\begin{tikzpicture}

\definecolor{darkgray176}{RGB}{176,176,176}
\definecolor{darkorange25512714}{RGB}{255,127,14}
\definecolor{steelblue31119180}{RGB}{31,119,180}

\begin{axis}[
width=2.5in,
height=2in,
tick align=outside,
tick pos=left,
x grid style={darkgray176},
xlabel={Number of Resources (R)},
xmin=-0.45, xmax=31.45,
xtick style={color=black},
y grid style={darkgray176},
ylabel={Revenue},
ymin=-36065.0724936645, ymax=757366.522366955,
ytick style={color=black}
]
\addplot [line width = 2pt, steelblue31119180]
table {%
1 11069.0833421055
2 2384.18048222887
3 1444.28387117154
4 4028.78650021813
5 3012.17120500678
6 3265.00969450909
7 472.844618587046
8 954.085598306744
9 251.622776714527
10 768.553885676909
11 1020.69159462428
12 196.438971300817
13 167.008808311993
14 323.211194201196
15 106.008358494903
16 149.484108623312
17 0
18 0
19 0
20 0
21 0
22 0
23 0
24 0
25 0
26 0
27 0
28 0
29 0
30 0
};
\addplot [line width = 2pt, darkorange25512714]
table {%
1 147400.888558835
2 259344.808220674
3 345418.540290178
4 415718.964030563
5 473123.51469561
6 520644.446538526
7 561525.901374627
8 595839.193765741
9 625582.212862355
10 649435.947257157
11 670044.658671391
12 686634.416164049
13 699494.642261788
14 709666.171751709
15 716628.715506126
16 720179.895436049
17 721301.449873291
18 721301.449873291
19 721301.449873291
20 721301.449873291
21 721301.449873291
22 721301.449873291
23 721301.449873291
24 721301.449873291
25 721301.449873291
26 721301.449873291
27 721301.449873291
28 721301.449873291
29 721301.449873291
30 721301.449873291
};
\end{axis}

\end{tikzpicture}
\end{subfigure}
\vspace{-20pt}
    \caption{\small \sf Depiction of the variation in the payoff and revenue corresponding to the solutions computed using Algorithms~\ref{alg:GreedyRevenueMaximization} and~\ref{alg:GreedyFraudminimizationDeterministic} as the number of resources $R$ is varied for a fine $k = Rs. 500$ and a payoff $p_l = \Lambda_l (d_l)^{1.25}$.}
    \label{fig:social-welfare-revenue-plot-vary-resources}
\end{figure}

Figure~\ref{fig:social-welfare-revenue-plot-vary-fine} depicts the variation in the payoff and revenue of the allocation corresponding to Algorithm~\ref{alg:GreedyFraudminimizationDeterministic} in the payoff maximization setting and that of Algorithm~\ref{alg:GreedyRevenueMaximization} in the revenue maximization setting with the fine for $R=15$ resources and a payoff function $p_l = \Lambda_l (d_l)^{1.25}$. Our results demonstrate that the payoff corresponding to Algorithm~\ref{alg:GreedyFraudminimizationDeterministic} and the revenue corresponding to Algorithm~\ref{alg:GreedyRevenueMaximization} monotonically increase in the fine. Such monotonicity relations are expected as a higher fine implies that the threshold fraction of $\frac{d_l}{d_l+k}$ resources to deter users from engaging in fraud at any given location $l$ is reduced; thus, more resources can be spent at other locations that could not have been targeted with a smaller fine. Furthermore, altering the fine has almost no impact on the revenue of the allocation computed using Algorithm~\ref{alg:GreedyFraudminimizationDeterministic} in the payoff maximization setting. Such a result holds as revenues are only accumulated at a single location where $\sigma_l < \frac{d_l}{d_l+k}$ under the payoff maximization objective due to the best-response of users to a payoff-maximizing administrator (see Equation~\eqref{eq:best-response-users}) and the structure of the optimal solution of the bi-level Program~\eqref{eq:admin-obj-fraud}-\eqref{eq:bi-level-con-fraud} (see Proposition~\ref{prop:opt-mixed-strategy-solution}). Thus, the revenues corresponding to the allocation computed using Algorithm~\ref{alg:GreedyFraudminimizationDeterministic} is negligible compared to the revenue obtained by Algorithm~\ref{alg:GreedyRevenueMaximization} in the revenue maximization setting. 

From Figure~\ref{fig:social-welfare-revenue-plot-vary-fine}, we also observe that the revenue maximizing solution computed using Algorithm~\ref{alg:GreedyRevenueMaximization} obtains a payoff that monotonically decreases from about $44 \%$ of the payoff corresponding to Algorithm~\ref{alg:GreedyFraudminimizationDeterministic} in the payoff maximization setting for a fine of $k = 50$ to $8.4 \%$ of the payoff achieved by Algorithm~\ref{alg:GreedyFraudminimizationDeterministic} for $k = 500$. To parse this result, we first note that the payoff of the revenue-maximizing outcome serves as a constant fraction of the optimal payoff (unlike the revenue of the allocation corresponding to Algorithm~\ref{alg:GreedyFraudminimizationDeterministic} that is negligible compared to the optimal revenue) as payoffs are accumulated at all locations to which resources are allocated. In particular, by the best-response Problem~\eqref{eq:best-response-users-rev-max} of users under the revenue-maximization objective, the total payoff corresponding to a revenue-maximizing solution is given by $\sum_l \sigma_l v_l$, where it holds that the resource constraint is satisfied, i.e., $\sum_l \sigma_l \leq R$, and the total resources spent at any location corresponding to a revenue maximizing solution satisfy $\sigma_l \in [0, \frac{d_l}{d_l+k}]$ for all locations $l$. 

Thus, as the fine $k$ increases, the total spending under the revenue-maximizing solution on the locations that have been allocated resources at lower fines decreases while the administrator can now spend the remaining resources on new locations that it did not spend on at lower fines. Given this observation, notice that if the payoff $p_l$ at each location are independent of the revenue, then the total payoff is likely to not change much in response to the fine for all allocation strategies satisfying $\sum_l \sigma_l = R$. However, the payoff function at each location satisfies $p_l = \Lambda_l d_l^{1.25}$, which is positively correlated with the revenue function which has the term $\Lambda_l$ in the objective. 

As a result, we obtain a monotonically decreasing relation between the fine and the payoff of the revenue maximizing solution, as fewer resources will be deployed in locations with higher values of $\Lambda_l$ (which is positively correlated with the payoffs $p_l$) with an increase in the fine while the remaining resources are spent on locations with lower values of $\Lambda_l$, which is correlated with lower payoffs $p_l$. Such a monotonically decreasing relation in the payoff corresponding to the revenue maximizing outcome suggests that simply increasing the fines may not be a solution to reducing fraud, particularly in the presence of a revenue-maximizing administrator, which highlights the importance of setting low to moderate fines, as is often the case in many practical applications, e.g., road traffic fines are typically not arbitrarily large.

\begin{figure}
    \centering \hspace{-50pt}
\begin{subfigure} {0.4\columnwidth}
\begin{tikzpicture}

\definecolor{darkgray176}{RGB}{176,176,176}
\definecolor{darkorange25512714}{RGB}{255,127,14}
\definecolor{steelblue31119180}{RGB}{31,119,180}

\begin{axis}[
width=2.5in,
height=2in,
tick align=outside,
tick pos=left,
x grid style={darkgray176},
xlabel={Fine (k)},
xmin=27.5, xmax=522.5,
xtick style={color=black},
y grid style={darkgray176},
ylabel={Payoff},
ymin=72702.4973250334, ymax=1989885.5198427,
ytick style={color=black},
ytick={-250000,0,500000,1000000,1500000,2000000},
legend style={
  at={(0.3,0.65)},
  anchor=north west,
  draw=white!80!black,
  font = {\scriptsize\arraycolsep=2pt}
},
]
\addplot [line width = 2pt, steelblue31119180]
table {%
50 905054.188686064
100 1135554.8083664
150 1322462.54395558
200 1477967.36807972
250 1597264.00957281
300 1691788.67660251
350 1766105.02791397
400 1824553.39759873
450 1866129.08989789
500 1890848.39393685
};
\addlegendentry{Algorithm~\ref{alg:GreedyFraudminimizationDeterministic}}
\addplot [line width = 2pt, darkorange25512714]
table {%
50 347524.404374511
100 348220.488561264
150 301688.106209424
200 265711.043435291
250 238960.617459615
300 221354.620022281
350 202683.886981639
400 186124.804077458
450 172288.953202978
500 158728.02880528
};
\addlegendentry{Algorithm~\ref{alg:GreedyRevenueMaximization}}
\end{axis}

\end{tikzpicture}
\end{subfigure} \hspace{30pt}
\begin{subfigure} {0.4\columnwidth}
\begin{tikzpicture}

\definecolor{darkgray176}{RGB}{176,176,176}
\definecolor{darkorange25512714}{RGB}{255,127,14}
\definecolor{steelblue31119180}{RGB}{31,119,180}

\begin{axis}[
width=2.5in,
height=2in,
tick align=outside,
tick pos=left,
x grid style={darkgray176},
xlabel={Fine (k)},
xmin=27.5, xmax=522.5,
xtick style={color=black},
y grid style={darkgray176},
ylabel={Revenue},
ymin=-36065.0724936645, ymax=757366.522366955,
ytick style={color=black}
]
\addplot [line width = 2pt, steelblue31119180]
table {%
50 148.03474057775
100 2690.62809013956
150 2439.03842481804
200 1223.70308347635
250 785.163127400909
300 284.104313599247
350 764.217224658808
400 601.438526354081
450 436.508415871435
500 106.008358494903
};
\addplot [line width = 2pt, darkorange25512714]
table {%
50 187855.64863638
100 317487.068305786
150 416147.709610845
200 494438.890588689
250 556691.799294328
300 606892.511498005
350 646482.566933554
400 677238.456333835
450 700341.124577786
500 716628.715506126
};
\end{axis}

\end{tikzpicture}
\end{subfigure}
\vspace{-20pt}
    \caption{\small \sf Depiction of the variation in the welfare and revenue corresponding to the solutions computed using Algorithms~\ref{alg:GreedyRevenueMaximization} and~\ref{alg:GreedyFraudminimizationDeterministic} as the fine $k$ is varied for $R = 15$ resources and a payoff $p_l = \Lambda_l (d_l)^{1.25}$.}
    \label{fig:social-welfare-revenue-plot-vary-fine}
\end{figure}
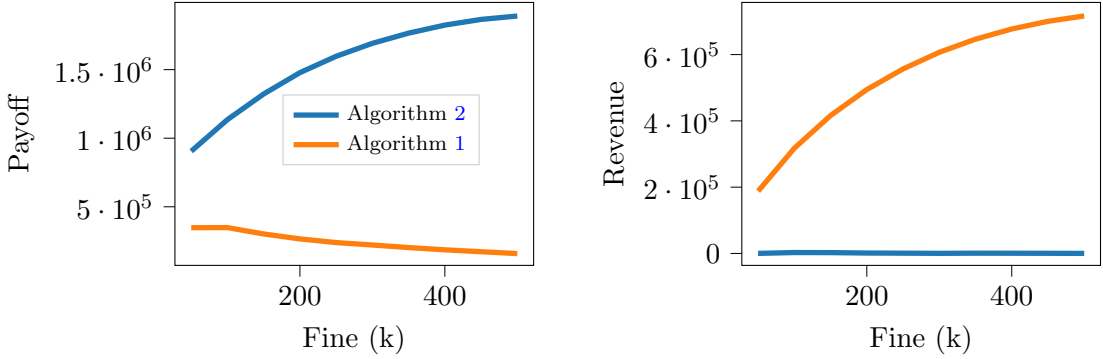

\subsection{Additional Analysis of Numerical Experiments} \label{apdx:additional-numerical-experiments}

In this section, we present some additional analysis of the results presented in the left of Figure~\ref{fig:social-welfare-proportion-vary-rk} in Section~\ref{sec:numerical-experiments-optimal-contracts}.

We note from Figure~\ref{fig:social-welfare-proportion-vary-rk} (left) that as we increase the exponent $x$ in the payoff function $p_l = \Lambda_l (d_l)^x$, the corresponding fraction of the payoff achieved by the administrator strategy computed using Algorithm~\ref{alg:GreedyContractDeterministic} to that achieved using Algorithm~\ref{alg:GreedyFraudminimizationDeterministic} in the payoff maximization setting increases for each contract level $\alpha$. Such a relation naturally follows as the payoff term in the administrator's Objective~\eqref{eq:obj-contract-alpha} in the contract game increasingly dominates the associated revenues from the collected fines at each location with an increase in the exponent of the payoff function. We note that in the setting when the exponent $x=1$, the administrator is only incentivized to change its resource allocation strategy from the one that maximizes revenues from the collected fines when the contract level $\alpha$ is one, as the payoff term in the administrator's objective is not large enough to dominate the revenues the administrator obtains through its collected fines unless the contract level $\alpha = 1$. On the other hand, when the exponent $x>1$, the payoff term in the administrator's objective begins to dominate the revenue obtained from the collected fines for lower levels of $\alpha$, which thus incentivizes the administrator to alter its resource allocation strategy from one that solely maximizes revenues from the collected fines to one that achieves a higher payoff for lower values of $\alpha$. Finally, we note that while $\Lambda_l d_l$ represents the total amount of benefit gained by defaulting users (if they are not found defaulting), the term $\Lambda_l (d_l)^x$ for an exponent $x>1$ serves as a reasonable proxy for the externality imposed on non-defaulting users (see Appendix~\ref{apdx:experimentalSetup}), which is likely to be valued higher, as non-defaulting users' costs may be strictly convex (rather than just linear) in the wait times they incur. Consequently, from the left of Figure~\ref{fig:social-welfare-proportion-vary-rk}, our results, for the studied payoff functions with an exponent $x>1$, demonstrate that using even small values of the contract $\alpha$ can recover most of the payoff achieved by Algorithm~\ref{alg:GreedyFraudminimizationDeterministic}, thereby bridging the gap between the payoff and revenue-maximizing outcomes.

\section{Sub-optimality of Ordering Locations without Affordability Thresholds in Algorithm~\ref{alg:GreedyFraudminimizationDeterministic}} \label{apdx:sub-optimal-less0.5}

In this section, we present an example which demonstrates the necessity of defining affordability thresholds to compute the associated bang-per-buck ratios of the locations in Algorithm~\ref{alg:GreedyFraudminimizationDeterministic} to achieve the desired half-approximation guarantee.

To this end, consider the following problem instance with three locations with a total resource budget of $R = 0.405$ and where all locations have a single type. Furthermore consider the following parameters for the three locations:
\begin{itemize}
    \item Payoffs: $p_1 = 1$, $p_2 = 1$, and $p_3 = 2.2$
    \item Threshold probabilities: $\frac{d_1}{d_1+k} = 0.2$, $\frac{d_2}{d_2+k} = 0.2$, and $\frac{d_3}{d_3+k} = 0.41$
\end{itemize}
In defining the above problem parameters, we drop the super-script $i$ in the notation as we are in the homogeneous user setting with a single type at each location.

We now consider an algorithm analogous to Algorithm~\ref{alg:GreedyFraudminimizationDeterministic}, which does not perform the additional pre-processing step of computing the ``affordable'' bang-per-buck ratios as in Algorithm~\ref{alg:GreedyFraudminimizationDeterministic} and instead just orders locations in the descending order of the ratios $\frac{p_l}{\frac{d_l}{d_l+k}}$. We show that this algorithm will not achieve the desired half approximation that Algorithm~\ref{alg:GreedyFraudminimizationDeterministic} achieves.

To see this, we first note that the optimal allocation corresponds to the enforcement strategy $\sigmaa^* = (0.2, 0.2, 0.05)$, which achieves a total payoff of $2.11$. Next, we note that the location with the highest bang-per-buck ratio $\frac{p_l}{\frac{d_l}{d_l+k}}$ is location three. Consequently, the payoff attained via step one of the algorithm is $0.405 \times 2.2 = 0.891$, as all the resources would be allocated to the first location. On the other hand, the payoff attained in step two of the algorithm would correspond to maximum payoff from spending all resources at a single location, which is one for this instance. Thus, it holds that the total payoff attained by the algorithm is $\max \{ 0.891, 1\} < \frac{1}{2} \times 2.11$, i.e., the approximation ratio of this algorithm is strictly less than $0.5$.

On the other hand, note that the payoff achieved using Algorithm~\ref{alg:GreedyFraudminimizationDeterministic} for this problem instance is $2.11$, i.e.,  Algorithm~\ref{alg:GreedyFraudminimizationDeterministic} is optimal, as it computes the affordability thresholds of the locations, which results in location three having a lower affordable bang-per-buck ratio compared to the first two locations. Thus, the above example demonstrates the importance of defining affordability thresholds to compute the associated bang-per-buck ratios of the locations in Algorithm~\ref{alg:GreedyFraudminimizationDeterministic} to achieve the desired half-approximation guarantee.

\section{Further Directions for Future Research} \label{apdx:model-extensions}

In this section, we present several natural extensions and generalizations of the model studied in this work, which opens directions for future work.

\paragraph{Fines Varying Across Locations:} In this work, we considered a setting where the fine remains fixed across all locations in the system. While a fixed fine across all nodes or locations is natural for many applications, there are often settings when the fines levied on defaulting users vary across locations, e.g., the fine for violating a traffic light is typically higher in a city center compared to a rural area. We note that we can model the variations in the fine across locations through a location-specific fine $k_l$ for each location $l$ and that the techniques and algorithms developed in this work naturally generalize and apply to this setting with slightly more cumbersome notation, as our algorithms and the corresponding proofs do not rely on the fines being fixed across locations.

\paragraph{Incorporating Costs of Deploying Security Resources:} We considered a model where the administrator has a limit on the resources it can allocate to monitor fraudulent or illegal activities. An alternate model that is also of interest is to study equilibrium outcomes in a setting where the administrator additionally incurs a cost for each security personnel it allocates, which influences the administrator's objective function. 

\paragraph{Strategic Security Personnel:} While we consider a setting when the administrator maximizes its revenues from allocating $R$ security resources (e.g., police officers), another natural setting to consider is one where each of the $R$ security resources are individual decision makers seeking to maximize their own revenues. In this case, we note that rather than one bi-level program to characterize the revenue maximization problem of the administrator, the optimal strategy of the security resources can be characterized through a sequence of $R$ bi-level programs, where each bi-level program corresponds to the optimization problem of a single security resource seeking to maximize its individual revenue, given the actions of the other security resources. In particular, the revenue maximization problem (in the setting with homogeneous user types at each location) for a security resource $j$, given the strategies $(\sigmaa^{\Tilde{j}})_{\Tilde{j} \neq j}$ adopted by a all other security resources, can be formulated as
\begin{maxi!}|s|[2]   
    {\substack{\sigmaa^j \in \Omega_1 \\ y_l(\sigmaa) \in [0, 1], \forall l \in L}}                            
    { Q_R^j(\sigmaa^j, (\sigmaa^{\Tilde{j}})_{\Tilde{j} \neq j}) = \sum_{l \in L} \sigma_l^j y_l (\sigmaa) k \Lambda_l,  \label{eq:admin-obj-revenue222222}}   
    {\label{eq:Eg002222222222}}             
    {}          
    \addConstraint{y_l(\sigmaa)}{\in \argmax_{y \in [0, 1]} U_l(\sigmaa, y), \quad \text{for all } l \in L, \label{eq:bi-level-con-revenue222222222}}
    \addConstraint{\sigmaa}{= \sigmaa^j + \sum_{\Tilde{j} \neq j} \sigmaa^{\Tilde{j}}, \label{eq:bi-level-con-revenue2222233}}
\end{maxi!}
where $\sigmaa^j \in \Omega_1$ for all $j$. We note that in the upper level problem, the administrator deploys a strategy $\sigmaa^j$ to maximize its revenue, given the strategies $(\sigmaa^{\Tilde{j}})_{\Tilde{j} \neq j}$ adopted by all other security resources, to which users best-respond by maximizing their utilities in the lower-level problem. Moreover, note that the aggregate strategy $\sigmaa$, corresponding to the sum of the strategies of all security resources, satisfies Constraint~\eqref{eq:bi-level-con-revenue2222233}.

The above model of the strategic behavior of each security resources to maximize their individual revenues opens up new avenues in terms of studying the resulting equilibrium outcomes that emerge from such selfish behavior of individual security resources. 

\end{document}